\documentclass{article}
\usepackage[margin=1.5in]{geometry}
\usepackage{amsmath, amsthm, amssymb}
\usepackage[dvipsnames]{xcolor}
\usepackage{graphicx}
\usepackage{verbatim}
\usepackage{natbib}
\usepackage{caption}
\usepackage{subcaption}
\usepackage{enumerate}
\usepackage{bbm}
\usepackage{multirow}
\usepackage{array}
\usepackage{tikz}
\usetikzlibrary{arrows, chains,shapes.multipart}
\usepackage[hyphens]{url}
\usepackage[hidelinks]{hyperref}
\hypersetup{colorlinks,citecolor=blue,urlcolor=blue, linkcolor=blue,breaklinks=true}
\usepackage{stefan_tex}
\usepackage{cancel}

\usepackage{authblk}
\usepackage{tikzsymbols}

%%command specific for this project
\newcommand{\indp}{\;\rotatebox[origin=c]{90}{$\models$}\;}

\newcommand{\me}{\operatorname{ME}}
\newcommand{\piv}{\vec{\pi}}
\newcommand{\dep}{\operatorname{dep}}
\newcommand{\arr}{\operatorname{arr}}
\newcommand{\nuv}{\vec{\nu}}
\newcommand{\errr}{\operatorname{error}}
\newcommand{\onev}{\vec{1}}
\newcommand{\zerov}{\vec{0}}
\newcommand{\ev}{\vec{e}}
\newcommand{\LE}{\operatorname{WDE}}

\newcommand{\Cancel}[2][black]{{\color{#1}\cancel{\color{black}#2}}}
\newcommand{\model}{\operatorname{WDE}}
\newcommand{\ratio}{\operatorname{ratio}}
\newcommand{\mJ}{\mathcal{J}}

         % = `top' strut

%%%%%%%%% Theorems
\theoremstyle{plain}
\newtheorem{prop}{Proposition}

\newtheorem{lemm}{Lemma}

\newtheorem{theo}{Theorem}

\theoremstyle{definition}
\newtheorem{exam}{Example}
\newtheorem{defi}{Definition}
\newtheorem{assu}{Assumption}

\theoremstyle{remark}

\newtheorem{rema}{Remark}

\definecolor{ForestGreen}{RGB}{34,139,34}

\newcommand\blfootnote[1]{%
  \begingroup
  \renewcommand\thefootnote{}\footnote{#1}%
  \addtocounter{footnote}{-1}%
  \endgroup
}

\begin{document}

\title{Experimenting under Stochastic Congestion}
\author[1]{Shuangning Li}
\author[2]{Ramesh Johari}
\author[3]{Xu Kuang}
\author[3]{Stefan Wager}
\affil[1]{University of Chicago Booth School of Business}
\affil[2]{Department of Management Science and Engineering, Stanford University}
\affil[3]{Stanford Graduate School of Business}

\maketitle

\begin{abstract}

We study randomized experiments in a service system when stochastic congestion can arise
from temporarily limited supply or excess demand. Such congestion gives rise to
cross-unit interference between the waiting customers, and analytic strategies that do not account for this
interference may be biased. In current practice, one of the most widely used
ways to address stochastic congestion is to use switchback experiments
that turn a target intervention on and off for the whole system in alternation.
We find, however, that under a queueing model for stochastic congestion, the
standard way of analyzing switchbacks is inefficient, and that estimators
that leverage the queueing model can be materially more accurate.
Additionally, we show how the queueing model enables estimation of total policy gradients from unit-level randomized experiments, thus giving
practitioners an alternative experimental approach they can use without needing to
pre-commit to a fixed switchback length before data collection.
\end{abstract}

\section{Introduction}

Randomized experiments\blfootnote{\hspace{-6mm} Xu Kuang published under a different full name in
earlier versions of this manuscript. Please use ``S. Li, R. Johari, X. Kuang and S. Wager''
when citing this paper.}
provide a simple and robust way to measure the effect of an intervention,
and are widely used in a variety of application areas ranging from medicine and social programs
to industrial settings \citep{imbens2015causal}. More recently, there has been considerable interest
in using randomized experiments (often called A/B tests in this context) to guide data-driven decision making in complex, interconnected service systems such as online marketplaces or ride-hailing platforms. 

{Stochastic congestion} that results from dynamic fluctuations in demand and supply is a salient feature in these service systems. For example, in a ride-hailing and delivery platform that seeks to match  driver supply to arriving customers and orders, spikes in demand can lead to temporary unavailability of drivers, resulting in delays experienced by all customers. In a healthcare context, unpredictable fluctuations in demand can cause severe congestion in Emergency Departments, leading to extended wait times and degraded outcomes for the patients that arrive subsequently \citep{xu2016using}.  Importantly, stochastic congestion can have a significant effect on the  design of experiments, because any intervention that changes the amount of demand or supply would also
likely alter the state of congestion, which in turn affects the experience of other participants. These cross-unit interactions invalidate classical
approaches to analyzing randomized trials \citep{halloran1995causal}.

The goal of this paper is to study experimental design in a service system with dynamic interference arising
from stochastic congestion. The problem of experimental design under cross-unit interference---including in online
marketplaces---has received sustained attention in the literature
\citep{arownow2017,bajari2021multiple, hudgens_2008, johari2022experimental, manski2013identification, munro2021treatment, wager2021experimenting}. A limitation in these existing results, however, is that they mostly focus on static models of interference, while in contrast, the stochastic congestion as discussed above is highly dynamic in nature, rendering existing models and approaches inapplicable. 

Practitioners running experiments under stochastic congestion are well aware
of potential bias in classical analyses of randomized trials. In general, however, this challenge isn't met by modeling and explicitly accounting
for congestion effects; rather it's met by recognizing that the effect of any additional congestion that can be caused by one customer dissipates over a relatively short time horizon, and therefore can be accounted for if an intervention is held for a sufficiently long period of time. One of the most popular approaches to doing so
is the switchback experiment \citep{bojinov2022design, cochran1941double}, which alternates
(``switches'') the entire population between treatment and control settings over time.
%A typical switchback experiment involves executing the experiment, amassing data on policies A and B, and subsequently computing sample averages of the target performance metric across the respective periods to compare treatment and control. 
The fact that all users are exposed to only one intervention at any given point in time during the
experiment also helps to correctly account for the interference induced by stochastic congestion
\citep{bojinov2022design,hu2022switchback}. This design has been actively embraced by ride-sharing and delivery platforms \citep[e.g.,][]{chamandy2016experimentation,DoorDash_2018}, serving as an empirical tool
to quantify the effects of feature alterations, algorithm changes or pricing policies on a spectrum of performance metrics, such as the number of
customers served.

\subsection{Overview of Main Results}

We examine whether it is possible to use knowledge of the congestion mechanism
to make better use of data collected from a switchback in the presence of stochastic congestion. The answer to this question turns out to be affirmative. 
Drawing inspiration from the queueing systems literature, we consider a single-server queueing model with price- and congestion-sensitive customers to capture the core congestion dynamics arising in many service systems  (Figure \ref{fig:queue_illus}). In this setting, customers arrive to seek service and decide between paying a price $p$ and queueing up, versus leaving for a default outside option. The decision maker wishes to determine how the long-term average admission rate (which further impacts revenue and welfare) depends on the price $p$, and does so by running a two-price switchback experiment. Stochastic congestion plays an important role in this system, because the price $p$ not only affects whether a customer joins the system at the current moment, but also future admissions by altering the congestion level to which subsequent customers will likely be exposed.  

Within this framework, we leverage the structural information from the stochastic model to design new estimators for analyzing data from
a switchback. Notably, one of these estimators requires minimal assumptions on the queueing model beyond it being work-conserving. In a Markovian setting, we characterize the asymptotic variance of each estimator, and show that by this metric one of our proposed estimators is always at least as (and usually more) accurate than the standard one. Moreover, the accuracy improvement over status-quo can widen significantly in certain models of customer preferences  (e.g., whether they prefer longer or shorter queues). 

We also find that our queueing model enables the use of simpler, user-level randomization in experimental design. Specifically, in this approach, arriving customers are individually randomized to either a ``high" or ``low" price, which are small, symmetric perturbations from a reference price. Our results show that model-inspired estimators perform just as effectively under this unit-level randomization as they do in standard switchback designs. Simulations further suggest that this design can yield lower variance in finite samples. Additionally, in nonstationary environments, we find that combining user-level randomization with our model-inspired estimators is particularly advantageous for addressing nonstationarity. From a practical perspective, this approach eliminates the need to pre-commit to a fixed switchback length before data collection, making it easier to implement in real-world settings.

Broadly speaking, our results highlight the promise of leveraging structural information  of stochastic congestion to improve the analysis of data generated from switchback experiments. As a by-product, we also uncover interesting insights on how the customers' preferences can affect the estimator's accuracy, which may be of independent interest for system operators and experiment designers. 

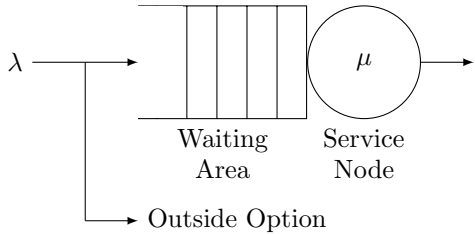
\begin{figure}[t]
\centering
\begin{tikzpicture}[start chain=going right,>=latex,node distance=0pt]
% the rectangular shape with vertical lines
\node[rectangle split, rectangle split parts=6,
draw, rectangle split horizontal,text height=1cm,text depth=0.5cm,on chain,inner ysep=0pt] (wa) {};
\fill[white] ([xshift=-\pgflinewidth,yshift=-\pgflinewidth]wa.north west) rectangle ([xshift=-15pt,yshift=\pgflinewidth]wa.south);

% the circle
\node[draw,circle,on chain,minimum size=1.5cm] (se) {$\mu$};

% the arrows and labels
\draw[->] (se.east) -- +(20pt,0);
\draw[<-] (wa.west) -- +(-40pt,0) node[left] {$\lambda$};
\node[align=center,below, execute at begin node=\setlength{\baselineskip}{12pt}] at (wa.south) {Waiting \\ Area};
\node[align=center,below,execute at begin node=\setlength{\baselineskip}{12pt}] at (se.south) {Service \\ Node};
\draw ([xshift=-20pt]wa.west) -- +(0pt,-60pt);
\draw[->] ([xshift=-20pt, yshift = -60pt]wa.west) -- +(20pt,0pt) node[right] {Outside Option};
\end{tikzpicture}
\caption{A queueing system with an outside option. Agents are heterogeneous in their
preferences, and consider both queue length and the price $p$ in choosing whether
to join the queue. However, once agents join the queue, they are all processed at
the same rate $\mu$.}
\label{fig:queue_illus}
\end{figure}

\subsection{Literature Review}

\paragraph{Causal Inference under Interference}  One of our main contributions fits, broadly speaking, within the literature of causal inference and experimental design under interference. As discussed above, experimental design under cross-unit interference has been widely studied in various contexts. \citet{arownow2017}, \citet{hudgens_2008}, and \citet{manski2013identification}
show how to represent interference as an extension to the
standard potential outcomes model used for treatment effect estimation \citep{imbens2015causal}.
\citet{bajari2021multiple} and \citet{johari2022experimental} develop methods for two-sided marketplaces where we can
separately intervene on, e.g., buyers and sellers.
\citet{munro2021treatment} and \citet{wager2021experimenting} study interference in large markets via the lens of mean-field asymptotics. Other notable recent contributions in this area include \citet{athey2018exact}, \citet{basse2018}, \citet{leung2020}, \citet{li2022}, \citet{savje2021},
\citet{tchetgen2012}, \citet{toulis2013estimation}, and \citet{ugander_graph_cluster}. 

This literature, however, predominantly assumes a static model for interference. In the work of \citet{arownow2017}, \citet{hudgens_2008} and \citet{manski2013identification},
all units are intervened on at the same time, and
all causal effects (including interference effects) are realized immediately.
In \citet{munro2021treatment} and \citet{wager2021experimenting}, the use of large-market asymptotics means that all market-level
characteristics (e.g., prices) concentrate, thus rendering the problem effectively static.  Another common assumption in this literature is that interference is constrained by a network, such that all treatment decisions are made at the same time, and the outcome of
one unit depends only on its treatment and the treatments of its neighbors in the network. Here, in contrast, we consider interference arising from a dynamic process. As a notable exception from the above framing of interference in a static setting, \citet{ogburn2014causal} use dynamic structural
equation models to study contagion effects in epidemiology.
However, the model they use---and methods they derive from the model---are
not directly applicable to our setting with stochastic congestion effects.

\paragraph{Queueing and Stochastic Systems} The modeling and control of stochastic congestion has been a sustained focus of  queueing theory since the seminal work of \citet{erlang1909theory}.  Our contribution to this literature lies in introducing an experimental design approach to policy optimization in a queueing system, one that relies on minimal modeling assumptions and does not require the decision maker to infer the model parameters explicitly. We capture congestion dynamics using a queueing model that is well known and extensively studied in the context of optimal pricing and admission control. A similar model with a single-server queue was first proposed by \cite{naor1969regulation}, where the operator charges a flat price for entry and each arrival can decide whether to queue up for service depending on the current level of congestion. 
\cite{naor1969regulation} derives properties for the revenue-optimizing static price. The model was later extended by \cite{chen2001state} and \cite{borgs2014optimal} who studied the impact of state-dependent pricing and the scaling of optimal prices, respectively. Similar queueing models have also been studied in the context of admission control \citep{johansen1980control, stidham1985optimal, spencer2014queueing, xu2015necessity, xu2016using,murthy2022admission}.  These results, however, all assume that the operator is interested in finding the optimal policy, while possessing full knowledge of all system parameters; the latter is very difficult to achieve in complex, practical scenarios, such as an online platform. 
%\citet{hassin2003queue} provide an extensive survey of this area. 

Our work intersects with a literature on inference and learning in queueing systems, but there are some notable differences.  For example, \citet{clarke1957maximum} focuses on an $M/M/1$ queue and discusses the estimation problem of the arrival rate $\lambda$ and the service rate $\mu$; see \citet{bhat1987statistical} for a survey of this early line of work and \citet{asanjarani2021survey} for a comprehensive review of the literature on parameter and state estimation for queueing systems. This line of work focuses on explicit state or parameter estimation, as opposed to the inference of overall effects of an intervention in our case. 
More recently, there is a literature on learning in queues in more complex settings. The problem of learning scheduling and assignment rules  in queues are investigated from a multi-armed bandit perspective in \cite{choudhury2021job, krishnasamy2021learning, freund2022efficient} and \cite{zhong2022learning}. Different from our approach, however, this literature is largely interested in using a bandit algorithm to identify the best service assignment on a job-by-job basis, in contrast to our focus of using experiments to learn a system-level optimal policy. \cite{kuang2024detecting} study estimating changes in service rates in a multi-server system from observational routing data, with a focus on conducting observational studies, rather than randomized experiments as in our case. \citet{walton2021learning} provides an overview of recent developments in learning in queueing networks. 

The queueing models considered in this paper are special cases of general Markov chains, and if we view $p$ as a parameter that specifies a certain control policy, we can treat the model in this paper as a Markov decision process (MDP). An interesting recent paper in this area that is related to our work at a high level is 
\citet{farias2022Markovian}. They study experimentation and off-policy evaluation under what they call Markovian interference, a form of interference where treatments may affect state transitions in an MDP. A main difference between this work and ours is that \citet{farias2022Markovian} focus on analyzing the performance of biased estimators, because unbiased estimators would in general require excessively large sample sizes in unstructured, high-dimensional MDPs. All estimators examined in this paper, on the other hand, are asymptotically unbiased, 
and therefore desirable in terms of obtaining the exact treatment effect. We are able to do so because the Markov chain associated with the queueing model is simpler and more structured. As another point of departure, we also examine the implications on estimation and experiment design in non-stationary environments. Finally, in terms of performance guarantees, \citet{farias2022Markovian} provide upper and lower bounds on the estimator's asymptotic variance. In contrast, by exploiting the structures of the queueing model, we are able to derive exact formulae for asymptotic variances of the estimators under stationary versions of our model.

\section{A Queueing Model of Stochastic Congestion}

We model congestion using a single-server continuous-time Markovian queueing model with heterogeneous
customers. We assume that the system is controlled by a continuous decision
variable $p$ (for simplicity we refer to $p$ as the price customers need to pay to receive service), and
that the decision maker is seeking to optimize $p$. Customers arrive according to a Poisson process.
As illustrated in Figure \ref{fig:queue_illus}, as soon as customers arrive they are offered a price
and observe the current queue length, and then choose whether
to join the queue depending on their personal preferences that take into account price and delays.
Once customers join the system they are served in a first-come-first-served manner,
with no preemption. The amount of time needed to serve each customer is independent, and exponentially
distributed with mean $1/\mu$.

Our model of customer heterogeneity implies that, equivalently, we can model customers as arriving
to the system  according to a state-dependent Poisson process with rate $\lambda_k(p)$, where $k$ is
the current queue length. A useful special case, which we will refer to as the $M/M/1$ queue, is when
for all $p$ the arrival rates $\lambda_k(p)$ are uniform across different states, with $\lambda_k(p) = \lambda_0(p)$ for all $k \geq 0$. 
 To understand $\lambda_k(p)$, we can think of the system as consisting of a queue with an outside option. Let $\lambda_{\operatorname{raw}}$ be the ``raw" arrival rate to the system and if we assume that conditioning on the queue length, arrived customers independently decide to join the queue with probability $\operatorname{prob}(k,\mu,p)$, then $\lambda_k(p)$ can be set to be  $\lambda_k(p) = \lambda_{\operatorname{raw}} \operatorname{prob}(k, \mu, p)$.  We make no other assumptions on $\operatorname{prob}(k,\mu,p)$. Below, we present two plausible examples of $\operatorname{prob}(k,\mu,p)$. 

\begin{exam}[Waiting cost]
\label{exam:join_rule1}
Consider a store that sells only one product at price $p$. Customers come to the store and wait in line to buy the product. For customer $i$, a plausible decision rule is the following: 
\begin{equation}
\label{eqn:join_rule1}
\text{Join if } \quad u_i \geq c_i \EE{\text{waiting time} \mid \text{queue length}} + p.
\end{equation}
Here $u_i$ is the utility if customer $i$ goes through the queue and gets the product and $c_i$ is the cost of waiting per unit time. Since the service rate is $\mu$, we can rewrite \eqref{eqn:join_rule1} as 
\begin{equation}
\label{eqn:join_rule2}
\text{Join if } \quad u_i \geq c_i k/\mu + p,
\end{equation}
where $k$ is the current queue length. 
Then in this case, we have that $\operatorname{prob}(k, \mu, p)= \PP[(u_i, c_i) \sim F]{u_i \geq c_i k/\mu + p}$, where $F$ is the distribution of $(u_i, c_i)$. 
\end{exam}

\begin{exam}[Group conformity]
\label{exam:join_rule2}
Consider a store that sells only one product at price $p$. Customers come to the store and wait in line to buy the product. If customers believe the length of the queue indicates the popularity of the store and thus the quality of the product, they may be more willing to join the queue when the length of the queue is longer. In this case, we will observe that, unlike in Example \ref{exam:join_rule1}, $\operatorname{prob}(k, \mu, p)$ is an increasing function of $k$. 
\end{exam}

In this model, any choice of $p$ leads to a certain steady-state queue-length distribution, with an average
rate $V(p)$ of customers processed by the system. The average processing rate accounts for two impacts
of the price: Both the fact that in each state, the demand (i.e., customer entry rate) is moderated by
the price; as well as the fact that the steady state fraction of time spent in different states is
then indirectly determined by these entry rates.
$V(p)$ is an important quantity with ties to many key operational measures. For example, it may be of interest
for a social-welfare-maximizing platform who wishes to maximize the number of customers served given
a constrained resource. Further, the rate $V(p)$ is obviously essential to quantify the average steady state revenue, expressed by $pV(p)$.

Given this setup, our focus here is on estimating the policy gradient $V'(p)$. This quantity would be of
direct interest to any policymaker wanting to optimize the system via a gradient-based approach; see, e.g.,
\citet{wager2021experimenting} for formal results. A simple baseline estimator for $V'(p)$ would be to
run a two-treatment switchback experiment where we consider a ``high'' treatment that applies a price slightly
higher than $p$ and a ``low'' treatment that applies a price slightly lower than $p$, and then estimate $V'(p)$
by measuring the difference in average observed processing rates in the ``high'' versus ``low'' states.
We will seek to improve on this baseline by exploiting structure of our posited queueing model.

\begin{rema}[State-dependent pricing]
The framework presented above is sufficiently general to encompass state-dependent pricing. Specifically, assume that the system sets a price of $b_k$ when the queue length is $k$. We may be interested in understanding, for instance, how the average customer processing rate changes if the price increases by a small percentage (e.g., 1\%) across all states. Let $\lambda_k(p)$ represent the arrival rate in state $k$ when the price is set to $b_k p$. We can then examine the policy gradient $V'(p)$ at $p = 1$ to analyze the impact of such small percentage price adjustments.
\end{rema}

We assume throughout the paper that there is an upper limit on the queue length. In the context of Example \ref{exam:join_rule1}, if the utilities $u_i$ are drawn from a bounded set, then it follows that the queue length will have a finite support. 
\begin{assu}[Bounded queue length]
\label{assu:queue_length}
There exists a constant $K \in \mathbb{N}_+$ such that for any price $p$, $K = \min\cb{k:\lambda_k(p) = 0}$. 
\end{assu}

\subsection{Three representations of the policy gradient}

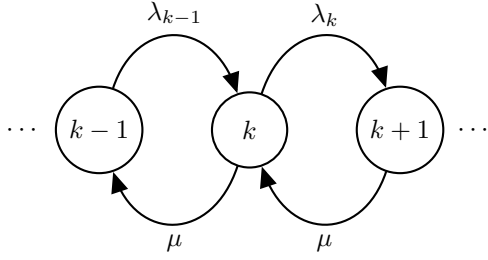
\begin{figure}[t]
\centering
\begin{tikzpicture}[line cap=round,line join=round,>=triangle 45,x=1.0cm,y=1.0cm]
  \node (dot1)  at (-1cm,1cm) {$\cdots$};
  \node (dot2) at (5cm,1cm) {$\cdots$};
 \node (one) [draw, thick, circle, minimum size=1cm] at (0,1cm) {$k-1$};
 \node (two) [draw, thick, circle, minimum size=1cm] at (2,1cm) {$k$};
 \node (three) [draw, thick, circle, minimum size=1cm] at (4,1cm) {$k+1$};

 \draw [->,  thick] (one) .. controls +(0.5,1.5) and +(-0.5,1.5) .. node [midway, above] {$\lambda_{k-1}$} (two);
 \draw [->, thick] (two) .. controls +(0.5,1.5) and +(-0.5,1.5) ..  node [midway, above] {$\lambda_{k}$} (three);

 \draw [->, thick] (three) .. controls +(-0.5,-1.5) and +(0.5,-1.5) ..  node [midway, below] {$\mu$} (two);
 \draw [->, thick] (two) .. controls +(-0.5,-1.5) and +(0.5,-1.5) .. node [midway, below] {$\mu$} (one);
\end{tikzpicture}
\caption{Illustration of the queue length process as a continuous-time Markov chain. }
\end{figure}

In our setting, the queue length process is a continuous-time birth-death Markov chain.  Let $\pi_k(p)$ denote
the steady-state probability of being in state $k$. The average steady-state processing rate
is equal to the average steady-state arrival rate and can be expressed as
\begin{equation}
\label{eq:Vp}
V(p)  = \bar{\lambda}(p) = \sum_{k} \pi_k(p)  \lambda_k(p),
\end{equation}
where $\bar{\lambda}(p)$ is the average steady-state arrival rate, and $\lambda_k(p)$ is the arrival rate with price $p$ when the queue length is $p$.
In order to build accurate estimators for our target estimand $V'(p)$, we start by
giving multiple equivalent representations for it under our model.

We begin by using the chain rule and express $V'(p)$ as 
$$\bar{\lambda}'(p)  = \sum_k \lambda_k'(p)\pi_k(p) + \sum_k \pi_k'(p)\lambda_k(p).$$ 
This reveals a decomposition of the policy gradient into a direct effect and an indirect effect.
Specifically, the term $\sum_k \lambda_k'(p)\pi_k(p)$ represents the direct effect, as it captures how the arrival rate is affected directly by alterations in the price $p$.
On the other hand, the term $\sum_k \pi_k'(p)\lambda_k(p)$ corresponds to the indirect effect, as it reflects the impact of changing the price $p$ on the average arrival rate mediated through alterations in the steady-state queue-length distribution. This arises because changing the price $p$ causes shifts in the steady-state probabilities, resulting in different weightings for the average arrival rate across different queue lengths. 

Due to the specific queueing structure present in this context, there is a correspondence between $\lambda_k$ and $\pi_k$. Consequently, we are able to express the estimand $V'(p)$ in several distinct forms, each of which can serve as a target during the estimation stage.

\begin{theo}
\label{theo:estimands}
Under Assumption \ref{assu:queue_length}, our target estimand can be expressed in the following three ways:
\begin{equation}
\label{eqn:lambda_p_dev}
V'(p) = \begin{cases}
\bar{\lambda}'(p) & \text{(model-free expression),} \\
-\mu \pi_0'(p) & \text{(idle-time expression),} \\
\mu \pi_{0}(p) \sum_{k = 0}^{K-1}  \frac{\lambda'_{k}(p)}{\lambda_{k}(p)} \sum_{i = k+1}^{K} \pi_i(p)
& \text{(weighted direct effect).}
\end{cases}
\end{equation}
\end{theo}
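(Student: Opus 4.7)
The plan is to prove the three representations in turn, treating the first as a restatement of the definition, the second as a consequence of work conservation (i.e., flow balance at the server), and the third as an algebraic reformulation of the second via the detailed-balance equations of the birth-death chain. The expressions naturally build on each other, so I would organize the proof accordingly.

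First, the model-free expression $V'(p) = \bar\lambda'(p)$ is immediate from equation \eqref{eq:Vp}, which already asserts $V(p) = \bar\lambda(p)$. For the idle-time expression, I would argue as follows. Assumption \ref{assu:queue_length} guarantees the chain is positive recurrent on $\{0,1,\ldots,K\}$, so a steady-state distribution $\pi_k(p)$ exists. The server is busy precisely when the queue length is nonzero, and conditional on being busy it processes at rate $\mu$, so the long-run throughput equals $\mu(1-\pi_0(p))$. Matching this against $V(p) = \bar\lambda(p)$ yields the identity $V(p) = \mu(1-\pi_0(p))$, and differentiating in $p$ gives $V'(p) = -\mu\pi_0'(p)$. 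One can also derive $V(p) = \mu(1-\pi_0(p))$ directly from the detailed-balance relations $\pi_k(p)\lambda_k(p) = \mu\pi_{k+1}(p)$ for $k = 0,\ldots,K-1$ (with $\lambda_K(p)=0$), by summing over $k$; this avoids any appeal outside the model.

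For the weighted direct effect, the strategy is to express $\pi_0'(p)$ in closed form using detailed balance and then plug into the idle-time formula. Iterating the local balance $\pi_{k+1}(p) = \pi_k(p)\lambda_k(p)/\mu$ gives the product form $\pi_k(p) = \pi_0(p)\prod_{j=0}^{k-1}\lambda_j(p)/\mu$ for $k \geq 1$. Taking logarithmic derivatives produces the clean identity
\begin{equation*}
\frac{\pi_k'(p)}{\pi_k(p)} \;=\; \frac{\pi_0'(p)}{\pi_0(p)} + \sum_{j=0}^{k-1}\frac{\lambda_j'(p)}{\lambda_j(p)}.
\end{equation*}
Now I would invoke the normalization $\sum_{k=0}^{K}\pi_k(p) = 1$, whose derivative gives $\sum_k \pi_k'(p) = 0$. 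Multiplying the logarithmic-derivative identity by $\pi_k(p)$, summing over $k$, and swapping the order of summation in the double sum $\sum_{k \geq 1}\pi_k(p)\sum_{j<k}\lambda_j'(p)/\lambda_j(p) = \sum_{j=0}^{K-1}(\lambda_j'(p)/\lambda_j(p))\sum_{i=j+1}^{K}\pi_i(p)$, yields
\begin{equation*}
\pi_0'(p) \;=\; -\pi_0(p)\sum_{j=0}^{K-1}\frac{\lambda_j'(p)}{\lambda_j(p)}\sum_{i=j+1}^{K}\pi_i(p).
\end{equation*}
Substituting into $V'(p) = -\mu\pi_0'(p)$ gives the weighted direct-effect expression.

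The steps are all elementary, so there is no single deep obstacle. The only place that requires a little care is the bookkeeping in the third step: one must recognize that applying the normalization constraint converts the unknown $\pi_0'/\pi_0$ into an explicit sum, and one must swap the order of summation cleanly so that the $\pi_i$'s assemble into the tail sums $\sum_{i=j+1}^K \pi_i(p)$. A secondary subtlety is to handle the boundary state $K$ (where $\lambda_K(p) = 0$): terms of the form $\lambda_K'(p)/\lambda_K(p)$ should not appear, and indeed the outer sum in the target expression only runs up to $k = K-1$, which is consistent with truncating the product form at the absorbing birth rate. With those two observations in place, the three identities follow from essentially one page of algebra.
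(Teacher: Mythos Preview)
Your proposal is correct and follows essentially the same route as the paper: both derive the idle-time expression from the balance relations (you additionally offer the work-conservation interpretation), and both obtain the weighted direct-effect expression by writing $\pi_k$ in product form, taking logarithmic derivatives, summing against the normalization constraint, and swapping the order of summation to assemble the tail sums. The only cosmetic difference is that the paper sums $\pi_k'$ over $k\ge 1$ and then rearranges, whereas you sum over all $k$ using $\sum_k\pi_k'=0$ directly; these are algebraically equivalent.
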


The first expression above is immediate from \eqref{eq:Vp}, and holds without any modeling
assumptions. The second expression follows from a general principle called customer conservation \citep{krakowski1973conservation}, whereby the frequency of entries into the system equals
the frequency of departures out of this system, i.e., $\bar{\lambda}(p) = (1 - \pi_0(p))\mu$.
The proof of this second expression relies on the fact that customer conservation
holds under our model; however, we note that the principle (and thus the result) in fact holds
much more broadly, including in G/G/1 queues.

The third expression is the most intricate one, and depends on our queue satisfying the
detailed balance condition $\lambda_k(p) \pi_k(p) = \mu \pi_{k+1}(p)$ for all $k$. What's
interesting about this expression is that it only depends on the direct effects of a 
price change (here, $\lambda_k'(p)$ can be interpreted as the direct effect of increasing $p$
when the queue has length $k$), and on quantities $\lambda_k(p)$ and $\pi_k(p)$ which can be
estimated by observing the queue in steady state (i.e., without price perturbations). Notably,
all of the terms in this expression can be estimated via simple plug-ins without actually
changing the steady state of the system.  By contrast, a plug-in estimate for $\pi_k'(p)$ requires
observation of system convergence to its stationary distribution at different price levels.

\subsection{Plug-in estimation}

We now consider using the three representations from Theorem \ref{theo:estimands}
to build estimators for the policy gradient using data collected from a switchback
experiment. Specifically, we will consider switchbacks that alternate prices between
$p+ \zeta$ and $p-\zeta$ for some small price perturbation $\zeta > 0$. When the expressions
from Theorem \ref{theo:estimands} have terms that involve derivatives
(e.g., $\lambda'_k(p)$), we estimate them via finite differences; whereas terms without
derivatives (e.g., $\lambda_k(p)$) are estimated via simple averaging. Given that we
focus on experiments that only have small price perturbations (and so finite difference
estimators will be noisy), the terms involving derivatives will generally be the dominant
sources of variance.

We consider two specific ways of running switchbacks: The interval switchback (IS) and the regenerative switchback (RS). In an interval switchback experiment, we divide the time horizon into intervals of equal length $l$. At each time $t = zl$ for some $z \in \mathbb{Z}_+$, the price is set to be $p+\zeta$ or $p-\zeta$ with probability $1/2$  independently, and the price is kept the same in the time interval $[zl, (z+1)l)$ \citep{bojinov2022design, hu2022switchback}. In a regenerative switchback experiment, the price can only be changed when the queue length is $k_r$. At each visit of state $k_r$, the price is set to be $p+\zeta$ or $p-\zeta$ with probability $1/2$  independently \citep{johari2019simons, glynn2020adaptive}.

\begin{figure}[t]
\includegraphics[width = \textwidth]{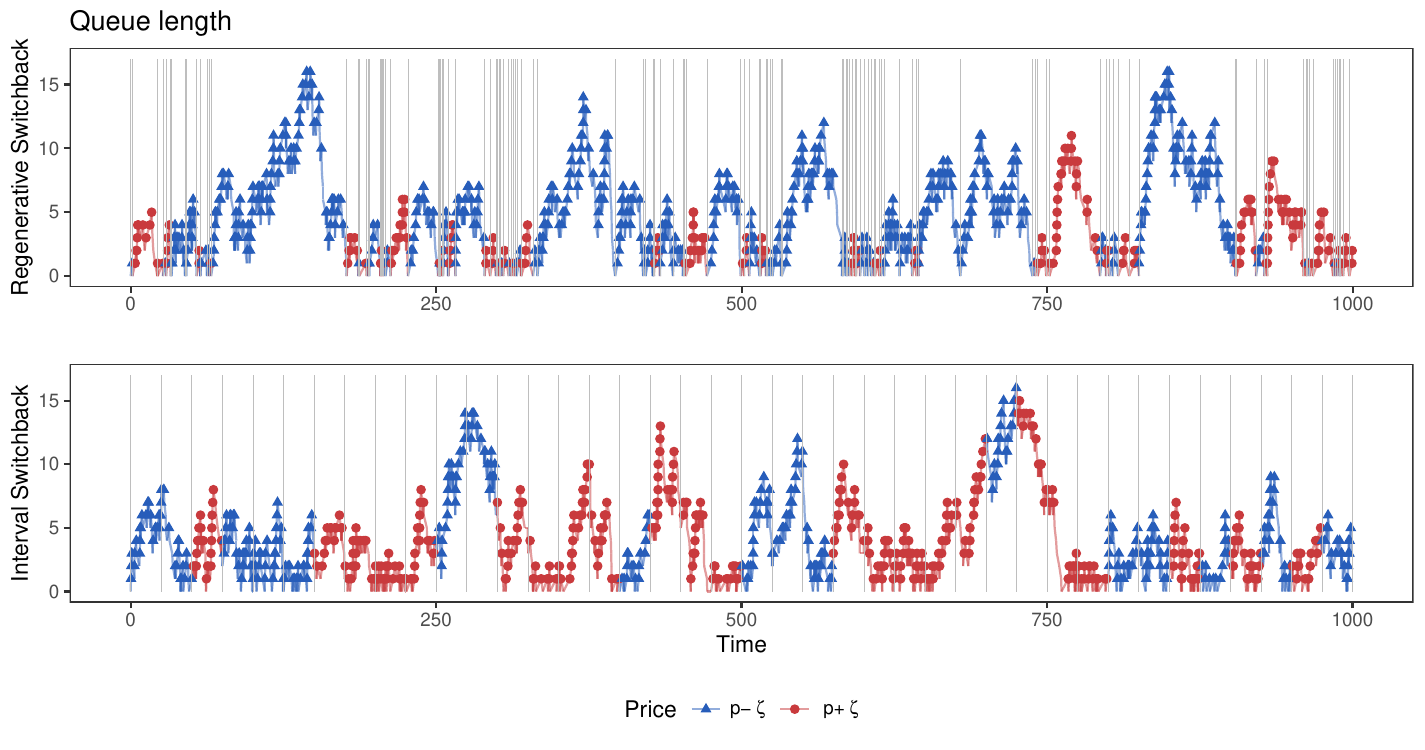}
\caption{Illustration of the two types of switchback experiments. Every dot indicates a new arrival. Blue dots/curve correspond to a price of $p- \zeta$, and red dots/curve correspond to a price of $p+\zeta$. In regenerative switchback experiments, the price is only switched when the queue is empty. In interval switchback experiments, the price is only switched at some fixed time (when $t = 25z$, for $z \in \mathbb{Z}$ in this example).}
\label{fig:illus}
\end{figure}

Targeting different representations of the estimand enables us to construct different estimators using data from a switchback experiment. Let $T$ be the time horizon of the experiments. Let $T_{+}$ ($T_{-}$) be the amount of time the price is at $p+\zeta$ ($p-\zeta$). Let $N_{+}$ ($N_{-}$) be the number of arrivals to the system when the price is at $p+\zeta$ ($p-\zeta$). Targeting $\bar{\lambda}'(p)$, we get a plug-in model-free estimator 
\begin{equation}
\label{eqn:first_estimator}
\hat{\tau}_{\bar{\lambda}} = \frac{1}{2\zeta}\p{\frac{N_{+}}{T_{+}} - \frac{N_{-}}{T_{-}}}. 
\end{equation}
Meanwhile, for the idle-time-based estimator, writing
$T_{0, +}$ ($T_{0, -}$) for the amount of time the queue length is zero and the price is at $p+\zeta$ ($p-\zeta$), let
\begin{equation}
	\label{eqn:second_estimator}
\hat{\tau}_{\pi_0} = -\frac{\mu}{2\zeta}\p{\frac{T_{0, +}}{T_{+}} - \frac{T_{0, -}}{T_{-}}}. 
\end{equation}
Finally, for the weighted direct effect estimator, let $T_{k}$  be the amount of time the queue length is $k$. Let $T_{k, +}$ ($T_{k, -}$) be the amount of time the price is at $p+\zeta$ ($p-\zeta$) and the queue length is $k$. Let $N_{k, +}$ ($N_{k, -}$) be the number of arrivals to the system when the price is at $p+\zeta$ ($p-\zeta$) and the queue length is $k$. We construct \smash{$\hat{\Delta}_{ k}$} as a finite-difference estimator for $\lambda_k'(p)/\lambda_k(p)$:
\begin{equation}
	\label{eqn:Delta}
   \hat{\Delta}_{k} = \frac{1}{\zeta} \frac{N_{k, +}/T_{k, +} - N_{ k, -}/T_{k, -}}{N_{k, +}/T_{k, +} + N_{k, -}/T_{k, -}}.
\end{equation}
Finally, plugging this into our expression from Theorem \ref{theo:estimands} lets
us estimate the policy gradient as a weighted average of the \smash{$\hat{\Delta}_{ k}$}:
\begin{equation}
	\label{eqn:WDE_def}
\hat{\tau}_{\model} = \mu \frac{T_{0}}{T} \sum_{k = 0}^{K-1}  \hat{\Delta}_{k}  \sum_{i = k+1}^{K} \frac{T_{i}}{T}. 
\end{equation}
We summarize the notation for the estimators in Table~\ref{table:estimator_more}.
 
\begin{table}[t]
		\begin{tabular}{ |m{3.8cm}|m{1.65cm}|m{2.4cm}|m{4.95cm}|  }
			\hline
			 & Model-free & Idle-time-based & Weighted direct effect \\
			\hline
			Target expression & $\bar{\lambda}'(p)$ &$-\mu \pi_0'(p)$& $\mu \pi_{0}(p) \sum_{k = 0}^{K-1}  \frac{\lambda'_{k}(p)}{\lambda_{k}(p)} \sum_{i = k+1}^{K} \pi_i(p)$\\
			\hline
			Estimator& $\hat{\tau}_{\bar{\lambda}}$  &  $\hat{\tau}_{\pi_0}$& $\hat{\tau}_{\model}$\\
			\hline
		\end{tabular}
 \caption{Summary of the estimators considered. } 
\label{table:estimator_more}
\end{table}

\section{Asymptotic Behavior under Stationarity}
In this section, we study the asymptotic behaviors of the estimators under stationarity, with a growing time horizon $T \to \infty$.
All estimators depend on the price perturbation $\zeta$ and the time horizon $T$, and to make this explicit we write them as $\hat{\tau}(T,\zeta)$. 
We focus on the regime where the perturbation size $\zeta$ shrinks as the time horizon $T$ grows; we write $\zeta_T$ to emphasize such dependency. We make the following assumption on the scaling of $\zeta_T$. 

\begin{assu}[Price perturbation]
\label{assu:zeta}
The price perturbation $\zeta_T$ satisfies $\sqrt{T} \zeta_T \to \infty$ and $\sqrt{T} \zeta_T^2 \to 0$ as $T \to \infty$. 
\end{assu}

We further make an assumption on the scale and smoothness of the functions $\lambda_k(p)$. 
\begin{assu}[Boundedness and smoothness]
\label{assu:bound_smooth}
There exist positive constants $B_0$, $B_1$ and $B_2$ such that $B_0 \leq \lambda_{k}(p) \leq B_1$, and $\abs{\lambda''_{k}(p)} \leq B_2$ for any price $p$ and $k \in \cb{0, 1, \dots, K-1}$.
\end{assu}

Finally, we introduce Assumption~\ref{assu:two_intervals}. 
This assumption requires that in the interval switchback experiment, the interval length be sufficiently long so that any carryover effects from the previous period are weak enough not to bias the estimator, while also shrinking to zero so that the total number of intervals grows with~$T$. 
The latter condition ensures that the total durations under treatment and control are approximately balanced. 
In Bernoulli switchback experiments, if the number of switches is fixed, there is a non-negligible chance that the design becomes unbalanced, and allowing the number of intervals to increase mitigates this issue. 

\begin{assu}[Interval length]
	\label{assu:two_intervals}
	In the interval switchback experiment, the interval length $l_T$ satisfies $l_T \to 0$ and $l_T / \sqrt{T} \to \infty$ as $T \to \infty$.
\end{assu}

Under the above assumptions, we establish central limit theorems for all proposed estimators. We provide proof outlines in Section \ref{sec:proof_outlines}. Additional proofs of the propositions are in Appendix \ref{section:addition_proofs}. 
\begin{theo}[Model-free estimator]
\label{theo:first_esti}
Under Assumptions \ref{assu:queue_length}-\ref{assu:two_intervals}, assume further that the data is generated from either the interval switchback experiment or the regenerative switchback experiment. Then,
\begin{equation}
\sqrt{T \zeta_T^2}\p{ \hat{\tau}_{\bar{\lambda}}(T, \zeta_T) - V'(p)} \Rightarrow \mathcal{N}\p{0, \sigma_{\bar{\lambda}}^2(p)}, 
\end{equation}
as $T \to \infty$, where 
\begin{equation}
\label{eqn:sigma_lambdabar_def}
\sigma_{\bar{\lambda}}^2(p) =  (1-\pi_0(p))\mu + 2\mu \pi_0(p)\p{\sum_{k = 1}^{K-1}\frac{S_k(p)S_{k+1}(p)}{\pi_k(p)}  - (1-\pi_0(p))\sum_{k =1}^K \frac{S_k^2(p)}{\pi_k(p)}},
\end{equation}
and $S_k(p) = \sum_{j = k}^{K} \pi_j(p)$. 
\end{theo}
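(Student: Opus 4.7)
The plan is to split the error into a deterministic bias and a centered stochastic part, handle each separately, and then identify the resulting asymptotic variance with the explicit expression in \eqref{eqn:sigma_lambdabar_def}. Concretely, write
\[
\hat{\tau}_{\bar{\lambda}}(T,\zeta_T) - V'(p) = \underbrace{\frac{\bar{\lambda}(p+\zeta_T) - \bar{\lambda}(p-\zeta_T)}{2\zeta_T} - V'(p)}_{\text{bias}} + \frac{1}{2\zeta_T}\Bigl[\bigl(N_+/T_+ - \bar{\lambda}(p+\zeta_T)\bigr) - \bigl(N_-/T_- - \bar{\lambda}(p-\zeta_T)\bigr)\Bigr].
\]
By the smoothness in Assumption~\ref{assu:bound_smooth}, a symmetric Taylor expansion of $\bar{\lambda}$ about $p$ makes the bias $O(\zeta_T^2)$, so after multiplication by $\sqrt{T\zeta_T^2}=\sqrt{T}\,\zeta_T$ it is $O(\sqrt{T}\,\zeta_T^3)=o(1)$ by Assumption~\ref{assu:zeta}. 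The limit distribution is therefore controlled by the stochastic term, which rescales as $\sqrt{T\zeta_T^2}/(2\zeta_T)=\sqrt{T}/2$ times the difference of centered arrival rates.

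Next I would establish a CLT for the centered arrival rate on each arm, conditional on the design. For the regenerative switchback this is cleanest: cycles delimited by consecutive visits to the regeneration state are i.i.d.\ once the price for that cycle is fixed (with finite variance by Assumptions~\ref{assu:queue_length} and~\ref{assu:bound_smooth}), so the accumulated pair (cycle length, arrival count) on each arm drives a renewal-reward process, and the regenerative CLT yields $\sqrt{T_\pm}\bigl(N_\pm/T_\pm - \bar{\lambda}(p\pm\zeta_T)\bigr)\Rightarrow \mathcal{N}(0,\sigma_{\bar{\lambda}}^2(p\pm\zeta_T))$ together with $T_\pm=T/2+O_p(\sqrt{T})$. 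For the interval switchback, Assumption~\ref{assu:two_intervals} makes each interval much longer than $\sqrt{T}$, so a short burn-in to stationarity followed by a Markov-CTMC CLT within each interval (and aggregation over the $\Theta(T/l_T)$ intervals) produces the same limit; here $T_\pm=T/2+o_p(\sqrt{T})$ is given. Because conditional on the design the two arms are independent and $\sigma_{\bar{\lambda}}^2(\cdot)$ is continuous, letting $\zeta_T\to 0$ collapses $\sigma_{\bar{\lambda}}^2(p\pm\zeta_T)$ to $\sigma_{\bar{\lambda}}^2(p)$, and multiplying through by $\sqrt{T}/2$ yields the claimed Gaussian limit with variance $\sigma_{\bar{\lambda}}^2(p)$.

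What remains is to identify this abstract long-run variance with the explicit expression in \eqref{eqn:sigma_lambdabar_def}. I would decompose $N(t)=M(t)+\int_0^t \lambda_{X(s)}\,ds$ where $M$ is the compensated-counting martingale with predictable quadratic variation $\int_0^t \lambda_{X(s)}\,ds$; the martingale piece contributes $\bar{\lambda}(p)=(1-\pi_0(p))\mu$ per unit time, which is the leading term in \eqref{eqn:sigma_lambdabar_def}. The remainder comes from the long-run variance of $\int_0^t(\lambda_{X(s)}-\bar{\lambda})\,ds$ plus its covariance with $M$. Equivalently, in the regenerative picture with cycles starting and ending at state $0$, the asymptotic variance per unit time equals $\mathrm{Var}(n_1-\bar{\lambda}\tau_1)/\mathbb{E}[\tau_1]$, with the prefactor $\mu\pi_0(p)$ coming from $\mathbb{E}[\tau_1]=1/(\lambda_0\pi_0)$ combined with the detailed-balance identity $\lambda_0\pi_0=\mu\pi_1$, and the tail sums $S_k(p)=\sum_{j\geq k}\pi_j(p)$ arising as expected occupation times at levels $\geq k$ within a single cycle. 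A Poisson-equation computation on the birth-death chain (solvable in closed form using detailed balance $\lambda_k\pi_k=\mu\pi_{k+1}$), followed by summation-by-parts, should yield precisely the quadratic form $\sum_k S_k S_{k+1}/\pi_k - (1-\pi_0)\sum_k S_k^2/\pi_k$ in \eqref{eqn:sigma_lambdabar_def}.

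The main obstacle is this final step. The bias bound and the CLT are standard ingredients, but transforming the abstract asymptotic variance into the explicit closed form requires care in solving the Poisson equation on the birth-death chain and in combining the martingale, mean-integral, and cross-covariance contributions without sign or index errors. I would tackle it via the regenerative cycle computation, since there $\pi_0$ is just the reciprocal of an expected cycle length and the $S_k$ are cycle-level occupation statistics, making the arithmetic most transparent.
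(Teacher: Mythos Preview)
Your overall architecture matches the paper's: split into a deterministic bias (killed by $\sqrt{T}\zeta_T^2\to 0$) plus a centered stochastic difference, establish a joint CLT for $(N_+/T_+,\,N_-/T_-)$ via regenerative structure, then identify the limiting variance with \eqref{eqn:sigma_lambdabar_def}. Two places where the paper's execution differs from your sketch are worth noting.

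First, for the interval switchback the paper does \emph{not} use a burn-in-per-interval argument. It slices the entire time axis at successive visits to a fixed state $k_r$, labels each resulting sub-cycle as positive, negative, or mixed according to the prevailing price, and observes that the number of mixed sub-cycles is $o(\sqrt{T})$ because the number of switches is $T/l_T=o(\sqrt{T})$ under Assumption~\ref{assu:two_intervals}. This reduces both designs to the same i.i.d.-cycle structure and is paired with a triangular-array version of Anscombe's theorem to handle $\zeta_T\to 0$ rigorously, rather than appealing to continuity of the variance after the fact.

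Second, for the variance identification the paper does not go through your martingale/compensator split or a direct regenerative moment calculation. It uniformizes the CTMC, expresses the long-run variance of the arrival count as $\piv^\top\tilde Q\,\onev-2\,\piv^\top\tilde Q\,Q^\#\tilde Q\,\onev$ with $Q^\#$ the group inverse of the generator, and then derives a closed-form expression for $Q^\#_{k,i}$ on birth--death chains (Proposition~\ref{prop:Q_inverse}); substitution plus one summation-by-parts gives \eqref{eqn:sigma_lambdabar_def}. Your Poisson-equation route is the same object in different clothing, so either works; the paper's payoff is that the explicit $Q^\#$ formula is reused verbatim in Theorems~\ref{theo:second_esti} and~\ref{theo:model_CLT}. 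One small correction: with only a bounded second derivative (Assumption~\ref{assu:bound_smooth}) the symmetric-difference bias is $O(\zeta_T)$, not $O(\zeta_T^2)$; this still gives $O(\sqrt{T}\zeta_T^2)=o(1)$, which is exactly what the paper uses.
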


\begin{theo}[Idle-time-based estimator]
\label{theo:second_esti}
Under Assumptions \ref{assu:queue_length}-\ref{assu:two_intervals}, assume further that the data is generated from either the interval switchback experiment or the regenerative switchback experiment. Then,
\begin{equation}
\sqrt{T \zeta_T^2}\p{ \hat{\tau}_{\pi_0}(T, \zeta_T) -  V'(p)} \Rightarrow \mathcal{N}\p{0, \sigma_{\pi_0}^2(p)}, 
\end{equation}
as $T \to \infty$, where 
\begin{equation}
\label{eqn:def_sigma_pi0}
\sigma_{\pi_0}^2(p) =  2\mu \pi_0^2(p) \sum_{k = 1}^{K} \frac{S_k^2(p)}{\pi_k(p)}, \textnormal{ and } S_k(p) = \sum_{j = k}^{K} \pi_j(p). 
\end{equation}
\end{theo}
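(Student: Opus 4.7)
The plan is to mimic the template of Theorem \ref{theo:first_esti}: decompose the estimation error into a finite-difference bias plus a stochastic fluctuation, show the bias is asymptotically negligible under Assumption \ref{assu:zeta}, and derive a CLT for the stochastic part by solving the Poisson equation for the indicator $\mathbbm{1}\{X=0\}$ on the birth-death chain.

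Using $V'(p) = -\mu\pi_0'(p)$ from Theorem \ref{theo:estimands} and writing $\hat\pi_{0,\pm} = T_{0,\pm}/T_\pm$, I would split
\begin{equation*}
\hat\tau_{\pi_0}-V'(p) = -\frac{\mu}{2\zeta_T}\bigl[\pi_0(p+\zeta_T)-\pi_0(p-\zeta_T)-2\zeta_T\pi_0'(p)\bigr]
-\frac{\mu}{2\zeta_T}\bigl[(\hat\pi_{0,+}-\pi_0(p+\zeta_T))-(\hat\pi_{0,-}-\pi_0(p-\zeta_T))\bigr].
\end{equation*}
Under Assumption \ref{assu:bound_smooth}, $\pi_0(p)$ is a smooth rational function of the rates $\lambda_k(p)$ and hence $C^2$, so the first bracket is $O(\zeta_T^3)$ and contributes a deterministic bias of order $\zeta_T^2$; after scaling by $\sqrt{T\zeta_T^2}$ this becomes $O(\sqrt{T}\zeta_T^3) = o(1)$ by Assumption \ref{assu:zeta}.

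The main analytic work is computing the asymptotic variance $\sigma_0^2$ of $\hat\pi_0 = T^{-1}\int_0^T\mathbbm{1}\{X_s=0\}\,ds$ under the stationary birth-death chain at price $p$. Solving the Poisson equation $\mathcal{L}g = -(\mathbbm{1}\{\cdot=0\}-\pi_0)$: multiplying the recursion $\lambda_k[g(k+1)-g(k)]-\mu[g(k)-g(k-1)] = -(\mathbbm{1}\{k=0\}-\pi_0)$ by $\pi_k$, invoking detailed balance $\lambda_k\pi_k = \mu\pi_{k+1}$, and telescoping yields
\begin{equation*}
g(k+1)-g(k) = -\frac{\pi_0\,S_{k+1}}{\mu\pi_{k+1}}.
\end{equation*}
Substituting into the Dirichlet-form representation $\sigma_0^2 = 2\sum_{k=0}^{K-1}\pi_k\lambda_k(g(k+1)-g(k))^2$ and using detailed balance once more gives $\sigma_0^2 = (2\pi_0^2/\mu)\sum_{k=1}^K S_k^2/\pi_k$. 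Since $T_\pm \approx T/2$, each $\hat\pi_{0,\pm}-\pi_0(p\pm\zeta_T)$ is asymptotically $\mathcal{N}(0, 2\sigma_0^2/T)$, and independence across the two arms gives difference variance $4\sigma_0^2/T$; multiplying by the prefactor $\sqrt{T\zeta_T^2}\cdot(-\mu/(2\zeta_T)) = -\mu\sqrt{T}/2$ then produces the stated $\sigma_{\pi_0}^2 = \mu^2\sigma_0^2 = 2\mu\pi_0^2\sum_k S_k^2/\pi_k$.

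It remains to justify the stochastic CLT in each of the two designs. For the regenerative switchback, excursions between successive visits to $k_r$ are i.i.d.\ blocks carrying independent fair-coin prices; a standard regenerative/ratio CLT then gives joint asymptotic normality of $(\hat\pi_{0,+},\hat\pi_{0,-})$ with the variance above and zero cross-arm covariance. For the interval switchback, I would partition $[0,T]$ into $\approx T/l_T$ intervals; finite-state irreducibility together with Assumptions \ref{assu:queue_length} and \ref{assu:bound_smooth} gives geometric ergodicity with $O(1)$ relaxation time, so the carryover at each boundary contributes $O(1)$ to the within-interval integrated indicator, and summing over $O(T/l_T)$ boundaries yields total carryover $O(1/l_T) = o(1/\sqrt{T})$ by Assumption \ref{assu:two_intervals}. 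A Lindeberg CLT across intervals, combined with $T_+ = T/2 + o_p(\sqrt{T})$, then closes the argument. The main obstacle is precisely this last step: showing that the cumulative carryover bias is $o_p(T^{-1/2})$ rather than merely $o_p(1)$ is exactly what forces the scaling $l_T \gg \sqrt{T}$ in Assumption \ref{assu:two_intervals}, and it relies on a quantitative mixing bound for the finite-state birth-death chain.
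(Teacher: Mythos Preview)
Your proposal is correct, and the variance computation via the Poisson equation is a genuinely cleaner route than the paper's. The paper obtains $\sigma_{\pi_0}^2$ by first invoking the Markov-chain CLT from \citet{asmussen2003applied} in the form $\check\sigma^2 = -4\pi_0 Q^{\#}_{0,0}$, and then separately deriving a closed-form expression for the full group inverse $Q^{\#}_{k,i}$ (Proposition~\ref{prop:Q_inverse}) by verifying the algebraic characterization $QA = I - \onev\piv^\top$, $\piv^\top A = 0$. Your direct solution of $\mathcal{L}g = -(\mathbbm{1}\{\cdot = 0\}-\pi_0)$ followed by the Dirichlet-form identity bypasses the matrix inverse entirely and arrives at $\sum_k S_k^2/\pi_k$ in a few lines. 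The tradeoff is that the paper's $Q^{\#}$ formula is reused in the proof of Theorem~\ref{theo:first_esti} (via Lemma~\ref{lemma:num_arrival}), so computing it once amortizes across both results; your approach is more self-contained but would need to be redone for the model-free variance.

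For the CLT under the interval switchback, the paper also handles this differently: rather than bounding carryover by mixing, it reuses the regenerative decomposition at $k_r$ even under interval switching, classifying each $k_r$-excursion as purely-$+$, purely-$-$, or ``mixed'' (containing a switch time), and observes that there are at most $T/l_T = o(\sqrt T)$ mixed excursions. This unifies the two designs under one triangular-array CLT (Proposition~\ref{prop:second_esti_triangular}). Your mixing-based carryover bound is equally valid and is exactly the mechanism behind Assumption~\ref{assu:two_intervals}, but it is a genuinely separate argument.

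One small correction: under Assumption~\ref{assu:bound_smooth} you only have $\pi_0 \in C^2$, which gives the symmetric-difference remainder $\pi_0(p+\zeta_T)-\pi_0(p-\zeta_T)-2\zeta_T\pi_0'(p) = O(\zeta_T^2)$, not $O(\zeta_T^3)$. This is harmless---after dividing by $2\zeta_T$ and scaling by $\sqrt{T\zeta_T^2}$ the bias is $O(\sqrt T\zeta_T^2) = o(1)$ by Assumption~\ref{assu:zeta}, which is also how the paper argues---but the $O(\zeta_T^3)$ claim would require a bounded third derivative that is not assumed.
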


\begin{theo}[Weighted direct effect estimator]
\label{theo:model_CLT}
Under Assumptions \ref{assu:queue_length}-\ref{assu:two_intervals}, assume further that the data is generated from either the interval switchback experiment or the regenerative switchback experiment. Then,
\begin{equation}
\sqrt{T \zeta_T^2} \p{ \hat{\tau}_{\model}(T, \zeta_T) -   V'(p)} 
\Rightarrow \mathcal{N}\p{0, \sigma_{\model}^2(p)}, 
\end{equation}
as $T \to \infty$, where 
\begin{equation}
\label{eqn:def_sigma_model}
\sigma_{\model}^2(p) =  \mu \pi_0^2(p) \sum_{k = 1}^{K} \frac{S_k^2(p)}{\pi_k(p)}, \textnormal{ and } S_k(p) = \sum_{j = k}^{K} \pi_j(p).
\end{equation}
\end{theo}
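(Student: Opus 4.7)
The plan is to decompose $\hat{\tau}_{\model} - V'(p)$ into a dominant linearized Poisson-fluctuation term plus negligible remainders, and then apply a conditional Poisson central limit theorem. First, by the ergodic theorem for the underlying finite-state continuous-time Markov chain, $T_k/T = \pi_k(p) + O_P(1/\sqrt{T})$ uniformly in $k$. Hence replacing the plug-in weights $T_0/T$ and $\sum_{i=k+1}^{K} T_i/T$ by their stationary counterparts $\pi_0(p)$ and $S_{k+1}(p)$ in $\hat{\tau}_{\model}$ costs an additive error of order $1/\sqrt{T}$ multiplied by the bounded limit $\lambda_k'(p)/\lambda_k(p)$ of $\hat{\Delta}_k$; after rescaling by $\sqrt{T\zeta_T^2} = \sqrt{T}\,\zeta_T$ this is $O_P(\zeta_T) = o_P(1)$ under Assumption~\ref{assu:zeta}.

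Next I would Taylor-expand each $\hat{\Delta}_k$. Writing $f(a,b) = (a-b)/(a+b)$ and $\hat{\lambda}_{k,\pm} = N_{k,\pm}/T_{k,\pm}$, one has $\hat{\Delta}_k = f(\hat{\lambda}_{k,+}, \hat{\lambda}_{k,-})/\zeta_T$. Evaluating $f$ at the deterministic point $(\lambda_k(p+\zeta_T), \lambda_k(p-\zeta_T))$ yields, using Assumption~\ref{assu:bound_smooth} and the fact that even-order terms cancel, the value $\zeta_T \lambda_k'(p)/\lambda_k(p) + O(\zeta_T^3)$; dividing by $\zeta_T$ and rescaling by $\sqrt{T}\,\zeta_T$ shows the finite-difference bias contributes $O(\sqrt{T}\,\zeta_T^3) \to 0$ under Assumption~\ref{assu:zeta}. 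The partial derivatives of $f$ at the leading-order point $(\lambda_k(p), \lambda_k(p))$ equal $\pm 1/(2\lambda_k(p))$, giving the linearization
\[
\hat{\Delta}_k - \frac{\lambda_k'(p)}{\lambda_k(p)} = \frac{1}{2\zeta_T \lambda_k(p)}\Bigl[(\hat{\lambda}_{k,+} - \lambda_k(p+\zeta_T)) - (\hat{\lambda}_{k,-} - \lambda_k(p-\zeta_T))\Bigr] + R_k,
\]
with a quadratic remainder $R_k = O_P(1/(T\zeta_T))$ whose contribution to the rescaled error is $O_P(1/\sqrt{T})$.

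The central probabilistic input is that, conditional on the full state-and-price trajectory (and in particular on the $T_{k,\pm}$), the counts $N_{k,\pm}$ are independent across $k$ and across the signs $\pm$, and Poisson distributed with mean $\lambda_k(p\pm\zeta_T) T_{k,\pm}$; this is the standard Poisson-clocks construction of the CTMC, where arrivals in state $k$ under price $p+\zeta_T$ are an independent Poisson stream of rate $\lambda_k(p+\zeta_T)$ restricted to a (random) time set of total measure $T_{k,+}$. Combined with $T_{k,\pm}/T \to \pi_k(p)/2$ in probability (ergodicity together with Assumption~\ref{assu:two_intervals} in the interval-switchback case, and regenerative-cycle renewal theory in the regenerative case), the dominant term in $\hat{\tau}_{\model} - V'(p)$ has conditional variance approximately
\[
\frac{\mu^2 \pi_0^2(p)}{4\zeta_T^2} \sum_{k=0}^{K-1} \frac{S_{k+1}^2(p)}{\lambda_k^2(p)} \cdot \frac{4\lambda_k(p)}{T\pi_k(p)} = \frac{\mu \pi_0^2(p)}{T\zeta_T^2} \sum_{k=1}^{K} \frac{S_k^2(p)}{\pi_k(p)},
\]
where the last step invokes the detailed-balance identity $\lambda_k(p)\pi_k(p) = \mu \pi_{k+1}(p)$ and the index shift $k+1 \mapsto k$. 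Multiplying by $T\zeta_T^2$ gives exactly $\sigma_{\model}^2(p)$. Asymptotic normality follows by a Lindeberg--Feller CLT applied conditionally on the trajectory (each standardized $N_{k,\pm}$ is a Poisson random variable with mean growing linearly in $T$), and since the conditional variance converges in probability to a deterministic limit, the conditional CLT lifts to an unconditional one.

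The main obstacle is the careful bookkeeping of remainder terms. I must verify that the cross-product between weight noise ($O_P(1/\sqrt{T})$) and the linearized Poisson noise ($O_P(1/(\zeta_T\sqrt{T}))$) contributes $O_P(1/(T\zeta_T)) = o_P(1/(\sqrt{T}\,\zeta_T))$ under Assumption~\ref{assu:zeta}, and must control the quadratic corrections from the expansion of $f$ uniformly in $k$. A further subtlety arises in the regenerative switchback, where $T_{k,+}$ and $T_{k,-}$ are coupled through the visits to $k_r$; one should invoke regenerative-process limit theorems to confirm that $(T_{k,+}, T_{k,-})/T$ concentrates near $(\pi_k(p)/2, \pi_k(p)/2)$ at the usual $1/\sqrt{T}$ rate and that the conditional Poisson independence across $\pm$ is preserved within and across epochs.
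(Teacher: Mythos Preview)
Your overall architecture---replacing the plug-in weights by stationary ones, linearizing $\hat{\Delta}_k$ via Taylor expansion, and isolating the dominant fluctuation---mirrors the paper's proof, and your bookkeeping of the remainder orders is essentially correct. The gap lies in your central probabilistic claim.

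You assert that, conditional on the full state-and-price trajectory, each $N_{k,\pm}$ is Poisson with mean $\lambda_k(p\pm\zeta_T)T_{k,\pm}$ and independent across $k$ and across $\pm$. This is false. Conditional on the full trajectory, $N_{k,\pm}$---the number of upward jumps out of state $k$ under price $\pm$---is deterministic. If instead you condition only on $T_{k,\pm}$, the count is still not Poisson: in the Poisson-clocks construction the random time set $\{t:\text{state}(t)=k,\ \text{price}(t)=+\}$ is itself determined by the arrival and departure clocks (each sojourn in state $k$ ends precisely when one of them rings), so the occupation time and the arrival count on it are not related as for a Poisson process observed over an independent window. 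Concretely, with $K=1$ every visit to state $0$ ends with exactly one arrival, hence $N_0$ equals the number of visits and $|N_0-N_1|\le 1$ deterministically; your conditional-Poisson independence would force $\mathrm{Var}(N_0-N_1)$ to grow linearly in $T$.

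Your final variance expression is nonetheless correct, because the asymptotic variance of $N_k/T_k-\lambda_k$ in a fixed-price chain over horizon $T$ \emph{does} equal $\lambda_k/(\pi_k T)$---numerically the same as the naive Poisson answer---and the ratios \emph{are} asymptotically uncorrelated across $k$. But these facts are nontrivial and do not follow from any conditional-Poisson structure. The paper establishes them (Propositions~\ref{prop:CLT_varying_para}--\ref{prop:CLT_fixed_para}) through a regenerative-cycle CLT for the full vector of occupation times and jump counts, combined with an explicit uniformization-based covariance calculation (Lemma~\ref{lemm:covariance_N_J}) showing that the off-diagonal asymptotic covariances vanish. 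You would need to supply an argument of this kind; the conditional-Poisson shortcut is not available.
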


	\begin{rema}[Choice of $\zeta_T$]
		To provide some intuition for the choice of $\zeta_T$ in Assumption \ref{assu:zeta}:
		we need $\zeta$ to be small enough so that the estimator bias is small, and large enough so that the estimator variance is small. 
		We use the model-free estimator $\hat{\tau}_{\bar{\lambda}} = \frac{1}{2\zeta}({N_{+}}/{T_{+}} - {N_{-}}/{T_{-}})$ as an example to illustrate the scaling of $\zeta$. 
		For the bias,
		\[ \abs{\operatorname{Bias}(\hat{\tau}_{\bar{\lambda}})} = \abs{\EE{\hat{\tau}_{\bar{\lambda}}} - V'(p)} = \abs{\frac{V(p + \zeta) - V(p - \zeta)}{2\zeta} - V'(p)} \leq \frac{B_2}{2} \zeta. \]
		since $|\lambda''_{k}(p)| \le B_2$ by Assumption~3. 
		For the variance, a Markov chain CLT implies
		\[
		\Var{\hat{\tau}_{\bar{\lambda}}} = \mathcal{O}\!\left(\frac{1}{T \zeta^2}\right).
		\]
		We need the variance to vanish, which requires $T \zeta^2 \to \infty$, i.e., $\zeta \gg T^{-1/2}$. 
		To make the estimator asymptotically unbiased, the squared bias must decay faster than the variance, i.e., $\zeta^2 \ll 1/(T \zeta^2)$, which gives $\zeta \ll T^{-1/4}$. 
		Together these yield $T^{-1/2} \ll \zeta \ll T^{-1/4}$.
	\end{rema}
	
	\begin{rema}[Interval length]
		The requirement that each interval length be at least of order $\sqrt{T}$ in Assumption \ref{assu:two_intervals} arises from controlling carryover bias under a continuous-time Markov chain with mixing time of constant order. Within each period of length $l_T$, a constant amount of time at the start of the period can be influenced by the previous period (see \citet{hu2022switchback}). Thus, over total horizon $T$, the total contaminated time is $C \cdot (T / l_T)$ for some constant $C$. For a central limit theorem on the $\sqrt{T}$ scale, this bias must be negligible relative to $\sqrt{T}$, i.e.
		\[
		C \frac{T}{l_T} \ll \sqrt{T}
		\quad \Longrightarrow \quad
		l_T \gg \sqrt{T}.
		\]
		This ensures that the cumulative effect of carryover bias vanishes asymptotically, allowing valid inference.
	\end{rema}

Comparing the three asymptotic variances, we find that the asymptotic variance of the weighted direct effect estimator is always a half of that of the idle-time-based estimator. Furthermore, the following theorem establishes that the asymptotic variance of the weighted direct effect estimator is no larger than that of the model-free estimator. 

\begin{theo}[Variance comparison]
\label{theo:variance_comp}
Assume that there is a constant $K \in \mathbb{N}_+ \cup \infty$ such that for any price $p$, $K = \min\cb{k:\lambda_k(p) = 0}$.\footnote{We take the convention that $\min(\emptyset) = \infty$.} Then, 
\begin{equation}
\label{eqn:var_comp1}
\sigma_{\model}^2(p) = \frac{1}{2} \sigma_{\pi_0}^2(p),
\end{equation}
and 
\begin{equation}
\label{eqn:var_comp2}
\sigma_{\model}^2(p) \leq \sigma_{\bar{\lambda}}^2(p),
\end{equation}
where $\sigma_{\bar{\lambda}}^2(p), \sigma_{\pi_0}^2(p)$ and $\sigma_{\model}^2(p)$ are defined in \eqref{eqn:sigma_lambdabar_def}, \eqref{eqn:def_sigma_pi0} and \eqref{eqn:def_sigma_model} respectively. 
In \eqref{eqn:var_comp2}, the equality holds if and only if the queue is an $M/M/1$ queue when the price is set at $p$, i.e., $\lambda_k(p) =  \lambda_0(p)$ for all $k \geq 0$. 
\end{theo}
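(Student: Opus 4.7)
The identity $\sigma_{\model}^2(p) = \tfrac{1}{2}\sigma_{\pi_0}^2(p)$ is immediate on inspecting \eqref{eqn:def_sigma_pi0} and \eqref{eqn:def_sigma_model}: both sides are equal to $\mu \pi_0^2(p)\sum_{k=1}^K S_k^2(p)/\pi_k(p)$ up to an explicit factor of two, giving \eqref{eqn:var_comp1}. The substantive work is the inequality \eqref{eqn:var_comp2}, and my plan is to rewrite the difference $\sigma_{\bar{\lambda}}^2 - \sigma_{\model}^2$ as a single completed square, from which both the bound and the equality condition fall out. I suppress the argument $p$ throughout.

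Starting from \eqref{eqn:sigma_lambdabar_def} and \eqref{eqn:def_sigma_model}, I would apply the telescoping identity $S_{k+1} = S_k - \pi_k$ to rewrite the cross term $S_k S_{k+1}/\pi_k = S_k^2/\pi_k - S_k$, extend $\sum_{k=1}^{K-1}$ to $\sum_{k=1}^{K}$ using the boundary value $S_K^2/\pi_K = \pi_K$, and collapse $\sum_{k=1}^K S_k = \sum_{j=1}^K j\pi_j =: W$ by swapping the order of summation. After this bookkeeping the difference reduces to $\mu\bigl[(1-\pi_0) - 2\pi_0 W + \pi_0^2 \sum_{k=1}^K S_k^2/\pi_k\bigr]$. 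The key recognition is then that this bracket equals $\pi_0^2 \sum_{k=1}^K \pi_k (S_k/\pi_k - 1/\pi_0)^2$: expanding the square produces a $\pi_0^2 S_k^2/\pi_k$ term, a $-2\pi_0 S_k$ cross term (summing to $-2\pi_0 W$), and a $\pi_k$ constant term (summing to $1-\pi_0$), matching term by term. Consequently
\begin{equation*}
\sigma_{\bar{\lambda}}^2(p) - \sigma_{\model}^2(p) = \mu \pi_0^2(p) \sum_{k=1}^K \pi_k(p)\left(\frac{S_k(p)}{\pi_k(p)} - \frac{1}{\pi_0(p)}\right)^2 \geq 0,
\end{equation*}
which is \eqref{eqn:var_comp2}.

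For the equality characterization, the square form vanishes iff $S_k/\pi_k = 1/\pi_0$ for every $k \geq 1$ with $\pi_k > 0$. Using $\pi_k = S_k - S_{k+1}$ this forces $\pi_{k+1} = (1-\pi_0)\pi_k$ for all $k \geq 0$ (the $k=0$ case coming from $S_1 = 1-\pi_0$), so $\pi_k = (1-\rho)\rho^k$ with $\rho := 1-\pi_0$. Detailed balance $\lambda_k \pi_k = \mu \pi_{k+1}$ then yields $\lambda_k \equiv \mu\rho$ for all $k$; since $\rho > 0$, the convention $K = \min\{k:\lambda_k = 0\}$ forces $K = \infty$, so the queue is precisely an $M/M/1$ queue. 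The converse is an immediate verification on the geometric stationary distribution. The only non-routine step in this plan is spotting the complete-the-square identity; a mechanical Cauchy--Schwarz bound on $\sum S_k^2/\pi_k$ would yield the inequality but obscures the equality case, whereas the square form above delivers the geometric (hence $M/M/1$) criterion in one stroke.
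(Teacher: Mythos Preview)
Your proposal is correct and follows essentially the same approach as the paper: both reduce $\sigma_{\bar{\lambda}}^2 - \sigma_{\model}^2$ to the nonnegative sum $\sum_{k=1}^K \bigl(\sqrt{\pi_k} - \pi_0 S_k/\sqrt{\pi_k}\bigr)^2$ (your form $\pi_0^2\sum_k \pi_k(S_k/\pi_k - 1/\pi_0)^2$ is algebraically identical), and both read off the $M/M/1$ equality case from $S_k/\pi_k \equiv 1/\pi_0$ via detailed balance. The intermediate bookkeeping---telescoping $S_{k+1}=S_k-\pi_k$, handling the $k=K$ boundary via $S_K^2/\pi_K=\pi_K$, and collapsing the coefficients on $\sum S_k^2/\pi_k$---matches the paper's computation line for line.
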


\begin{rema}[$K = \infty$]
    Theorems \ref{theo:first_esti}-\ref{theo:model_CLT} are established under Assumption \ref{assu:queue_length}, which requires the queue length to be bounded above by a positive integer $K$. Nonetheless, the expressions for the variances (Equations \eqref{eqn:sigma_lambdabar_def}, \eqref{eqn:def_sigma_pi0} and \eqref{eqn:def_sigma_model}) are still well-defined when $K = \infty$. In Theorem \ref{theo:variance_comp}, we establish that equality in \eqref{eqn:var_comp2} holds if and only if the queue is an $M/M/1$ queue; however, for an $M/M/1$ queue, the queue length is not bounded and so the central limit theorem results proven above cannot be applied directly (although we do still conjecture that they hold).
\end{rema}

\subsection{Variance estimators and confidence intervals}
\label{subsection:variance_estimator}

Theorems \ref{theo:first_esti}-\ref{theo:model_CLT} provide us with central limit theorems that come with explicit formulae for the variance. These variance formulae can be utilized to obtain estimators for the variance. Moreover, since the estimators derived from these theorems exhibit an asymptotic normal distribution, we can take advantage of this property to construct confidence intervals. 

In particular, if we define $T_k$ as the total time the queue length is equal to $k$, then the ratio $T_k/T$ serves as a natural estimator for the probability $\pi_k$. We denote this estimator as $\hat{\pi}_k = T_k/T$. We note that the asymptotic variances of the estimators are functions of the true probabilities $\pi_k$. To obtain estimators for the variances, we can employ a plug-in approach by substituting the estimated probabilities $\hat{\pi}_k$ into the formulae involving $\pi_k$:
\begin{equation}
    \hat{\sigma}_{\bar{\lambda}}^2 =  (1-\hat{\pi}_0)\mu + 2\mu \hat{\pi}_0\p{\sum_{k = 1}^{K-1}\frac{\hat{S}_k\hat{S}_{k+1}}{\hat{\pi}_k}  - (1-\hat{\pi}_0)\sum_{k =1}^K \frac{\hat{S}_k^2}{\hat{\pi}_k}},
\end{equation}
\begin{equation}
    \hat{\sigma}_{\pi_0}^2 =  2\mu \hat{\pi}_0^2 \sum_{k = 1}^{K} \frac{\hat{S}_k^2}{\hat{\pi}_k}, \qquad \textnormal{and} \qquad
    \hat{\sigma}_{\model}^2 =  \mu \hat{\pi}_0^2 \sum_{k = 1}^{K} \frac{\hat{S}_k^2}{\hat{\pi}_k},
\end{equation}
where $\hat{S}_k = \sum_{j = k}^{K} \hat{\pi}_j$. 

\begin{theo}[Variance estimators]
    \label{theo:variance_estimator}
Under Assumptions \ref{assu:queue_length}-\ref{assu:two_intervals}, the variance estimators are consistent:
\begin{equation}
    \hat{\sigma}_{\bar{\lambda}}^2 \stackrel{p}{\to} 
    \sigma_{\bar{\lambda}}^2, \qquad 
    \hat{\sigma}_{\pi_0}^2 \stackrel{p}{\to} 
    \sigma_{\pi_0}^2, \qquad 
    \hat{\sigma}_{\model}^2 \stackrel{p}{\to} 
    \sigma_{\model}^2,
\end{equation}
as $T \to \infty$. 
\end{theo}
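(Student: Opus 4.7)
The plan is to prove consistency by a two-step argument: first, establish consistency of the empirical state frequencies $\hat{\pi}_k = T_k/T$ for each $k \in \{0, 1, \ldots, K\}$; second, invoke the continuous mapping theorem applied to the three variance formulae.

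For the first step, I would argue that $\hat{\pi}_k \stackrel{p}{\to} \pi_k(p)$ as $T \to \infty$ under either switchback design. The queue-length process is a finite-state continuous-time Markov chain by Assumption \ref{assu:queue_length}, and the bounds $B_0 \leq \lambda_k(p) \leq B_1$ from Assumption \ref{assu:bound_smooth} make the chain irreducible and uniformly ergodic at every price in a neighborhood of $p$ once $T$ is large enough. Smoothness of $\lambda_k(\cdot)$ additionally guarantees $\pi_k(p \pm \zeta_T) = \pi_k(p) + O(\zeta_T)$, a fact implicit in the proofs of Theorems \ref{theo:first_esti}--\ref{theo:model_CLT}. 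For the interval switchback, each interval has length $l_T$ with $l_T/\sqrt{T} \to \infty$, far exceeding the $O(1)$ mixing time of the chain, so the within-interval empirical occupancy of state $k$ concentrates around $\pi_k(p \pm \zeta_T)$; aggregating across intervals with $T_+ = T/2 + o_p(\sqrt{T})$ from Assumption \ref{assu:two_intervals} yields $T_k/T \to \tfrac{1}{2}[\pi_k(p + \zeta_T) + \pi_k(p - \zeta_T)] \to \pi_k(p)$. For the regenerative switchback, the same conclusion follows from a standard renewal argument applied to the cycles defined by successive visits to state $k_r$.

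For the second step, each of the variance formulae \eqref{eqn:sigma_lambdabar_def}, \eqref{eqn:def_sigma_pi0}, and \eqref{eqn:def_sigma_model} is a continuous function of the probability vector $(\pi_0, \pi_1, \ldots, \pi_K)$ at every point with all coordinates strictly positive. Because the chain is irreducible on $\{0, 1, \ldots, K\}$ by Assumption \ref{assu:bound_smooth}, $\pi_k(p) > 0$ for every relevant $k$, so all three formulae are continuous at the true limit. The continuous mapping theorem, combined with the joint convergence $(\hat{\pi}_0, \ldots, \hat{\pi}_K) \stackrel{p}{\to} (\pi_0(p), \ldots, \pi_K(p))$ from step one, immediately yields $\hat{\sigma}_{\bar{\lambda}}^2 \stackrel{p}{\to} \sigma_{\bar{\lambda}}^2$, $\hat{\sigma}_{\pi_0}^2 \stackrel{p}{\to} \sigma_{\pi_0}^2$, and $\hat{\sigma}_{\model}^2 \stackrel{p}{\to} \sigma_{\model}^2$.

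The main obstacle is the first step: because the chain is time-inhomogeneous during the experiment (the price keeps switching between $p + \zeta_T$ and $p - \zeta_T$), a direct appeal to a standard single-chain ergodic theorem is not available. One must instead combine uniform ergodicity at each price level, continuity of $\pi_k(\cdot)$ in $p$, and the slow-switching assumption (or regenerative structure) to control the approximation error across intervals (or cycles). However, this bookkeeping is essentially of the same flavor as what already appears in the proof outlines of Theorems \ref{theo:first_esti}--\ref{theo:model_CLT}, so most of the needed estimates can be imported from there, and the remainder of the argument reduces to a routine application of continuous mapping.
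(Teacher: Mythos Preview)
Your proposal is correct and follows the same two-step skeleton as the paper: reduce to consistency of $\hat\pi_k = T_k/T$, then apply continuous mapping. The paper handles the first step slightly differently and more economically. Rather than arguing ergodicity of the time-inhomogeneous chain from scratch, it simply imports the convergence $T_{k,+}/T_+ \stackrel{p}{\to} \pi_k(p)$ and $T_{k,-}/T_- \stackrel{p}{\to} \pi_k(p)$ already established as a by-product of the CLTs in Theorems~\ref{theo:first_esti}--\ref{theo:model_CLT} (specifically \eqref{eqn:time_converge}), and then uses the mediant inequality
\[
\min\!\left(\frac{T_{k,-}}{T_-},\frac{T_{k,+}}{T_+}\right)\;\le\;\frac{T_k}{T}\;\le\;\max\!\left(\frac{T_{k,-}}{T_-},\frac{T_{k,+}}{T_+}\right)
\]
to squeeze $T_k/T\stackrel{p}{\to}\pi_k(p)$ without ever needing $T_+/T\to 1/2$ or any separate mixing bound. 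Your direct route works too, and you correctly anticipate that the needed estimates can be borrowed from the earlier proofs; the paper just makes that borrowing maximally explicit and avoids the extra bookkeeping you flag as the ``main obstacle.''
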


Because the variance estimators are consistent, by Slutsky's theorem, we have that 
\begin{equation}
\frac{\sqrt{T \zeta_T^2}}{\hat{\sigma}_{\bar{\lambda}}}\p{ \hat{\tau}_{\bar{\lambda}}(T, \zeta_T) - V'(p)} \Rightarrow \mathcal{N}\p{0, 1},
\end{equation}
as $T \to \infty$. Similar results holds for the other estimators. 
Then, constructing confidence intervals becomes straightforward. 
For a desired confidence level of $(1-\alpha)$, we can construct a confidence interval for $V'(p)$ based on the estimator $\hat{\tau}_{\bar{\lambda}}(T, \zeta_T)$ as follows:
\begin{equation}
    \hat{\tau}_{\bar{\lambda}}(T, \zeta_T) \pm Z_{\alpha/2}  \hat{\sigma}_{\bar{\lambda}}(p)\sqrt{T \zeta_T^2},
\end{equation}
where $Z_{\alpha/2}$ represents the $\alpha/2$-upper quantile of the standard Gaussian distribution. Again, the same procedure can be applied to the other estimators.

In Appendix~\ref{appendix:variance_estimator_simulation}, we present simulation evidence showing that the proposed variance estimators are generally quite accurate.

\subsection{Some illustrations of the asymptotic variance}
\label{section:illus_asymp_var}
To illustrate the theoretical results, we compute the asymptotic variances of the estimators in a few concrete examples. In Figures \ref{fig:MM1}-\ref{fig:conformity}, we plot the steady-state probabilities and the asymptotic variances (scaled by $T\zeta_T^2$) of the estimators.

\paragraph {$M/M/1$ queue}
We start with a simple $M/M/1$ queue. In Figure~\ref{fig:MM1}, we set $\mu = 1$ and $\lambda_k = \lambda$ for $k \geq 0$, where we take $\lambda = 0.5$ in the left panel and let $\lambda$ be a varying parameter in the right panel. As shown in Theorem \ref{theo:variance_comp}, the asymptotic variances of the weighted direct effect estimator and the model-free estimator are the same. We observe that the two curves of asymptotic variances ($\sigma_{\model}^2(p)$ and $\sigma_{\bar{\lambda}}^2(p)$) overlap perfectly, while $\sigma_{\pi_0}^2(p)$ is twice as large.

\paragraph{Zero-deflated or zero-inflated $M/M/1$ queue} We then consider a slightly modified $M/M/1$ queue in Figure~\ref{fig:MM1_zero_inflate}. We take $\lambda_0$ to be different from the other $\lambda_k$. We set $\mu = 1$ and $\lambda_k = 0.5$ for $k \geq 1$. In the left panel, we take $\lambda_0 = 1$ and in the right panel, we let $\lambda_0$ be a varying parameter. We observe from the right panel that $\sigma_{\model}^2(p)$ is strictly smaller than $\sigma_{\bar{\lambda}}^2(p)$ except when $\lambda_0 = 0.5$. Note that 0.5 is the value of other $\lambda_k$, and thus when $\lambda_0 = 0.5$, this becomes an $M/M/1$ queue. Comparing $\sigma_{\bar{\lambda}}^2(p)$ and $\sigma_{\pi_0}^2(p)$, we find that $\sigma_{\pi_0}^2(p)$ becomes smaller than $\sigma_{\bar{\lambda}}^2(p)$ when $\lambda_0$ is large. Under this zero-deflated $M/M/1$ queueing model, when $\lambda_0$ is large, $\pi_0$ is small. 

\paragraph{Joining probability: power law} We consider a more realistic setting: customers are more willing to join the queue when the queue is shorter (so that the expected waiting time is lower). In Figure~\ref{fig:power_law}, we take $\mu = 1$ and $\lambda_k = 2(k+1)^{-\alpha}$ for $k \in \cb{ 0, \dots, 14}$. In the left panel, we take $\alpha = 0.4$ and in the right panel, we let $\alpha$ be a varying parameter. We observe from the right panel that $\sigma^2_{\bar{\lambda}}(p)$ is strictly larger than $\sigma^2_{\pi_0}(p)$ and $\sigma^2_{\model}(p)$. In the left panel, we observe that the queue length is concentrated around 3. This is because when the queue is short, the arrival rate is higher and the queue quickly becomes longer, whereas when the queue is long, the arrival rate is lower and the queue quickly becomes shorter. 

\paragraph{Group conformity: preference for longer queues} We consider the setting introduced in Example~\ref{exam:join_rule2}: customers are more willing to join longer queues because they believe that queue length is an indicator of the quality of the product. In Figure~\ref{fig:conformity}, we take $\mu = 1$ and $\lambda_k = \lambda(0.5 + (k-7)/200)$ for $k \in \cb{0, \dots, 14}$. In the left panel, we take $\lambda = 2$ and in the right panel, we let $\lambda$ be a varying parameter. In the right panel, we see that $\sigma^2_{\model}(p)$ is always lower than the other two and that $\sigma^2_{\pi_0}(p)$ is smaller than $\sigma^2_{\bar{\lambda}}(p)$ when $\lambda$ is large. In the left panel, we observe an interesting phenomenon: with high probability, the queue is either very long or very short. This is because when the queue is long, more customers are attracted, and the queue typically stays relatively large; conversely, for the same reason, when the queue is short, it typically stays relatively short. Therefore, we observe the queue oscillating between the state of very long and very short. We recognize that this example is somewhat unconventional within the queueing literature.

\mbox{}\\
Broadly, it's worth noting that our focus is not on directly modeling customer behavior. Instead, we treat customer behavior as ``black-boxed" within the state-dependent arrival rate. The tools we developed here are intended to accommodate a variety of customer utility patterns.

\begin{figure}
\includegraphics[width = \textwidth]{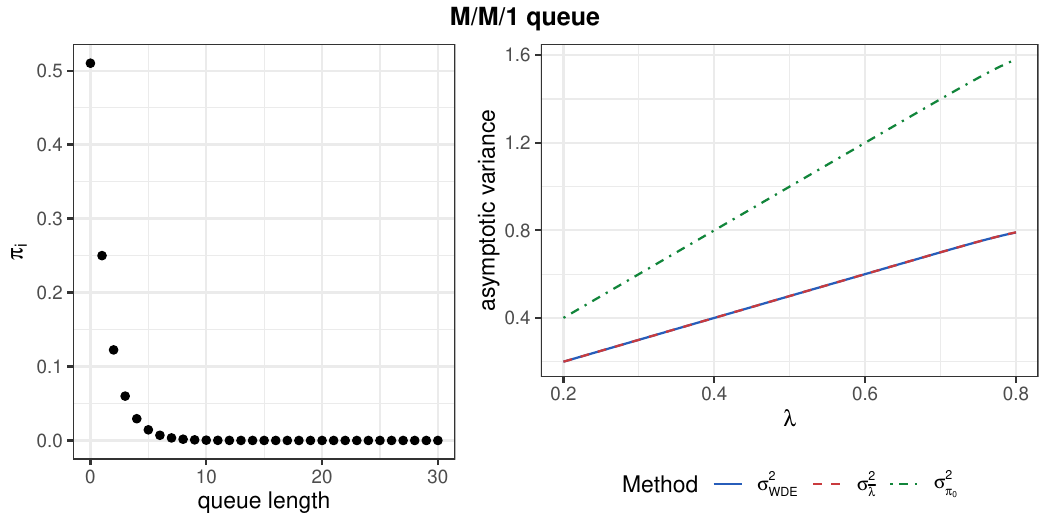}
\caption{In an $M/M/1$ queue, $\sigma_{\model}^2(p)$ and $\sigma_{\bar{\lambda}}^2(p)$ are the same. }
\label{fig:MM1}
\end{figure}

\begin{figure}
\includegraphics[width = \textwidth]{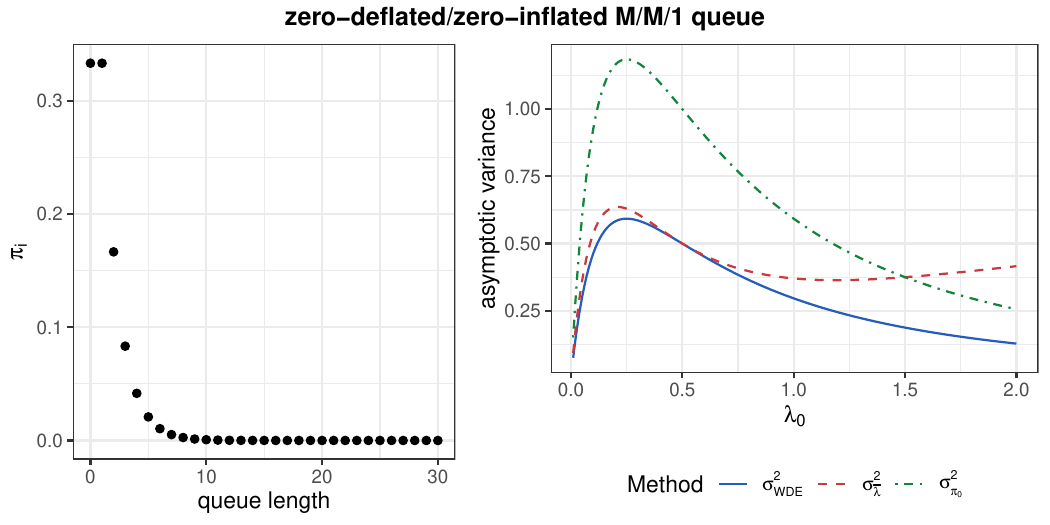}
\caption{In a zero-deflated or zero-inflated $M/M/1$ queue, $\sigma_{\model}^2(p)$ is strictly smaller than $\sigma_{\bar{\lambda}}^2(p)$ except when $\lambda_0 = 0.5$ (corresponding to an $M/M/1$ queue). When $\lambda_0$ is large, $\sigma_{\pi_0}^2(p)$ is smaller than $\sigma_{\bar{\lambda}}^2(p)$.}
\label{fig:MM1_zero_inflate}
\end{figure}

\begin{figure}
\includegraphics[width = \textwidth]{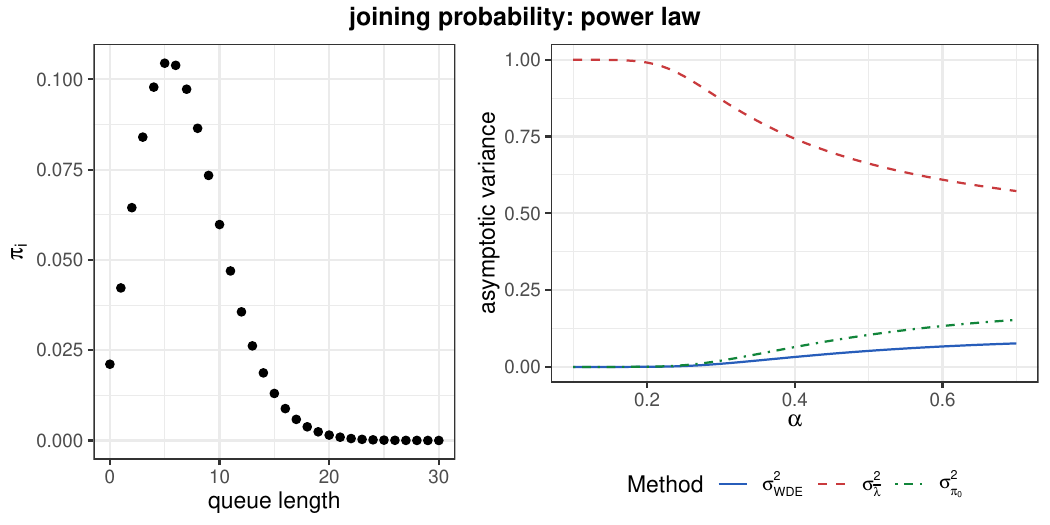}
\caption{A setting where customers are more willing to join the queue when the queue is shorter. In this setting,  $\sigma^2_{\bar{\lambda}}(p)$ is strictly larger than $\sigma^2_{\pi_0}(p)$ and $\sigma^2_{\model}(p)$. The queue length is fairly concentrated.}
\label{fig:power_law}
\end{figure}

\begin{figure}
\includegraphics[width = \textwidth]{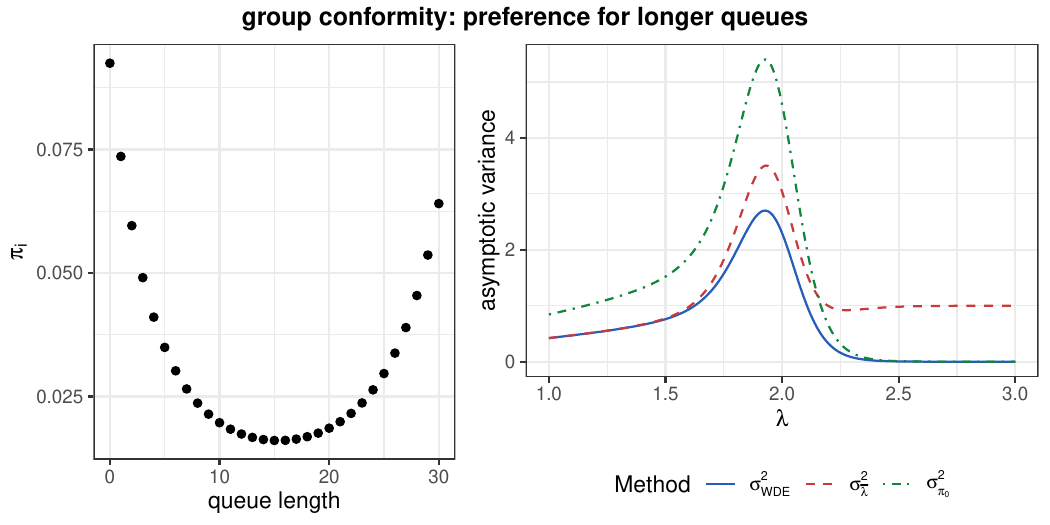}
\caption{A setting where customers are more willing to join longer queues. In this setting, $\sigma^2_{\model}(p)$ is always lower than the other two and $\sigma^2_{\pi_0}(p)$ is smaller than $\sigma^2_{\bar{\lambda}}(p)$ when $\lambda$ is large. The queue oscillates between the state of very long and very short.
}
\label{fig:conformity}
\end{figure}

\section{Randomization at the User Level}
\label{section:user-level} 
We have discussed both interval and regenerative switchback experiments. Interval switchback experiments are commonly used in practice; however,  when one wishes to run an interval switchback experiment, a key step is determining the length of each interval. This can be challenging because the interval length must be specified in advance, before the experiment is conducted.
In general, a very short interval length can lead to a strong carryover effect from one period to the next, potentially causing bias in the analysis if the queue length process does not have enough time to converge to steady state; whereas longer intervals can lead to excess variance.

Interestingly, however, a close examination of the functional form of the three estimators we consider---the model-free estimator, the idle-time-based estimator, and the weighted direct effect (WDE) estimator---suggests that the WDE estimator may be less sensitive to carryover bias from using short switchback intervals. This is because, intuitively, bias due to carryover effects should primarily arise through plug-in estimation of changes $\pi_k'(p)$ to the stationary distribution, as the stationary distribution cannot be directly measured when the system is between equilibria. And, while both the model-free and idle-time-based expressions of the policy gradient depend explicitly on $\pi_k'(p)$, the WDE representation does not---and so plug-in estimators based on the WDE expression do not involve plug-in estimation of $\pi_k'(p)$. The only plug-in gradient required by the WDE estimator is $\lambda_k'(p)$, i.e., the direct effect, which can be estimated out of equilibrium (because it quantifies user behavior conditional on state). This suggests that the WDE estimator should not suffer from carryover bias even with short switchback intervals.

Motivated by the above discussion, we naturally wonder what happens when we consider the interval switchback experiment and let the interval length approach zero; in particular, a reasonable conjecture is that in this regime, the WDE estimator will achieve both low bias {\em and} low variance. As the interval lengths become infinitesimally small, the experiment essentially transitions into a user-level randomization scheme. In a user-level experiment, rather than applying the same price to all customers arriving within a given time window, we randomize the price for each individual customer. Specifically, for customer $i$, the price is set as:
\begin{equation}
p_{i} = p + \zeta \varepsilon_{i}, \quad \text{where } \varepsilon_{i} \stackrel{\mathrm{iid}}{\sim} {\pm 1} \text{ uniformly at random}.
\end{equation}
Note that we are switching between two prices that are very close to each other, which renders the user-level experiment equivalent to the local perturbation experiment proposed by \citet{wager2021experimenting}.

In \eqref{eqn:Delta} and \eqref{eqn:WDE_def}, we defined the WDE estimator using quantities including $T_{k,+}$ and $T_{k,-}$. When the interval length shrinks to zero, both $T_{k,+}$ and $T_{k,-}$ reduce to $T_k/2$, which allows the expression for the WDE estimator to simplify. In particular,
\begin{equation}
	\hat{\Delta}_{k} = \frac{1}{\zeta} \frac{N_{k, +} - N_{ k, -}}{N_{k, +} + N_{k, -}},
\end{equation}
and
\begin{equation}
	\hat{\tau}_{\model} = \mu \frac{T_{0}}{T} \sum_{k = 0}^{K-1}  \hat{\Delta}_{k}  \sum_{i = k+1}^{K} \frac{T_{i}}{T}. 
\end{equation}
We refer to this estimator as the WDE estimator under the user-level randomized experiment.

Empirically, we find that $\hat{\tau}_{\LE}$ under the user-level experiment demonstrates finite-sample advantages over $\hat{\tau}_{\model}$ under switchback experiments. In particular, we compare the performance of the two estimators in the settings discussed in Section~\ref{section:illus_asymp_var}, using $T = 2000$. Details of the simulation setup are provided in Appendix~\ref{appendix:simulation_details_third}. Figure~\ref{fig:MSE_comp_third} reports the mean squared errors of the two estimators. We find that the estimator based on the user-level experiment outperforms the one based on the usual switchback experiment.
\begin{figure}[t]
	\includegraphics[width = \textwidth]{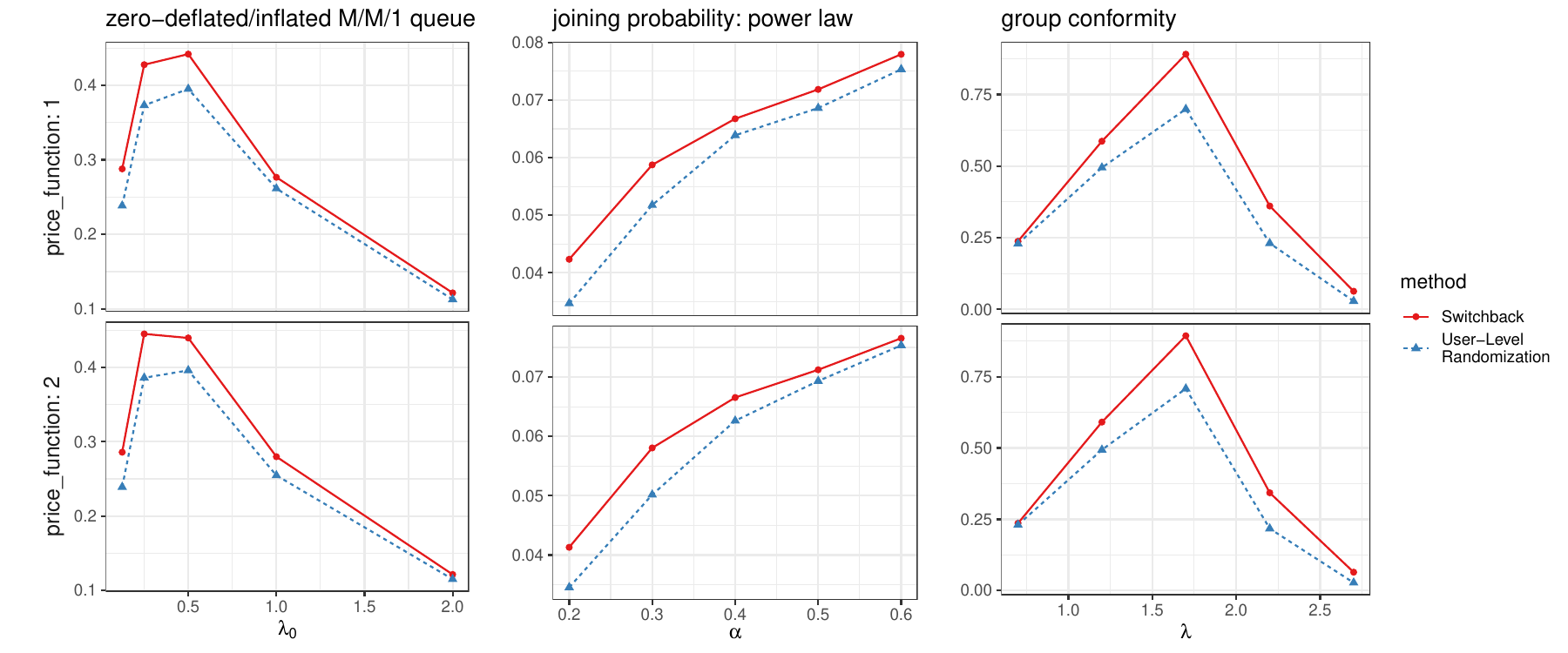}
	\caption{Mean squared errors (scaled by $T\zeta_T^2$) of the weighted direct effect estimators: comparing the interval switchback experiment to the user-level experiment.}
	\label{fig:MSE_comp_third}
\end{figure}

One intuition that may help explain the finite-sample variance advantage of user-level randomization is the following. The switchback experiment can be viewed as a form of simple random sampling, whereas user-level randomization resembles stratified sampling \citep{parsons2014stratified}. In switchback experiments, we independently sample performance under two queues, akin to drawing from the population without considering within-queue heterogeneity. In contrast, user-level randomization effectively “samples’’ customer arrivals within the same queue, creating natural strata based on the queue length $k$. Within each stratum, the two treatment groups are compared and then aggregated across strata.
It is well known that stratified sampling typically yields lower variance than simple random sampling, particularly in finite samples where within-stratum variation can be controlled.

Theoretically, we establish in Theorem~\ref{theo:LE_CLT} below that the weighted direct effect estimator with a user-level experiment enjoys the same asymptotic variance guarantee as that with a usual interval switchback experiment. 
\begin{theo}
\label{theo:LE_CLT}
Under Assumptions \ref{assu:queue_length}-\ref{assu:bound_smooth}, using the weighted direct effect estimator with a user-level experiment yields
\begin{equation}
\sqrt{T \zeta_T^2} \p{ \hat{\tau}_{\LE}(T, \zeta_T) -  V'(p)} \Rightarrow \mathcal{N}\p{0, \sigma_{\LE}^2(p)}, 
\end{equation}
as $T \to \infty$, where 
\begin{equation}
\label{eqn:def_sigma_LE}
\sigma_{\LE}^2(p) =  \mu \pi_0^2(p) \sum_{k = 1}^{K} \frac{S_k^2(p)}{\pi_k(p)}, \textnormal{ and } S_k(p) = \sum_{j = k}^{K} \pi_j(p).
\end{equation}
\end{theo}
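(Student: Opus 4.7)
The plan is to follow the same structural decomposition as in the proof of Theorem \ref{theo:model_CLT}, but replace the switchback design arguments with a direct analysis of the i.i.d.\ coin flips that drive user-level randomization. First I would observe that, under user-level randomization, the queue length process is itself a birth-death Markov chain with state-dependent arrival rates $\bar{\lambda}_k(\zeta) = \tfrac{1}{2}(\lambda_k(p+\zeta)+\lambda_k(p-\zeta))$. Assumption \ref{assu:bound_smooth} then gives $\bar{\lambda}_k(\zeta) = \lambda_k(p) + O(\zeta_T^2)$. Because Assumption \ref{assu:zeta} forces $\sqrt{T}\zeta_T^2 \to 0$, standard continuous-time Markov chain LLN/CLT arguments (as used in the proofs of Theorems \ref{theo:first_esti}--\ref{theo:model_CLT}) imply $T_{\LE,k}/T \xrightarrow{p} \pi_k(p)$ at the usual $1/\sqrt{T}$ rate, so after multiplication by the target scale $\sqrt{T\zeta_T^2}=\sqrt{T}\zeta_T$ these occupancy-time fluctuations contribute $o_p(1)$ and can be replaced by $\pi_k(p)$ via Slutsky.

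The heart of the argument is the CLT for the per-state direct-effect estimator $\hat{\Delta}_{\LE,k}$. The key simplification is that, conditional on the entire queue length trajectory (equivalently on the total arrival count $N_{\LE,k}=N_{\LE,k,+}+N_{\LE,k,-}$ at state $k$), the count $N_{\LE,k,+}$ is \emph{exactly} Binomial$(N_{\LE,k}, p_k(\zeta_T))$, where
\begin{equation*}
p_k(\zeta_T) = \frac{\lambda_k(p+\zeta_T)}{\lambda_k(p+\zeta_T)+\lambda_k(p-\zeta_T)} = \frac{1}{2} + \frac{\zeta_T}{2}\,\frac{\lambda_k'(p)}{\lambda_k(p)} + O(\zeta_T^3),
\end{equation*}
by the thinning representation of the two independent Poisson arrival streams of $+$ and $-$ customers. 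A Taylor expansion gives $2p_k(\zeta_T)-1 = \zeta_T \lambda_k'(p)/\lambda_k(p) + O(\zeta_T^3)$, so the bias of $\hat{\Delta}_{\LE,k}$ relative to $\lambda_k'(p)/\lambda_k(p)$ is $O(\zeta_T^2)=o(1/\sqrt{T\zeta_T^2})$ under Assumption \ref{assu:zeta}. The conditional binomial variance is $4N_{\LE,k}p_k(1-p_k)\approx N_{\LE,k}$, and a conditional Lindeberg CLT (with $N_{\LE,k}/T \xrightarrow{p} \lambda_k(p)\pi_k(p)$) yields
\begin{equation*}
\sqrt{T\zeta_T^2}\,\Bigl(\hat{\Delta}_{\LE,k} - \tfrac{\lambda_k'(p)}{\lambda_k(p)}\Bigr) \Rightarrow \mathcal{N}\!\Bigl(0,\,\tfrac{1}{\lambda_k(p)\pi_k(p)}\Bigr).
\end{equation*}

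Next I would upgrade to a joint CLT across different states $k$. Because the $\pm$ labels are drawn independently for different customers, and each customer is assigned to a single state at its arrival instant, the conditional binomial noise terms $\{N_{\LE,k,+}-N_{\LE,k,-} - N_{\LE,k}(2p_k-1)\}_{k=0}^{K-1}$ are conditionally independent given the queue trajectory. A Cramér--Wold device combined with the conditional CLT therefore delivers joint asymptotic independence of the $\hat{\Delta}_{\LE,k}$. Plugging into
\begin{equation*}
\hat{\tau}_{\LE} = \mu\,\frac{T_{\LE,0}}{T}\sum_{k=0}^{K-1} \hat{\Delta}_{\LE,k}\sum_{i=k+1}^{K}\frac{T_{\LE,i}}{T}
\end{equation*}
and using the continuous mapping theorem with $T_{\LE,i}/T \to \pi_i(p)$, I get a sum of independent Gaussians with total variance $\mu^2\pi_0^2(p)\sum_{k=0}^{K-1}S_{k+1}^2(p)/(\lambda_k(p)\pi_k(p))$. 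The final step is to invoke detailed balance $\lambda_k(p)\pi_k(p)=\mu\pi_{k+1}(p)$ to rewrite this as $\mu\pi_0^2(p)\sum_{j=1}^{K}S_j^2(p)/\pi_j(p) = \sigma_{\LE}^2(p)$.

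The main obstacle I anticipate is making the conditioning rigorous: the random times $T_{\LE,k}$ and counts $N_{\LE,k}$ are determined by the coupled Poisson--Markov dynamics, so a careful filtration argument (or, equivalently, a coupling between the user-level queue and the reference queue at constant price $p$) is needed to justify treating the $\pm$ assignments as i.i.d.\ Bernoulli independent of the occupancy times. Once that filtration is set up, the rest proceeds by Poisson thinning and a Lindeberg--Feller CLT; the variance identities then follow from the same detailed balance manipulations used in the proof of Theorem \ref{theo:model_CLT}.
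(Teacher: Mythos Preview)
Your proposal is correct and follows essentially the same route as the paper: the paper also reduces to a birth--death chain with averaged rates $\bar\lambda_k(\zeta_T)=\lambda_k(p)+O(\zeta_T^2)$ to show $T_{\LE,i}/T-\pi_i(p)=o_p(1/\sqrt{T\zeta_T^2})$, then proves the key Proposition (your conditional binomial CLT) via exactly the observation that, conditional on $N_{\LE,k}$, the count $N_{\LE,k,+}$ is $\mathrm{Binomial}\bigl(N_{\LE,k},\,\lambda_k(p+\zeta_T)/(\lambda_k(p+\zeta_T)+\lambda_k(p-\zeta_T))\bigr)$ and that these are independent across $k$, finishing with the same detailed-balance simplification. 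Your filtration worry is resolved precisely by the competing-exponentials/Poisson-thinning argument you already sketched, which is all the paper uses as well.
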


\section{Non-Stationary Environments}
\label{section:nonstationarity}

The discussion thus far has centered on an idealized context characterized by full stationarity.
However, in practical scenarios, significant non-stationarity can be observed in the arrival rates. For instance, examining the half-hourly arrival rates to an emergency department, one might observe patterns akin to those in Figure \ref{fig:ED_data}.

\begin{figure}
\centering
\includegraphics[width = 0.7\textwidth]{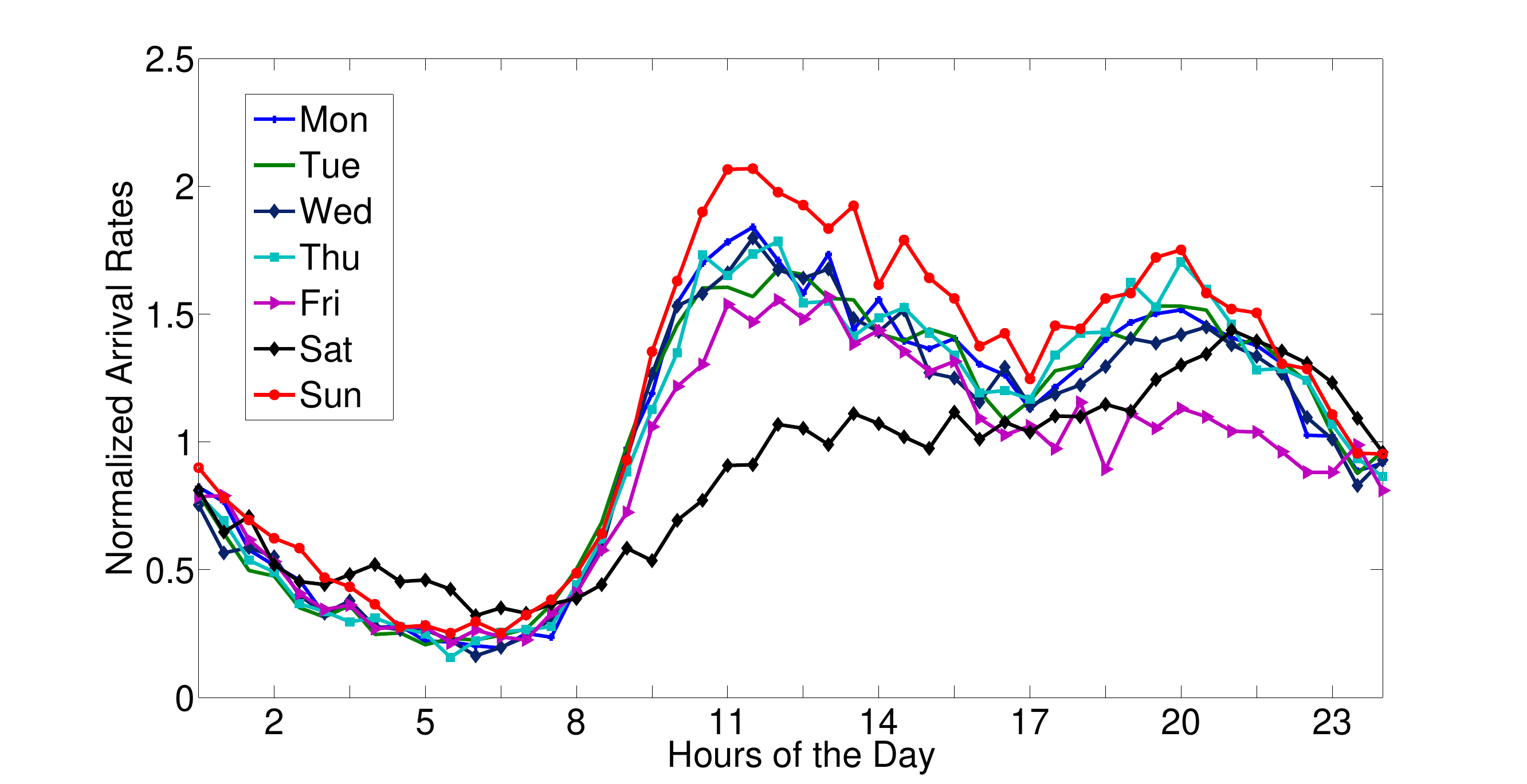}
\caption{Half-hourly arrival rates obtained from emergency department records in the SEEStat database \citep{see2009seestat}. Note that there are significant intra-day variations where arrivals are more prevalent around mid-day and less so in early mornings. Day to day variations are also present in the data: not only are arrival rates tend to be lower on Saturdays, they also tend to peak later in the day rather than around noon. 
}
\label{fig:ED_data}
\end{figure}

In the presence of non-stationarity, additional considerations are necessary when running interval switchback experiments. One key challenge lies in selecting an appropriate interval length. On the one hand, as discussed in Section \ref{section:user-level}, if the interval length is too short, a strong carryover effect from the previous period to the new period may occur. On the other hand, if the interval length is too long, non-stationarity can significantly impact the results. For example, in an extreme case, switching once a year might mean that observed differences between two periods are more likely due to changes in the external environment rather than the intended price change.

Given this observation, the user-level experiment combined with the weighted direct effect estimator appears to be more attractive in settings with non-stationarity. As discussed in Section \ref{section:user-level}, the weighted direct effect estimator is less sensitive to very short interval lengths, and very short interval lengths are helpful in dealing with non-stationarity.

There is one additional caution to be taken when applying the weighted direct effect estimator (using switchback or user-level experiment) in a non-stationary environment. Notice that the logic behind the weighted direct effect estimator is rooted in the steady-state balance equation of the underlying Markov chain, which is an intrinsically stationary concept. Therefore, when using it in a non-stationary setting, we find it necessary to construct temporally ``local" versions of the estimators for $\lambda_k'(p)/\lambda_k(p)$ and $\pi_k(p)$, which can be thought of as capturing the system dynamics within small, and therefore relatively stationary, time intervals. Specifically, we will partition the entire time horizon into consecutive, non-overlapping windows of length $s$, indexed by $w \in \mathbb{N}$. Within the $w$-th window, we can construct $\hat{\Delta}_{k, w}$ as a local estimator for $\lambda_k'(p)/\lambda_k(p)$:
\begin{equation}
	\hat{\Delta}_{k,w} = \frac{1}{\zeta} \frac{N_{k,w, +}/T_{k,w, +} - N_{k,w, -}/T_{k,w, -}}{N_{k,w, +}/T_{k,w, +} + N_{k,w, -}/T_{k,w, -}},
\end{equation}
Subsequently, the weighted direct effect estimator can be similarly constructed:
\begin{equation}
	\hat{\tau}_{\model}(s) = \mu \frac{s}{T}\sum_w \left[\frac{T_{0,w}}{s} \sum_{k = 0}^{K-1} \hat{\Delta}_{k,w} \sum_{i = k+1}^{K} \frac{T_{i,w}}{s} \right].
\end{equation}
In these equations, $T_{k,w}$ denotes the amount of time the queue length is $k$ within window $w$; $T_{k,w, +}$ ($T_{k,w, -}$) denotes the amount of time the price is at $p+\zeta$ ($p-\zeta$) and the queue length is $k$ in window $w$; and $N_{k,w, +}$ ($N_{k, w, -}$) denotes the number of arrivals to the system when the price is at $p+\zeta$ ($p-\zeta$) and the queue length is $k$ in window $w$. Similarly, when we run a user-level experiment, we can construct:
\begin{equation}
\hat{\Delta}_{k,w} = \frac{1}{\zeta} \frac{N_{k,w, +} - N_{k,w, -}}{N_{k,w, +} + N_{k,w, -}},
\end{equation}
\begin{equation}
\label{eqn:nonstationary_estimator}
\hat{\tau}_{\LE}(s) = \mu \frac{s}{T}\sum_w \left[\frac{T_{0,w}}{s} \sum_{k = 0}^{K-1} \hat{\Delta}_{k,w} \sum_{i = k+1}^{K} \frac{T_{i,w}}{s} \right].
\end{equation}
To distinguish from the interval length, we refer to the above window length $s$ as the \textit{kernel length}. The kernel length represents the duration of the window used to construct local estimators in our data analysis after the experiments have been conducted. 

\subsection{Formal results}
%\subsection{The weighted direct effect estimator with user-level randomization under non-stationarity} 

The simulation results from the previous section suggest that the user-level experiment combined with the weighted direct effect estimator is a favorable option in non-stationary environments. In this subsection, we provide further theoretical justification for the use of the weighted direct effect estimator with the user-level experiment. We show that as long as the non-stationarity evolves slowly over time—i.e., the system exhibits relatively stable behavior over short periods—the $\hat{\tau}_{\LE}(s)$ estimator defined in \eqref{eqn:nonstationary_estimator} remains a consistent estimator for the target of interest.

\begin{assu}[Finitely many change points]
\label{assu:change_point}
There are $B$ change points in arrival rate: there exist several time points $0 = t^{(0)} < t^{(1)} < \dots, < t^{(B-1)} < t^{(B)} = 1$ and several arrival rate functions $\lambda_k^{(1)}(p), \dots, \lambda_k^{(B)}(p)$ such that for any $t \in (t^{(b-1)} T, t^{(b)} T]$, the state dependent arrival rate at $t$ is
\begin{equation}
    \lambda_{k,t}(p) = \lambda_k^{(b)}(p). 
\end{equation}
\end{assu}
Under Assumption \ref{assu:change_point}, the average processing rate $V(p)$ becomes
\begin{equation}
    V(p) = \sum_{b = 1}^B(t^{(b)} - t^{(b-1)}) V^{(b)}(p)
         = \sum_{b = 1}^B(t^{(b)} - t^{(b-1)})\p{\sum_{k=0}\pi_k^{(b)}(p)\lambda_k^{(b)}(p)},
\end{equation}
and thus the quantity of interest is the gradient:
\begin{equation}
    V'(p) = \sum_{b = 1}^B(t^{(b)} - t^{(b-1)}) V^{(b)}{}'(p). 
\end{equation}
\begin{theo}
\label{theo:non_stationarity}
    Under Assumptions \ref{assu:queue_length}, \ref{assu:zeta}, \ref{assu:bound_smooth} and \ref{assu:change_point}, assume further that $s_T/T \to 0$ and $s_T \zeta_T^2 \to \infty$. Let $\tilde{\tau}_{\LE}(s_T)$ be a truncated version of $\hat{\tau}_{\LE}(s_T)$, as defined in \eqref{eqn:truncation} in Section \ref{subsection:non_station_proof}. Using the weighted direct effect estimator with a user-level experiment yields
    \begin{equation}
        \tilde{\tau}_{\LE}(s_T) \stackrel{p}{\to} V'(p), 
    \end{equation}
    as $T \to \infty$. 
\end{theo}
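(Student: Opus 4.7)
The plan is to exploit the local stationarity guaranteed by Assumption \ref{assu:change_point}: within each segment $(t^{(b-1)} T, t^{(b)} T]$ the queue length process behaves as a stationary Markov chain with arrival rates $\lambda_k^{(b)}(p \pm \zeta_T)$, and the within-window version of Theorem \ref{theo:LE_CLT} should target $V^{(b)}{}'(p)$ with noise of order $1/\sqrt{s_T \zeta_T^2}$, which vanishes by assumption. Summing these window-level estimates with weights proportional to window length should then reconstruct $V'(p) = \sum_b (t^{(b)} - t^{(b-1)}) V^{(b)}{}'(p)$.

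First I would partition the roughly $T/s_T$ windows into \emph{clean} windows, lying entirely within a single regime, and \emph{boundary} windows straddling a change point; since there are only $B$ change points, at most $B$ windows are boundary. Because the truncation in \eqref{eqn:truncation} caps each per-window summand by some $M_T$, the boundary windows contribute at most $O(B s_T M_T/T)$ to $\tilde{\tau}_{\LE}(s_T)$, which the truncation is calibrated to make $o(1)$. The same bound absorbs the short transient at the start of each regime, during which the queue relaxes from the terminal state of the previous regime into $\pi^{(b)}(p)$; since Assumption \ref{assu:queue_length} bounds the state space, this mixing happens in $O(1)$ time and occupies a vanishing fraction of each regime.

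Second, for each clean, post-mixing window in regime $b$, standard Markov-chain LLN estimates give $T_{k,w}/s_T = \pi_k^{(b)}(p) + O_p(s_T^{-1/2})$ and $N_{k,w,\pm}/T_{k,w,\pm} = \tfrac{1}{2}\lambda_k^{(b)}(p \pm \zeta_T) + O_p(s_T^{-1/2})$ for each $k$. Plugging these into the per-window formula and Taylor-expanding the arrival rates via Assumption \ref{assu:bound_smooth}, $\hat{\Delta}_{\LE,k,w}$ converges in probability to $\lambda_k^{(b)}{}'(p)/\lambda_k^{(b)}(p)$, and the per-window summand converges to $V^{(b)}{}'(p)/\mu$; this step essentially replays the proof of Theorem \ref{theo:LE_CLT} but localized to a window of length $s_T$. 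Aggregating, the number of clean regime-$b$ windows is $(t^{(b)} - t^{(b-1)}) T/s_T + o(T/s_T)$, so the overall weight on each $V^{(b)}{}'(p)$ is $(t^{(b)} - t^{(b-1)})$ plus $o_p(1)$, giving $\tilde{\tau}_{\LE}(s_T) \stackrel{p}{\to} V'(p)$. Because windows are non-overlapping and the Markov chain mixes quickly between them, a second-moment bound on the across-window fluctuations suffices and we avoid needing a functional CLT.

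The main obstacle I anticipate is the interplay between the truncation and the denominator $N_{k,w,+} + N_{k,w,-}$ inside $\hat{\Delta}_{\LE,k,w}$. For states $k$ with small $\pi_k^{(b)}(p)$, there is a non-negligible probability that a given window contains no arrivals at queue length $k$, making the raw $\hat{\Delta}$ undefined. The truncation must be large enough that the induced bias on typical windows is negligible (using $s_T \zeta_T^2 \to \infty$, so that a diverging number of arrivals accumulate per window and bad events become rare), yet small enough that rare bad windows and boundary windows contribute $o_p(1)$ after being summed across $\Theta(T/s_T)$ windows. Showing that a single truncation level simultaneously achieves both, and that the transient-relaxation error near each change point is of smaller order than $s_T/T$, is the technical core of the argument.
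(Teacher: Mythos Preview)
Your overall strategy matches the paper's: localize to windows, show each clean window targets $V^{(b)}{}'(p)$ via the argument behind Theorem \ref{theo:LE_CLT}, and aggregate. Two points of divergence are worth noting.

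First, the truncation in \eqref{eqn:truncation} is not at a $T$-dependent level $M_T$ but at a \emph{fixed} constant $C > \max_b |V^{(b)}{}'(p)|/\mu$. This dissolves what you flag as the main obstacle. On typical windows the untruncated summand converges in probability to $V^{(b)}{}'(p)/\mu$, which lies strictly inside $(-C,C)$, so the truncation introduces no asymptotic bias there; on boundary or pathological windows the summand is simply capped at $C$, and since there are at most $B$ boundary windows their total contribution is $O(Bs_T/T)=o(1)$. There is no tension to resolve between ``large enough'' and ``small enough.'' The empty-denominator issue you raise is real, but handled automatically: for any fixed window position $u\in(0,1)$ away from change points, $s_T\to\infty$ implies $N_{k,w_T(u),+}+N_{k,w_T(u),-}\to\infty$ in probability, so the event that truncation binds has vanishing probability pointwise in $u$.

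Second, for the aggregation step the paper avoids any second-moment computation. It writes $\tilde{\tau}_{\LE}(s_T)=\mu\int_0^1 f_T(u)\,du$ with $f_T$ the truncated per-window summand viewed as a step function of $u$, establishes $f_T(u)\stackrel{p}{\to}V^{(b(u))}{}'(p)/\mu$ for each $u$ not at a change point, and then passes to the limit in the integral via Markov's inequality plus Fubini plus dominated convergence (the constant bound $|f_T|\le C$ supplies the dominating function). This is shorter than controlling cross-window covariances, though your second-moment route would also go through given the fast mixing under Assumption \ref{assu:queue_length}. A minor slip: in the user-level design there is no $T_{k,w,\pm}$; the per-window estimator is $\hat{\Delta}_{\LE,k,w}=(N_{k,w,+}-N_{k,w,-})/\big(\zeta(N_{k,w,+}+N_{k,w,-})\big)$, and the relevant convergence is that of this ratio directly, via Proposition \ref{prop:CLT_ratio} applied on a horizon of length $s_T$.
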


In Section~\ref{subsection:non-stationarity_diverging}, we provide additional results showing that the consistency result continues to hold under a relaxed version of Assumption~\ref{assu:change_point}, where the number of change points is allowed to grow.

\subsection{A non-stationary simulator}

In this section, we develop a non-stationary simulator using real-world arrival rate data. We then use it to assess the performance of the different switchback design and estimators in a non-stationary environment. A main insight from this analysis is that the user-level experiment combined with the weighted direct effect estimator appears to be the most performant across all designs,  even in the face of non-stationarity. 

Following \citet{xu2016using}, we construct a set of half-hourly arrival rate traces using emergency department records from the SEEStat database \citep{see2009seestat}. In particular, we focus on the HomeHospital dataset within SEEStat and calculate the average arrival rates to the emergency department based on records from 2004. Figure \ref{fig:ED_data} provides an illustration of these data.

Specifically, we set the experiment's horizon at $T = 4 \operatorname{weeks}$.
We employ a Poisson process with time-varying rates to model the raw arrivals, utilizing the half-hourly arrival rate shown in Figure~\ref{fig:ED_data} for these time-varying rates.
The joining probability is modeled using the proportional balking model \citep{hassin2003queue}, where the state-dependent arrival rate is given by:
\begin{equation}
	\lambda_{t,k}(p) = \frac{C_t}{k+1},
\end{equation}
where $C_t$ is a time-dependent quantity that does not depend on the queue length. 
The primary aim of this simulator is to replicate the level of non-stationarity commonly encountered in practical settings. For detailed information on the simulator, see Appendix \ref{appendix:nonstationary_simulator}.

We test the performance of the three estimators discussed in the paper on the simulator. For the model-free estimator and the idle-time-based estimator, we set the kernel length equal to $T$ and vary the interval length. For the weighted direct effect estimator, based on the intuition developed in Section \ref{section:user-level}, we set the interval length to be 0, i.e., we consider the user-level experiment and vary the kernel length. Figure \ref{fig:rMSE_three_estimators} plots the root mean square errors of the estimators on the $y$-axis against the interval length/kernel length on the $x$-axis. 

\begin{figure}
	\centering
	\includegraphics[width=\textwidth]{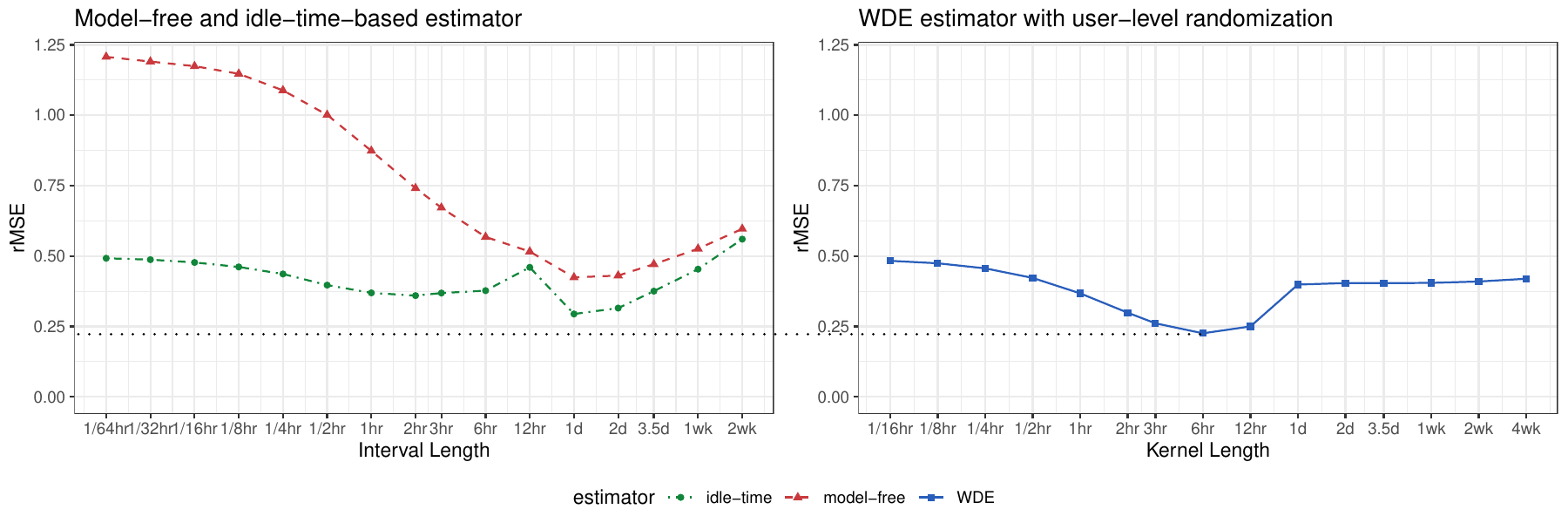}
	\caption{Root mean square error of the three estimators with the kernel length set to $T$ for the weighted direct effect estimator.}
	\label{fig:rMSE_three_estimators}
\end{figure}

From Figure \ref{fig:rMSE_three_estimators}, we observe that the performance ranking of the three estimators aligns with our theoretical predictions: $\hat{\tau}_{\model}$ outperforms both $\hat{\tau}_{\pi_0}$ and $\hat{\tau}_{\bar{\lambda}}$. More specifically, the weighted direct effect estimator combined with user-level randomization and a kernel length of six hours offers the lowest root mean square error among all other combinations of estimator and experimental design.

Looking more closely at Figure \ref{fig:rMSE_three_estimators}, we find that the choice of the correct interval length (for the model-free and idle-time-based estimators) or kernel length (for the weighted direct effect estimator) appears to be crucial, as varying lengths result in markedly different performances of the estimators. For the interval length, in this emergency department case study, the more pronounced non-stationarity within a single day compared to between days might partly explain why the model-free and idle-time-based estimators perform relatively well when the interval length is set to one day. For the kernel length, there appears to be a variance-bias tradeoff: when the kernel length is small, the bias of the local estimators is small, but the variance is large because the effective sample size is smaller. Conversely, when the kernel length is large, the bias increases as we pool data across time points that can be very different, but the variance decreases.

% NOTE: This paragraph is subsumed by next section
%From a practical point of view, choosing the interval length appears to be much more challenging than choosing the kernel length. This is because the interval length must be determined before conducting experiments, without seeing the experimental data. Changing the interval length also requires rerunning the entire experiment. In contrast, choosing the kernel length can be much easier. Experimenters can select the appropriate kernel length after examining the data or even develop a data-driven method for choosing a kernel length that adapts to the specific experiment and data.

\begin{figure}
	\centering
	\includegraphics[width=0.7\textwidth]{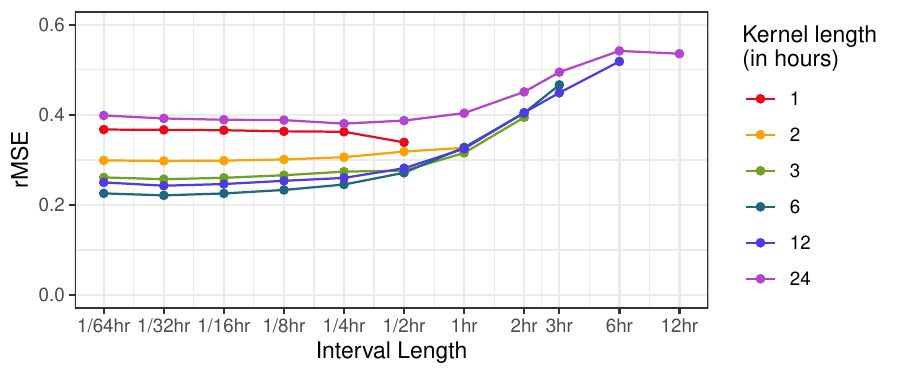}
	\caption{Root mean square error of the weighted direct effect estimator with different interval lengths and kernel lengths: The error is minimal at interval lengths close to zero for all reasonable choices of kernel lengths.}
	\label{fig:RMSE_WDE_diff_kernel_length}
\end{figure}

Finally, to validate our intuition that the user-level experiment is a good choice of experiment when using the weighted direct effect estimator, we examine the performance of the weighted direct effect estimator with different interval lengths and kernel lengths on the simulator. Figure~\ref{fig:RMSE_WDE_diff_kernel_length} demonstrates that the root mean square error of the weighted direct effect estimator is minimized at interval lengths close to zero for all reasonable choices of kernel lengths.

\subsection{Choosing hyper-parameters: kernel versus~interval length}
We have seen from the earlier sections that the accuracy of both the switchback experiment and the user-level randomization experiment is sensitive to the choice of hyper-parameters: the interval length for switchbacks and the kernel length for user-level randomization, respectively. While there does not appear to be any simple and effective rule to determine the optimal values for either hyper-parameter, we believe that it is still preferable to fine-tune the kernel length than it is interval length, therefore making the user-level randomization design more attractive. 

First, it may be easier to identify a performant kernel value. This is because the data collection procedure under user-level randomization does not depend on the kernel value, and therefore the hyper-parameter can be chosen after the experiment has already been completed. This is convenient especially when historical data on the system dynamics is scarce, as we can use data collected during the experiment itself to determine an appropriate kernel length. For instance, in the example presented earlier in this section, it is visually clear from the macro observational data (Figure \ref{fig:ED_data}) that the system exhibits intra-day variations at a time scale of approximately every 12 hours or so. The decision maker may thus set the kernel length to be 12 hours, which, as it turns out, is a nearly optimal choice (Figure \ref{fig:rMSE_three_estimators}). By contrast, the interval length directly affects how the experiment is run, i.e., by specifying the frequency of treatment switches, and must therefore be determined before the experiment. In situations where historical observational data from the system is scarce or even entirely absent prior to the experiment, choosing the correct interval length can be very challenging. 

Another consideration that favors tuning kernel than interval length is the resulting optimal and worst-case accuracy. If we compare the two plots in Figure \ref{fig:rMSE_three_estimators}, we see that the rMSE error of WDE under the optimal kernel length is smaller than that of both estimators under the switchback design, with the gap being more prominent when compared to the model-free estimator. Therefore, even if we assume that one could optimally fine-tune  hyper-parameters, using WDE with user-level randomization still delivers more bang for the buck. Conversely, user-level experiments with WDE seem to be more robust against poor choices of hyper-parameter values than switchback experiments. Again in Figure \ref{fig:rMSE_three_estimators}, we see that the error of WDE under the worst choice of kernel length is better than the worst errors of the switchback experiments, with model-free estimator again being much worse.

\section{Implications for Practice}

Our paper has significant implications for practitioners, which we summarize in this section.
In our discussion, we draw a distinction between insights concerning statistical analysis of data generated by an existing (switchback) experiment, versus those for designing new experimental procedures as well as the ensuing data analysis. 

For the problem of statistical analysis, our main insight is that one could potentially achieve significantly better power and accuracy if one is able to leverage a problem's underlying stochastic congestion model, as demonstrated by our results on the weighted direct effect estimator. A potential push-back for this viewpoint is that the real system may not operate in accordance with some of the assumptions of the stochastic model, or that such model may not be readily available to the practitioner. To that end, our results suggest that even using some partial structure, such as the idle-time based estimator, could provide a boost in power compared to naive model-free estimators, while requiring much milder structural assumptions. More broadly, our results should encourage practitioners to invest in structural understanding of their systems, as such modeling can provide significant benefits in experimental analysis.

Shifting our attention to the problem of experimental design, we observe that the design of a switchback experiment is often motivated by the need to mitigate environmental nonstationarity. An important but challenging managerial decision in launching such an experiment is the choice of a switching interval, where the decision maker must balance the carryover effect (when the intervals are too short) versus the impact of nonstationarity (when the intervals are too long). In this context, our results suggest an alternative approach: by using a user-level randomization in the experiment design and the WDE estimator in the subsequent statistical analysis, the manager can not only avoid having to commit to a predetermined switching interval but potentially achieve higher accuracy. These results further hint at a novel experimental paradigm, whereby the use of stochastic modeling allows the manager to shift the focus from directly estimating total treatment effects to estimating only direct effects. By doing so, she can then shift some of the burden of  hyper-parameter fine-tuning for running the experiment, which can be difficult and irreversible,  to fine-tuning parameters in post-experiment statistical analysis, which can be easier and more flexible to perform.

\section{Proof outlines of Theorems \ref{theo:estimands}-\ref{theo:non_stationarity}}
\label{sec:proof_outlines}

\subsection{Proof of Theorem \ref{theo:estimands}}
The balance equation gives that for any $k \leq K$, 
\begin{equation}
\pi_{k -1}(p) \lambda_{k-1}(p) = \pi_k(p) \mu.
\end{equation}
Making use of this equation, we can express $ V'(p)$ as
\begin{equation}
\begin{split}
V'(p)
&= \frac{d}{d p} \p{\sum_{k = 0}^{K-1} \pi_k(p) \lambda_k(p)}
= \frac{d}{d p} \p{\sum_{k = 0}^{K-1} \pi_{k+1}(p) \mu }
= \mu \frac{d}{d p} \p{\sum_{k = 0}^{K-1} \pi_{k+1}(p)}\\
&= \mu \frac{d}{d p}   \p{1 - \pi_{0}(p)}
= - \mu \pi_0'(p). 
\end{split}
\end{equation}

Furthermore, from the balance equation, we see that $\pi_k$ can be expressed as a cumulative product from $\pi_0(p)$:  
\begin{equation}
\pi_k(p) = \pi_0(p) \cdot \prod_{i=0}^{k-1} (\lambda_k(p)/\mu). 
\end{equation}
To facilitate differentiation with respect to $p$, we can further convert the above product into a sum by taking logarithm on both sides of the equation: 
\begin{equation}
\log \pi_k(p) =  \log \pi_0(p)  +  \sum_{i=0}^{k-1} (\log \lambda_k (p) - \log \mu). 
\end{equation}
Taking the derivatives with respect to $p$ on both sides thus leads to the conclusion: 
\begin{equation}
\label{eqn:perc_change}
\frac{\pi_{k}'(p)}{\pi_k(p)} =   \frac{\pi_{0}'(p)}{\pi_{0}(p)}  +  \sum_{i = 0}^{k-1} \frac{\lambda'_{i}(p)}{\lambda_{i}(p)}. 
\end{equation}
Note that the above holds for any $1 \leq k \leq K$. For $k \geq K+1$, it is clear that $\pi_k(p) = 0$ for any $p$ and thus $\pi_k'(p) = 0$. 

Equation \eqref{eqn:perc_change} further implies that
\begin{equation}
\pi_{k}'(p) =  \pi_k(p) \frac{\pi_{0}'(p)}{\pi_{0}(p)}  +  \pi_k(p) \sum_{i = 0}^{k-1} \frac{\lambda'_{i}(p)}{\lambda_{i}(p)}. 
\end{equation}
Summing over $k$ gives
\begin{equation}
\begin{split}
-\pi_0'(p) &= \sum_{k = 1}^{K} \pi_{k}'(p) = \frac{\pi_{0}'(p)}{\pi_{0}(p)} \sum_{k = 1}^{K} \pi_{k}(p)  + \sum_{k = 1}^{K} \pi_{k}(p)  \sum_{i = 0}^{k-1} \frac{\lambda'_{i}(p)}{\lambda_{i}(p)}\\
& = \frac{\pi_{0}'(p)}{\pi_{0}(p)}(1 - \pi_{0}(p)) + \sum_{k = 0}^{K-1}  \frac{\lambda'_{k}(p)}{\lambda_{k}(p)} \sum_{i = k+1}^{K} \pi_i(p). 
\end{split}
\end{equation}
Thus 
\begin{equation}
- \frac{\pi_{0}'(p)}{\pi_{0}(p)} = \sum_{k = 0}^{K-1}  \frac{\lambda'_{k}(p)}{\lambda_{k}(p)} \sum_{i = k+1}^{K} \pi_i(p).
\end{equation}
Hence we can write $\bar{\lambda}'(p)$ as 
\begin{equation}
\bar{\lambda}'(p) = -\mu \pi_0'(p) = \mu \pi_{0}(p) \sum_{k = 0}^{K-1}  \frac{\lambda'_{k}(p)}{\lambda_{k}(p)} \sum_{i = k+1}^{K} \pi_i(p).
\end{equation}

\subsection{Proof of Theorem \ref{theo:first_esti}}
We prove the theorem with two steps. In the first step, we establish a central limit theorem without specifying the asymptotic mean and variance. In the second step, we give explicit formulae for the mean and variance. 

By definition, we can write the estimator as 
\begin{equation}
\hat{\tau}_{\bar{\lambda}} = \frac{1}{2\zeta_T}\p{\frac{N_{+}(T, \zeta_T)}{T_{ +}(T, \zeta_T)} - \frac{N_{-}(T, \zeta_T)}{T_{ -}(T, \zeta_T)}}. 
\end{equation}
In the first step, we establish the following proposition.
\begin{prop}
\label{prop:first_esti_triangular}
There exist $\tilde{\sigma}(p): \mathbb{R} \to \mathbb{R} $ and  $\tilde{\mu}(p) : \mathbb{R} \to \mathbb{R} $ such that under Assumptions \ref{assu:queue_length}-\ref{assu:two_intervals}, for data generated from either the interval switchback experiment or the regenerative switchback experiment,
\begin{equation}
	\label{eqn:CLT_unknown_para}
\sqrt{T}
\begin{pmatrix}
\frac{N_{+}(T, \zeta_T)}{T_{ +}(T, \zeta_T)} - \tilde{\mu}(p+\zeta_T)\\
\frac{N_{-}(T, \zeta_T)}{T_{ -}(T, \zeta_T)} - \tilde{\mu}(p-\zeta_T)\\
\end{pmatrix}
\Rightarrow \mathcal{N}(\zerov,  \tilde{\Sigma}(p)),
\end{equation}
as $T \to \infty$, where
$
\tilde{\Sigma} = \begin{pmatrix}
	\tilde{\sigma}^2(p) & 0\\
	0 & \tilde{\sigma}^2(p)
\end{pmatrix}
$. 
\end{prop}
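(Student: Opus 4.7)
The plan is to take $\tilde{\mu}(p) = \bar{\lambda}(p) = \sum_k \pi_k(p)\lambda_k(p)$ and to derive $\tilde{\sigma}^2(p)$ as the asymptotic variance of $\sqrt{T}(N/T - \bar{\lambda}(p))$ when data are collected at a constant price $p$ from the ergodic queueing chain. The bivariate statement then reduces to two one-dimensional CLTs together with asymptotic independence of the ``$+$'' and ``$-$'' contributions, which live on disjoint pieces of the time horizon.

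I will first treat the regenerative switchback. Between successive visits to the reference state $k_r$ the process forms i.i.d.\ excursions under each fixed price, and because the sign is re-randomized independently at each such visit, excursions under $p+\zeta_T$ are independent of those under $p-\zeta_T$. Writing $N_+/T_+$ as $\sum_j A_j^{(+)}/\sum_j D_j^{(+)}$ over i.i.d.\ excursions (with $A_j^{(+)}$ and $D_j^{(+)}$ the arrival count and duration of the $j$th excursion under price $p+\zeta_T$) and applying the delta method to the joint CLT for the numerator and denominator gives a univariate CLT centered at $\mathbb{E}[A^{(+)}]/\mathbb{E}[D^{(+)}] = \bar{\lambda}(p+\zeta_T)$, with variance that I will denote $\tilde{\sigma}^2(p+\zeta_T)$. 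The ``$-$'' coordinate is handled identically, and exact independence between the two coordinates is immediate from the i.i.d.\ block construction.

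For the interval switchback, Assumption \ref{assu:two_intervals} gives $l_T/\sqrt{T}\to\infty$, so each interval is much longer than the mixing time of the finite-state chain (uniformly bounded under Assumption \ref{assu:queue_length} and the boundedness in Assumption \ref{assu:bound_smooth}). I will argue by a coupling or spectral-gap bound that the cumulative carryover across interval boundaries contributes only $o_p(1/\sqrt{T})$ when summed over the $O(T/l_T)$ intervals, so that the bulk of $N_+/T_+$ behaves as an ergodic average over a long stretch of the stationary chain at price $p+\zeta_T$. A martingale or blocking CLT for uniformly ergodic Markov chains then yields the same limit law as in the regenerative case, and joint asymptotic independence follows from the disjointness of the ``$+$'' and ``$-$'' time windows together with the mixing bound.

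The main obstacle will be the interval-switchback carryover control: proving that the transient following each switch contributes only $o_p(1/\sqrt{T})$ total bias and variance. The slack between $l_T\gg\sqrt{T}$ and the geometric convergence of the chain to stationarity should supply enough room. As a final tidy-up, continuity of $\bar{\lambda}(\cdot)$ and $\tilde{\sigma}^2(\cdot)$ at $p$---which follows from Assumption \ref{assu:bound_smooth} together with the detailed balance relations that make each $\pi_k$ smooth in $p$---lets me replace $\tilde{\sigma}^2(p\pm\zeta_T)$ by $\tilde{\sigma}^2(p)$ in the limiting covariance, producing the stated diagonal form $\tilde{\Sigma}(p)$.
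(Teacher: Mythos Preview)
Your regenerative-switchback argument is essentially the paper's: decompose into i.i.d.\ excursions between visits to $k_r$, establish a joint CLT for the numerator and denominator, and apply the delta method to the ratio. One point you underplay is the triangular-array structure: since $\zeta_T\to 0$, the excursion law changes with $T$, so a fixed-$\zeta$ CLT followed by ``continuity of $\tilde\sigma^2$'' is not a proof. The paper handles this explicitly via a triangular-array version of Anscombe's theorem (Lindeberg--Feller for the row sums combined with a Kolmogorov-inequality argument for the random index), checking a uniform fourth-moment bound on the cycle variables. You should invoke that machinery rather than a post-hoc continuity swap; the continuity of $\tilde\sigma^2$ enters only to identify the limiting variance once the triangular CLT is in hand.

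For the interval switchback your route genuinely differs. You propose a direct spectral-gap bound on the carryover at each switch plus a martingale or blocking CLT inside each long interval. This can be made to work, but you then have to re-derive the asymptotic variance from the Markov-CLT formula and argue asymptotic independence of the two coordinates via mixing. The paper instead recycles the regenerative structure: it records the hitting times of $k_r$ during the interval-switchback run, labels each resulting excursion as positive, negative, or mixed according to whether a price switch falls inside it, and notes that the number of mixed excursions is at most the number of switches, $T/l_T=o(\sqrt{T})$ by Assumption~\ref{assu:two_intervals}. After discarding the mixed excursions (which costs $o_p(\sqrt{T})$ in both numerator and denominator), the positive excursions are i.i.d.\ cycles under price $p+\zeta_T$, the negative ones are i.i.d.\ cycles under $p-\zeta_T$, and the two families are exactly independent by the strong Markov property at $k_r$. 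This reduces the interval case to the regenerative one with the same variance formula and no separate mixing or martingale argument; it is a cleaner path than the one you sketch.
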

To show this proposition, we use some triangular array arguments to deal with the changing price perturbation $\zeta_T$. We note that the asymptotic variance $\tilde{\Sigma}(p)$ does not depend on the price perturbation, and we do not have an explicit form for it so far. 

In the second step, we give explicit formula for $\tilde{\mu}$ and $\tilde{\Sigma}$. We achieve this by establishing the following two propositions. 
\begin{prop}
\label{prop:first_esti_zero}
Set the price at $p$ throughout the experiment. Let $N_{\arr}(T, p)$ be the total number of arrivals in time $[0,T]$. Under Assumptions \ref{assu:queue_length}-\ref{assu:bound_smooth}, as $T \to \infty$,
\begin{equation}
\sqrt{T}
\p{\frac{N_{\arr}(T, p)}{T} - \tilde{\mu}(p)}
\Rightarrow \mathcal{N}(0,  \tilde{\sigma}^2(p)/2).
\end{equation}
The function $\tilde{\mu}$ and the value of $\tilde{\sigma}^2(p)$ are the same as in Proposition \ref{prop:first_esti_triangular}. 
\end{prop}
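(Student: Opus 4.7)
The plan is to exploit the regenerative structure of the queue length process. Under Assumption \ref{assu:queue_length} the state space $\{0,1,\dots,K\}$ is finite, and when the price is held fixed at $p$ the chain $X(t)$ is a positive-recurrent continuous-time Markov chain with bounded rates (Assumption \ref{assu:bound_smooth}). I would use state $0$ as the regeneration point. Let $\tau_0 = 0$ and $0 < \tau_1 < \tau_2 < \cdots$ denote the successive return times of $X(t)$ to state $0$ (after leaving it), set $L_i = \tau_i - \tau_{i-1}$, and let $A_i$ be the number of arrivals during the $i$-th cycle $(\tau_{i-1},\tau_i]$. Because the state space is finite and all holding-time and jump rates are bounded away from $0$ and $\infty$, the pairs $\{(L_i,A_i)\}_{i\ge 1}$ are i.i.d.\ with finite moments of all orders.

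From the standard CLT for cumulative reward in a regenerative process (see, e.g., Asmussen, \emph{Applied Probability and Queues}),
\begin{equation}
\sqrt{T}\left(\frac{N_{\arr}(T,p)}{T} - \frac{\mathbb{E}[A_1]}{\mathbb{E}[L_1]}\right) \Rightarrow \mathcal{N}\!\left(0,\;\frac{\operatorname{Var}(A_1 - c\,L_1)}{\mathbb{E}[L_1]}\right), \qquad c := \frac{\mathbb{E}[A_1]}{\mathbb{E}[L_1]}.
\end{equation}
The renewal-reward theorem identifies $c = \bar\lambda(p) = \sum_k \pi_k(p)\lambda_k(p)$, so I would define $\tilde\mu(p) := \bar\lambda(p)$, and let $\sigma_*^2(p)$ denote the asymptotic variance displayed above. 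This already establishes a central limit theorem of the required form; the remaining task is to identify $\sigma_*^2(p)$ with $\tilde\sigma^2(p)/2$, where $\tilde\sigma^2(p)$ is the variance appearing in Proposition \ref{prop:first_esti_triangular}.

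The identification is obtained by comparing the present CLT with Proposition \ref{prop:first_esti_triangular} applied to a switchback at prices $p\pm\zeta_T$. By Assumption \ref{assu:two_intervals}, $T_\pm(T,\zeta_T) = T/2 + o_p(\sqrt{T})$ and the interval length satisfies $l_T/\sqrt{T}\to\infty$, so the hand-off of the queue length at each price switch contributes only $o(1)$ on the $\sqrt{T}$ scale by geometric ergodicity of the finite chain. Combined with $\zeta_T \to 0$, each half of the experiment behaves, to leading order, as a single-price run at price $p$ on a time set of total length $T/2$. Applying the regenerative CLT from the previous step to each half with effective horizon $T/2$ yields $\sqrt{T}\bigl(N_\pm/T_\pm - \tilde\mu(p\pm\zeta_T)\bigr) \Rightarrow \mathcal{N}(0,\,2\sigma_*^2(p))$; matching this against Proposition \ref{prop:first_esti_triangular} forces $\tilde\sigma^2(p) = 2\sigma_*^2(p)$, i.e.\ $\sigma_*^2(p) = \tilde\sigma^2(p)/2$, as claimed.

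The main obstacle is formalizing the decoupling across price switches that lets one apply the single-price CLT separately to each half. This requires a triangular-array version of the regenerative CLT that handles the fact that interval endpoints can fall in the middle of a cycle and that the chain's initial state at the start of each interval is random but mixes quickly (because $l_T \gg \sqrt{T}$). The remaining ingredients---finite moments of $(L_1,A_1)$ under Assumptions \ref{assu:queue_length}--\ref{assu:bound_smooth}, the ergodic identification $\mathbb{E}[A_1]/\mathbb{E}[L_1] = \bar\lambda(p)$, and Slutsky's lemma to absorb $\bar\lambda(p\pm\zeta_T)\to\bar\lambda(p)$---are standard.
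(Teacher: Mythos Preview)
Your regenerative-CLT argument for the single-price chain is essentially the paper's approach: the paper also splits the trajectory into i.i.d.\ cycles between successive visits to a fixed state, applies Anscombe's theorem to the bivariate sums of (arrivals, cycle length), and then the delta method to the ratio. The only cosmetic difference is that the paper regenerates at the same state $k_r$ used in the proof of Proposition~\ref{prop:first_esti_triangular}, rather than at state $0$.

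Where you diverge is in the identification $\sigma_*^2(p)=\tilde\sigma^2(p)/2$. You propose to re-derive the switchback CLT from the single-price CLT by arguing that each half of the experiment behaves like a price-$p$ run of length $T/2$, and then match limits with Proposition~\ref{prop:first_esti_triangular}. This is valid in principle, but as you note it requires a triangular-array decoupling argument. The paper avoids this entirely: because it uses the \emph{same} regeneration state $k_r$ in both proofs, the cycle variables $(A_i,B_i)$ in the single-price setting are distributionally identical to $(A_{i,\pm}(0),B_{i,\pm}(0))$ from the proof of Proposition~\ref{prop:first_esti_triangular} evaluated at $\zeta=0$. Since $\tilde\sigma^2(p)$ was defined there as $\mathbb{E}[B_1(0)]\,\sigma_1^2$ with $\sigma_1^2$ an explicit function of the moments of $(A_1(0),B_1(0))$, the identification $\sigma_2^2=\sigma_1^2$ is immediate by inspection. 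In short, your ``main obstacle'' disappears if you regenerate at $k_r$ and compare the cycle-level variance formulas directly rather than the limiting distributions.
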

\begin{prop}
\label{prop:coro:num_arrival}
Set the price at $p$ throughout the experiment. Let $N_{\arr}(T, p)$ be the total number of arrivals in time $[0,T]$. Under Assumptions \ref{assu:queue_length}-\ref{assu:bound_smooth}, there is a constant $C(K, p) > 0$ such that
\begin{equation}
	\abs{\EE{N_{\arr}(T, p)}  - T \mu(1-\pi_0(p)) } \leq  C(K, p),
\end{equation}
\begin{equation}
	\abs{\Var{N_{\arr}(T, p)}  - T \sigma_{\bar{\lambda}}^2(p) } \leq  C(K, p),
\end{equation}
where $\sigma_{\bar{\lambda}}^2(p)$ is defined in \eqref{eqn:sigma_lambdabar_def}. 
\end{prop}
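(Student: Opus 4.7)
The plan is to exploit the regenerative structure of the queue-length chain $X_t$ at returns to state $0$, combined with an explicit closed-form computation for the asymptotic variance of a smooth additive functional of a birth-death chain. Under Assumption \ref{assu:queue_length}, $X_t$ lives on $\{0,\ldots,K\}$; under Assumption \ref{assu:bound_smooth} the rates are uniformly bounded above and below on the transient part, so the chain is irreducible and uniformly ergodic. Let $0=\tau_0<\tau_1<\tau_2<\cdots$ be successive returns to state $0$, set $\Lambda_j=\tau_{j+1}-\tau_j$ and $R_j=N_{\arr}(\tau_{j+1},p)-N_{\arr}(\tau_j,p)$. The pairs $(\Lambda_j,R_j)$ are i.i.d.\ for $j\geq 1$, and standard exponential-tail bounds on hitting times of finite ergodic CTMCs imply that all moments of $\Lambda_j$ and $R_j$ are bounded by constants $C(K,p)$.

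For the expectation, writing $J(T)$ for the number of completed cycles by time $T$, I would combine Wald's identity with the elementary renewal bound $|\EE{J(T)}-T/\EE{\Lambda_1}|\leq C(K,p)$, together with uniformly bounded contributions from the initial and final incomplete cycles, to obtain
\[
\EE{N_{\arr}(T,p)} = T\,\EE{R_1}/\EE{\Lambda_1}+O(1).
\]
Kac's formula gives $\EE{\Lambda_1}=1/(\pi_0(p)\lambda_0(p))$, and the customer-conservation identity derived in the proof of Theorem \ref{theo:estimands} yields $\EE{R_1}/\EE{\Lambda_1}=\bar\lambda(p)=\mu(1-\pi_0(p))$, establishing the first bound. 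An analogous second-moment analysis of the regenerative sum—expressing $\Var{\sum_{j=1}^{J(T)}R_j}$ in terms of $\EE{J(T)}\Var{r_1}$ with $r_j=R_j-\bar\lambda(p)\Lambda_j$, and controlling $\operatorname{Cov}(J(T),\sum_j r_j)$ via Wald's second identity and the finite cycle moments—gives
\[
\Var{N_{\arr}(T,p)} = T\,\Var{r_1}/\EE{\Lambda_1}+O(1).
\]

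The core technical task is then the algebraic identification $\Var{r_1}/\EE{\Lambda_1}=\sigma_{\bar\lambda}^2(p)$. I would tackle this by decomposing $N_{\arr}(T,p)=\int_0^T\lambda_{X_s}(p)\,ds+M_T$, with $M_T$ the compensated arrival martingale satisfying $\langle M\rangle_T=\int_0^T\lambda_{X_s}(p)\,ds$, and then applying the Poisson equation $Qg=\bar\lambda(p)-\lambda_\cdot(p)$ (which is solvable on the finite state space) to write $\int_0^T\lambda_{X_s}(p)\,ds-T\bar\lambda(p)=g(X_0)-g(X_T)-\tilde M_T$ for a second martingale $\tilde M_T$. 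The asymptotic variance then becomes a quadratic-variation computation,
\[
\tfrac{1}{T}\Var{M_T-\tilde M_T}=\bar\lambda(p)-2\sum_k\pi_k(p)\lambda_k(p)\,\Delta g(k)+\sum_k\pi_k(p)\,\Gamma g(k),
\]
where $\Delta g(k)=g(k+1)-g(k)$ and $\Gamma$ is the carré du champ; detailed balance reduces $\sum_k\pi_k(p)\Gamma g(k)$ to $2\sum_k\pi_k(p)\lambda_k(p)\Delta g(k)^2$ on a birth-death chain. The Poisson equation itself is a first-order forward recursion in $\Delta g(k)$, whose solution can be written in closed form through the tail sums $S_k(p)=\sum_{j=k}^K\pi_j(p)$.

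The main obstacle is the final bookkeeping: iteratively substituting detailed balance $\pi_k(p)\lambda_k(p)=\mu\pi_{k+1}(p)$ and the explicit form of $\Delta g$ into the above, and collecting terms to reproduce exactly the cross terms $S_k S_{k+1}/\pi_k$ and the squared terms $(1-\pi_0)S_k^2/\pi_k$ in \eqref{eqn:sigma_lambdabar_def}. A secondary and more routine challenge is ensuring the variance bound is genuinely non-asymptotic (i.e.\ the error really is $O(1)$ and not merely $o(T)$); this is handled by the uniform moment control on cycles inherited from uniform ergodicity of a finite irreducible chain.
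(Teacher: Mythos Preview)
Your approach is correct and would succeed, but it differs from the paper's in two substantive ways.

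First, the paper does not work with arrivals directly. It observes that $|N_{\arr}(T,p)-N_{\dep}(T,p)|\le K$ (since the queue is bounded), so it suffices to analyse departures; this makes the ``jump matrix'' $\tilde Q$ lower-bidiagonal with all nonzero entries equal to $\mu$, which streamlines the algebra considerably. You instead decompose $N_{\arr}$ via the compensated-arrival martingale plus the Poisson-equation martingale for the integral functional $\int_0^T\lambda_{X_s}\,ds$; this is perfectly valid but means your final algebraic substitution will involve the state-dependent $\lambda_k$ rather than the single constant $\mu$, so the bookkeeping to reach \eqref{eqn:sigma_lambdabar_def} will be somewhat heavier than you anticipate.

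Second, and more structurally, the paper does not use regenerative cycles plus the Poisson equation at all. Its route is uniformization: embed the CTMC in a discrete chain with transition matrix $P=I+Q/\alpha$, compute $\EE{N_{\mathcal J}}$ and $\Var{N_{\mathcal J}}$ for a generic set of ``eligible jumps'' $\mathcal J$ directly by summing powers of $P$, and express the answer through the group inverse $Q^\#$ as $\piv^\top\tilde Q\onev-2\piv^\top\tilde Q Q^\#\tilde Q\onev$ (Lemma~\ref{lemma:num_eligible_jumps}). The paper then produces a closed form for $Q^\#_{k,i}$ on a birth--death chain (Proposition~\ref{prop:Q_inverse}) and plugs it in. Your Poisson-equation solution $g$ is of course $g=-Q^\#(\lambda_\cdot-\bar\lambda\onev)$, so the two computations are linearly equivalent; the paper's choice buys an immediate non-asymptotic $O(1)$ remainder (because the power sums in Lemma~\ref{lemma:sum_of_power} come with geometric error bounds), whereas in your regenerative route the $O(1)$ variance remainder requires a separate argument controlling the last incomplete cycle and the covariance between $J(T)$ and the cycle contributions, which you correctly flag as the secondary challenge. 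Your martingale/carr\'e-du-champ expression is the cleaner probabilistic statement; the paper's group-inverse route is more mechanical but delivers the non-asymptotic bound for free.
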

Note that here in Proposition \ref{prop:coro:num_arrival}, $1-\pi_0(p)$ is the steady-state probability of the system being non-empty, and thus $T(1-\pi_0(p))$ is the expected amount of time the system is non-empty (in steady state). 

Proposition \ref{prop:first_esti_zero} gives a central limit theorem of $N_{\arr}(T, p)$ while Proposition \ref{prop:coro:num_arrival} gives the limit of the mean and variance of $N_{\arr}(T, p)$. Together with a uniform integrability result (Lemma \ref{lemma:uniform_intergral}), we have that the asymptotic mean and variance in the central limit theorem is the same as the limit mean and variance in Proposition \ref{prop:coro:num_arrival}, i.e., 
\begin{equation}
\label{eqn:mu_sigma_formula}
\tilde{\mu}(p) = \mu(1-\pi_0(p)), \textnormal{ and } \tilde{\sigma}^2(p) = 2 \sigma_{\bar{\lambda}}^2(p). 
\end{equation}
Since the choice of the price $p$ is arbitrary in Propositions \ref{prop:first_esti_zero}-\ref{prop:coro:num_arrival}, we can plug \eqref{eqn:mu_sigma_formula} back into Proposition \ref{prop:first_esti_triangular} and get
\begin{equation}
\sqrt{T}
\begin{pmatrix}
\frac{N_{+}(T, \zeta_T)}{T_{+}(T, \zeta_T)} - \mu(1-\pi_0(p+\zeta_T))\\
\frac{N_{-}(T, \zeta_T)}{T_{-}(T, \zeta_T)} - \mu(1-\pi_0(p-\zeta_T))\\
\end{pmatrix}
\Rightarrow \mathcal{N}(\zerov,  \tilde{\Sigma}(p)),
\end{equation}
where
$
\tilde{\Sigma} = \begin{pmatrix}
	2 \sigma_{\bar{\lambda}}^2(p) & 0\\
	0 & 2 \sigma_{\bar{\lambda}}^2(p)
\end{pmatrix}
$. Therefore, 
\begin{equation}
\sqrt{T}\p{\p{\frac{N_{+}(T, \zeta_T)}{T_{+}(T, \zeta_T)} - \frac{N_{-}(T, \zeta_T)}{T_{-}(T, \zeta_T)}} + \mu\p{\pi_0\p{p+\zeta_T} - \pi_0\p{p-\zeta_T}}}
 \Rightarrow\mathcal{N} (0, 4\sigma_{\bar{\lambda}}^2(p)),
\end{equation}
as $T \to \infty$, where $ \sigma_{\bar{\lambda}}^2(p)$ is defined in \eqref{eqn:sigma_lambdabar_def}. Finally, since $|\lambda_k''(p)| \leq B_2$ holds for any $k$ for some constant $B_2$, we have that $\abs{\pi''_0(p)} \leq B_{\pi}$ for some constant $B_{\pi}$, and thus
\begin{equation}
-\mu(\pi_0\p{p+\zeta_T} - \pi_0\p{p-\zeta_T})
= -2\zeta_T\mu\pi_0'(p) + \oo(\zeta_T^2)
= 2\zeta_T V'(p)  + \oo(\zeta_T^2). 
\end{equation}
Since $\sqrt{T} \zeta_T^2 \to 0$, we have that
\begin{equation}
\sqrt{T\zeta_T^2}\p{\frac{1}{2\zeta_T}\p{\frac{N_{+}(T, \zeta_T)}{T_{+}(T, \zeta_T)} - \frac{N_{-}(T, \zeta_T)}{T_{-}(T, \zeta_T)}} - V'(p)}
 \Rightarrow\mathcal{N} (0, \sigma_{\bar{\lambda}}^2(p)). 
\end{equation}

\subsection{Proof of Theorem \ref{theo:second_esti}}
Similar to the proof of Theorem \ref{theo:first_esti}, we prove the theorem with two steps. In the first step, we establish a central limit theorem without specifying the asymptotic mean and variance. In the second step, we give explicit formulae for the mean and variance. 

By definition, we can write the estimator as 
\begin{equation}
\hat{\tau}_{\pi_0} = -\frac{\mu}{2\zeta}\p{\frac{T_{0, +}}{T_{+}} - \frac{T_{0, -}}{T_{ -}}}.
\end{equation}
In the first step, we establish the following proposition.
\begin{prop}
\label{prop:second_esti_triangular}
There exist $\tilde{\sigma}(p): \mathbb{R} \to \mathbb{R} $ and  $\tilde{\mu}(p) : \mathbb{R} \to \mathbb{R} $ such that under Assumptions \ref{assu:queue_length}-\ref{assu:two_intervals}, for data generated from either the interval switchback experiment or the regenerative switchback experiment,
\begin{equation}
	\label{eqn:CLT_unknown_para2}
\sqrt{T}
\begin{pmatrix}
\frac{T_{0, +}(T, \zeta_T)}{T_{ +}(T, \zeta_T)} - \check{\mu}(p+\zeta_T)\\
\frac{T_{0, -}(T, \zeta_T)}{T_{ -}(T, \zeta_T)} - \check{\mu}(p-\zeta_T)\\
\end{pmatrix}
\Rightarrow \mathcal{N}(\zerov,  \check{\Sigma}(p)),
\end{equation}
as $T \to \infty$, where
$
\check{\Sigma} = \begin{pmatrix}
	\check{\sigma}^2(p) & 0\\
	0 & \check{\sigma}^2(p)
\end{pmatrix}
$. 
\end{prop}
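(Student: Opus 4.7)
The plan is to mirror the proof of Proposition \ref{prop:first_esti_triangular}, replacing the arrival-count functional $N/T$ with the occupation-time functional $T_0/T$. The fact that both $T_{0,\pm}/T_\pm$ are fractions of time spent in the single state $0$ lets us borrow essentially the same machinery, with $\check{\mu}(p)$ playing the role of $\pi_0(p)$ (the steady-state idle probability at price $p$).

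First, I would exploit the independence structure of the two designs to reduce to a one-sample problem. For the regenerative switchback, the trajectory between successive visits to $k_r$ forms a regenerative cycle whose distribution depends only on the coin flip made at the start of that cycle; grouping cycles by treatment produces two independent collections of i.i.d.\ cycles, each contributing a pair (time in state $0$, total cycle duration) drawn from the cycle distribution at the chosen price. For the interval switchback, Assumption \ref{assu:two_intervals} ensures $l_T/\sqrt{T} \to \infty$, and Assumptions \ref{assu:queue_length} and \ref{assu:bound_smooth} together give uniform geometric ergodicity of the bounded-state chain, so each interval is much longer than the mixing time; the ``$+$'' intervals and ``$-$'' intervals interact only through transient phases of $O_p(1)$ duration that are negligible at the $1/\sqrt{T}$ scale. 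Combined with $T_+ = T/2 + o_p(\sqrt{T})$ from Assumption \ref{assu:two_intervals}, this makes the pair $(T_{0,+}/T_+,\; T_{0,-}/T_-)$ asymptotically independent with a common marginal evaluated at the perturbed prices, which explains the diagonal $\check{\Sigma}$.

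Second, I would establish the underlying one-sample CLT. Fix a price $q$ and run the chain for time $T$ with the price held at $q$. Under Assumptions \ref{assu:queue_length}-\ref{assu:bound_smooth} the chain is uniformly geometrically ergodic, so a standard Markov/renewal CLT gives
\begin{equation*}
\sqrt{T}\left(\frac{T_0(T,q)}{T} - \pi_0(q)\right) \Rightarrow \mathcal{N}\!\left(0,\; \check{\sigma}^2(q)/2\right)
\end{equation*}
for some continuous $\check{\sigma}^2(q)$. Applying the delta method (via the ratio $T_{0,+}/T_+$ with $T_+/T \to 1/2$) and combining with the independence established in the first step yields the stated joint CLT with the centering $\check{\mu}(q) = \pi_0(q)$.

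The main obstacle will be handling the triangular-array structure, since the centering points $p \pm \zeta_T$ move with $T$. I would address this exactly as in the proof of Proposition \ref{prop:first_esti_triangular}: under Assumption \ref{assu:bound_smooth} the transition rates, stationary distribution, and asymptotic variance $\check{\sigma}^2(\cdot)$ are all Lipschitz near $p$, so a Lindeberg-type triangular CLT for sums of i.i.d.\ bounded regenerative cycle statistics at the slowly varying parameter $p + \zeta_T$ converges to $\mathcal{N}(0,\check{\sigma}^2(p))$ because $\check{\sigma}^2(p + \zeta_T) \to \check{\sigma}^2(p)$. Once this uniformity is in place, the remainder of the argument reuses the lemmas from the proof of Proposition \ref{prop:first_esti_triangular} as black boxes. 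The explicit values $\check{\mu}(p) = \pi_0(p)$ and $\check{\sigma}^2(p) = 2\mu \pi_0^2(p)\sum_{k=1}^K S_k^2(p)/\pi_k(p)$ would then be pinned down in a separate step, analogous to Propositions \ref{prop:first_esti_zero}-\ref{prop:coro:num_arrival}, by directly computing the mean and asymptotic variance of $T_0(T,p)$ using the balance equations combined with a uniform-integrability argument like Lemma \ref{lemma:uniform_intergral}.
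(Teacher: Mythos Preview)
Your proposal is correct and follows essentially the same approach as the paper: the paper's proof of Proposition \ref{prop:second_esti_triangular} simply states that it mirrors the proof of Proposition \ref{prop:first_esti_triangular} with $A_{i,+}(\zeta_T)$ and $A_{i,-}(\zeta_T)$ redefined as the amount of time in state $0$ between consecutive visits to $k_r$, exactly as you describe. Your elaboration on the regenerative cycle decomposition, the triangular-array CLT, and the handling of the interval switchback via the $l_T/\sqrt{T}\to\infty$ assumption matches the structure of the paper's proof of Proposition \ref{prop:first_esti_triangular}, and your remark that the explicit identification of $\check{\mu}$ and $\check{\sigma}^2$ is deferred to a separate step is also how the paper proceeds (via Proposition \ref{prop:second_esti_zero}, Theorem \ref{theo:CLT_literature}, and Proposition \ref{prop:Q_inverse}).
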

Similar to Proposition \ref{prop:first_esti_triangular}, to show this proposition, we use some triangular array arguments to deal with the changing price perturbation $\zeta_T$. We note that the asymptotic variance $\check{\Sigma}(p)$ does not depend on the price perturbation, and we do not have an explicit form for it so far. 

In the second step, we give explicit formula for $\check{\mu}$ and $\check{\Sigma}$. We start with showing a central limit theorem for the fixed-price setting. 
\begin{prop}
\label{prop:second_esti_zero}
Set the price at $p$ throughout the experiment. Let $T_0(T, p)$ be the amount of time in state 0. Under Assumptions \ref{assu:queue_length}-\ref{assu:bound_smooth}, as $T \to \infty$,
\begin{equation}
\sqrt{T}
\p{\frac{T_0(T, p)}{T} - \check{\mu}(p)}
\Rightarrow \mathcal{N}(0,  \check{\sigma}^2(p)/2).
\end{equation}
The function $\check{\mu}$ and the value of $\check{\sigma}^2(p)$ are the same as in Proposition \ref{prop:second_esti_triangular}. 
\end{prop}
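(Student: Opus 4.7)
The approach is to exploit the regenerative structure of the fixed-price queue-length process. Under Assumptions \ref{assu:queue_length}--\ref{assu:bound_smooth}, $\{X_t\}_{t \geq 0}$ is a finite-state irreducible ergodic continuous-time birth-death chain on $\{0, 1, \ldots, K\}$ (irreducibility follows from $\lambda_k(p) \geq B_0 > 0$ and $\mu > 0$), so the ergodic theorem immediately identifies $T_0(T, p)/T \to \pi_0(p)$ almost surely. This fixes $\check{\mu}(p) = \pi_0(p)$, consistent with Proposition \ref{prop:second_esti_triangular} since the strong law forces each switchback fraction $T_{0, \pm}/T_{\pm}$ to the same limit up to the vanishing perturbation $\zeta_T$.

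For the CLT, I would use a regenerative decomposition based on returns to state $0$. Let $\tau_0 < \tau_1 < \cdots$ denote the successive starts of excursions from state $0$, and set $L_i = \tau_{i+1} - \tau_i$ and $Y_i = \int_{\tau_i}^{\tau_{i+1}} \mathbf{1}\{X_s = 0\}\, ds$. On a finite-state irreducible continuous-time Markov chain the pairs $(L_i, Y_i)$ are i.i.d.\ with all polynomial moments finite, and the standard renewal-reward identity gives $\mathbb{E}[Y_1]/\mathbb{E}[L_1] = \pi_0(p)$. Writing $N(T)$ for the number of completed cycles by time $T$ and absorbing the $O_p(1)$ boundary contributions from the initial residual before $\tau_0$ and the final residual cycle $L_{N(T)+1}$, I obtain
\begin{equation}
T_0(T, p) - \pi_0(p) T = \sum_{i = 0}^{N(T) - 1} \bigl(Y_i - \pi_0(p) L_i\bigr) + O_p(1),
\end{equation}
and Anscombe's theorem applied to these centered i.i.d.\ summands, using $N(T)/T \to 1/\mathbb{E}[L_1]$ in probability, yields
\begin{equation}
\sqrt{T}\left(\frac{T_0(T, p)}{T} - \pi_0(p)\right) \Rightarrow \mathcal{N}\!\left(0,\; \frac{\operatorname{Var}(Y_1 - \pi_0(p) L_1)}{\mathbb{E}[L_1]}\right).
\end{equation}
I would then define $\check{\sigma}^2(p)/2$ to equal this limit variance.

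Finally, matching this $\check{\sigma}^2(p)$ with the abstract one in Proposition \ref{prop:second_esti_triangular} rests on the observation that each switchback component $T_{0, \pm}/T_{\pm}$ is, asymptotically, the idle-time fraction of a fixed-price experiment of length $T_{\pm} = T/2 + o_p(\sqrt{T})$ at price $p \pm \zeta_T$. The regenerative switchback preserves this equivalence exactly via the strong Markov property at state $k_r$ (the treatment labels are i.i.d.\ Bernoulli assignments on i.i.d.\ excursion blocks); the interval switchback does so up to $o_p(\sqrt{T})$ post-switch transient error, controlled by geometric ergodicity of the finite chain together with $l_T/\sqrt{T} \to \infty$ in Assumption \ref{assu:two_intervals}. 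Applying the fixed-price CLT just proved with $T \mapsto T/2$ and $p \mapsto p \pm \zeta_T$ therefore produces limit variance $2 \cdot \check{\sigma}^2(p)/2 = \check{\sigma}^2(p)$ and limit mean $\pi_0(p \pm \zeta_T)$ to match Proposition \ref{prop:second_esti_triangular}. The main technical work I anticipate is in controlling the switchback-specific remainders and the residual cycle $L_{N(T)+1}$, both of which reduce to exponential-tail bounds on return times for a finite-state continuous-time Markov chain with uniformly bounded rates and are routine.
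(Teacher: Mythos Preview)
Your proposal is correct and uses the same regenerative-CLT-plus-Anscombe machinery as the paper, but with one consequential difference: you regenerate at state $0$, whereas the paper regenerates at the state $k_r$ that was already used in the proof of Proposition~\ref{prop:second_esti_triangular}. In the paper's setup, the cycle variables $A_i$ (time in state $0$ during the $i$-th excursion from $k_r$) and $B_i$ (excursion length) at price $p$ are literally the $\zeta=0$ specializations of the $A_{i,\pm}(\zeta_T)$, $B_{i,\pm}(\zeta_T)$ from Proposition~\ref{prop:second_esti_triangular}, so the identification of $\check\mu$ and $\check\sigma^2$ is immediate by inspection of the variance formula. Your choice of state $0$ yields a different (though of course numerically equal) variance expression $\operatorname{Var}(Y_1-\pi_0 L_1)/\mathbb{E}[L_1]$, and you then have to close the loop by arguing that each switchback component is asymptotically a fixed-price experiment of length $T/2$; that final paragraph is correct in outline but amounts to re-deriving the coupling step inside Proposition~\ref{prop:second_esti_triangular}. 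The paper's alignment of regeneration states buys a one-line proof; your route is self-contained and perhaps more natural for the idle-time functional, at the cost of the extra matching argument.
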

Then we note that classical results in Markov processes also give a central limit theorem of $T_0(T, p)/T$. Let $Q(p)$ be the transition rate matrix of the continuous-time Markov chain induced by the queue length process. Let $Q^{\#}(p) = -(\onev\piv^\top - Q)^{-1} (I - \onev \piv)$ be the group inverse of the matrix $Q(p)$. In the literature, $-Q^{\#}(p)$ is also known as the fundamental matrix \citep{asmussen2003applied}. A special case of Theorem 4.11 in \citet{asmussen2003applied} establishes a central limit theorem of $T_0(T, p)/T$; in particular, the asymptotic variance is expressed in terms of $Q^{\#}(p)$: 

\begin{theo}[Corollary of Theorem 4.11 in \citet{asmussen2003applied}]
\label{theo:CLT_literature}
Set the price at $p$ throughout the experiment. Let $T_0(T, p)$ be the amount of time in state 0. Under Assumptions \ref{assu:queue_length}-\ref{assu:bound_smooth}, as $T \to \infty$, 
\begin{equation}
\sqrt{T}
\p{\frac{T_0(T, p)}{T} - \pi_0(p)}
\Rightarrow \mathcal{N}(0,  -2\pi_0(p)Q^{\#}_{0,0}(p)).
\end{equation}
\end{theo}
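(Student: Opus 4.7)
The plan is to realize this as a direct specialization of the standard central limit theorem for additive functionals of a finite-state, ergodic continuous-time Markov chain, exactly in the form stated as Theorem 4.11 of \citet{asmussen2003applied}. When the price is held fixed at $p$, Assumptions \ref{assu:queue_length} and \ref{assu:bound_smooth} ensure that the queue-length process $(X_t)_{t \ge 0}$ is an irreducible, positive-recurrent CTMC on the finite state space $\{0, 1, \dots, K\}$ with generator $Q(p)$ and stationary distribution $\piv(p)$: irreducibility follows because $\lambda_k(p) \ge B_0 > 0$ for $k < K$ supplies upward transitions out of every non-absorbing state and $\mu > 0$ supplies downward transitions, and positive recurrence is automatic on a finite state space.

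With this structural setup in place, the functional of interest is $T_0(T,p) = \int_0^T \mathbf{1}\{X_s = 0\}\, ds$, which is the additive functional associated with the indicator $f = \ev_0$ of state $0$. Invoking Asmussen's Theorem 4.11 for the centered functional $\tilde f = \ev_0 - \pi_0(p)\onev$ yields
\begin{equation*}
\frac{1}{\sqrt{T}}\bigl(T_0(T,p) - T\pi_0(p)\bigr) \Rightarrow \mathcal{N}\bigl(0, \sigma^2(p)\bigr), \qquad \sigma^2(p) = -2\sum_i \pi_i(p)\, \tilde f(i)\, (Q^{\#}(p)\tilde f)(i),
\end{equation*}
where the variance formula is the standard consequence of writing the asymptotic covariance as $2\int_0^\infty \mathbb{E}_{\piv}[\tilde f(X_0)\tilde f(X_t)]\,dt$ and identifying $\int_0^\infty e^{tQ(p)}\tilde f\, dt = -Q^{\#}(p)\tilde f$ on the mean-zero subspace.

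The final step is to simplify $\sigma^2(p)$ using the two defining identities of the group inverse, namely $\piv(p)^\top Q^{\#}(p) = \zerov^\top$ and $Q^{\#}(p)\onev = \zerov$. The second identity gives $Q^{\#}(p)\tilde f = Q^{\#}(p)\ev_0$, so $(Q^{\#}(p)\tilde f)(i) = Q^{\#}_{i,0}(p)$, and then
\begin{equation*}
\sigma^2(p) = -2\sum_i \pi_i(p)\,(\mathbf{1}\{i=0\} - \pi_0(p))\, Q^{\#}_{i,0}(p) = -2\pi_0(p)\, Q^{\#}_{0,0}(p) + 2\pi_0(p)\, \piv(p)^\top Q^{\#}(p)\ev_0,
\end{equation*}
and the last term vanishes by the first identity, leaving $-2\pi_0(p) Q^{\#}_{0,0}(p)$ as claimed.

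The only non-trivial work in the whole argument is this algebraic simplification; verifying the hypotheses of the general CTMC CLT is immediate from Assumptions \ref{assu:queue_length} and \ref{assu:bound_smooth}, so I do not expect any serious obstacle. If one wanted to avoid directly citing Theorem 4.11, the same conclusion could be obtained by solving the Poisson equation $Q(p) g = \tilde f$ (whose solution is $g = Q^{\#}(p)\tilde f$), applying Dynkin's formula to write $\int_0^T \tilde f(X_s)\,ds = g(X_0) - g(X_T) - M_T$ for a martingale $M_T$, and then invoking the martingale CLT on $M_T$ — the resulting variance computes to exactly the same expression.
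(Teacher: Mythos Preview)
Your proposal is correct and follows exactly the route the paper intends: the paper states this result as an immediate corollary of Theorem~4.11 in \citet{asmussen2003applied} and does not supply a separate proof, so your write-up simply makes explicit the hypothesis check and the algebraic simplification of the variance that the paper leaves implicit. The use of $Q^{\#}(p)\onev = \zerov$ and $\piv(p)^\top Q^{\#}(p) = \zerov^\top$ to reduce the general variance formula to $-2\pi_0(p)Q^{\#}_{0,0}(p)$ is exactly the intended step.
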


Proposition \ref{prop:second_esti_zero} and Theorem \ref{theo:CLT_literature} gives a central limit theorem of the same sequence of random variables, and thus the means and asymptotic variances match, i.e., 
\begin{equation}
\label{eqn:mu_sigma_formula2}
\check{\mu}(p) = \pi_0(p), \textnormal{ and } \check{\sigma}^2(p) = -4\pi_0(p)Q^{\#}_{0,0}(p). 
\end{equation}
It then remains to express $Q^{\#}_{0,0}(p)$ explicitly. The following proposition gives an explicit form of $Q^{\#}_{k,i}(p)$. 
\begin{prop}
\label{prop:Q_inverse}
The matrix $Q^\#(p)$ defined in \eqref{eqn:Q_group_inverse_def} satisfies that 
\begin{equation}
Q^\#_{k,i}(p) = \frac{\pi_i(p)}{\mu} \p{-\sum_{j = 1}^{\min(i,k)} \frac{1}{\pi_j(p)} + \sum_{j = 1}^k \frac{S_j(p)}{\pi_j(p)} + \sum_{j = 1}^i \frac{S_j(p)}{\pi_j(p)} - \sum_{j = 1}^{K} \frac{S_j^2(p)}{\pi_j(p)}},
\end{equation}
where $S_j = \sum_{l = j}^K \pi_j(p)$. 
\end{prop}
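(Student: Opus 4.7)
For an irreducible finite-state continuous-time Markov chain on $\{0,1,\dots,K\}$ with stationary distribution $\vec{\pi}(p)$, the group inverse is characterized by the two properties $Q Q^{\#} = I - \mathbf{1}\vec{\pi}^{\top}$ and $\vec{\pi}^{\top} Q^{\#} = \vec{0}$ (equivalently $Q^{\#}\mathbf{1} = \vec{0}$). I would verify the stated formula by checking these directly. Fix a target column index $i$, write $f_{k} := Q^{\#}_{k,i}(p)$, and interpret $Q Q^{\#} = I - \mathbf{1}\vec{\pi}^{\top}$ column-wise as Poisson's equation
\begin{equation}
\mu\, f_{k-1} - (\lambda_{k}(p) + \mu) f_{k} + \lambda_{k}(p)\, f_{k+1} \;=\; \mathbbm{1}\{k = i\} - \pi_{i}(p), \qquad 1 \le k \le K-1,
\end{equation}
with the boundary conditions $\lambda_{0}(p)(f_{1}-f_{0}) = \mathbbm{1}\{i=0\} - \pi_{i}(p)$ at $k=0$ and $\mu(f_{K-1}-f_{K}) = \mathbbm{1}\{i=K\} - \pi_{i}(p)$ at $k=K$ (one of which is redundant because rows of $Q$ sum to zero), plus the normalization $\sum_{k} \pi_{k}(p) f_{k} = 0$.

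\textbf{Step 1: reduce to a first-order recursion using detailed balance.} Set $\Delta_{k} := f_{k+1} - f_{k}$. Poisson's equation telescopes to $\lambda_{k}(p)\Delta_{k} - \mu\,\Delta_{k-1} = \mathbbm{1}\{k=i\} - \pi_{i}(p)$, and using the birth-death balance $\lambda_{k}(p)\pi_{k}(p) = \mu\,\pi_{k+1}(p)$ this becomes a linear recursion in $\pi_{k+1}(p)\Delta_{k}/\mu$. Summing from $0$ to $k$ yields a closed-form expression
\begin{equation}
\frac{\pi_{k+1}(p)}{\mu}\Delta_{k} \;=\; \mathbbm{1}\{k \ge i\}\pi_{i}(p) - \pi_{i}(p)\,S_{k+1}(p),
\end{equation}
after also using $\sum_{j=0}^{k}\pi_{j}(p) = 1 - S_{k+1}(p)$. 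The two regimes $k < i$ and $k \ge i$ explain the $\min(i,k)$ cap appearing in the formula.

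\textbf{Step 2: integrate and normalize.} Summing $\Delta_{j}$ from $j=0$ to $k-1$ gives $f_{k} - f_{0}$ as a sum of two pieces: a piece $\sum_{j=1}^{\min(i,k)} 1/\pi_{j}(p)$ coming from the indicator, and a piece $\sum_{j=1}^{k} S_{j}(p)/\pi_{j}(p)$ coming from the $-S_{k+1}(p)$ term (after reindexing). The additive constant $f_{0}$ is then pinned down by the normalization $\sum_{k}\pi_{k}(p) f_{k} = 0$. Using Abel summation
\begin{equation}
\sum_{k=1}^{K}\pi_{k}(p)\sum_{j=1}^{k} a_{j} \;=\; \sum_{j=1}^{K} a_{j}\,S_{j}(p),
\end{equation}
the normalization identifies $f_{0}$ in terms of $\sum_{j=1}^{i} S_{j}(p)/\pi_{j}(p)$ and $\sum_{j=1}^{K} S_{j}^{2}(p)/\pi_{j}(p)$, which when combined with the previous step produces exactly the four-term expression in the claim, including the overall prefactor $\pi_{i}(p)/\mu$.

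\textbf{Main obstacle.} The conceptual content is light---the recursion solves in one step because of detailed balance---but the bookkeeping is heavy. The main risk is in correctly handling the two indicator regimes simultaneously with the Abel-summation identity, while ensuring that the symmetry $\pi_{k}(p) Q^{\#}_{k,i}(p) = \pi_{i}(p) Q^{\#}_{i,k}(p)$ (forced by reversibility) actually emerges from the computation; indeed, the visible symmetry $k \leftrightarrow i$ of the proposed formula is a strong consistency check to exploit. A lightweight alternative, if the forward calculation above becomes unwieldy, is to take the closed-form candidate as given and verify the two defining identities $Q Q^{\#} = I - \mathbf{1}\vec{\pi}^{\top}$ and $\vec{\pi}^{\top} Q^{\#} = \vec{0}$ by direct substitution, again exploiting the tridiagonal structure of $Q$ and detailed balance to reduce each row to a telescoping sum.
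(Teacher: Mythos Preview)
Your proposal is correct, and the ``lightweight alternative'' you mention at the end is in fact exactly the route the paper takes: it defines the candidate matrix $A$ by the stated formula, first checks $\vec{\pi}^{\top}A=\vec{0}$ column by column via the Abel-summation rearrangement you describe, and then verifies $QA=I-\mathbf{1}\vec{\pi}^{\top}$ row by row, using that $Q$ is tridiagonal so only the increments $A_{k+1,i}-A_{k,i}$ appear, and these collapse via detailed balance $\lambda_{k}\pi_{k}=\mu\pi_{k+1}$ to $\mathbbm{1}\{k=i\}-\pi_{i}$.

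Your primary route---solving Poisson's equation constructively through the first-order recursion in $\mu\pi_{k+1}\Delta_{k}$, then integrating and normalizing---is a genuinely different and arguably more illuminating argument, since it \emph{derives} the formula rather than merely checking it. One bookkeeping caution: your displayed identity for $\Delta_{k}$ has the $\mu$ on the wrong side and the sign of $S_{k+1}$ flipped; the correctly telescoped relation is
\[
\mu\,\pi_{k+1}(p)\,\Delta_{k} \;=\; \pi_{i}(p)\bigl(\mathbbm{1}\{k\ge i\}-1+S_{k+1}(p)\bigr),
\]
which then integrates and normalizes exactly as you outline to produce the four-term formula. The paper's verification approach is shorter and sidesteps the case analysis, while your constructive approach makes transparent why the $\min(i,k)$ cap and the $\sum_{j} S_{j}^{2}/\pi_{j}$ normalization constant arise.
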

In particular, this implies that $Q^\#_{0,0}(p) = -\frac{\pi_0(p)}{\mu}\sum_{k = 1}^{K} \frac{S_k^2(p)}{\pi_k(p)}$, and thus
\begin{equation}
\label{eqn:mu_sigma_formula3}
\check{\sigma}^2(p) = \frac{4\pi_0^2(p)}{\mu}\sum_{k = 1}^{K} \frac{S_k^2(p)}{\pi_k(p)}
= \frac{2\sigma_{\pi_0}^2(p)}{\mu^2},
\end{equation}
where $S_k(p) = \sum_{j = k}^{K} \pi_j(p)$ and $\sigma_{\pi_0}^2(p)$ is defined in \eqref{eqn:def_sigma_pi0}. 

Now, since the choice of the price $p$ is arbitrary in Propositions \ref{prop:second_esti_zero}-\ref{prop:Q_inverse} and Theorem \ref{theo:CLT_literature}, we can plug \eqref{eqn:mu_sigma_formula2} and \eqref{eqn:mu_sigma_formula3} back into Proposition \ref{prop:second_esti_triangular} and get
\begin{equation}
\label{eqn:time_converge}
\sqrt{T}
\begin{pmatrix}
\frac{T_{0, +}(T, \zeta_T)}{T_{ +}(T, \zeta_T)} - \pi_0(p+\zeta_T)\\
\frac{T_{0, -}(T, \zeta_T)}{T_{ -}(T, \zeta_T)} - \pi_0(p-\zeta_T)\\
\end{pmatrix}
\Rightarrow \mathcal{N}(\zerov,  \check{\Sigma}(p)),
\end{equation}
where
$
\check{\Sigma} = \begin{pmatrix}
	2\sigma_{\pi_0}^2(p)/\mu^2 & 0\\
	0 & 2\sigma_{\pi_0}^2(p)/\mu^2
\end{pmatrix}
$. Therefore, 
\begin{equation}
\sqrt{T}\p{\p{\frac{T_{0, +}(T, \zeta_T)}{T_{ +}(T, \zeta_T)} - \frac{T_{0, -}(T, \zeta_T)}{T_{ -}(T, \zeta_T)}} - \p{\pi_0\p{p+\zeta_T} - \pi_0\p{p-\zeta_T}}}
 \Rightarrow\mathcal{N} \p{0, \frac{4\sigma_{\pi_0}^2(p)}{\mu^2}},
\end{equation}
as $T \to \infty$, where $\sigma_{\pi_0}^2(p)$ is defined in \eqref{eqn:def_sigma_pi0}. Finally, since $|\lambda_k''(p)| \leq B_2$ holds for any $k$ for some constant $B_2$, we have that $\abs{\pi''_0(p)} \leq B_{\pi}$ for some constant $B_{\pi}$, and thus
\begin{equation}
-\mu(\pi_0\p{p+\zeta_T} - \pi_0\p{p-\zeta_T})
= -2\zeta_T\mu\pi_0'(p) + \oo(\zeta_T^2)
= 2\zeta_T V'(p)  + \oo(\zeta_T^2). 
\end{equation}
Since $\sqrt{T} \zeta_T^2 \to 0$, we have that
\begin{equation}
\sqrt{T\zeta_T^2}\p{-\frac{\mu}{2\zeta_T}\p{\frac{T_{0,+}(T, \zeta_T)}{T_{ +}(T, \zeta_T)} - \frac{T_{ 0, -}(T, \zeta_T)}{T_{ -}(T, \zeta_T)}} - V'(p)}
 \Rightarrow\mathcal{N} (0, \sigma_{\pi_0}^2(p)). 
\end{equation}

\subsection{Proof of Theorem \ref{theo:model_CLT}}

We prove the theorem with three steps. In the first step,  we establish a multivariate central limit theorem for the ratios $N_{k,+}/T_{k,+}$ and $N_{k,-}/T_{k,-}$ without specifying the asymptotic mean and variance. In the second step, we give explicit formulae for the mean and variance. Interestingly, we find that for $k \neq l$, the ratios $N_{k,+}/T_{k,+}$ and $N_{l,+}/T_{l,+}$ are asymptotically independent. Finally, in the third step, we combine the results and get a central limit theorem for $\hat{\Delta}_k(T, \zeta_T)$ and thus for $\tau_{\model}(T, \zeta_T)$. 

\begin{prop}
\label{prop:CLT_varying_para}
Define $\ratio_{+}$ to be a vector such that $\ratio_{+, k} = N_{k,+}(T, \zeta_T)/T_{k,+}(T, \zeta_T)$. Define $\ratio_{-}$ similarly.  
There exist $\tilde{\mu}_{\ratio}(p): \mathbb{R} \to \mathbb{R}^K $ and  $\tilde{\Sigma}_{\ratio}(p) : \mathbb{R} \to \mathbb{R}^{K \times K} $ such that under Assumptions \ref{assu:queue_length}-\ref{assu:two_intervals}, for data generated from either the interval switchback experiment or the regenerative switchback experiment,
\begin{equation}
\sqrt{T \zeta_T^2} \p{\ratio_{+} - \tilde{\mu}_{\ratio}(p+\zeta_T), \ratio_{-} - \tilde{\mu}_{\ratio}(p-\zeta_T)}
 \Rightarrow  \mathcal{N}\p{\zerov, \check{\Sigma}_{\ratio}(p)}, 
\end{equation}
where 
$
\check{\Sigma}_{\ratio}(p) = \begin{pmatrix}
	\tilde{\Sigma}_{\ratio}(p) & 0\\
	0 & \tilde{\Sigma}_{\ratio}(p)
\end{pmatrix}
$. 
\end{prop}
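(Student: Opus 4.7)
The plan is to reduce the joint CLT to three ingredients: a CLT at a single fixed price, the asymptotic independence of the $+$ and $-$ sub-samples arising from the switchback structure, and a triangular array argument that accommodates $\zeta_T \to 0$. Since each centered coordinate of $\ratio_\pm$ is $O_p(1/\sqrt{T})$ by a standard Markov chain CLT, the stated scaling $\sqrt{T\zeta_T^2}$ is coarser than the natural $\sqrt{T}$-rate, so it suffices to establish the $\sqrt{T}$-scaled joint CLT and then rescale; the limiting covariance $\tilde{\Sigma}_{\ratio}(p)$ then emerges from the natural asymptotic covariance after accounting for the $\zeta_T$ factor that matches the downstream use in $\hat{\tau}_{\model}$.

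First, I would fix a price $q$ and derive a joint CLT for the state-conditional rate estimators $(N_k(T,q)/T_k(T,q))_k$ centered at $(\lambda_k(q))_k$. Conditional on the queue-length sample path, the arrivals during sojourns in state $k$ form a Poisson process with intensity $\lambda_k(q)$ on a time set of total measure $T_k(T,q)$, so the conditional arrival counts are independent across $k$. Combined with the ergodic theorem for the birth-death chain---which, under Assumption \ref{assu:queue_length}, has a finite state space and is uniformly geometrically ergodic---this yields a joint CLT. The block-diagonal form follows because the only coupling across states comes through the random sojourn times, and after normalization the leading fluctuations are the conditionally-independent Poisson contributions in each state.

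Second, to handle the switchback and the triangular array nature of the problem, I would split the data into the $+$ and $-$ sub-samples and argue that they are asymptotically independent with no cross-covariance. For the interval switchback with $l_T/\sqrt{T} \to \infty$, each interval is long enough for the chain to approach its steady state under the active price, and summing over the $O(T/l_T)$ intervals---conditionally independent given the treatment assignment---and applying the Lindeberg-Feller CLT yields the block-diagonal joint normality. For the regenerative switchback, the regeneration-cycle structure makes successive cycles fully independent, with each cycle deterministically associated with either $+$ or $-$ via its independent coin flip, so the standard CLT for regenerative processes applies directly. Accommodating the shifts in centering $\tilde{\mu}_{\ratio}(p \pm \zeta_T)$ and the $\zeta_T$-dependent variances follows from continuity of the stationary distribution and of $\tilde{\Sigma}_{\ratio}(q)$ in $q$, with uniformity of the Lindeberg condition ensured by Assumption \ref{assu:bound_smooth}.

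The main obstacle is controlling the transient bias incurred at the boundaries of each switchback interval. When the price switches, the chain begins the new interval from the previous steady state rather than the current one, contributing an $O(1)$ bias per switch; summing over $O(T/l_T)$ switches gives a total bias of $O(T/l_T)$, which is $o(\sqrt{T})$ by Assumption \ref{assu:two_intervals}. Making this precise requires geometric ergodicity with rates uniform in $q \in \{p - \zeta_T, p + \zeta_T\}$, which follows from boundedness of the arrival and service rates near $p$, but the bookkeeping is delicate, particularly when coordinating the bias and variance bounds through the eventual division by $\zeta_T$ that turns $\ratio_\pm$ into $\hat{\tau}_{\model}$.
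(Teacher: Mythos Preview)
Your high-level strategy --- fixed-price CLT, asymptotic independence of the $+$ and $-$ sub-samples, and a triangular-array step to absorb $\zeta_T \to 0$ --- is correct and matches the paper's structure. You also correctly identify that the boundary contribution from switches is $O(T/l_T) = o(\sqrt{T})$.

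However, your treatment of the interval switchback has a genuine gap. You propose to use the switchback intervals themselves as the summands in a Lindeberg--Feller CLT, calling them ``conditionally independent given the treatment assignment.'' This fails on two counts. First, Assumption~\ref{assu:two_intervals} requires only $l_T/\sqrt{T} \to \infty$, so the number of intervals $T/l_T$ need not diverge; the paper explicitly discusses the extreme case of a single switch at $T/2$, and Lindeberg--Feller needs a growing number of summands. Second, even when $T/l_T \to \infty$, the intervals are \emph{not} conditionally independent: the queue length at the start of each interval is the terminal state of the previous one, so consecutive blocks are coupled through their endpoints.

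The paper avoids both problems by using regenerative cycles --- successive visits to a fixed state $k_r$ --- as the fundamental blocks in \emph{both} experimental designs, not only the regenerative switchback. There are $\Theta(T)$ such cycles regardless of $l_T$, and within a single price regime they are genuinely i.i.d. For the interval switchback, each cycle is then classified as positive (entirely under $p+\zeta_T$), negative (entirely under $p-\zeta_T$), or mixed (containing a switch); since there are at most $T/l_T = o(\sqrt{T})$ mixed cycles, their total contribution is negligible at the $\sqrt{T}$ scale, and the remaining positive and negative cycles form two independent i.i.d.\ collections. The per-cycle vectors $(A_{i,k,\pm}, B_{i,k,\pm})$ --- the number of $k\to k+1$ jumps and the time in state $k$ during the $i$th positive/negative cycle --- are fed into the triangular-array Anscombe theorem (Lemma~\ref{lemma:Lyap_anscombe}) and the delta method, exactly mirroring Proposition~\ref{prop:first_esti_triangular}. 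Your conditional-Poisson observation is correct, but it is relevant to identifying the diagonal form of $\tilde{\Sigma}_{\ratio}$ (the role of Proposition~\ref{prop:CLT_fixed_para}), not to the existence claim that this proposition asserts.
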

Similar to the proof of Theorems \ref{theo:first_esti} and \ref{theo:second_esti}, to show the above proposition, we use some triangular array arguments to deal with the changing price perturbation $\zeta_T$. Here, the asymptotic variance $\tilde{\Sigma}_{\ratio}(p)$ does not depend on $\zeta_T$, and we do no have an explicit formula for it yet. 

Then we shift our focus to the fixed price case. Here, we can give explicit formula for $\tilde{\mu}_{\ratio}(p)$ and  $\tilde{\Sigma}_{\ratio}(p)$. 
\begin{prop}
\label{prop:CLT_fixed_para_tilde}
Set the price at $p$. Let $N_k(T,p)$ be the number of jumps from $k$ to $k+1$. Let $T_k(T,p)$ be the amount of the time in state $k$. Under Assumptions \ref{assu:queue_length} and \ref{assu:bound_smooth}, 
\begin{equation}
\sqrt{T \zeta_T^2} (\tilde{\operatorname{err}}_0, \tilde{\operatorname{err}}_1, \dots, \tilde{\operatorname{err}}_{K-1}) \Rightarrow  \mathcal{N}\p{\vec{0}, \tilde{\Sigma}_{\ratio}(p)}, 
\end{equation}
where 
$\tilde{\operatorname{err}}_k = N_k(T,p)/T_k(T,p) - \tilde{\mu}_{\ratio, k}(p)$. Here,  $\tilde{\mu}_{\ratio, k}$ and $\tilde{\Sigma}_{\ratio}$ are defined in Proposition \ref{prop:CLT_varying_para}.  
\end{prop}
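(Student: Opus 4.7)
The plan is to prove the stated convergence via a martingale argument on the compensated birth counting processes, then transfer between time scales. The ergodic theorem identifies the centering $\tilde{\mu}_{\ratio,k}$, while the joint asymptotic covariance structure emerges from orthogonality of birth martingales associated with distinct states.

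First I would identify the mean via ergodicity. Under Assumption~\ref{assu:queue_length} the chain lives on the finite state space $\{0,1,\ldots,K\}$ and, at any fixed price $p$, is irreducible and positive recurrent, so the ergodic theorem for continuous-time Markov chains gives $T_k(T,p)/T \to \pi_k(p)$ and $N_k(T,p)/T \to \lambda_k(p)\pi_k(p)$ almost surely. Hence $N_k(T,p)/T_k(T,p) \to \lambda_k(p)$ a.s., which, combined with the convergence in probability implied by Proposition~\ref{prop:CLT_varying_para}, pins down $\tilde{\mu}_{\ratio,k}(p) = \lambda_k(p)$.

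Next I would set up the martingale argument. Letting $A_k(t)$ denote the cumulative number of $k\to k+1$ jumps in $[0,t]$, the compensated counting process
\begin{equation*}
M_k(t) := A_k(t) - \lambda_k(p)\int_0^t \mathbf{1}\{X_s = k\}\, ds
\end{equation*}
is a mean-zero pure-jump martingale with predictable variation $\langle M_k\rangle(t) = \lambda_k(p)\int_0^t \mathbf{1}\{X_s = k\}\, ds$, and satisfies the clean identity $N_k(T,p) - \lambda_k(p) T_k(T,p) = M_k(T)$. The key structural point is that for $k \neq l$ the counting processes $A_k$ and $A_l$ never jump simultaneously (a birth-death chain cannot leave two distinct states at the same instant), so $\langle M_k, M_l\rangle \equiv 0$ and the martingales are mutually orthogonal. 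Rebolledo's martingale CLT then applies: jumps have size $1$ and $\langle M_k\rangle(T)/T \to \lambda_k(p)\pi_k(p)$ a.s.\ by the ergodic theorem, giving $M_k(T)/\sqrt{T} \Rightarrow \mathcal{N}(0, \lambda_k(p)\pi_k(p))$ jointly across $k$ with a diagonal limit covariance. Dividing by $T_k(T,p)/T \to \pi_k(p)$ and applying Slutsky yields the natural $\sqrt T$-scale joint CLT
\begin{equation*}
\sqrt{T}\!\left(\frac{N_k(T,p)}{T_k(T,p)} - \lambda_k(p)\right)_{k=0}^{K-1} \Rightarrow \mathcal{N}\!\left(\vec 0,\, \operatorname{diag}(\lambda_k(p)/\pi_k(p))\right).
\end{equation*}

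Finally, I would pass to the scaling in the statement. Since Assumption~\ref{assu:zeta} gives $\zeta_T \to 0$,
\begin{equation*}
\sqrt{T\zeta_T^2}\,\tilde{\operatorname{err}}_k = \zeta_T \cdot \sqrt{T}\!\left(\frac{N_k(T,p)}{T_k(T,p)} - \lambda_k(p)\right) = \zeta_T \cdot O_p(1) \stackrel{p}{\to} 0
\end{equation*}
jointly across $k$, which establishes the stated convergence and matches the variance $\tilde{\Sigma}_{\ratio}(p)$ of Proposition~\ref{prop:CLT_varying_para} via the fixed-price analysis. The main obstacle is verifying the hypotheses of Rebolledo's CLT, but these simplify dramatically: Assumption~\ref{assu:queue_length} gives a finite state space and Assumption~\ref{assu:bound_smooth} gives bounded rates, so the Lindeberg condition is automatic (jump sizes equal $1$) and the predictable-variation convergence reduces to the standard ergodic theorem. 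The conceptual heart of the argument is the orthogonality $\langle M_k, M_l\rangle \equiv 0$, which is what makes the $\sqrt T$-scale asymptotic covariance diagonal and underlies the asymptotic independence across states highlighted in the proof outline.
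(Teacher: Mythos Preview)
Your martingale argument via Rebolledo's CLT is a clean route to the \emph{explicit} fixed-price central limit theorem --- the orthogonality $\langle M_k, M_l\rangle \equiv 0$ is the right structural observation and yields the joint $\sqrt{T}$-scale limit with diagonal covariance $\operatorname{diag}(\lambda_k(p)/\pi_k(p))$. But this is essentially Proposition~\ref{prop:CLT_fixed_para}, not Proposition~\ref{prop:CLT_fixed_para_tilde}. The role of Proposition~\ref{prop:CLT_fixed_para_tilde} in the paper's architecture is to act as a bridge: in Proposition~\ref{prop:CLT_varying_para}, $\tilde{\mu}_{\ratio}$ and $\tilde{\Sigma}_{\ratio}$ are defined implicitly as specific functionals of regenerative-cycle moments in the switchback experiment (the moments of the $A_{i,k}(\zeta)$, $B_{i,k}(\zeta)$ obtained by splitting at visits to $k_r$; cf.\ the proof of Proposition~\ref{prop:first_esti_triangular}). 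The paper proves Proposition~\ref{prop:CLT_fixed_para_tilde} by applying the \emph{same} regenerative decomposition in the fixed-price setting, so the asymptotic mean and variance are expressed via the \emph{same} cycle-moment functionals, now evaluated at $\zeta = 0$. Only once this bridge is in place can one match with Proposition~\ref{prop:CLT_fixed_para} to identify $\tilde{\Sigma}_{\ratio}(p)$ explicitly.

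Your argument skips the bridge. You compute the fixed-price variance explicitly and then assert it ``matches the variance $\tilde{\Sigma}_{\ratio}(p)$ of Proposition~\ref{prop:CLT_varying_para},'' but that identification is precisely what Propositions~\ref{prop:CLT_fixed_para_tilde} and~\ref{prop:CLT_fixed_para} together are meant to establish, so invoking it here is circular. The identification $\tilde{\mu}_{\ratio,k}(p) = \lambda_k(p)$ has the same problem: Proposition~\ref{prop:CLT_varying_para} centres at $\tilde{\mu}_{\ratio}(p \pm \zeta_T)$ in the switchback, and linking this to the fixed-price ergodic limit requires knowing how $\tilde{\mu}_{\ratio}$ is constructed from cycle moments. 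Your final step is also confused: multiplying a nondegenerate $\sqrt{T}$-scale limit by $\zeta_T \to 0$ yields a degenerate limit, which cannot coincide with $\mathcal{N}(\vec 0, \tilde{\Sigma}_{\ratio}(p))$ for nonzero $\tilde{\Sigma}_{\ratio}(p)$. (The $\sqrt{T\zeta_T^2}$ in the statements of Propositions~\ref{prop:CLT_varying_para}--\ref{prop:CLT_fixed_para} appears to be a slip for $\sqrt{T}$, consistent with the analogous Propositions~\ref{prop:first_esti_triangular}--\ref{prop:second_esti_zero}; but read either way, the required link to the cycle-based $\tilde{\Sigma}_{\ratio}$ is missing from your argument.)
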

\begin{prop}
\label{prop:CLT_fixed_para}
Set the price at $p$. Let $N_k(T)$ be the number of jumps from $k$ to $k+1$. Let $T_k(T)$ be the amount of the time in state $k$. Under Assumptions \ref{assu:queue_length} and \ref{assu:bound_smooth}, for any $k \in \cb{0, 1, \dots, K-1}$,
\begin{equation}
\sqrt{T \zeta_T^2} \p{\frac{N_k(T)}{T_k(T)} -  \lambda_{k}(p)} \Rightarrow  \mathcal{N}\p{0, 
\frac{\lambda_{k}(p)}{\pi_k(p)}}, 
\end{equation}
as $T \to \infty$. Furthermore, the joint distribution converges to independent Gaussians: 
\begin{equation}
\sqrt{T \zeta_T^2} (\operatorname{err}_0, \operatorname{err}_1, \dots, \operatorname{err}_{K-1}) \Rightarrow  \mathcal{N}\p{\vec{0}, \Sigma_{\ratio}(p)}, 
\end{equation}
where 
$\operatorname{err}_k = N_k(T,p)/T_k(T,p) -  \lambda_{k}(p)$, 
$\Sigma_{{\ratio},k,k}(p) = \lambda_{k}(p)/\pi_k(p)$ and $\Sigma_{\ratio, k, l} = 0$ for $k \neq l$. 
\end{prop}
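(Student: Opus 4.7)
The plan is to exploit the martingale structure of counting processes for continuous-time Markov chains. At fixed price $p$, the queue-length process is an ergodic birth--death CTMC on $\{0, 1, \ldots, K\}$ with stationary distribution $\pi(p)$, and each $N_k(T)$ has a tractable compensator $\lambda_k(p) T_k(T)$.

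First, I would introduce the compensated processes $M_k(T) = N_k(T) - \lambda_k(p) T_k(T)$ for $k = 0, 1, \ldots, K-1$. By standard theory of counting processes on Markov jump processes (e.g., Br\'emaud), each $M_k$ is a square-integrable martingale with predictable quadratic variation $\langle M_k \rangle_T = \lambda_k(p) T_k(T)$. Since the chain almost surely has no coincident jumps and distinct $N_k$ count transitions between disjoint pairs of states, the predictable cross-variations vanish: $\langle M_k, M_l \rangle_T \equiv 0$ for $k \neq l$. Thus $(M_0, \ldots, M_{K-1})$ is a vector of orthogonal martingales whose jumps are uniformly bounded by $1$.

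Next, I would invoke the multivariate martingale functional CLT. Ergodicity (guaranteed by the finite state space in Assumption \ref{assu:queue_length} and the lower bound $\lambda_k(p) \geq B_0$ in Assumption \ref{assu:bound_smooth}) yields $T_k(T)/T \to \pi_k(p)$ almost surely, so $\langle M_k \rangle_T/T \to \lambda_k(p)\pi_k(p)$ almost surely. Since the jumps are bounded by $1$, Lindeberg's condition at rate $\sqrt{T}$ is automatic, and the orthogonality of the $M_k$ forces the limiting covariance to be diagonal:
\begin{equation*}
T^{-1/2}\bigl(M_0(T), \ldots, M_{K-1}(T)\bigr) \;\Rightarrow\; \mathcal{N}\!\bigl(\vec{0},\; \operatorname{diag}(\lambda_k(p)\pi_k(p))\bigr).
\end{equation*}
The claim then follows by writing $N_k(T)/T_k(T) - \lambda_k(p) = M_k(T)/T_k(T)$ and applying Slutsky's lemma: the deterministic limit $T_k(T)/T \to \pi_k(p)$ converts the variance $\lambda_k(p)\pi_k(p)$ into $\lambda_k(p)/\pi_k(p)$ and preserves independence across $k$, since each rescaling factor has a deterministic limit. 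The scaling $\sqrt{T\zeta_T^2}$ that appears in the statement is obtained from the $\sqrt{T}$-rate CLT by invoking Assumption \ref{assu:zeta}.

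The main technical obstacle is justifying the hypotheses of the multivariate martingale CLT---specifically, the predictable (rather than merely pathwise) orthogonality of the $M_k$ and a uniform-integrability-type control on $\langle M_k \rangle_T / T$. The former follows because two independent exponential clocks ring simultaneously with probability zero, so distinct $N_k$ share no common jumps and the optional and predictable cross-variations agree identically at zero; the latter is immediate from the two-sided bounds $B_0 \leq \lambda_k(p) \leq B_1$ in Assumption \ref{assu:bound_smooth} combined with the ergodic theorem applied to the occupation times $T_k(T)$.
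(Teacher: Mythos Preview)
Your proof is correct and takes a genuinely different route from the paper's. The paper proceeds via uniformization: it embeds the CTMC in a discrete-time chain with Poisson jump clock, relates $N_k(T), T_k(T)$ to their discrete-time analogues $\tilde{N}_k(m), \tilde{T}_k(m)$, and then computes limiting first and second moments of $\tilde{N}_k(m) - \lambda_k \tilde{T}_k(m)$ explicitly via Lemma~\ref{lemm:covariance_N_J}, which expands these moments in powers of the transition matrix and the group inverse $Q^\#$. Asymptotic independence across $k$ emerges there from an algebraic cancellation---the identity $(\alpha\tilde{P}_{\{(k,k+1)\}} - \lambda_k \tilde{P}_{k\,\mathrm{out}})\onev = \zerov$ kills the leading-order term in the cross-covariance formula---and uniform integrability (Lemma~\ref{lemm:uniform_intergral_disc}) is checked separately to match moment limits with CLT limits. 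Your argument is considerably more direct: recognizing that $M_k(T) = N_k(T) - \lambda_k(p)T_k(T)$ are orthogonal counting-process martingales with $\langle M_k, M_l \rangle \equiv 0$ makes the asymptotic independence immediate rather than emergent, and a single application of the martingale CLT plus Slutsky replaces all the moment bookkeeping. What the paper's route buys is machinery that is reused for the triangular-array companion propositions and for the non-asymptotic variance bounds elsewhere in the appendix; your approach is cleaner for this specific fixed-price statement.

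One small correction: your last sentence about deriving the $\sqrt{T\zeta_T^2}$ scaling from the $\sqrt{T}$-rate CLT via Assumption~\ref{assu:zeta} does not work---multiplying a $\sqrt{T}$-tight sequence by $\zeta_T \to 0$ sends it to zero, not to the same Gaussian. In fact the $\sqrt{T\zeta_T^2}$ in the proposition statement appears to be a typo for $\sqrt{T}$: the fixed-price quantities do not involve $\zeta_T$, only Assumptions~\ref{assu:queue_length} and~\ref{assu:bound_smooth} are invoked, and the paper's own proof establishes the $\sqrt{T}$ rate before writing $\sqrt{T\zeta_T^2}$ in its final display. Your $\sqrt{T}$-rate result is the intended and stronger statement.
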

Propositions \ref{prop:CLT_fixed_para_tilde} and \ref{prop:CLT_fixed_para} together give an explicit formula for $\tilde{\mu}_{\ratio}$ and $\tilde{\Sigma}_{\ratio}$:
\begin{equation}
\label{eqn:mu_Sigma_ratio_formula}
\tilde{\mu}_{\ratio,k} = \lambda_k(p), \quad \tilde{\Sigma}_{\ratio,k,k} = \frac{\lambda_k(p)}{\pi_k(p)}, \quad \textnormal{and } \tilde{\Sigma}_{\ratio,k,l} =0  \text{ for } k \neq l. 
\end{equation}

Turning our attention back to the estimator of interest, we recall that
\begin{equation}
   \hat{\Delta}_{ k} = \frac{1}{\zeta} \frac{N_{ k, +}/T_{ k, +} - N_{ k, -}/T_{ k, -}}{N_{ k, +}/T_{ k, +} + N_{ k, -}/T_{ k, -}},
\end{equation}
Proposition \ref{prop:CLT_varying_para}, together with \eqref{eqn:mu_Sigma_ratio_formula}, implies that as $T \to \infty$,
\begin{equation}
\sqrt{T \zeta_T^2} \p{\hat{\Delta}_{ k}(T, \zeta_T) -  \frac{\lambda_{k}(p+\zeta_T) - \lambda_{k}(p-\zeta_T)}{\zeta_T(\lambda_{k}(p+\zeta_T) + \lambda_{k}(p-\zeta_T))}} \Rightarrow  \mathcal{N}\p{0, 
\frac{1}{\lambda_{k}(p) \pi_k(p)}}, 
\end{equation}
and that $\hat{\Delta}_{ k}(T, \zeta_T)$ is asymptotically independent with the rest $\hat{\Delta}_{ l}(T, \zeta_T)$'s. Then, since $\abs{\lambda_k''(p)} \leq B_2$ holds for any $k$, 
\begin{equation}
\frac{\lambda_{k}(p+\zeta_T) - \lambda_{k}(p-\zeta_T)}{\zeta_T(\lambda_{k}(p+\zeta_T) + \lambda_{k}(p-\zeta_T))}
= \frac{\lambda_{k}'(p)}{\lambda_k(p)} + \oo(\zeta_T).  
\end{equation}
Finally, since $\sqrt{T} \zeta_T^2 \to 0$, we have that as $T \to \infty$, 
\begin{equation}
\sqrt{T \zeta_T^2} \p{\hat{\Delta}_{ k}(T, \zeta_T) -  \frac{\lambda_{k}'(p)}{\lambda_k(p)}} \Rightarrow  \mathcal{N}\p{0, 
\frac{1}{\lambda_{k}(p) \pi_k(p)}}. 
\end{equation}
Furthermore, the joint distribution converges to independent Gaussians: 
\begin{equation}
\sqrt{T \zeta_T^2} (\operatorname{err}_{\Delta,0}, \operatorname{err}_{\Delta,1}, \dots, \operatorname{err}_{\Delta,K-1}) \Rightarrow  \mathcal{N}\p{\vec{0}, \Sigma_{\Delta}}, 
\end{equation}
where 
$\operatorname{err}_{\Delta,k} = \hat{\Delta}_{k}(T, \zeta_T) -  \lambda_{k}'(p)/\lambda_{k}(p)$, 
$\Sigma_{\Delta,k,k} = 1/(\pi_k(p)\lambda_{k}(p))$ and $\Sigma_{\Delta, k, l} = 0$ for $k \neq l$. 

Then, recall that
\begin{equation}
\hat{\tau}_{\model} = \mu \frac{T_{ 0}}{T} \sum_{k = 0}^{K-1}  \hat{\Delta}_{ k}  \sum_{i = k+1}^{K} \frac{T_{i}}{T}. 
\end{equation}
Note that the variance in $T_{i}/T$ is negligible in the overall variance calculation. In particular, we have that $T_{i}/T - \pi_i(p) = \oo_p\p{1/\sqrt{T}} = o_p\p{1/\sqrt{T \zeta_T^2}}$. 

Then, we note that the asymptotic variance of $\hat{\Delta}_{k}(T, \zeta_T)$ equals $1/(\pi_k(p)\lambda_k(p)) = 1/(\pi_{k+1}(p)\mu)$. Since $\hat{\Delta}_k(T, \zeta_T)$'s are asymptotically independent, we have the asymptotic variance of the estimator equals
\begin{equation}
    (\mu^2 \pi_0^2(p)) \sum_{k = 0}^{K-1} \frac{1}{\pi_{k+1}(p)\mu} \p{\sum_{i = k+1}^{K} \pi_i(p)}^2
    = \mu\pi_0^2(p) \sum_{k = 1}^K \frac{S_k^2(p)}{\pi_k(p)} = \sigma^2_{\model}(p),
\end{equation}
where $S_k(p) = \sum_{j = k}^{K} \pi_j(p)$ and $\sigma^2_{\model}(p)$ is defined in \eqref{eqn:def_sigma_model}. Furthermore, the estimator is asymptotically normal, i.e., 
\begin{equation}
\sqrt{T \zeta_T^2} \p{ \hat{\tau}_{ \model}(T, \zeta_T) -  \mu \pi_{0}(p) \sum_{k = 0}^{K-1}  \frac{\lambda'_{k}(p)}{\lambda_{k}(p)} \sum_{i = k+1}^{K} \pi_i(p)} \Rightarrow \mathcal{N}\p{0, \sigma_{\model}^2(p)}, 
\end{equation}
and hence by \eqref{eqn:lambda_p_dev},
\begin{equation}
\sqrt{T \zeta_T^2} \p{ \hat{\tau}_{ \model}(T, \zeta_T) -  V'(p)} \Rightarrow \mathcal{N}\p{0, \sigma_{\model}^2(p)}. 
\end{equation}

\subsection{Proof of Theorem \ref{theo:variance_comp}}
The equality \eqref{eqn:var_comp1} is trivial. For \eqref{eqn:var_comp2}, note that
\begin{equation}
\begin{split}
\frac{\sigma_{\pi_0}^2(p) - \sigma_{\model}^2(p)}{\mu}
& = (1-\pi_0(p)) + 2 \pi_0(p)\sum_{k = 1}^{K-1}\frac{S_k(p)S_{k+1}(p)}{\pi_k(p)} \\
& \qquad  \qquad - 2 \pi_0(p)(1-\pi_0(p))\sum_{k =1}^K \frac{S_k^2(p)}{\pi_k(p)} - \pi_0^2(p) \sum_{k = 1}^K \frac{S_k^2(p)}{\pi_k(p)}\\
& =  (1-\pi_0(p)) + \pi_0^2(p) \sum_{k = 1}^K \frac{S_k^2(p)}{\pi_k(p)}\\
&\qquad  \qquad  + 2 \pi_0(p)\sum_{k = 1}^{K-1}\frac{S_k(p)S_{k+1}(p)}{\pi_k(p)} - 2 \pi_0(p)\sum_{k =1}^K \frac{S_k^2(p)}{\pi_k(p)}\\
&= \sum_{k=1}^K \pi_k(p) +  \pi_0^2(p) \sum_{k = 1}^{K} \frac{S_k^2(p)}{\pi_k(p)}
- 2 \pi_0(p)\sum_{k =1}^{K} S_k(p)\\
& = \sum_{k=1}^K \p{\pi_k(p) +  \pi_0^2(p) \frac{S_k^2(p)}{\pi_k(p)}
- 2 \pi_0(p) S_k(p)}\\
& =  \sum_{k=1}^K \p{\sqrt{\pi_k(p)} -  \pi_0(p) \frac{S_k(p)}{\sqrt{\pi_k(p)}}}^2 \geq 0.
\end{split}
\end{equation}
The equality holds if and only if $\sqrt{\pi_k(p)} = \pi_0(p) \frac{S_k(p)}{\sqrt{\pi_k(p)}}$, i.e., $S_k(p) = \pi_k(p)/\pi_0(p)$. The condition $S_k(p) = \pi_k(p)/\pi_0(p)$ further implies that
$\pi_k(p) = (\pi_{k}(p) - \pi_{k+1}(p))/\pi_0(p)$. This then implies that $\pi_{k+1}(p) = (1- \pi_0(p))\pi_k(p)$, and hence $\lambda_k(p) = (1 - \pi_0(p))\mu$ for all $k \geq 0$.

\subsection{Proof of Theorem \ref{theo:variance_estimator}}
Since the asymptotic variances of the estimators are functions of $\pi_k(p)$, it suffices to show that $\hat{\pi}_k$ is consistent for $\pi_k(p)$. Note that by definition, $\hat{\pi}_k = T_k/T$. 

For the interval switchback and the regenerative switchback, \eqref{eqn:time_converge} implies that $T_{0,+}/T_+ \stackrel{p}{\to} \pi_0(p)$ and $T_{0,-}/T_- \stackrel{p}{\to} \pi_0(p)$. Since $T_0 = T_{0,+} + T_{0,-}$ and $T = T_+ + T_-$, we have that
\begin{equation}
   \min \p{\frac{T_{0,-}}{T_{-}}, \frac{T_{0,+}}{T_{+}}} \leq  \frac{T_0}{T} \leq \max \p{\frac{T_{0,-}}{T_{-}}, \frac{T_{0,+}}{T_{+}}}. 
\end{equation}
Therefore, $T_0/T \stackrel{p}{\to} \pi_0(p)$. The same techniques can be used to show the results for other $k$: $T_k/T \stackrel{p}{\to} \pi_k(p)$. We omit the details here for conciseness.

\subsection{Proof of Theorem \ref{theo:LE_CLT}}

Since the length of the queue is bounded above by $K$, we can rewrite the estimator as 
\begin{equation}
\hat{\tau}_{\LE}(T, \zeta_T) = \mu \frac{T_{ 0}}{T} \sum_{k = 0}^{K-1}\hat{\Delta}_{k} \sum_{i = k+1}^{K} \frac{T_{i}}{T}.
\end{equation}

We start with noting that the variance in $T_{i}/T$ is negligible in the overall variance calculation. To show this, we apply Theorem 4.11 in \citet{asmussen2003applied} and get that for any $i \in \cb{0, 1, \dots, K}$, $T_{i}/T - \pi_i(p, \zeta_T) = \oo_p\p{1/\sqrt{T}}$,
where $\pi_i(p, \zeta_T)$ is the steady-state probability of state $i$ when the state dependent arrival rate is set to be $(\lambda_i(p-\zeta_T) + \lambda_i(p+\zeta_T))/2$.
Therefore, we have that $T_{ i}/T - \pi_i(p, \zeta_T) = o_p\p{1/\sqrt{T \zeta_T^2}}$.
By Lemma \ref{lemma:pi_converge}, we have that $\abs{\pi_i(p, \zeta_T) - \pi_i(p)} \leq C \zeta_T^2 = o\p{1/\sqrt{T}}$ for some constant $C$. Therefore, we have that
\begin{equation}
\label{eqn:LP_time_converge}
    \frac{T_{i}}{T} - \pi_i(p) = o_p\p{1/\sqrt{T \zeta_T^2}}. 
\end{equation}

 Then it remains to quantify the joint distribution of $\hat{\Delta}_0, \dots, \hat{\Delta}_{K-1}$. To this end, we establish the following proposition:

\begin{prop}
\label{prop:CLT_ratio}
Under Assumptions \ref{assu:queue_length}-\ref{assu:bound_smooth}, for any $k \in \cb{0, 1, \dots, K-1}$,
\begin{equation}
\sqrt{T \zeta_T^2} \p{\hat{\Delta}_{k}(p, T, \zeta_T) -  \frac{\lambda_{k}'(p)}{\lambda_{k}(p)}} \Rightarrow  \mathcal{N}\p{0, 
\frac{1}{\pi_k(p)\lambda_{k}(p)}}, 
\end{equation}
as $T \to \infty$. Furthermore, the joint distribution converges to independent Gaussians: 
\begin{equation}
\sqrt{T \zeta_T^2} (\operatorname{err}_0, \operatorname{err}_1, \dots, \operatorname{err}_{K-1}) \Rightarrow  \mathcal{N}\p{\vec{0}, \Sigma_{\LE}}, 
\end{equation}
where 
$\operatorname{err}_k = \hat{\Delta}_{k}(p, T, \zeta_T) -  \lambda_{k}'(p)/\lambda_{k}(p)$, 
$\Sigma_{{\LE},k,k} = 1/(\pi_k(p)\lambda_{k}(p))$ and $\Sigma_{ k, l} = 0$ for $k \neq l$. 
\end{prop}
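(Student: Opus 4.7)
The plan is to exploit a crucial observation: under user-level randomization, once we condition on the sample path of the queue length process, the treatment labels carried by the joining customers are independent fair-ish coin flips. Concretely, the queue length process is itself a continuous-time Markov chain with state-dependent up-rates $\lambda_k^{\mathrm{avg}}(p,\zeta_T) := (\lambda_k(p+\zeta_T)+\lambda_k(p-\zeta_T))/2$, and every upward transition from $k$ to $k+1$ can be attributed independently to the $+$ treatment with probability $q_k := \lambda_k(p+\zeta_T)/(2\lambda_k^{\mathrm{avg}})$ and to the $-$ treatment otherwise. Thus, writing $N_k := N_{\LE,k,+}+N_{\LE,k,-}$ for the total number of upward transitions from state $k$, we have that conditional on $N_k$, $N_{\LE,k,+}$ is $\mathrm{Binomial}(N_k, q_k)$, and conditional on the entire path the random variables $\{N_{\LE,k,+}\}_{k=0}^{K-1}$ are mutually independent.

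My first step is to rewrite the estimator in a form amenable to a conditional CLT:
\begin{equation}
\hat{\Delta}_{\LE,k} = \frac{1}{\zeta_T}\cdot\frac{2N_{\LE,k,+} - N_k}{N_k}.
\end{equation}
The conditional mean given $N_k$ equals $(2q_k-1)/\zeta_T$, and a Taylor expansion using $|\lambda_k''| \le B_2$ gives $(2q_k-1)/\zeta_T = \lambda_k'(p)/\lambda_k(p) + O(\zeta_T)$; the $O(\zeta_T)$ bias is $o(1/\sqrt{T\zeta_T^2})$ by Assumption \ref{assu:zeta}. The conditional variance given $N_k$ equals $4q_k(1-q_k)/(\zeta_T^2 N_k)$.

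Next I would establish the law of large numbers $N_k/T \stackrel{p}{\to} \mu \pi_{k+1}(p) = \pi_k(p)\lambda_k(p)$. This follows from standard ergodic theory for positive recurrent CTMCs combined with Lemma \ref{lemma:pi_converge} which controls the perturbation $\pi_k(p,\zeta_T) - \pi_k(p) = O(\zeta_T^2)$; the same kind of argument is already used to get \eqref{eqn:LP_time_converge}. In particular $N_k \to \infty$ in probability with $N_k^{-1} \cdot T \to 1/(\pi_k(p)\lambda_k(p))$. Applying the binomial CLT conditionally, the quantity $(2N_{\LE,k,+} - N_k - (2q_k-1)N_k)/\sqrt{N_k}$ converges in distribution to $\mathcal{N}(0,1)$ (since $4q_k(1-q_k) \to 1$), and a standard conditioning argument (e.g., via conditional characteristic functions) upgrades this to unconditional convergence. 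Dividing by $\zeta_T N_k$ and applying Slutsky with the LLN for $N_k/T$ produces the marginal CLT
\begin{equation}
\sqrt{T\zeta_T^2}\bigl(\hat{\Delta}_{\LE,k} - \lambda_k'(p)/\lambda_k(p)\bigr) \Rightarrow \mathcal{N}\bigl(0, 1/(\pi_k(p)\lambda_k(p))\bigr).
\end{equation}

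For the joint convergence and asymptotic independence across $k$, the key point is that conditional on the path of the queue, the Bernoulli coins producing $N_{\LE,k,+}$ at different states $k$ are drawn independently, so conditional on $(N_0,\ldots,N_{K-1})$ the vector of recentered counts $(2N_{\LE,k,+} - N_k)_k$ consists of independent mean-zero random variables with the variances computed above. Writing the full asymptotic error as a sum of (i) this conditional, asymptotically Gaussian contribution with diagonal covariance $\mathrm{diag}(1/(\pi_k(p)\lambda_k(p)))$, (ii) the $O(\zeta_T)$ Taylor bias, and (iii) a term coming from the $O(1/\sqrt{T})$ fluctuations of $N_k/T$ entering the denominator, the last two are each $o(1/\sqrt{T\zeta_T^2})$ by Assumption \ref{assu:zeta}, so Slutsky finishes the joint CLT. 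The main obstacle is the triangular-array flavor introduced by $\zeta_T \to 0$ (which makes $q_k \to 1/2$ and the bias and denominator perturbation compete for leading order); handling this cleanly requires the careful bookkeeping above together with the conditional-to-unconditional CLT step, but no new ideas beyond what is used in the proofs of Theorems \ref{theo:first_esti}-\ref{theo:model_CLT}.
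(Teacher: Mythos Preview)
Your proposal is correct and follows essentially the same approach as the paper: exploit the conditional binomial structure of $N_{\LE,k,+}$ given $N_k$, apply a (triangular-array) binomial CLT with random sample size, use the LLN $N_k/T \to \pi_k(p)\lambda_k(p)$ (Lemma \ref{lemma:N_k_T_converge}), Taylor-expand the centering to absorb the $O(\zeta_T)$ bias via $\sqrt{T}\zeta_T^2 \to 0$, and obtain asymptotic independence from the conditional independence of the $N_{\LE,k,+}$ across $k$ given $(N_0,\dots,N_{K-1})$. The only cosmetic difference is that the paper packages the random-sample-size step as an appeal to Anscombe's theorem together with Lemma \ref{lemm:binom_clt}, whereas you phrase it as a conditional-characteristic-function argument; these are interchangeable.
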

The proposition states that the asymptotic variance of $\hat{\Delta}_k(p, T, \zeta_T)$ equals $1/(\pi_k(p)\lambda_k(p)) = 1/(\pi_{k+1}(p)\mu)$. Then since $\hat{\Delta}_k(p, T, \zeta_T)$ are asymptotically independent, we have the asymptotic variance of the estimator equals
\begin{equation}
    (\mu^2 \pi_0^2(p)) \sum_{k = 0}^{K-1} \frac{1}{\pi_{k+1}(p)\mu} \p{\sum_{i = k+1}^{K} \pi_i(p)}^2
    = \mu\pi_0^2(p) \sum_{k = 1}^K \frac{S_k^2(p)}{\pi_k(p)} = \sigma^2_{\LE}(p),
\end{equation}
where $S_k(p) = \sum_{j = k}^{K} \pi_j(p)$ and $\sigma^2_{\LE}(p)$ is defined in \eqref{eqn:def_sigma_LE}. Furthermore, by Proposition \ref{prop:CLT_ratio}, we can show that the estimator is asymptotically normal, i.e., 
\begin{equation}
\sqrt{T \zeta_T^2} \p{ \hat{\tau}_{\LE}(T, \zeta_T) -  \mu \pi_{0}(p) \sum_{k = 0}^{K-1}  \frac{\lambda'_{k}(p)}{\lambda_{k}(p)} \sum_{i = k+1}^{K} \pi_i(p)} \Rightarrow \mathcal{N}\p{0, \sigma_{\LE}^2(p)}, 
\end{equation}
and hence by \eqref{eqn:lambda_p_dev},
\begin{equation}
\sqrt{T \zeta_T^2} \p{ \hat{\tau}_{\LE}(T, \zeta_T) -  V'(p)} \Rightarrow \mathcal{N}\p{0, \sigma_{\LE}^2(p)}. 
\end{equation}

\subsection{Details and Proof of Theorem \ref{theo:non_stationarity}}
\label{subsection:non_station_proof}
Let $C$ be a constant such that $C > \max_b \abs{V^{(b)}{}'(p)}/\mu$. We let $\tilde{\tau}_{\LE}(l)$ be a truncated version of $\hat{\tau}_{\LE}(l)$: 
\begin{equation}
\label{eqn:truncation}
\tilde{\tau}_{\LE}(l) = \mu \frac{l}{T}\sum_w \max\p{\min\p{\frac{T_{0,w}}{l} \sum_{k = 0}^{K-1} \hat{\Delta}_{k,w} \sum_{i = k+1}^{K} \frac{T_{i,w}}{l} ,C},-C}. 
\end{equation}

\paragraph{Proof of Theorem \ref{theo:non_stationarity}}
For $u \in (0,1)$, define function
\begin{equation}
    f_T(u) = \max\p{\min\p{\frac{T_{0,w}}{s_T} \sum_{k = 0}^{K-1} \hat{\Delta}_{k,w} \sum_{i = k+1}^{K}\frac{T_{i,w}}{s_T} ,C},-C},
\end{equation}
for $(w-1)s_T/T\leq u < w s_T/T$. Then we can see that $\tilde{\tau}_{\LE}(l) = \mu\int_{u = 0}^1 f_T(u) du$. 

For any $u \in (0,1)$ and $t^{(b)} < u < t^{(b+1)}$, let $w_T(u)$ denote the corresponding window number of $u$, i.e., $(w_T(u)-1)s_T/T \leq u < w_T(u)s_T/T$. For sufficiently large $T$, the state-dependent arrival rate within the entire window $((w_T(u)-1)s_T, w_T(u)s_T)$ is $\lambda_k^{(b)}(p)$. Furthermore, note that the window length $s_T$ increases, satisfying $\sqrt{s_T}\zeta_T \to \infty$ and $\sqrt{s_T}\zeta_T^2 \to 0$. Thus, we can apply Proposition \ref{prop:CLT_ratio} to the corresponding windows and obtain
\begin{equation}
\hat{\Delta}_{k,w_T(u)} \stackrel{p}{\to} \frac{\lambda_k^{(b)}{}'(p)}{\lambda_k^{(b)}(p)}.
\end{equation}
Additionally, by invoking arguments from the proof of Theorem \ref{theo:LE_CLT}, we derive that
\begin{equation}
\frac{T_{i,w_T(u)}}{s_T} \stackrel{p}{\to} \pi_i^{(b)}(p).
\end{equation}
Consequently, it becomes evident that
\begin{equation}
f_T(u) \stackrel{p}{\to} \pi_0^{(b)}(p) \sum_{k=0}^{K-1}\frac{\lambda_k^{(b)}{}'(p)}{\lambda_k^{(b)}(p)} \sum_{i=k+1}^K \pi_i^{(b)}(p) = \frac{V^{(b)}{}'(p)}{\mu}.
\end{equation}

We will then proceed to show that $\int_0^1 f_T(u) \stackrel{p}{\to} \int_0^1 V^{(b(u))}{}'(p) du / \mu = V'(p)/\mu$. Let $\mu_B$ denote the Borel measure. For any $\epsilon > 0$,
\begin{equation}
\begin{split}
    &\PP{ \abs{ \int_0^1 f_T(u) -  \int_0^1 V^{(b(u))}{}'(p) du} > \epsilon}\\
    &\qquad \qquad \leq \PP{\mu_b\p{\cb{u: \abs{f_T(u) -  V^{(b(u))}{}'(p)} > \epsilon/2}} > \epsilon/4C}\\
    &\qquad \qquad \leq \frac{4C}{\epsilon}\EE{\mu_b\p{\cb{u: \abs{f_T(u) -  V^{(b(u))}{}'(p)} > \epsilon/2}}}\\
    & \qquad \qquad = \frac{4C}{\epsilon} \int_0^1 \PP{\abs{f_T(u) -  V^{(b(u))}{}'(p)} > \epsilon/2} du 
    \to 0. 
    \end{split}
\end{equation}
Here, the second inequality follows from Markov's inequality, and the last line follows from the Dominated Convergence Theorem. 
Therefore, 
\begin{equation}
    \int_0^1 f_T(u) \stackrel{p}{\to} \int_0^1 V^{(b(u))}{}'(p) du / \mu = V'(p)/\mu,
\end{equation}
and thus $\tilde{\tau}_{\LE}(s_T) \stackrel{p}{\to} V'(p)$ as $T \to \infty$.

\subsection{Non-stationarity with a diverging number of change points}
\label{subsection:non-stationarity_diverging}
\begin{assu}[Periodic non-stationarity]
	\label{assu:change_point2}
	Let the time horizon be divided into $m_T$ blocks of equal length $B_T = T / m_T$. 
	Within each block, there are finitely many change points at which the arrival rate may shift, 
	and the pattern of these changes repeats identically across all blocks. 
	Specifically, there exist time points 
	$0 = t^{(0)} < t^{(1)} < \dots < t^{(B-1)} < t^{(B)} = 1$ 
	and corresponding arrival rate functions 
	$\lambda_k^{(1)}(p), \dots, \lambda_k^{(B)}(p)$ 
	such that for any $l$ and any 
	$t \in (t^{(b-1)} B_T + l B_T,\, t^{(b)} B_T + l B_T]$, 
	the state-dependent arrival rate at time $t$ satisfies
	\begin{equation}
		\lambda_{k,t}(p) = \lambda_k^{(b)}(p).
	\end{equation}
\end{assu}

Similar to Assumption \ref{assu:change_point}, under Assumption \ref{assu:change_point2}, the average processing rate $V(p)$ becomes
\begin{equation}
	V(p) = \sum_{b = 1}^B(t^{(b)} - t^{(b-1)}) V^{(b)}(p)
	= \sum_{b = 1}^B(t^{(b)} - t^{(b-1)})\p{\sum_{k=0}\pi_k^{(b)}(p)\lambda_k^{(b)}(p)},
\end{equation}
and thus the quantity of interest is the gradient:
\begin{equation}
	V'(p) = \sum_{b = 1}^B(t^{(b)} - t^{(b-1)}) V^{(b)}{}'(p). 
\end{equation}

\begin{theo}
	\label{theo:non_stationarity2}
	Under Assumptions \ref{assu:queue_length}, \ref{assu:zeta}, \ref{assu:bound_smooth} and \ref{assu:change_point2}, assume further that $s_T/B_T \to 0$ and $s_T \zeta_T^2 \to \infty$. Let $\tilde{\tau}_{\LE}(s_T)$ be a truncated version of $\hat{\tau}_{\LE}(s_T)$, as defined in \eqref{eqn:truncation} in Section \ref{subsection:non_station_proof}. Using the weighted direct effect estimator with a user-level experiment yields
	\begin{equation}
		\tilde{\tau}_{\LE}(s_T) \stackrel{p}{\to} V'(p), 
	\end{equation}
	as $T \to \infty$. 
\end{theo}

\begin{proof}
The proof follows the same structure as that of Theorem \ref{theo:non_stationarity}. 
The only modification is a slightly different definition of the function $f_T$ used in the dominated convergence argument, where we permute the time intervals to align periods with identical arrival rates.

For any $u \in (0,1)$, we can write
\begin{equation}
	u = t^{(b)} + \p{l \frac{B_T}{T} + \alpha \frac{B_T}{T}}\p{t^{(b+1)} - t^{(b)}}
\end{equation}
for some $b \in \mathbb{Z}_{\geq 0}$, $l \in \mathbb{Z}_{\geq 0}$ and $\alpha \in [0,1)$. Define
\begin{equation}
	\tilde{u}(u) = \frac{B_T}{T}\p{l + t^{(b)} + \p{t^{(b+1)} - t^{(b)}}\alpha}. 
\end{equation}
This rearranges the interval $(0,1)$ so that subintervals corresponding to the same arrival rate are placed consecutively, allowing $f_T$ to admit a clean pointwise limit.

Let $w_T(u)$ denote the window index of $u$ after the rearrangement, i.e.,, $(w_T(u)-1)s_T/T \leq \tilde{u}(u) < w_T(u)s_T/T$. 
Define
\begin{equation}
	f_T(u) = \max\p{\min\p{\frac{T_{0,w_T(u)}}{s_T} \sum_{k = 0}^{K-1} \hat{\Delta}_{k,w_T(u)} \sum_{i = k+1}^{K}\frac{T_{i,w_T(u)}}{s_T} ,C},-C}.
\end{equation}

For all sufficiently large $T$, the state-dependent arrival rate throughout the window 
$((w_T(u)-1)s_T,\, w_T(u)s_T)$ is $\lambda_k^{(b)}(p)$. By the same argument as in the proof of Theorem \ref{theo:non_stationarity},
\begin{equation}
	f_T(u) \stackrel{p}{\to} \frac{V^{(b)}{}'(p)}{\mu}.
\end{equation}

Applying the dominated convergence theorem as before,
\begin{equation}
\tilde{\tau}_{\LE}(s_T)  =	\mu\int_0^1 f_T(u) \stackrel{p}{\to} \int_0^1 V^{(b(u))}{}'(p) du = V'(p).
\end{equation}

\end{proof}

\section*{Acknowledgement}
This was supported by NIH/NIDA grant P50 DA054039, NIH/NIDCR grant UH3 DE028723, and ONR grant N000142412091.
S.L. would like to thank Shuangping Li for discussion regarding the proof of Lemma \ref{lemma:uniform_intergral}.

\bibliographystyle{plainnat}
\bibliography{references}

\newpage

\appendix
\section{Additional Proofs}
\label{section:addition_proofs}

\subsection{Notation and preliminaries}
\label{subsection:notation}

For summations, if $a < b$, then we take the convention that $\sum_{i = a}^b x_i = 0$. We use zero-based matrix indexing. Let $\ev_k$ be the vector with its $k$-th entry (zero-based indexing) 1 and other entries 0. Let $\onev$ be the vector with all entries equal to 1. For a matrix $A$, let $\Norm{A}_{op}$ be the operator norm of matrix $A$.

Let $\nuv = (\nu_0, \dots, \nu_K)$ be the initial distribution of the queue length process. 
Let 
\begin{equation}
\label{eqn:Q_matrix}
Q(p) = \begin{bmatrix}
	-\lambda_0(p) & \lambda_0(p) & & & \\
	\mu & - \lambda_1(p) - \mu & \lambda_1(p) & &\\
	& \mu & - \lambda_2(p) - \mu & \lambda_2(p)  &\\
	& & &\dots &\\
	& & & 	\mu & -\mu
\end{bmatrix}
\end{equation}
be the transition rate matrix (a.k.a.~generator) of the continuous-time Markov chain corresponding to the queue length process. Throughout the section, we assume that $\lambda_k(p) > 0$ for any $p>0$ and $k = 0,1,\dots, K-1$. Let $\pi(p)$ be the steady-state probabilities. Let $D_{\sqrt{\pi}}(p)$ be the diagonal matrix with entries $D_{\sqrt{\pi},i,i}(p) = \sqrt{\pi_i(p)}$. Let $\tilde{Q}(p) = D_{\sqrt{\pi}}(p) Q(p) D_{\sqrt{\pi}}^{-1}(p)$. By properties of the steady-state probabilities, we have that $\tilde{Q}(p)$ is a symmetric matrix. Let $\tilde{Q}(p) = \tilde{U}(p) D(p) \tilde{U}(p)^\top$ be the eigen-decomposition of $\tilde{Q}(p)$, where $\tilde{U}(p)$ is an orthogonal matrix and $D(p)$ is a diagonal matrix. With this, we can express $Q(p)$ as 
\begin{equation}
	\label{eqn:eigen_decom_Q}
	Q(p) = D_{\sqrt{\pi}}^{-1}(p) Q(p) D_{\sqrt{\pi}}(p) =  D_{\sqrt{\pi}}^{-1}(p)  \tilde{U}(p) D(p) \tilde{U}(p)^\top D_{\sqrt{\pi}}(p)
	= U(p) D(p) U^{-1}(p),
\end{equation}
where $U(p) = D_{\sqrt{\pi}}^{-1}(p)  \tilde{U}(p)$. 

The continuous-time Markov chain is clearly irreducible, and thus by Perron–Frobenius theorem, $Q(p)$ has a zero eigenvalue, and the rest of the eigenvalues are negative. Without loss of generality, we assume $D_{0,0}(p) = 0$. Let $Q^\#(p)$ be the group inverse of $Q(p)$, i.e., 
\begin{equation}
	\label{eqn:Q_group_inverse_def}
	Q^\#(p) = U(p) D^\#(p) U^{-1}(p),
\end{equation}
where $D^\#_{0,0}(p) = 0$ and $D^\#_{k,k}(p) = D_{k,k}^{-1}(p)$ for $k > 0$.\footnote{Note another way of defining $Q^\#$ is by setting $Q^\#(p) = -(\onev\piv^\top - Q)^{-1} (I - \onev \piv)$. The two ways are equivalent.} 

Corresponding to the zero eigenvalue, the right eigenvector of $Q(p)$  is $\onev$, and the left eigenvector is $\piv(p)$. Thus we can decompose $U(p)$ into two parts: $U(p) = (\onev, U_2(p))$, where $U_2(p)$ is a $(K+1) \times K$ matrix. We can do the same for $U^{-1}(p)$:  $U^{-1}(p) = \p{\piv(p), U_2^{-1}(p)}^\top$, where  $U_2^{-1}(p)$ is a $(K+1) \times K$ matrix. 

For a matrix $M$, we write $\me(M) = \max_{i,j} \abs{M_{ij}}$ as the maximum of the absolute values of the entries of $M$. For any $\alpha \geq \me(Q(p))$, let $P(p, \alpha) = I + Q(p)/\alpha$. By construction, $P(p, \alpha)$ is a stochastic matrix with nonnegative entries and $P(p, \alpha) \onev =  \onev$. By \eqref{eqn:eigen_decom_Q}, we can also write 
\begin{equation}
	\label{eqn:eigen_decom_P}
	P(p, \alpha) = U(p) \p{I + \frac{1}{\alpha}D(p) } U^{-1}(p). 
\end{equation}
Again, by Perron–Frobenius theorem, we have that $D_{k,k}(p)/\alpha + 1 \geq 0$, and hence  $\abs{D_{k,k}(p)} \leq \alpha$. 

For a fixed price $p_0$ and price perturbation $\zeta_0$, let
\begin{equation}
\label{eqn:alpha_0}
	\alpha_0(p_0, \zeta_0) =  \sup_{p \in [p_0 - \zeta_0, p_0 + \zeta_0]} \me(Q(p)) =  \sup_{p \in [p_0 - \zeta_0, p_0 + \zeta_0]} \max_{i,j} \abs{Q_{ij}(p)}. 
\end{equation}
Furthermore, let
\begin{equation}
\label{eqn:beta_0}
\beta_0(p_0, \zeta_0) = \inf_{p \in [p_0 - \zeta_0, p_0 + \zeta_0]} \min_{k \geq 1} \abs{D_{k,k}(p)}. 
\end{equation}
 Note that $\beta_0(p_0, \zeta_0) > 0$ since $\abs{D_{k,k}(p)} > 0$ for any $k \geq 1$ and $p > 0$, and eigenvalues are continuous functions of matrices. Further, by the analysis following \eqref{eqn:eigen_decom_P}, we note that $\alpha_0 \geq \beta_0$. 
% Finally, let
% \begin{equation}
% \label{eqn:gamma_0}
% 	\gamma_0(p_0, \zeta_0) = \inf_{p \in [p_0 - \zeta_0, p_0 + \zeta_0]} \min_{k} \pi_k(p). 
% \end{equation}
% Again, note that $\gamma_0(p_0, \zeta_0) > 0$ since $\pi_k(p) > 0$ for any $k$ and $p > 0$, and $\pi_k(p)$ is a continuous function of $p$. 

In the following sections, we use the notation $C(A,B, \dots, )$ to denote constants that depend on $A, B, \dots, $ but not on other parameters. The notation $C$ may take different values at different places.

\subsection{Lemmas}

\begin{lemm}
\label{lemma:operator_norm}
Let $R_i(p,p_0, \zeta_0) = P^i (p, \alpha_0(p_0, \zeta_0)) - \onev \piv(p)^\top$, where $\alpha_0$ is defined in \eqref{eqn:alpha_0} and $P(p, \alpha) = I + Q(p)/\alpha$. Then there exists a constant $C(K, p_0, \zeta_0) \in (0,1)$, such that for any $p \in \sqb{p_0 - \zeta_0, p_0 + \zeta_0}$ and any $i \in \NN_+$,
\begin{equation}
    \Norm{R_i(p,p_0, \zeta_0)}_{op} \leq C^i(K, p_0, \zeta_0),
\end{equation}
where $\Norm{}_{op}$ stands for the operator norm of a matrix. 
\end{lemm}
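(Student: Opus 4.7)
The plan is to carry out a spectral decay analysis in the symmetric basis already set up in Section~\ref{subsection:notation} and then transfer the bound back, using compactness in $p$ for uniformity of constants.

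First I would symmetrize. The matrix $\tilde{P}(p,\alpha_0) := D_{\sqrt{\pi}}(p)\,P(p,\alpha_0)\,D_{\sqrt{\pi}}^{-1}(p) = I + \tilde{Q}(p)/\alpha_0$ is symmetric because $\tilde{Q}(p)$ is. Its spectrum matches that of $P(p,\alpha_0)$: eigenvalue $1$ with unit eigenvector $\sqrt{\piv(p)}$ (the componentwise square root, normalized by $\sum_k \pi_k=1$), and eigenvalues $1+D_{k,k}(p)/\alpha_0$ for $k\ge 1$, each lying in $[0,1-\beta_0/\alpha_0]$ by the Perron--Frobenius bound $|D_{k,k}(p)|\le \alpha_0$ together with the definition \eqref{eqn:beta_0}. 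Writing $\rho_0 := 1-\beta_0(p_0,\zeta_0)/\alpha_0(p_0,\zeta_0)\in (0,1)$, the symmetric matrix $\tilde{P}(p,\alpha_0) - \sqrt{\piv(p)}\sqrt{\piv(p)}^\top$ has spectrum in $[0,\rho_0]$ and therefore operator norm at most $\rho_0$.

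Next I would deflate. Since $\sqrt{\piv(p)}\sqrt{\piv(p)}^\top$ is the orthogonal projection onto the top eigenspace of $\tilde{P}$, it commutes with $\tilde{P}$ and is idempotent, so
\[
\tilde{P}(p,\alpha_0)^i - \sqrt{\piv(p)}\sqrt{\piv(p)}^\top = \bigl(\tilde{P}(p,\alpha_0) - \sqrt{\piv(p)}\sqrt{\piv(p)}^\top\bigr)^i,
\]
with operator norm at most $\rho_0^i$. Undoing the similarity gives $R_i(p,p_0,\zeta_0) = D_{\sqrt{\pi}}^{-1}(p)\bigl(\tilde{P}(p,\alpha_0)^i - \sqrt{\piv(p)}\sqrt{\piv(p)}^\top\bigr) D_{\sqrt{\pi}}(p)$, so
\[
\|R_i(p,p_0,\zeta_0)\|_{op} \le \|D_{\sqrt{\pi}}^{-1}(p)\|_{op}\,\|D_{\sqrt{\pi}}(p)\|_{op}\,\rho_0^i = \sqrt{\max_k \pi_k(p)/\min_k \pi_k(p)}\,\rho_0^i.
\]

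I would then make the prefactor uniform in $p$. Under Assumption~\ref{assu:bound_smooth}, $Q(p)$ depends continuously on $p$, hence so does the stationary distribution $\piv(p)$ (the unique normalized solution of $\piv^\top Q=0$), and the irreducibility of $Q(p)$ forces $\pi_k(p)>0$ for every $k$ and every $p$ in the interval. Compactness of $[p_0-\zeta_0,p_0+\zeta_0]$ then yields a finite constant $M=M(K,p_0,\zeta_0)\ge 1$ bounding $\sqrt{\max_k \pi_k(p)/\min_k \pi_k(p)}$ uniformly, and therefore $\|R_i(p,p_0,\zeta_0)\|_{op}\le M\rho_0^i$ uniformly in $p$.

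Finally I would absorb $M$ into the exponential base. Choosing any $C\in(\rho_0,1)$ sufficiently close to $1$, the inequality $M\rho_0^i\le C^i$ reduces to $i\log(C/\rho_0)\ge \log M$ and hence holds for $i\ge i_0:=\lceil \log M/\log(C/\rho_0)\rceil$. For the finitely many $i<i_0$, I would use the submultiplicativity $R_i=R_1^i$ together with the crude bound $\|R_i\|_{op}\le M$, and enlarge $C$ further (still strictly below $1$) if necessary so that $C^i$ dominates on the initial segment. The resulting constant $C$ depends only on $K,p_0,\zeta_0$, as required. The main technical subtlety is precisely this last step: because the basis change $D_{\sqrt{\pi}}$ is not orthogonal, the $2$-norm bound unavoidably inherits a condition-number prefactor $M$ from the symmetrization, and folding $M$ cleanly into $C^i$ while preserving $C<1$ is the one part of the argument requiring separate attention; everything upstream follows routinely from reversibility and spectral theory for birth--death chains.
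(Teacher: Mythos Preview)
Your approach is the same spectral-decay argument as the paper's, but executed with more care on the one point where the paper's own proof is actually sloppy. The paper writes $R_i=U(p)\,\Sigma\,U^{-1}(p)$ with $\Sigma$ diagonal and $|\Sigma_{kk}|\le(1-\beta_0/\alpha_0)^i$, and then closes by asserting that ``the operator norm of a diagonalizable matrix is the same as its largest eigenvalue (in absolute value)''---which is false when the diagonalizing basis $U(p)$ is not orthogonal. Your symmetrization via $D_{\sqrt\pi}$ repairs exactly this: $\tilde P$ is genuinely symmetric, so its $2$-norm equals its spectral radius, and the condition number $M=\sqrt{\max_k\pi_k(p)/\min_k\pi_k(p)}$ you pick up when undoing the similarity is the honest price. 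Through the bound $\|R_i\|_{op}\le M\rho_0^{\,i}$ your argument is correct and in fact stronger than the paper's.

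The gap is Step~5: you cannot fold $M$ into the base. The inequality $\|R_i\|_{op}\le C^i$ with $C<1$ already fails at $i=1$ whenever $\|R_1\|_{op}\ge 1$, and this does happen. Take $K=2$ with $\lambda_0=\lambda_1=\epsilon\mu$ for small $\epsilon>0$: then $\pi\propto(1,\epsilon,\epsilon^2)$, $\alpha_0=(1+\epsilon)\mu$, and the last row of $R_1=P-\onev\piv^\top$ is approximately $(-1,1,0)$, while $R_1(-1,1,0)^\top$ has norm about $2$, so $\|R_1\|_{op}\approx\sqrt 2>1$. No $C\in(0,1)$ can satisfy $\|R_1\|_{op}\le C$, and your ``enlarge $C$ further (still strictly below $1$)'' manoeuvre cannot be completed. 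The lemma as stated is therefore too strong; the two-constant form $\|R_i\|_{op}\le C_1\rho_0^{\,i}$ is what you have actually proved, and inspection of Lemmas~\ref{lemma:sum_of_power}, \ref{lemma:num_eligible_jumps} and \ref{lemma:uniform_intergral} shows that this is all that is ever used downstream. As a smaller point, your placement of the sub-leading eigenvalues of $\tilde P$ in $[0,1-\beta_0/\alpha_0]$ relies on the preliminaries' assertion $|D_{k,k}|\le\alpha_0$, whose Perron--Frobenius justification only gives $|1+D_{k,k}/\alpha_0|\le 1$; the clean fix is to set $\rho_0:=\sup_{p}\max_{k\ge1}|1+D_{k,k}(p)/\alpha_0|$ and argue $\rho_0<1$ by aperiodicity of $P$ together with compactness in $p$.
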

\begin{proof}
We have shown in \eqref{eqn:eigen_decom_P} that $P (p, \alpha_0(p_0, \zeta_0)) = U(p) \p{I + \frac{1}{\alpha_0(p_0, \zeta_0)}D(p) } U^{-1}(p)$. Therefore, 
\begin{equation}
   P^i (p, \alpha_0(p_0, \zeta_0)) =  U(p) \p{I + \frac{1}{\alpha_0(p_0, \zeta_0)}D(p) }^i U^{-1}(p). 
\end{equation}
As discussed in Section  \ref{subsection:notation}, $D_{0,0}(p)$ is zero, and all the rest diagonal entries $D_{k,k}(p)$ are negative. We also note from the discussion in Section \ref{subsection:notation} that $U(p) = (\onev, U_2(p))$, where $U_2(p)$ is a $(K+1) \times K$ matrix and  $U^{-1}(p) = \p{\piv(p), U_2^{-1}(p)}^\top$, where $U_2^{-1}(p)$ is a $(K+1) \times K$ matrix. Therefore, 
\begin{equation}
\begin{split}
   R_i(p,p_0, \zeta_0) &= P^i (p, \alpha_0(p_0, \zeta_0)) - \onev\piv^\top \\
   &=  U(p) \p{\p{I + \frac{1}{\alpha_0(p_0, \zeta_0)}D(p) }^i - \ev_0\ev_0^\top} U^{-1}(p). 
\end{split}
\end{equation}
Let $\Sigma = \p{\p{I + \frac{1}{\alpha_0(p_0, \zeta_0)}D(p) }^i - \ev_0\ev_0^\top}$. The matrix $\Sigma$ is clearly a diagonal matrix with $\Sigma_{0,0} = 0$ and 
\begin{equation}
\Sigma_{k,k} = \p{1+ \frac{D_{k,k}(p)}{\alpha_0(p_0, \zeta_0)}}^i \leq \p{1-\frac{\beta_0(p_0, \zeta_0)}{\alpha_0(p_0, \zeta_0)}}^i
\end{equation}
for $k \geq 1$, where $\beta_0$ is defined in \eqref{eqn:beta_0}. 
Take $C(K, p_0, \zeta_0) = 1- \beta_0(p_0, \zeta_0)/\alpha_0(p_0, \zeta_0)$. Note that we have shown that $C(K, p_0, \zeta_0) \in [0,1)$ in Section \ref{subsection:notation}. 
Therefore, we have that the absolute value of the eigenvalues of $R_i(p,p_0, \zeta_0)$ are all bounded by $C(K, p_0, \zeta_0)^i$. The result then follows from noting that the operator norm of a diagonalizable matrix is the same as its largest
eigenvalue (in absolute value). 
\end{proof}

\begin{lemm}
	\label{lemma:sum_of_power}
There exists a constant $C(K, p_0, \zeta_0) \in (0,1)$ and $C_1(K, p_0, \zeta_0) >0$, such that for any $p \in \sqb{p_0 - \zeta_0, p_0 + \zeta_0}$, and any $m \in \mathbb{N}_+$,
\begin{equation}
	\begin{split}
	& \Norm{ \sum_{i = 0}^{m-1} P^i (p, \alpha_0(p_0, \zeta_0)) - m \onev \piv(p)^\top + \alpha_0(p_0, \zeta_0) Q^\#(p) }_{op} \\
	& \qquad \qquad \leq C_1(K, p_0, \zeta_0) C^m(K, p_0, \zeta_0).
	\end{split}
\end{equation}
Here $\alpha_0$ is defined in \eqref{eqn:alpha_0}.
\end{lemm}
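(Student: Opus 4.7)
}
My plan is to diagonalize $P(p,\alpha_0)$, sum the geometric series eigenvalue by eigenvalue, and then reassemble the result back in terms of the quantities $\onev\piv(p)^\top$ and $Q^\#(p)$ that appear in the statement. Using the eigendecomposition from Section~\ref{subsection:notation},
\[
P^i(p,\alpha_0) \;=\; U(p)\bigl(I + D(p)/\alpha_0\bigr)^i U^{-1}(p),
\]
so summing $i=0,\dots,m-1$ I would compute, entry by entry on the diagonal, the geometric series of $\Lambda_k := 1 + D_{k,k}(p)/\alpha_0$. For $k=0$, $\Lambda_0 = 1$ and the sum equals $m$; for $k\geq 1$, the sum equals $\alpha_0(1-\Lambda_k^m)/(-D_{k,k}(p))$. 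Collecting the two cases gives the clean diagonal identity
\[
\sum_{i=0}^{m-1}\bigl(I + D(p)/\alpha_0\bigr)^i \;=\; m\,\ev_0\ev_0^\top \;-\; \alpha_0\, D^\#(p) \;+\; \alpha_0\, D^\#(p)\,\bigl(I + D(p)/\alpha_0\bigr)^m,
\]
where the final term has its $(0,0)$-entry equal to $0$ because $D^\#_{0,0}(p)=0$.

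Next I would conjugate both sides by $U(p)$ and $U^{-1}(p)$. Using $U(p)\ev_0 = \onev$ and $\ev_0^\top U^{-1}(p) = \piv(p)^\top$ (which follow from the block decompositions $U(p) = (\onev,U_2(p))$ and $U^{-1}(p) = (\piv(p),U_2^{-1}(p))^\top$), together with $U(p)D^\#(p)U^{-1}(p) = Q^\#(p)$ and $U(p)(I+D(p)/\alpha_0)^m U^{-1}(p) = P^m(p,\alpha_0)$, I obtain the exact identity
\[
\sum_{i=0}^{m-1} P^i(p,\alpha_0) \;-\; m\,\onev\piv(p)^\top \;+\; \alpha_0\, Q^\#(p) \;=\; \alpha_0\, Q^\#(p)\, P^m(p,\alpha_0).
\]

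Then the key simplification is that $Q^\#(p)\onev = 0$, because $\onev$ is the eigenvector of $Q(p)$ for the zero eigenvalue and $D^\#_{0,0}(p)=0$. Hence $Q^\#(p)\onev\piv(p)^\top = 0$, and the right-hand side can be rewritten as $\alpha_0 Q^\#(p) R_m(p,p_0,\zeta_0)$ with $R_m$ as in Lemma~\ref{lemma:operator_norm}. Taking operator norms and applying that lemma yields
\[
\Bigl\|\sum_{i=0}^{m-1} P^i(p,\alpha_0) - m\,\onev\piv(p)^\top + \alpha_0 Q^\#(p)\Bigr\|_{op} \;\leq\; \alpha_0(p_0,\zeta_0)\,\bigl\|Q^\#(p)\bigr\|_{op}\, C^m(K,p_0,\zeta_0).
\]

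The only remaining item is to absorb $\alpha_0(p_0,\zeta_0)\|Q^\#(p)\|_{op}$ into the constant $C_1(K,p_0,\zeta_0)$. Since $Q(p)$ has entries that are continuous in $p$ and its nonzero eigenvalues are bounded away from $0$ on the compact interval $[p_0-\zeta_0,p_0+\zeta_0]$ (this is exactly the content of $\beta_0(p_0,\zeta_0)>0$), the map $p\mapsto Q^\#(p)$ is continuous on this compact interval and therefore $\sup_{p}\|Q^\#(p)\|_{op}<\infty$, giving the claimed $C_1(K,p_0,\zeta_0)$. I do not expect a serious obstacle here: the main step is the algebraic identity, and the main subtle point to get right is the identification $U\ev_0\ev_0^\top U^{-1} = \onev\piv^\top$ together with the cancellation $Q^\#\onev\piv^\top = 0$, which is what lets $P^m$ be replaced by $R_m$ and thereby produces the geometric decay in $m$.
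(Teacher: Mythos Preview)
Your proof is correct and follows essentially the same route as the paper: diagonalize $P$, sum the geometric series on the diagonal, and conjugate back to identify the $m\onev\piv^\top$ and $\alpha_0 Q^\#$ pieces. The only difference is in the last step: the paper keeps the remainder in the form $U(p)\Sigma U^{-1}(p)$ with $\Sigma_{k,k} = (\alpha_0/D_{k,k})(1+D_{k,k}/\alpha_0)^m$ and bounds its eigenvalues directly to obtain the explicit constants $C = 1-\beta_0/\alpha_0$ and $C_1 = \alpha_0/\beta_0$, whereas you recognize the remainder as $\alpha_0 Q^\#(p)R_m$ and invoke Lemma~\ref{lemma:operator_norm} together with a uniform bound on $\|Q^\#(p)\|_{op}$ via continuity and compactness---a clean, modular alternative that trades explicit constants for reuse of the previous lemma.
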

\begin{proof}
We note that by \eqref{eqn:eigen_decom_P},
\begin{equation}
	\begin{split}
		  \sum_{i = 0}^{m-1} P^i (p, \alpha_0(p_0, \zeta_0))
		  & = \sum_{i = 0}^{m-1} \p{U(p) \p{I + \frac{1}{\alpha_0(p_0, \zeta_0)}D(p) } U^{-1}(p)}^i\\
		 & =  \sum_{i = 0}^{m-1} U(p) \p{I + \frac{1}{\alpha_0(p_0, \zeta_0)}D(p) }^i U^{-1}(p)\\
		 & =  U(p)  \p{\sum_{i = 0}^{m-1}\p{I + \frac{1}{\alpha_0(p_0, \zeta_0)}D(p) }^i } U^{-1}(p). 
	\end{split}
\end{equation}
As discussed in Section  \ref{subsection:notation}, $D_{0,0}(p)$ is zero, and all the rest diagonal entries are negative. Therefore,
\begin{equation}
\sum_{i = 0}^{m-1}\p{1 + \frac{1}{\alpha_0(p_0, \zeta_0)}D_{0,0}(p) }^i  = \sum_{i = 0}^{m-1} 1 = m. 
\end{equation}
At the same time, for $k \geq 1$, 
\begin{equation}
\sum_{i = 0}^{m-1}\p{1 + \frac{1}{\alpha_0(p_0, \zeta_0)}D_{k,k}(p) }^i  = -\frac{\alpha_0(p_0, \zeta_0)}{D_{k,k}(p)} + \frac{\alpha_0(p_0, \zeta_0)}{D_{k,k}(p)} \p{1 + \frac{D_{k,k}(p)}{\alpha_0(p_0, \zeta_0)} }^m.
\end{equation}
where
\begin{equation}
0 \leq \p{1 + \frac{D_{k,k}(p) }{\alpha_0(p_0, \zeta_0)}}^m
\leq \p{1 - \frac{\beta_0(p_0, \zeta_0)}{\alpha_0(p_0, \zeta_0)}}^m. 
\end{equation}
Thus,
\begin{equation}
	\begin{split}
	&\sum_{i = 0}^{m-1} P^i (p, \alpha_0(p_0, \zeta_0))\\
	& \qquad   =  U(p)  \begin{bmatrix}
		m & & & \\
		&-\frac{\alpha_0(p_0, \zeta_0)}{D_{1,1}(p)}  & &\\
		& &\dots &\\
		& &  &-\frac{\alpha_0(p_0, \zeta_0)}{D_{K,K}(p)}  
	\end{bmatrix} U^{-1}(p) + U(p)\Sigma U^{-1}(p) \\
	& \qquad  =  (\onev, U_2(p))  \begin{bmatrix}
		m & & & \\
		&-\frac{\alpha_0(p_0, \zeta_0)}{D_{1,1}(p)}  & &\\
		& &\dots &\\
		& &  &-\frac{\alpha_0(p_0, \zeta_0)}{D_{K,K}(p)}  
	\end{bmatrix} \p{\piv(p), U_2^{-1}(p)}^\top + U(p)\Sigma U^{-1}(p) \\
   & \qquad  = m \onev \piv(p)^\top - \alpha_0(p_0, \zeta_0) U(p) D^\#(p) U^{-1}(p) + U(p)\Sigma U^{-1}(p)\\
   & \qquad  = m \onev \piv(p)^\top - \alpha_0(p_0, \zeta_0) Q^\#(p)  + U(p)\Sigma U^{-1}(p),
\end{split}
\end{equation}
where $\Sigma$ is a diagonal matrix with $\Sigma_{k,k} = 0$ and $\Sigma_{k,k} = \frac{\alpha_0(p_0, \zeta_0)}{D_{k,k}(p)} \p{1 + \frac{D_{k,k}(p)}{\alpha_0(p_0, \zeta_0)} }^m$ for $k \geq 1$. Therefore, 
\begin{equation}
\me(\Sigma) \leq   \frac{\alpha_0(p_0, \zeta_0)}{\beta_0(p_0, \zeta_0)}\p{1 - \frac{\beta_0(p_0, \zeta_0)}{\alpha_0(p_0, \zeta_0)}}^m.
\end{equation}
Take $C(K, p_0, \zeta_0) = 1 - \beta_0(p_0, \zeta_0)/\alpha_0(p_0, \zeta_0)$ and $C_1(K, p_0, \zeta_0) = \alpha_0(p_0, \zeta_0)/\beta_0(p_0, \zeta_0)$. Note that we have shown that $C(K, p_0, \zeta_0) \in [0,1)$ in Section \ref{subsection:notation}. 
Therefore, we have that the absolute value of the eigenvalues of $U(p)\Sigma U^{-1}(p)$ are all bounded by $C_1(K, p_0, \zeta_0) C^m(K, p_0, \zeta_0)$. The result then follows from noting that the operator norm of a diagonalizable matrix is the same as its largest
eigenvalue (in absolute value). 
\end{proof}

\begin{lemm}[Number of eligible jumps]
\label{lemma:num_eligible_jumps}
Consider the continuous-time Markov chain with transition rate matrix defined in \eqref{eqn:Q_matrix}. Let $\mathcal{J}$ be a set of eligible jumps, i.e., $\mathcal{J}$ is a set of $(i,j)$ pairs such that $i,j \in \cb{0, \dots, K}$. Assume that $(k,k) \notin \mathcal{J}$.  Here the pair $(i,j)$ stands for a jump from state $i$ to state $j$. Let $N_{\mathcal{J}}(T, p)$ be the number of eligible jumps in time $[0,T]$. Let $\tilde{Q}$ be a matrix such that $\tilde{Q}_{i,j} = Q_{i,j}$ if $(i,j) \in \mathcal{J}$ and otherwise $\tilde{Q}_{i,j} = 0$.

There is a constant $C(K, p_0, \zeta_0, \mu)$ such that for any $p \in [p_0 - \zeta_0, p_0 + \zeta_0]$, 
\begin{equation}
	\abs{\EE{N_{\mathcal{J}}(T, p)}  - T \piv(p)^\top \tilde{Q}(p) \onev } \leq  C(K, p_0, \zeta_0, \mu),
\end{equation}
\begin{equation}
	\abs{\Var{N_{\mathcal{J}}(T, p)}  - T \sigma_{\mathcal{J}}^2(p) } \leq  C(K, p_0, \zeta_0, \mu),
\end{equation}
where 
\begin{equation}
	\sigma_{\mathcal{J}}^2(p) = \piv(p)^\top\tilde{Q}(p)\onev - 2\pi(p)^\top\tilde{Q}(p)Q^{\#}(p)\tilde{Q}(p)\onev.
\end{equation}
Here $Q^\#$ is the group inverse of $Q$ defined in \eqref{eqn:Q_group_inverse_def}. 
\end{lemm}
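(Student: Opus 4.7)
\textbf{Proof plan for Lemma \ref{lemma:num_eligible_jumps}.}
My plan is to use uniformization to reduce the problem to counting certain transitions in an embedded discrete-time chain, and then apply the spectral expansions developed in Lemmas \ref{lemma:operator_norm} and \ref{lemma:sum_of_power} to extract the leading $T$-order term. Concretely, set $\alpha=\alpha_0(p_0,\zeta_0)$ and $P=I+Q(p)/\alpha$. By uniformization, the queue length process observed at the jumps of an independent $\mathrm{Poisson}(\alpha)$ process $M(T)$ is a discrete-time Markov chain $(X_s)_{s\ge 0}$ with transition matrix $P$, and $N_{\mJ}(T,p)=\sum_{s=1}^{M(T)} f(X_{s-1},X_s)$ where $f(i,j)=\mathbbm{1}\{(i,j)\in\mJ\}$. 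Because $(k,k)\notin\mJ$, the ``eligible transition'' sub-kernel satisfies $\tilde{P}_{ij}:=P_{ij}f(i,j)=\tilde Q_{ij}/\alpha$, i.e.\ $\tilde P=\tilde Q/\alpha$. This is the key simplification that lets us mix matrix identities for $P$ with the rate data $\tilde Q$.

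For the mean, I would condition on $M(T)=m$ and write
\begin{equation}
\EE{N_{\mJ}(T,p)\mid M(T)=m}=\nuv^\top\!\Bigl(\sum_{s=0}^{m-1}P^s\Bigr)\tilde P\,\onev=\frac{1}{\alpha}\nuv^\top\!\Bigl(\sum_{s=0}^{m-1}P^s\Bigr)\tilde Q\onev.
\end{equation}
Inserting the expansion $\sum_{s=0}^{m-1}P^s=m\onev\piv(p)^\top-\alpha Q^\#(p)+E_m$ with $\|E_m\|_{op}\le C_1 C^m$ from Lemma \ref{lemma:sum_of_power} and averaging over $M(T)\sim\mathrm{Poisson}(\alpha T)$ (using $\EE{M(T)}=\alpha T$ and $C^{M(T)}\le 1$) yields
\begin{equation}
\EE{N_{\mJ}(T,p)}=T\,\piv(p)^\top\tilde Q(p)\onev-\nuv^\top Q^\#(p)\tilde Q(p)\onev+r_1(T,p),
\end{equation}
with $|r_1(T,p)|$ bounded uniformly in $T$ by a constant depending only on $K,p_0,\zeta_0,\mu$. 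Since $\nuv^\top\onev=1$ and the remainder is $O(1)$, this proves the expectation bound.

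For the variance, I would use $f^2=f$ to split $\EE{N_{\mJ}^2}=\EE{N_{\mJ}}+2\,\Psi(T)$, where
\begin{equation}
\Psi(T)=\EE{\sum_{1\le s<t\le M(T)} f(X_{s-1},X_s)f(X_{t-1},X_t)}=\frac{1}{\alpha^2}\EE{\nuv^\top\!\sum_{a=0}^{M(T)-2}P^a\tilde Q\Bigl(\sum_{b=0}^{M(T)-2-a}P^b\Bigr)\tilde Q\onev}.
\end{equation}
I would apply Lemma \ref{lemma:sum_of_power} to the inner sum first, producing a leading term $(m-1-a)\onev\piv^\top$, a constant term $-\alpha Q^\#$, and an exponentially small remainder; crucially, the $\onev\piv^\top$ piece contracts against $\tilde Q\onev$ to give a scalar $\piv^\top\tilde Q\onev$, after which I apply the lemma again to the outer sum over $a$. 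Two identities streamline the bookkeeping: $\piv^\top Q^\#=\zerov$ (so terms where $\piv^\top$ meets $Q^\#$ drop) and $Q^\#\onev=\zerov$. Collecting the $O(T^2)$ pieces shows they match $\EE{N_{\mJ}}^2$ up to an $O(1)$ error, while the surviving $O(T)$ contribution is exactly $T\bigl(\piv^\top\tilde Q\onev-2\piv^\top\tilde Q Q^\#\tilde Q\onev\bigr)=T\sigma_{\mJ}^2(p)$. Combined with $\EE{N_{\mJ}}=T\piv^\top\tilde Q\onev+O(1)$, this gives the claimed variance estimate.

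The main obstacle is the variance calculation: organizing the double sum so that the $O(T^2)$ pieces cancel cleanly between $2\Psi(T)$ and $\EE{N_{\mJ}}^2$, and correctly identifying the $-2\,\piv^\top\tilde Q Q^\#\tilde Q\onev$ contribution. The three remainder types that need uniform-in-$T$ control are (i) the $O(1)$ spectral remainders from Lemmas \ref{lemma:operator_norm}--\ref{lemma:sum_of_power}, (ii) Poisson moment remainders such as $\Var(M(T))=\alpha T$ that would give spurious $O(T)$ terms if not tracked carefully, and (iii) edge effects from the initial distribution $\nuv$; each is handled by the exponential decay of $\|E_m\|_{op}$ and the boundedness of $\|\tilde Q\|$, $\|Q^\#\|$ on the compact set $[p_0-\zeta_0,p_0+\zeta_0]$ (which also gives uniformity of the constants in $p$).
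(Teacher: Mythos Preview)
Your proposal is correct and follows essentially the same route as the paper: uniformize with rate $\alpha=\alpha_0(p_0,\zeta_0)$, condition on the Poisson count $M(T)$, write the first and second conditional moments of $N_{\mJ}$ as sums of products of $P^s$ and $\tilde P=\tilde Q/\alpha$, and replace $\sum_{s=0}^{m-1}P^s$ by $m\,\onev\piv^\top-\alpha Q^\#+E_m$ via Lemma~\ref{lemma:sum_of_power}. The only cosmetic difference is that the paper organizes the variance via the law of total variance (computing $\Var{N_{\mJ}\mid M(T)=m}$ first, then combining $\EE{\Var{\cdot}}$ and $\Var{\EE{\cdot}}$), whereas you compute $\EE{N_{\mJ}^2}-\EE{N_{\mJ}}^2$ directly; the algebra is the same either way, and your use of $\piv^\top Q^\#=\zerov$ to drop cross terms is a legitimate shortcut the paper does not invoke explicitly.
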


\begin{proof}
	We write $\alpha = \alpha_0(p_0, \zeta_0) $, where $\alpha_0$ is defined in \eqref{eqn:alpha_0}. 
	Throughout this proof, we omit ``$(p)$" and $``(\alpha)"$ for simplicity. We use constants $C_1, C_2, \dots$ that could depend on $K, \mu, p_0, \zeta_0$ but not on $p$. 
	
	We start with a uniformization step. The continuous-time Markov chain we considered can be described by a discrete-time Markov chain with transition matrix $P = I + Q/\alpha$ where jumps occur according to a Poisson process with intensity $\alpha$. Let $M(T)$ be the total number of jumps of the discrete-time Markov chain. Note that among the jumps, there are some self jumps that go from state $k$ to state $k$. We are interested in the eligible jumps in $\mathcal{J}$. 
	
	Note that $M(T) \sim \operatorname{Poisson}(\alpha T)$ and that $\Var{N_{\mathcal{J}}(T)} = \Var{\EE{N_{\mathcal{J}}(T) \mid M(T)}} + \EE{\Var{N_{\mathcal{J}}(T) \mid M(T)}}$. We will then study the distribution of $N_{\mathcal{J}}(T)$ conditioning on $M(T) = m$.  To this end, let $\tilde{P} = \tilde{Q}/\alpha$. By Lemma \ref{lemma:sum_of_power}, the first moment satisfies
\begin{equation}
	\begin{split}
\EE{N_{\mathcal{J}}(T) \mid M(T) = m}
&= \nuv^\top (I + P + P^2 + \dots P^{m-1})\tilde{P} \onev\\
&= \nuv^\top \p{ m \onev \piv^\top - \alpha Q^\#} \tilde{P} \onev + \errr_1\\
&= m  \piv^\top  \tilde{P} \onev - \alpha  \nuv^\top Q^\# \tilde{P} \onev + \errr_1,
\end{split}
\end{equation}
where $\abs{\errr_1} \leq C_1 C_2^m$ for some constants $C_1 > 0$ and $C_2 \in (0,1)$. 
Therefore,
\begin{equation}
\begin{split}
\EE{N_{\mathcal{J}}(T)} &= \EE{\EE{N_{\mathcal{J}}(T) \mid M(T)}}
= \EE{M(T)}  \piv^\top  \tilde{P} \onev + \errr_1'\\
&= \piv^\top  \tilde{Q} \onev T + \errr_1', 
\end{split}
\end{equation}
where $\abs{\errr_1'} \leq C_1'$ for some $C_1' > 0$.    

The second moment involves more analyses. We write $N_{\mathcal{J}}(T) = \sum_{i = 1}^{M(T)} X_i$, where $X_i = 1$ if the $i$-th jump in the discrete-time Markov chain is an eligible jump. Then
\begin{equation}
	\label{eqn:second_moment_0}
	\begin{split}
	\EE{N_{\mathcal{J}}^2(T) \mid M(T) = m}
	&= \EE{\Big(\sum_{i = 1}^m X_i\Big)^2 \mid M(T) = m}\\
	&= \EE{\sum_{i = 1}^m X_i + 2 \sum_{j=1}^{m-1}\sum_{i = 1}^{m-j}X_i X_{i+j} \mid M(T) = m}\\
	&= \sum_{i = 1}^m \nuv^\top P^{i-1} \tilde{P} \onev + 2 \sum_{j = 1}^{m-1}\sum_{i=1}^{m-j} \nuv^\top P^{i-1} \tilde{P}P^{j-1} \tilde{P} \onev. 
	\end{split}
\end{equation}
For the first summation, we can again apply Lemma \ref{lemma:sum_of_power} and get
\begin{equation}
		\label{eqn:first_1}
\begin{split}
\sum_{i = 1}^m \nuv^\top P^{i-1} \tilde{P} \onev &= \nuv^\top\p{ m \onev \piv^\top - \alpha Q^\#} \tilde{P} \onev + \errr_1\\
& = m \piv^\top\tilde{P} \onev -  \alpha \nuv^\top  Q^\# \tilde{P} \onev + \errr_1, 
\end{split}
\end{equation}
where $\abs{\errr_1} \leq C_1C_2^m$ for some constants $C_1 > 0$ and $C_2 \in (0,1)$. For the second summation, we have that
\begin{equation}
	\label{eqn:second_1}
	\begin{split}
		&2 \sum_{j = 1}^{m-1}\sum_{i=1}^{m-j} \nuv^\top P^{i-1} \tilde{P}P^{j-1} \tilde{P} \onev
		= 2 \sum_{j = 1}^{m-1}\nuv^\top \p{\sum_{i=1}^{m-j} P^{i-1} } \tilde{P}P^{j-1} \tilde{P} \onev\\
		&\qquad \qquad = 2 \sum_{j = 1}^{m-1}\nuv^\top \p{(m-j) \onev \piv^\top - \alpha Q^\# + E_{j}} \tilde{P}P^{j-1} \tilde{P} \onev,
	\end{split}
\end{equation}
where $\Norm{E_j}_{op} \leq C_1C_2^{m-j}$ for some constants $C_1 > 0$ and $C_2 \in (0,1)$. We further note that
\begin{equation}
	2 \sum_{j = 1}^{m-1}\nuv^\top E_{j} \tilde{P}P^{j-1} \tilde{P} \onev \leq \sum_{j = 1}^{m-1} C_3(1-C_2)^{m-j} \leq C_4,
\end{equation}
for some constants $C_3, C_4 > 0$. Plugging this back into \eqref{eqn:second_1},
we get that
\begin{equation}
	\label{eqn:second_2}
	\begin{split}
&	2 \sum_{j = 1}^{m-1}\sum_{i=1}^{m-j} \nuv^\top P^{i-1} \tilde{P}P^{j-1} \tilde{P} \onev\\
& \qquad =  2 \sum_{j = 1}^{m-1}\nuv^\top \p{(m-j) \onev \piv^\top - \alpha Q^\# + E_{j}} \tilde{P}P^{j-1} \tilde{P} \onev\\
& \qquad = 2 \piv^\top \tilde{P} \p{\sum_{j = 1}^{m-1} (m-j)P^{j-1}}  \tilde{P} \onev - 2  \nuv^\top \alpha Q^\# \tilde{P} \p{\sum_{j = 1}^{m-1} P^{j-1}}  \tilde{P} \onev + \errr_2,
\end{split}
\end{equation}
where $\abs{\errr_2} \leq C_4$. For the first summation in  \eqref{eqn:second_2}, note that
	\begin{equation}
		\begin{split}
	\sum_{j = 1}^{m-1} (m-j) P^{j-1} 
	&= \sum_{i = 1}^{m-1} \sum_{j = 1}^i P^{j-1}
	= \sum_{i = 1}^{m-1} \p{ i \onev \piv^\top - \alpha Q^\# + E_i}\\
	& = \sum_{i = 1}^{m-1} \p{ i \onev \piv^\top - \alpha Q^\#} + E_0\\
	& = \frac{m(m-1)}{2} \onev \piv^\top - (m-1)\alpha Q^\# +E_0,
		\end{split}
	\end{equation}
where, again by Lemma \ref{lemma:sum_of_power}, $\Norm{E_i}_{op} \leq C_1C_2^{i}$ for some constants $C_1 > 0$ and $C_2 \in (0,1)$ and $\Norm{E_0}_{op} \leq C_5$ for some constant $C_5 > 0$. 
	This then implies that the first summation in  \eqref{eqn:second_2} satisfies
	\begin{equation}
			\label{eqn:second_3}
		\begin{split}
		 &2  \piv^\top \tilde{P} \p{\sum_{j = 1}^{m-1} (m-j)P^{j-1}}  \tilde{P} \onev\\
		& \qquad = 2  \piv^\top \tilde{P}  \p{ \frac{m(m-1)}{2} \onev \piv^\top - (m-1)\alpha Q^\#} \tilde{P} \onev + \errr_3\\
		& \qquad = m(m-1) \p{ \piv^\top\tilde{P} \onev}^2 -  2\alpha (m-1) \piv^\top \tilde{P}  Q^\#\tilde{P} \onev + \errr_3,
		 	\end{split}
	\end{equation}
	where $\abs{\errr_3} \leq C_6$ for some constant $C_6 > 0$. 
For the second summation in \eqref{eqn:second_2}, we can again apply Lemma \ref{lemma:sum_of_power} and get that
	\begin{equation}
			\label{eqn:second_4}
		\begin{split}
			&2  \nuv^\top \alpha Q^\# \tilde{P}\p{\sum_{j = 1}^{m-1} P^{j-1}}  \tilde{P} \onev
			= 2  \nuv^\top \alpha Q^\# \tilde{P}\p{\sum_{j = 1}^{m-1} P^{j-1}}  \tilde{P} \onev\\
			&\qquad \qquad = 2  \nuv^\top \alpha Q^\# \tilde{P} \p{ (m-1) \onev \piv^\top - \alpha Q^\#}  \tilde{P} \onev + \errr_4\\
			&\qquad \qquad = 2\alpha(m-1) \nuv^\top  Q^\# \tilde{P} \onev \p{\piv^\top \tilde{P}\onev}-  2 \alpha^2 \nuv^\top  Q^\#  \tilde{P}Q^\#  \tilde{P} \onev + \errr_4,\\
		\end{split}
	\end{equation}
	where  $\abs{\errr_4} \leq C_7C_2^m$ for some constant $C_7 > 0$ and $C_2 \in (0,1)$. 
	Finally, combining \eqref{eqn:second_2}, \eqref{eqn:second_3} and \eqref{eqn:second_4}, we have that the second summation in \eqref{eqn:second_moment_0} satisfies
	\begin{equation}
		\begin{split}
			2 \sum_{j = 1}^{m-1}\sum_{i=1}^{m-j} \nuv^\top P^{i-1} \tilde{P}P^{j-1} \tilde{P} \onev
			& =  m(m-1) \p{ \piv^\top\tilde{P} \onev}^2 -  2\alpha m \piv^\top \tilde{P}  Q^\#\tilde{P} \onev\\
			 & \qquad \qquad -  2 \alpha m \nuv^\top Q^\# \tilde{P} \onev \p{\piv^\top \tilde{P}\onev} + \errr_5,
		\end{split}
	\end{equation}
		where $\abs{\errr_5} \leq C_8$ for some constant $C_8 > 0$. 
		Now, together with \eqref{eqn:first_1} and \eqref{eqn:second_moment_0}, the above implies that
	\begin{equation}
		\begin{split}
& \EE{N_{\mathcal{J}}^2(T) \mid M(T) = m} \\
&\qquad \qquad = m \piv^\top\tilde{P} \onev + m(m-1) \p{ \piv^\top\tilde{P} \onev}^2 -  2\alpha m \piv^\top \tilde{P}  Q^\#\tilde{P} \onev\\
& \qquad \qquad \qquad \qquad - 2 \alpha m \nuv^\top Q^\# \tilde{P} \onev \p{\piv^\top \tilde{P}\onev} + \errr_6, 
		\end{split}
	\end{equation}
where $\abs{\errr_6} \leq C_9$ for some constant $C_9 > 0$. 
	
We are ready to study the variance of $N_{\mathcal{J}}(T)$ conditional on $M(T)$. 
\begin{equation}
	\begin{split}
	&	\Var{N_{\mathcal{J}}(T) \mid M(T) = m}\\
& \qquad 	= \EE{N_{\mathcal{J}}^2(T) \mid M(T) = m} - \p{\EE{N_{\mathcal{J}}(T) \mid M(T) = m}}^2\\
& \qquad	=  m \piv^\top\tilde{P} \onev + m(m-1) \p{ \piv^\top\tilde{P} \onev}^2 -  2\alpha m \piv^\top \tilde{P}  Q^\#\tilde{P} \onev \\
& \qquad  \qquad \qquad - 2m\alpha \nuv^\top Q^\# \tilde{P} \onev \p{\piv^\top \tilde{P}\onev}  - \p{m  \piv^\top  \tilde{P} \onev - \alpha  \nuv^\top Q^\# \tilde{P} \onev}^2 + \errr_7\\
& \qquad 	= m\p{ - \p{ \piv^\top\tilde{P} \onev}^2 +  \p{ \piv^\top\tilde{P} \onev} - 2\alpha \piv^\top \tilde{P}  Q^\#\tilde{P} \onev} + \errr_8,
	\end{split}
\end{equation}
	where $\abs{\errr_7} \leq C_{10}$ and $\abs{\errr_8} \leq C_{10}$  for some constant $C_{10} > 0$. 
	
Finally, note that 
\begin{equation}
	\begin{split}
\Var{N_{\mathcal{J}}(T)} &= \Var{\EE{N_{\mathcal{J}}(T) \mid M(T)}} + \EE{\Var{N_{\mathcal{J}}(T) \mid M(T)}}\\
& = \p{ \piv^\top  \tilde{P} \onev }^2 \Var{M(T)} + \errr_9 \\
& \qquad  \qquad+ \p{ - \p{ \piv^\top\tilde{P} \onev}^2 +  \p{ \piv^\top\tilde{P} \onev} - 2\alpha \piv^\top \tilde{P}  Q^\#\tilde{P} \onev} \EE{M(T)} \\
& = \p{ \piv^\top  \tilde{P} \onev }^2 \alpha T + \errr_9 \\
& \qquad  \qquad + \p{ - \p{ \piv^\top\tilde{P} \onev}^2 +  \p{ \piv^\top\tilde{P} \onev} - 2\alpha \piv^\top \tilde{P}  Q^\#\tilde{P} \onev} \alpha T \\
& =  \p{  \piv^\top\tilde{P} \onev - 2\alpha \piv^\top \tilde{P}  Q^\#\tilde{P} \onev} \alpha T + \errr_9, \\
& =  \p{  \piv^\top\tilde{Q} \onev - 2 \piv^\top \tilde{Q}  Q^\#\tilde{Q} \onev} T + \errr_9, \\
	\end{split}
\end{equation}
	where $\abs{\errr_9} \leq C_{11}$ and $\abs{\errr_9} \leq C_{11}$  for some constant $C_{11} > 0$. 
\end{proof}

\begin{lemm}[Number of departures]
\label{lemma:num_arrival}
Consider the continuous-time Markov chain with transition rate matrix defined in \eqref{eqn:Q_matrix}. Treating it as a queue length process, let $N_{\dep}(T, p)$ be the number of departures in time $[0,T]$. There is a constant $C(K, p_0, \zeta_0)$ such that for any $p \in [p_0 - \zeta_0, p_0 + \zeta_0]$, 
\begin{equation}
	\abs{\EE{N_{\dep}(T, p)}  - T \mu(1-\pi_0(p)) } \leq  C(K, p_0, \zeta_0, \mu),
\end{equation}
\begin{equation}
	\abs{\Var{N_{\dep}(T, p)}  - T \sigma_{\dep}^2(p) } \leq  C(K, p_0, \zeta_0, \mu),
\end{equation}
where 
\begin{equation}
	\sigma_{\dep}^2(p) = \mu(1 - \pi_0(p)) + 2\mu^2 \sum_{i = 0}^{K-1} \pi_{i+1}(p) Q_{i0}^\#(p).
\end{equation}
Here $Q^\#$ is the group inverse of $Q$ defined in \eqref{eqn:Q_group_inverse_def}. 
\end{lemm}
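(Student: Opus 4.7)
The plan is to derive this as a direct specialization of Lemma \ref{lemma:num_eligible_jumps}. Departures in the birth-death queue length process are exactly the transitions from state $k$ to state $k-1$ for $k \geq 1$, so I will invoke Lemma \ref{lemma:num_eligible_jumps} with the eligible-jump set $\mathcal{J} = \{(k,k-1) : k = 1, \ldots, K\}$. With this choice, the corresponding matrix $\tilde{Q}$ has entries $\tilde{Q}_{k,k-1} = \mu$ for $k \geq 1$ and is zero elsewhere. The lemma then immediately yields bounds of the right form; it remains to verify that the expressions $\piv(p)^\top \tilde{Q}(p) \onev$ and $\piv(p)^\top \tilde{Q}(p) \onev - 2\piv(p)^\top \tilde{Q}(p) Q^\#(p) \tilde{Q}(p) \onev$ equal $\mu(1-\pi_0(p))$ and $\sigma^2_{\dep}(p)$, respectively.

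For the mean, I observe that $\tilde{Q} \onev$ is the vector with entry $\mu$ in every coordinate $k \geq 1$ and $0$ in coordinate $0$, so $\piv^\top \tilde{Q} \onev = \mu \sum_{k=1}^K \pi_k = \mu(1 - \pi_0)$, matching the claimed leading term. For the variance, I note that the action of $\tilde{Q}$ on any vector $v$ is $(\tilde{Q} v)_k = \mu v_{k-1}$ for $k \geq 1$ and $0$ for $k = 0$, so
\begin{equation}
\piv^\top \tilde{Q} Q^\# \tilde{Q} \onev \;=\; \mu^2 \sum_{k=1}^{K} \pi_k \sum_{j=1}^{K} Q^\#_{k-1,\,j}.
\end{equation}

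The main step beyond bookkeeping is the identity $Q^\# \onev = 0$, which lets me collapse the inner sum into a single entry of $Q^\#$. This identity follows from the construction in \eqref{eqn:Q_group_inverse_def}: since $\onev$ is the right eigenvector of $Q$ associated with the zero eigenvalue, it lies in the kernel of $D^\#(p)$ when expressed in the eigenbasis, so $Q^\# \onev = U D^\# U^{-1} \onev = 0$. Equivalently, $\sum_{j=0}^{K} Q^\#_{i,j} = 0$ for every $i$, so $\sum_{j=1}^{K} Q^\#_{i,j} = -Q^\#_{i,0}$. Substituting and re-indexing $i = k-1$ gives $\piv^\top \tilde{Q} Q^\# \tilde{Q} \onev = -\mu^2 \sum_{i=0}^{K-1} \pi_{i+1} Q^\#_{i,0}$, so the variance term from Lemma \ref{lemma:num_eligible_jumps} becomes exactly $\mu(1-\pi_0) + 2\mu^2 \sum_{i=0}^{K-1} \pi_{i+1} Q^\#_{i,0} = \sigma^2_{\dep}(p)$.

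There is no substantive obstacle here: the argument reduces the lemma to an algebraic identity and the property $Q^\#\onev = 0$, both of which are immediate. The only mild care needed is the re-indexing step (and keeping track of the boundary index $k=0$ where $\tilde{Q}$ has a zero row), and noting that the constant $C(K, p_0, \zeta_0, \mu)$ furnished by Lemma \ref{lemma:num_eligible_jumps} transfers directly to the constant asserted here.
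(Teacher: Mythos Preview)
Your proposal is correct and follows essentially the same approach as the paper: invoke Lemma \ref{lemma:num_eligible_jumps} with the departure jump set $\mathcal{J} = \{(k,k-1) : 1 \le k \le K\}$, compute $\piv^\top \tilde{Q}\onev = \mu(1-\pi_0)$ directly, and use the identity $Q^\#\onev = \zerov$ to reduce $\piv^\top \tilde{Q} Q^\# \tilde{Q}\onev$ to $-\mu^2 \sum_{i=0}^{K-1}\pi_{i+1} Q^\#_{i,0}$. The only cosmetic difference is that the paper computes $Q^\#\tilde{Q}\onev$ and $\piv^\top\tilde{Q}$ as intermediate vectors before taking their inner product, whereas you write out the double sum and then collapse it; the algebra and the key use of $Q^\#\onev = \zerov$ are identical.
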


\begin{proof}
Throughout this proof, we omit ``$(p)$" and $``(\alpha)"$ for simplicity. We use constants $C_1, C_2, \dots$ that could depend on $K, \mu, p_0, \zeta_0$ but not on $p$. 
We will use Lemma \ref{lemma:num_eligible_jumps} to show the results. We note that the eligible jumps are $\mathcal{J} = \cb{(i,j) \in \cb{1,\dots, K}^2: j = i-1}$. Therefore, in this setting, 
	\begin{equation}
 \tilde{Q} =  \begin{bmatrix}
			0 &0 & \dots & & \\
			\mu & 0 &  & &\\
			& \mu & 0 &  &\\
			& & &\dots &\\
			& & & 	\mu & 0
		\end{bmatrix}.
	\end{equation}

We start with the expectation. Note that $\tilde{Q} \onev = (0, \mu, \dots, \mu)^\top$. Therefore,
\begin{equation}
    \piv^\top \tilde{Q} \onev = \mu \p{\sum_{i = 1}^K \pi_i} = \mu(1 - \pi_0).  
\end{equation}
For the variance, we note that since $Q^\# \onev = \zerov$, 
\begin{equation}
Q^\#\tilde{Q} \onev = \mu Q^\#\begin{bmatrix}
	0\\
	1\\
	\dots\\
	1
\end{bmatrix}
= \mu Q^\# \onev -   \mu Q^\#\begin{bmatrix}
	1\\
	0\\
	\dots\\
	0
\end{bmatrix}
= -\mu\begin{bmatrix}
	Q_{0,0}^\#\\
	Q_{1,0}^\#\\
	\dots\\
	Q_{K,0}^\#
\end{bmatrix}.
\end{equation}
Note also that $\piv^\top\tilde{Q} = \mu (\pi_1, \dots, \pi_K, 0)$. Therefore, $\piv^\top \tilde{Q}  Q^\#\tilde{Q} \onev = -\mu^2\sum_{i = 0}^{K-1} \pi_{i+1}Q_{i,0}^\#$. Hence,
\begin{equation}
\piv^\top\tilde{Q} \onev - 2 \piv^\top \tilde{Q}  Q^\#\tilde{Q} \onev = (1 - \pi_0)\mu + 2 \mu^2\sum_{i = 0}^{K-1} \pi_{i+1}Q_{i,0}^\#.
\end{equation}

\end{proof}

\begin{lemm}[Uniform integrability]
\label{lemma:uniform_intergral}
Set the price at $p$ throughout the experiment. Let $N_{\arr}(T, p)$ be the total number of arrivals in time $[0,T]$. Under Assumptions \ref{assu:queue_length}-\ref{assu:bound_smooth}, there is a constant $C(K, p) > 0$ such that
\begin{equation}
	\EE{\p{N_{\arr}(T, p) - \mu(1-\pi_0(p)) T}^4} \leq  C(K, p) T^2.
\end{equation}
\end{lemm}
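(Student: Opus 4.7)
The plan is to decompose $N_{\arr}(T,p) - \mu(1-\pi_0(p))T$ into two pure-jump martingales plus a bounded boundary term, and then bound the fourth moment of each martingale piece via the Burkholder--Davis--Gundy (BDG) inequality for cadlag martingales with bounded jumps. Throughout, write $\bar{\lambda} := \mu(1-\pi_0(p))$, let $X_t$ denote the queue length at time $t$, and recall from Assumption~\ref{assu:bound_smooth} that $\lambda_k(p) \le B_1$ for all $k$.

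First I would introduce the compensated arrival counting martingale
$$M_t^{(1)} := N_{\arr}(t,p) - \int_0^t \lambda_{X_s}(p)\,ds.$$
Its jumps are of size $1$ and its predictable quadratic variation is $\langle M^{(1)}\rangle_T = \int_0^t \lambda_{X_s}(p)\,ds \le B_1 T$; hence BDG for pure-jump martingales yields $\EE{(M_T^{(1)})^4} \le C(K,p)(T + T^2)$. To convert the remaining drift into a martingale, I would solve the Poisson equation $(Q(p) g)(k) = \lambda_k(p) - \bar{\lambda}$ for the generator $Q(p)$. Since $\lambda(p) - \bar{\lambda}\mathbf{1}$ is centered under $\pi(p)$ and the chain is irreducible on the finite set $\{0,\dots,K\}$, the equation admits a bounded solution $g := -Q^{\#}(p)(\lambda(p) - \bar{\lambda}\mathbf{1})$ with $\|g\|_\infty \le C(K,p)$, where $Q^{\#}(p)$ is the group inverse introduced in Section~\ref{subsection:notation}.

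Applying Dynkin's formula to $g(X_t)$ then produces a second pure-jump martingale $M_T^{(2)} := g(X_T) - g(X_0) - \int_0^T (Qg)(X_s)\,ds$, whose jumps are bounded by $2\|g\|_\infty$ and whose predictable quadratic variation $\int_0^T \sum_{j} Q_{X_s,j}(p)\bigl(g(j)-g(X_s)\bigr)^2\,ds$ is bounded by $C(K,p)\,T$. Rearranging yields the decomposition
$$N_{\arr}(T,p) - \bar{\lambda} T \;=\; M_T^{(1)} \;-\; M_T^{(2)} \;+\; \bigl(g(X_T) - g(X_0)\bigr),$$
so from $(a+b+c)^4 \le 27(a^4+b^4+c^4)$ and the uniform boundedness of the last term,
$$\EE{(N_{\arr}(T,p)-\bar{\lambda} T)^4} \le 27\,\EE{(M_T^{(1)})^4} + 27\,\EE{(M_T^{(2)})^4} + C(K,p).$$
A second application of BDG to $M^{(2)}$ gives $\EE{(M_T^{(2)})^4} \le C(K,p)T^2$ for $T \ge 1$, and the stated bound follows (the regime $T < 1$ being trivial after enlarging $C(K,p)$, since $N_{\arr}(T,p)$ is stochastically dominated by a Poisson$(\alpha_0 T)$ variable).

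The main technical ingredient is the BDG-type inequality for cadlag martingales with bounded jumps, namely $\EE{M_T^{4}} \le c_1 \EE{\langle M\rangle_T^2} + c_2 \kappa^2 \EE{\langle M\rangle_T}$ when the jumps are bounded by $\kappa$. This is a standard result in stochastic calculus; once invoked, the remaining steps---solving the Poisson equation on a finite state space, applying Dynkin's formula, and bounding the two predictable quadratic variations---are routine given the finite state space and the uniform bounds on $\lambda_k(p)$ supplied by Assumption~\ref{assu:bound_smooth}. No delicate regenerative or uniformization argument is needed because the constant is allowed to depend on both $K$ and $p$.
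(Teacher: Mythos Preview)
Your argument is correct and takes a genuinely different route from the paper's proof. The paper proceeds by uniformization: it replaces the continuous-time chain by a discrete-time chain with transition matrix $P = I + Q/\alpha$ subordinated to a Poisson clock, expands $N_{\mathcal J}^r$ for $r=1,\dots,4$ into sums of products of indicator variables, and evaluates each sum via the decomposition $P^{i-1}=\mathbf 1\pi^\top + R_{i-1}$ with $\|R_{i-1}\|_{op}$ decaying geometrically (Lemmas \ref{lemma:operator_norm} and \ref{lemma:sum_of_power}). After a lengthy bookkeeping of the resulting $O(m^4)$ and $O(m^3)$ terms, all of them cancel in the fourth \emph{central} moment, leaving an $O(m^2)$ remainder. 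Your approach bypasses this explicit cancellation entirely: the Poisson equation $Qg=\lambda-\bar\lambda\mathbf 1$ converts the drift into a bounded boundary term plus a second martingale, and BDG handles both martingales in one stroke. This is shorter and more robust; the paper's computation, on the other hand, actually identifies the leading constants in each of the first four raw moments, which is more than the lemma asks for but is in the spirit of the surrounding moment calculations (Lemma \ref{lemma:num_eligible_jumps}).

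Two small points. First, with the paper's convention $QQ^{\#}=I-\mathbf 1\pi^\top$ (Lemma \ref{lemma:suff_conds_Q_inv}), the solution of $Qg=\lambda-\bar\lambda\mathbf 1$ is $g=Q^{\#}(\lambda-\bar\lambda\mathbf 1)$, not $-Q^{\#}(\lambda-\bar\lambda\mathbf 1)$; this only flips the sign in front of $M_T^{(2)}$ in your decomposition and is immaterial for the fourth-moment bound. Second, the inequality $\EE{M_T^4}\le c_1\EE{\langle M\rangle_T^2}+c_2\kappa^2\EE{\langle M\rangle_T}$ you invoke is indeed standard, but since the paper does not state it, you may want to record the one-line derivation: $[M]_T-\langle M\rangle_T$ is a martingale with $[\,[M]-\langle M\rangle\,]_T=\sum_s(\Delta M_s)^4\le\kappa^2[M]_T$, whence $\EE{[M]_T^2}\le 2\EE{\langle M\rangle_T^2}+2\kappa^2\EE{\langle M\rangle_T}$, and then apply BDG in the form $\EE{M_T^4}\le C\EE{[M]_T^2}$.
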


\begin{proof}
We prove the lemma in a more general case. In particular, we will show that under the conditions of Lemma \ref{lemma:num_eligible_jumps}, there is a constant $C(K, p_0, \zeta_0, \mu)$ such that for any $p \in [p_0 - \zeta_0, p_0 + \zeta_0]$, 
\begin{equation}
\label{eqn:uniform_intergral_new_goal}
\EE{\p{N_{\mathcal{J}}(T, p) -  \piv(p)^\top \tilde{Q}(p) \onev T}^4} \leq  C(K, p_0, \zeta_0, \mu) T^2.
\end{equation}
Recall that in Lemma \ref{lemma:num_eligible_jumps}, $\mathcal{J}$ is a set of eligible jumps, i.e., $\mathcal{J}$ is a set of $(i,j)$ pairs such that $i,j \in \cb{0, \dots, K}$, and $N_{\mathcal{J}}(T, p)$ is the number of eligible jumps in time $[0,T]$. We require that $(k,k) \notin \mathcal{J}$. We have also defined $\tilde{Q}$ to be a matrix such that $\tilde{Q}_{i,j} = Q_{i,j}$ if $(i,j) \in \mathcal{J}$ and otherwise $\tilde{Q}_{i,j} = 0$. In the context of Lemma \ref{lemma:uniform_intergral}, we simply take $\mathcal{J} = \cb{(i,j) \in \cb{1,\dots, K}^2: j = i+1}$.

From this point on, we will work under the conditions of Lemma \ref{lemma:num_eligible_jumps} and aim at proving \eqref{eqn:uniform_intergral_new_goal}. 

Similar to the proof of Lemma \ref{lemma:num_eligible_jumps}, we make some notation simplification. We write $\alpha = \alpha_0(p_0, \zeta_0) $, where $\alpha_0$ is defined in \eqref{eqn:alpha_0}. Throughout this proof, we omit ``$(p)$" and $``(\alpha)"$ for simplicity. We use constants $C_1, C_2, \dots$ that could depend on $K, \mu, p_0, \zeta_0$ but not on $p$. 
	
Again, same as in the proof of Lemma \ref{lemma:num_eligible_jumps}, we start with a uniformization step. The continuous-time Markov chain we considered can be described by a discrete-time Markov chain with transition matrix $P = I + Q/\alpha$ where jumps occur according to a Poisson process with intensity $\alpha T$. Let $M(T)$ be the total number of jumps of the discrete-time Markov chain. Note that among the jumps, there are some self jumps that go from state $k$ to state $k$. We are interested in the eligible jumps in $\mathcal{J}$. 
	
Note that $M(T) \sim \operatorname{Poisson}(\alpha T)$. We will then study the distribution of $N_{\mathcal{J}}(T)$ conditioning on $M(T) = m$.  We define $\tilde{P} = \tilde{Q}/\alpha$. From now on, unless specified otherwise, we are conditioning on $M(T) = m$ and focus on the discrete-time Markov chain. 
We write $N_{\mathcal{J}}(T) = \sum_{i = 1}^{M(T)} X_i$, where $X_i = 1$ if the $i$-th jump in the discrete-time Markov chain is an eligible jump. We are interested in the first, second, third and fourth moments of $N_{\mathcal{J}}(T)$. Expanding the parentheses, we get that 
\begin{equation}
\begin{split}
N_{\mathcal{J}}(T) &= \sum_{i = 1}^{M(T)} X_i,\\
N_{\mathcal{J}}^2(T) &= \sum_{i = 1}^{M(T)} X_i + 2 \sum_{1\leq i < j\leq M(T)} X_i X_j,\\
N_{\mathcal{J}}^3(T) &= \sum_{i = 1}^{M(T)} X_i + 6 \sum_{1\leq i < j\leq M(T)} X_i X_j + 6 \sum_{1\leq i < j < k \leq M(T)} X_i X_j X_k,\\
N_{\mathcal{J}}^4(T) &= \sum_{i = 1}^{M(T)} X_i + 14 \sum_{1\leq i < j\leq M(T)} X_i X_j + 36 \sum_{1\leq i < j < k \leq M(T)} X_i X_j X_k\\
& \qquad\qquad\qquad\qquad\qquad\qquad\qquad\qquad + 24 \sum_{1\leq i < j < k <l \leq M(T)} X_i X_j X_k X_l. 
\end{split}
\end{equation}
This then implies that
\begin{equation}
\label{eqn:Nj_conditional_moments}
\begin{split}
\EE{N_{\mathcal{J}}(T) \mid M(T) = m} & = A_1,\\
\EE{N_{\mathcal{J}}^2(T) \mid M(T) = m}&= A_1 + 2A_2,\\
\EE{N_{\mathcal{J}}^3(T) \mid M(T) = m} &= A_1 + 6A_2 + 6A_3,\\
\EE{N_{\mathcal{J}}^4(T) \mid M(T) = m} &= A_1 + 14A_2 + 36A_3 + 24A_4,
\end{split}
\end{equation}
where 
\begin{equation}
\begin{split}
    A_1 &= \sum_{i = 1}^{m} \EE{X_i}, \qquad A_2 = \sum_{1\leq i < j\leq m} \EE{X_i X_j}, \\
    A_3 &= \sum_{1\leq i < j < k \leq m} \EE{X_i X_j X_k}, \qquad A_4 = \sum_{1\leq i < j < k<l \leq m} \EE{X_i X_j X_k X_l}.
\end{split}
\end{equation}

We will then study $A_1, A_2, \dots A_4$. We have showed in the proof of Lemma \ref{lemma:num_eligible_jumps} that $A_1 = \EE{N_{\mathcal{J}}(T) \mid M(T) = m} 
= m \piv^\top \tilde{P} \onev - \alpha \nuv^\top Q^{\#} \tilde{P} \onev + \errr$, where $\abs{\errr} \leq C_0 C^m$ for some constants $C_0 > 0$ and $C \in \p{0,1}$. 
For $A_2$, note that for $i < j$, $X_i X_j$ is the indicator that the $i$-th jump is eligible and the $j$-th jump is eligible. Therefore, $\EE{X_i X_j} = \nuv^\top P^{i-1}\tilde{P} P^{j-i-1}\tilde{P} \onev$. Hence, we can write
\begin{equation}
\label{eqn:A_2_formula}
    A_2 = \sum_{1\leq i < j\leq m} \EE{X_i X_j} = \sum_{1\leq i < j\leq m} \nuv^\top P^{i-1}\tilde{P} P^{j-i-1}\tilde{P} \onev
    = \nuv^\top \sum_{i+j \leq m}  P^{i-1}\tilde{P} P^{j-1}\tilde{P} \onev,
\end{equation}
where the last equality follows from a change of variable. Similarly, we find 
\begin{equation}
\label{eqn:A_3_formula}
\begin{split}
    A_3 =& \sum_{1\leq i < j<k\leq m} \EE{X_i X_j X_k} = \sum_{1\leq i < j<k\leq m} \nuv^\top P^{i-1}\tilde{P} P^{j-i-1}\tilde{P} P^{k-j-1}\tilde{P} \onev\\
    &=  \nuv^\top \sum_{i+j+k \leq m} P^{i-1}\tilde{P} P^{j-1}\tilde{P} P^{k-1}\tilde{P}\onev,
\end{split}
\end{equation}
and
\begin{equation}
\label{eqn:A_4_formula}
\begin{split}
    A_4 =& \sum_{1\leq i < j<k<l\leq m} \EE{X_i X_j X_k X_l}\\
    &= \sum_{1\leq i < j<k <l\leq m} \nuv^\top P^{i-1}\tilde{P} P^{j-i-1}\tilde{P} P^{k-j-1}\tilde{P} P^{l-k-1}\tilde{P}\onev\\
    &=  \nuv^\top \sum_{i+j+k+l \leq m} P^{i-1}\tilde{P} P^{j-1}\tilde{P} P^{k-1}\tilde{P}P^{l-1}\tilde{P}\onev.
\end{split}
\end{equation}

Now we focus on $A_3$. The results for $A_2$ and $A_4$ are very similar. The key term in $A_3$ is $\sum_{i+j+k \leq m} P^{i-1}\tilde{P} P^{j-1}\tilde{P} P^{k-1}\tilde{P}$, which is a sum of the product of terms like $P^{i-1}\tilde{P}$. Let $R_i = P^i - \onev \piv^\top$. Then we can decompose each $P^{i-1}$ into two terms $R_{i-1}$ and $\onev \piv^\top$, and thus
\begin{equation}
\label{eqn:A3_decomposition}
\begin{split}
&P^{i-1}\tilde{P} P^{j-1}\tilde{P} P^{k-1}\tilde{P}\\
&\quad = \p{R_{i-1} + \onev \piv^\top}\tilde{P} \p{R_{j-1} + \onev \piv^\top} \tilde{P}\p{R_{k-1} + \onev \piv^\top}\tilde{P}\\
&\quad = \p{\onev \piv^\top \tilde{P}}^3 + R_{i-1}\tilde{P} \p{\onev \piv^\top \tilde{P}}^2 + \p{\onev \piv^\top \tilde{P}} R_{j-1}\tilde{P} \p{\onev \piv^\top \tilde{P}} + \p{\onev \piv^\top \tilde{P}}^2 R_{k-1}\tilde{P}\\
&\qquad + R_{i-1}\tilde{P}R_{j-1}\tilde{P}\p{\onev \piv^\top \tilde{P}}
+ R_{i-1}\tilde{P}\p{\onev \piv^\top \tilde{P}}R_{k-1}\tilde{P}
+ \p{\onev \piv^\top \tilde{P}} R_{j-1}\tilde{P}R_{k-1}\tilde{P}\\
&\qquad + R_{i-1}\tilde{P}R_{j-1}\tilde{P}R_{k-1}\tilde{P}. 
\end{split}
\end{equation}
By Lemma \ref{lemma:operator_norm}, we have that $\Norm{R_{i-1}}_{op} \leq C^{i-1}$ for some constant $C \in (0,1)$. Therefore, 
\begin{equation}
\begin{split}
 &\Norm{ \sum_{i+j+k \leq m} R_{i-1}\tilde{P}R_{j-1}\tilde{P}R_{k-1}\tilde{P}}_{op}\\
 &\qquad \leq  \sum_{i+j+k \leq m} \Norm{R_{i-1}}_{op}\Norm{\tilde{P}}_{op}\Norm{R_{j-1}}_{op}\Norm{\tilde{P}}_{op}\Norm{R_{k-1}}_{op}\Norm{\tilde{P}}_{op}\\
 &\qquad \leq \sum_{i+j+k \leq m} C^{i+j+k-3} \leq C_1,
 \end{split}
\end{equation}
for some constant $C_1 > 0$. 
Now we study terms involve two $R_{i-1}$. 
\begin{equation}
\begin{split}
 &\Norm{ \sum_{i+j+k \leq m} R_{i-1}\tilde{P}R_{j-1}\tilde{P}\p{\onev \piv^\top \tilde{P}}}_{op}\\
 &\qquad \leq  \sum_{i+j+k \leq m} \Norm{R_{i-1}}_{op}\Norm{\tilde{P}}_{op}\Norm{R_{j-1}}_{op}\Norm{\tilde{P}}_{op}\Norm{\onev \piv^\top}_{op}\Norm{\tilde{P}}_{op}\\
 &\qquad \leq \sum_{i+j+k \leq m} C^{i+j-2}
 \leq \sum_{i+j \leq m} C^{i+j-2} m \leq C_2 m,
 \end{split}
\end{equation}
for some constant $C_2 > 0$. The same inequalities hold for $R_{i-1}\tilde{P}\p{\onev \piv^\top \tilde{P}}R_{k-1}\tilde{P}$ and $\p{\onev \piv^\top \tilde{P}} R_{j-1}\tilde{P}R_{k-1}\tilde{P}$.

Then, we look at terms involve only one $R_{i-1}$. Note that
\begin{equation}
\label{eqn:one_R_result}
\begin{split}
    \sum_{i+j+k \leq m} R_{i-1}
    &= \sum_{i = 1}^{m-2} \binom{m-i}{2} R_{i-1}
    = \sum_{i=1}^{m-2} \frac{(m-i)(m-i-1)}{2}R_{i-1}\\
    &= \frac{m^2}{2}\sum_{i=1}^{m-2} R_{i-1} + \sum_{i = 1}^{m-2} \frac{-(2i+1)m - i(i+1)}{2} R_{i-1}. 
\end{split}
\end{equation}
For the first term, by Lemma~\ref{lemma:sum_of_power}, we have that $\sum_{i=1}^{m-2} R_{i-1} = -\alpha Q^\# + E$, where $\Norm{E}_{op} \leq C_1 C^m$ for some constant $C_1 > 0$, $C \in (0,1)$. For the second term, we note that by Lemma~\ref{lemma:operator_norm}, $\Norm{R_{i-1}}_{op} \leq C^{i-1}$ for some constant $C \in (0,1)$. Therefore,
\begin{equation}
    \Norm{\sum_{i = 1}^{m-2} \frac{-(2i+1)m - i(i+1)}{2} R_{i-1}}_{op}
    \leq \sum_{i = 1}^{m-2} \abs{\frac{-(2i+1)m - i(i+1)}{2}} C^{i-1}
    \leq C_1 m,
\end{equation}
for some constant $C_1 > 0$. Combining the results and plugging back into \eqref{eqn:one_R_result}, we have that 
\begin{equation}
     \sum_{i+j+k \leq m} R_{i-1} = -\frac{m^2}{2}\alpha Q^\# + E_1,
\end{equation}
where $\Norm{E_1}_{op} \leq C_1 m$ for some constant $C_1 > 0$. By symmetry, the same result holds for $\sum_{i+j+k \leq m} R_{j-1}$ and $\sum_{i+j+k \leq m} R_{k-1}$. 

We have analyzed all terms in \eqref{eqn:A3_decomposition}. Combining the results, we have 
\begin{equation}
\begin{split}
&\sum_{i+j+k \leq m} P^{i-1}\tilde{P} P^{j-1}\tilde{P} P^{k-1}\tilde{P}\\
&\qquad= \sum_{i+j+k \leq m}\p{\onev \piv^\top \tilde{P}}^3 + E_3 \\
& \qquad\qquad - \frac{\alpha m^2}{2} \sqb{ Q^\#\tilde{P} \p{\onev \piv^\top \tilde{P}}^2 + 
 \p{\onev \piv^\top \tilde{P}} Q^\#\tilde{P} \p{\onev \piv^\top \tilde{P}}
+ \p{\onev \piv^\top \tilde{P}}^2 Q^\#\tilde{P}}\\
&\qquad= \binom{m}{3}\p{\onev \piv^\top \tilde{P}}^3 + E_3 \\
&\qquad \qquad - \frac{\alpha m^2}{2} \sqb{ Q^\#\tilde{P} \p{\onev \piv^\top \tilde{P}}^2 + 
 \p{\onev \piv^\top \tilde{P}} Q^\#\tilde{P} \p{\onev \piv^\top \tilde{P}}
+ \p{\onev \piv^\top \tilde{P}}^2 Q^\#\tilde{P}}
\end{split}
\end{equation}
where $\Norm{E_3}_{op} \leq C_4 m$ for some constant $C_4 > 0$. 

We can conduct the same analysis for $A_2$ and $A_4$ and get
\begin{equation}
\begin{split}
&  \sum_{i+j \leq m} P^{i-1}\tilde{P} P^{j-1}\tilde{P} \\
&\qquad = \binom{m}{2} \p{\onev \piv^\top \tilde{P}}^2  - \alpha m \sqb{ Q^\#\tilde{P} \p{\onev \piv^\top \tilde{P}} 
+ \p{\onev \piv^\top \tilde{P}} Q^\#\tilde{P}} + E_2 ,
\end{split}
\end{equation}
where $\Norm{E}_{op} \leq C_2$ for some constant $C_2 > 0$, and
\begin{equation}
\begin{split}
&\sum_{i+j+k+l \leq m} P^{i-1}\tilde{P} P^{j-1}\tilde{P} P^{k-1}\tilde{P} P^{l-1}\tilde{P}\\
&\qquad = \binom{m}{4} \p{\onev \piv^\top \tilde{P}}^4 
 - \frac{\alpha m^3}{6} \Big[ Q^\#\tilde{P} \p{\onev \piv^\top \tilde{P}}^3
+ \p{\onev \piv^\top \tilde{P}} Q^\#\tilde{P} \p{\onev \piv^\top \tilde{P}}^2\\
&\qquad \qquad \qquad \qquad+ \p{\onev \piv^\top \tilde{P}}^2 Q^\#\tilde{P}\p{\onev \piv^\top \tilde{P}}
+ \p{\onev \piv^\top \tilde{P}}^3Q^\#\tilde{P} \Big] + E_4,
\end{split}
\end{equation}
where $\Norm{E_4}_{op} \leq C_4 m^2$ for some constant $C_1 > 0$. 
Here, details of the derivation have been omitted for brevity. 

Therefore, by \eqref{eqn:A_2_formula} - \eqref{eqn:A_4_formula}, we have the following formulae for $A_1, \dots, A_4$. 
\begin{equation}
\begin{split}
A_1 & = m \p{\piv^\top \tilde{P} \onev} - \alpha \nuv^\top Q^\# \tilde{P} \onev  + \errr_1,\\
A_2 & = \frac{m^2}{2} \p{\piv^\top \tilde{P} \onev}^2 - m \Bigg[\alpha \p{\piv^\top \tilde{P} Q^\# \tilde{P} \onev}\\
& \qquad \qquad  \qquad \qquad \qquad + \alpha \p{\nuv^\top Q^\# \tilde{P} \onev}\p{\piv^\top \tilde{P} \onev} + \frac{1}{2} \p{ \piv^\top \tilde{P}}^2 \Bigg] + \errr_2,\\
A_3 & = \frac{m^3}{6} \p{\piv^\top \tilde{P} \onev}^3 - m^2 \Bigg[\alpha \p{\piv^\top \tilde{P} \onev}\p{\piv^\top \tilde{P} Q^\# \tilde{P} \onev} \\
 &\qquad \qquad  \qquad \qquad \qquad + \frac{1}{2}\alpha \p{\nuv^\top Q^\# \tilde{P} \onev}\p{\piv^\top \tilde{P} \onev}^2 + \frac{1}{2} \p{ \piv^\top \tilde{P}}^3\Bigg] + \errr_3,\\
 A_4 & = \frac{m^4}{24} \p{\piv^\top \tilde{P} \onev}^4 - m^3 \Bigg[\frac{1}{2}\alpha \p{\piv^\top \tilde{P} \onev}^2\p{\piv^\top \tilde{P} Q^\# \tilde{P} \onev} \\
 &\qquad \qquad  \qquad \qquad \qquad + \frac{1}{6}\alpha \p{\nuv^\top Q^\# \tilde{P} \onev}\p{\piv^\top \tilde{P} \onev}^3 + \frac{1}{4} \p{ \piv^\top \tilde{P}}^3\Bigg] + \errr_4,
\end{split}
\end{equation}
where $\abs{\errr_1} \leq C_1 C^m$, $\abs{\errr_2} \leq C_2$, $\abs{\errr_3} \leq C_3m$, $\abs{\errr_4} \leq C_4m^2$, for some constants $C \in (0,1), C_1,C_2, C_3, C_4 > 0$. To simplify notation, let
\begin{equation}
\gamma =  \p{\piv^\top \tilde{P} \onev}, \qquad 
\theta = \p{\piv^\top \tilde{P} Q^\# \tilde{P} \onev}, \qquad 
\delta = \p{\nuv^\top Q^\# \tilde{P} \onev}. 
\end{equation}
Then we can express $A_1, \dots A_4$ as
\begin{equation}
\begin{split}
A_1 &= m\gamma - \alpha\delta + \errr_1,\\
A_2 &= \frac{m^2}{2} \gamma^2 - m\p{\alpha \theta + \alpha \gamma \delta + \frac{1}{2}\gamma^2} + \errr_2,\\
A_3 &= \frac{m^3}{6} \gamma^3  - m^2 \p{\alpha \gamma\theta + \frac{1}{2}\alpha \gamma^2 \delta + \frac{\gamma^3}{2}} + \errr_3,\\
A_4 &= \frac{m^4}{24} \gamma^4  - m^3 \p{\frac{1}{2}\alpha \gamma^2 \theta + \frac{1}{6}\alpha \gamma^3 \delta + \frac{\gamma^4}{4}} + \errr_4,\\
\end{split}
\end{equation}
where $\abs{\errr_1} \leq C_1 C^m$, $\abs{\errr_2} \leq C_2$, $\abs{\errr_3} \leq C_3m$, $\abs{\errr_4} \leq C_4m^2$, for some constants $C \in (0,1), C_1,C_2, C_3, C_4 > 0$.

Plugging the above into \eqref{eqn:Nj_conditional_moments}, we have that
\begin{equation}
\begin{split}
\EE{N_{\mathcal{J}} \mid M(T) = m} &= m\gamma - \alpha\delta + \errr_1,\\
\EE{N_{\mathcal{J}}^2 \mid M(T) = m} &= m^2 \gamma^2 - m\p{2\alpha \theta + 2\alpha \gamma \delta + \gamma^2 - \gamma} + \errr_2,\\
\EE{N_{\mathcal{J}}^3 \mid M(T) = m} &= m^3 \gamma^3  - m^2 \p{6\alpha \gamma\theta + 3\alpha \gamma^2 \delta + 3\gamma^3 - 3\gamma^2} + \errr_3,\\
\EE{N_{\mathcal{J}}^4 \mid M(T) = m} &= m^4 \gamma^4  - m^3 \p{12\alpha \gamma^2 \theta + 4\alpha \gamma^3 \delta + 6\gamma^4 - 6\gamma^3} + \errr_4,\\
\end{split}
\end{equation}
where $\abs{\errr_1} \leq C_1 C^m$, $\abs{\errr_2} \leq C_2$, $\abs{\errr_3} \leq C_3m$, $\abs{\errr_4} \leq C_4m^2$, for some constants $C \in (0,1), C_1,C_2, C_3, C_4 > 0$. Then note that since $M(T) \sim \operatorname{Poisson}(\alpha T)$, we have that
\begin{equation}
\begin{split}
\EE{M(T)} &= \alpha T, \\
\EE{M(T)^2} &= \p{\alpha T}^2 + \alpha T,\\
\EE{M(T)^3} &=  \p{\alpha T}^3+ 3\p{\alpha T}^2 + \alpha T, \\
\EE{M(T)^3}& =  \p{\alpha T}^4 + 6\p{\alpha T}^3+ 7\p{\alpha T}^2 + \alpha T.
\end{split}
\end{equation}
Therefore,
\begin{equation}
\begin{split}
\EE{N_{\mathcal{J}} } &= \p{\alpha T}\gamma - \alpha\delta + \errr_1,\\
\EE{N_{\mathcal{J}}^2} &= \p{\alpha T}^2 \gamma^2 - \p{\alpha T}\p{2\alpha \theta + 2\alpha \gamma \delta  - \gamma} + \errr_2,\\
\EE{N_{\mathcal{J}}^3} &= \p{\alpha T}^3 \gamma^3  - \p{\alpha T}^2 \p{6\alpha \gamma\theta + 3\alpha \gamma^2 \delta  - 3\gamma^2} + \errr_3,\\
\EE{N_{\mathcal{J}}^4} &= \p{\alpha T}^4 \gamma^4  - \p{\alpha T}^3 \p{12\alpha \gamma^2 \theta + 4\alpha \gamma^3 \delta  - 6\gamma^3} + \errr_4,\\
\end{split}
\end{equation}
where $\abs{\errr_1} \leq C_1 C^T$, $\abs{\errr_2} \leq C_2$, $\abs{\errr_3} \leq C_3 T$, $\abs{\errr_4} \leq C_4 T^2$, for some constants $C \in (0,1), C_1,C_2, C_3, C_4 > 0$. 
Finally, we note that
\begin{equation}
\begin{split}
\EE{\p{N_{\mathcal{J}} - \EE{N_{\mathcal{J}}}}^4}
&= \EE{N_{\mathcal{J}}^4} - 4 \EE{N_{\mathcal{J}}^3} \EE{N_{\mathcal{J}}} + 6\EE{N_{\mathcal{J}}^2}\p{\EE{N_{\mathcal{J}}}}^2 - 3\EE{N_{\mathcal{J}}}^4\\
& = \Cancel[blue]{\alpha^4 T^4 \gamma^4} + \alpha^3 T^3\p{-\Cancel[red]{12\alpha \gamma^2 \theta} - \Cancel[cyan]{4 \alpha \gamma^3 \delta} + \Cancel[green]{6 \gamma^3}}\\
& \qquad  -\Cancel[blue]{ 4 \alpha^4T^4 \gamma^4} + \alpha^3 T^3\p{\Cancel[cyan]{4 \alpha \gamma^3 \delta} + \Cancel[red]{24\alpha \gamma^2 \theta} + \Cancel[cyan]{12 \alpha \gamma^3 \delta} - \Cancel[green]{12\gamma^3}}\\
& \qquad  + \Cancel[blue]{6\alpha^4 T^4\gamma^4} + \alpha^3 T^3\p{-\Cancel[cyan]{12\alpha \gamma^3 \delta} - \Cancel[red]{12\alpha \gamma^2 \theta} - \Cancel[cyan]{12\alpha \gamma^3 \delta} + \Cancel[green]{6\gamma^3}}\\
& \qquad  - \Cancel[blue]{3 \alpha^4T^3 \gamma^4} + \alpha^3T^3(\Cancel[cyan]{12\alpha \gamma^3 \delta}) + \errr\\
& = \errr,
\end{split}
\end{equation}
where $\abs{\errr} \leq C T^2$ for some constant $C > 0$. 
\end{proof}

\begin{lemm}[Triangular array version of Anscombe’s theorem]
\label{lemma:Lyap_anscombe}
Let $X_1(\zeta), X_2(\zeta), \dots$ be a sequence of i.i.d. function of $\zeta$. Let $\mu(\zeta) = \EE{X_1(\zeta)}$, $\sigma^2(\zeta) = \Var{X_1(\zeta)}$, and $m(\zeta) = \EE{\p{X_1(\zeta) - \mu(\zeta)}^4}$. Assume that $\mu$, $\sigma^2$ and $m$ are continuous functions of $\zeta$ and that $m(\zeta) \leq C^2$ uniformly over $\zeta$ for some constant $C > 0$. Let $N_n$ be a sequence of random variables such that $N_n/n \stackrel{p}{\to} 1$. Let $\zeta_n$ be a sequence of numbers such that $\zeta_n \to 0$. Let
\begin{equation}
    S_n(\zeta) = X_1(\zeta) + X_2(\zeta) + \dots + X_n(\zeta). 
\end{equation}
Then as $n \to \infty$,
\begin{equation}
    \sqrt{N_n}\p{\frac{S_{N_n}(\zeta_n)}{N_n} - \mu(\zeta_n)}  \Rightarrow \mathcal{N}(0, \sigma^2(0)).
\end{equation}

\begin{proof}
For notation simplicity, let $Y_i(\zeta) = X_i(\zeta) - \mu(\zeta)$ and 
    $\tilde{S}_n(\zeta) = Y_1(\zeta) + Y_2(\zeta) + \dots + Y_n(\zeta)$. 
We start with showing convergence of $S_{n}(\zeta_n)$. 
This can be shown easily with the Lindeberg-Feller theorem \citep{durrett2019probability}. In particular, we have that  $\sum_{i = 1}^n \EE{Y_i(\zeta_n)^2}/n = \sigma^2(\zeta_n) \to \sigma^2(0)$, and that
\begin{equation}
    \sum_{i = 1}^n \EE{\frac{Y_i(\zeta_n)^2}{n}; \frac{Y_i(\zeta_n)^2}{n} > \epsilon}
    \leq \sum_{i = 1}^n \EE{Y_i(\zeta_n)^4}/n^2 \to 0.
\end{equation}
Therefore, the Linderberg conditions are satisfied and thus 
\begin{equation}
    \sqrt{n}\p{\frac{S_{n}(\zeta_n)}{n} - \mu(\zeta_n)}  \Rightarrow \mathcal{N}(0, \sigma^2(0)).
\end{equation}

We will then show that $S_{n}(\zeta_n) - n\mu(\zeta_n)$ is close to $S_{N_n}(\zeta_{N_n}) - N_n\mu(\zeta_n)$, i.e., $\tilde{S}_{n}(\zeta_n)$ is close to $\tilde{S}_{N_n}(\zeta_n)$. For any $\epsilon, \delta > 0$, by Kolmogorov's inequality \citep{durrett2019probability},
\begin{equation}
\begin{split}
    \PP{\abs{\tilde{S}_{N_n}(\zeta_n) - \tilde{S}_{n}(\zeta_n) } > \epsilon \sqrt{n}}
    &\leq \PP{\abs{N_n - n} > \delta n} + 2\PP{\max_{ 1\leq m \leq \lceil \delta n \rceil}\abs{\tilde{S}_{m}(\zeta_n)} \geq \epsilon\sqrt{n}}\\
    & \leq \PP{\abs{N_n - n} > \delta n} + 2\Var{\tilde{S}_{\lceil \delta n \rceil }(\zeta_n)}/(n\epsilon^2)\\
    & \leq \PP{\abs{N_n - n} > \delta n} + 2C \lceil \delta n\rceil/(n\epsilon^2). 
\end{split}
\end{equation}
Since $N_n/n \stackrel{p}{\to} 0$, we have that 
\begin{equation}
    \limsup_{n \to \infty} \PP{\abs{\tilde{S}_{N_n}(\zeta_n) - \tilde{S}_{n}(\zeta_n) } > \epsilon \sqrt{n}} \leq \limsup_{n \to \infty} \PP{\abs{N_n - n} > 2 \delta n} + \frac{2C\delta}{\epsilon}=  \frac{2C\delta}{\epsilon}. 
\end{equation}
Finally, since the choice of $\delta$ is arbitrary, we have that
\begin{equation}
\limsup_{n \to \infty} \PP{\abs{\tilde{S}_{N_n}(\zeta_n) - \tilde{S}_{n}(\zeta_n) } > \epsilon \sqrt{n}} = 0
\end{equation}
and thus $\tilde{S}_{N_n}(\zeta_n) = \tilde{S}_{n}(\zeta_n) + o_p(\sqrt{n})$. Therefore,
\begin{equation}
    \sqrt{N_n}\p{\frac{S_{N_n}(\zeta_n)}{N_n} - \mu(\zeta_n)}  \Rightarrow \mathcal{N}(0, \sigma^2(0)).
\end{equation}
    
\end{proof}

\end{lemm}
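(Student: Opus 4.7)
The plan is to decompose the claim into two pieces that are each standard in isolation but need to be combined carefully because the summation index is random \emph{and} the parameter $\zeta_n$ moves with $n$: (i) a triangular-array CLT for the deterministic-index sum $S_n(\zeta_n)$, and (ii) an Anscombe-style argument that replaces $n$ by $N_n$. The uniform fourth-moment bound $m(\zeta)\le C^2$ is the workhorse that lets both steps go through uniformly in $\zeta_n$, and in particular lets me pass from continuity at $0$ to bounds valid along the sequence.

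For step (i), I would set $Y_i(\zeta) = X_i(\zeta) - \mu(\zeta)$ and $\tilde S_n(\zeta) = \sum_{i\le n} Y_i(\zeta)$, and then apply Lindeberg--Feller to the triangular array $\{Y_i(\zeta_n)/\sqrt n : 1 \le i \le n\}$. The row variances sum to $\sigma^2(\zeta_n)$, which tends to $\sigma^2(0)$ by continuity of $\sigma^2$. For the Lindeberg condition, Markov's inequality together with the uniform fourth-moment bound yields
\[
\sum_{i=1}^n \EE{\frac{Y_i(\zeta_n)^2}{n};\; \frac{Y_i(\zeta_n)^2}{n} > \epsilon}
\;\le\; \sum_{i=1}^n \frac{\EE{Y_i(\zeta_n)^4}}{n^2\epsilon}
\;\le\; \frac{C^2}{n\epsilon} \;\to\; 0,
\]
which gives $\tilde S_n(\zeta_n)/\sqrt n \Rightarrow \mathcal N(0, \sigma^2(0))$.

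For step (ii), I would show $|\tilde S_{N_n}(\zeta_n) - \tilde S_n(\zeta_n)| = o_p(\sqrt n)$. For any $\delta,\epsilon > 0$, condition on $\{|N_n - n| \le \delta n\}$; its complement has vanishing probability since $N_n/n \stackrel{p}{\to} 1$. On the good event, the increment is dominated by the forward and backward maxima of $|\tilde S_{n+j}(\zeta_n) - \tilde S_n(\zeta_n)|$ for $0 \le j \le \lceil\delta n\rceil$. After a stationarity shift, Kolmogorov's maximal inequality bounds the probability that either maximum exceeds $\epsilon\sqrt n$ by $\mathrm{Var}(\tilde S_{\lceil\delta n\rceil}(\zeta_n))/(n\epsilon^2) \le C\delta/\epsilon^2$, using Jensen's bound $\sigma^2(\zeta_n) \le \sqrt{m(\zeta_n)} \le C$. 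Letting $\delta \to 0$ closes this step. Combining (i) and (ii), and then replacing the normalizer $\sqrt n$ by $\sqrt{N_n}$ via $\sqrt{n/N_n} \stackrel{p}{\to} 1$ and Slutsky, yields the stated limit.

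The main technical point, and the reason the hypotheses are stated the way they are, is that the Anscombe bound must be uniform in $\zeta_n$: pointwise continuity of $\sigma^2$ at $0$ alone would not give the $O(\delta)$ Kolmogorov estimate needed inside the random-index comparison. The uniform fourth-moment hypothesis --- which propagates to a uniform variance bound via Jensen --- is exactly the right strengthening. Once that is in place, everything else is a routine adaptation of the classical Anscombe argument to triangular arrays, so I do not expect further obstacles.
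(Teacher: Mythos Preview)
Your proposal is correct and follows essentially the same two-step scheme as the paper: Lindeberg--Feller on the triangular array $\{Y_i(\zeta_n)/\sqrt n\}$ for the deterministic-index CLT, then Kolmogorov's maximal inequality on $\{|N_n-n|\le \delta n\}$ to transfer to the random index. Your write-up is in fact slightly more explicit than the paper's in two places---you keep the $1/\epsilon$ factor in the Lindeberg truncation bound and spell out the Jensen step $\sigma^2(\zeta_n)\le \sqrt{m(\zeta_n)}\le C$ that underlies the Kolmogorov estimate---but the structure and key inequalities are the same.
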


\begin{lemm}
\label{lemma:suff_conds_Q_inv}
If a matrix $A$ satisfies the following two conditions:
\begin{enumerate}
\item $Q(p)A = I - \onev\piv(p)^\top$,
\item $\piv(p)^\top A = \zerov^\top$,
\end{enumerate}
then $A = Q^{\#}(p)$.
\end{lemm}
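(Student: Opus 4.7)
The plan is to proceed in two steps: first verify that $Q^{\#}(p)$ itself satisfies the two stated conditions, and then establish uniqueness by showing that any solution to the two conditions is determined up to an element of a trivial kernel.

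For existence, I would use the eigendecomposition from Section \ref{subsection:notation}: $Q(p) = U(p) D(p) U^{-1}(p)$ with $U(p) = (\onev, U_2(p))$ and $U^{-1}(p) = (\piv(p), U_2^{-1}(p))^\top$, and $Q^{\#}(p) = U(p) D^{\#}(p) U^{-1}(p)$ with $D^{\#}_{0,0}(p) = 0$ and $D^{\#}_{k,k}(p) = 1/D_{k,k}(p)$ for $k \geq 1$. A direct computation then shows $Q(p) Q^{\#}(p) = U(p) (I - \ev_0 \ev_0^\top) U^{-1}(p) = I - \onev \piv(p)^\top$, since $U(p) \ev_0 = \onev$ and $\ev_0^\top U^{-1}(p) = \piv(p)^\top$. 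Similarly, $\piv(p)^\top Q^{\#}(p) = \ev_0^\top D^{\#}(p) U^{-1}(p) = \zerov^\top$ because $D^{\#}_{0,0}(p) = 0$.

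For uniqueness, suppose $A_1$ and $A_2$ both satisfy the two conditions and let $B = A_1 - A_2$. Then $Q(p) B = 0$ and $\piv(p)^\top B = \zerov^\top$. Since the queueing Markov chain is irreducible, the null space of $Q(p)$ is one-dimensional and spanned by $\onev$; hence every column of $B$ is a scalar multiple of $\onev$, i.e.\ $B = \onev v^\top$ for some vector $v \in \mathbb{R}^{K+1}$. Applying the second constraint gives $\zerov^\top = \piv(p)^\top \onev v^\top = v^\top$, where I used $\piv(p)^\top \onev = 1$. Thus $B = 0$ and $A = Q^{\#}(p)$.

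The only real subtlety is the irreducibility argument used to conclude that $\ker Q(p)$ is one-dimensional, but this is immediate from the Perron--Frobenius argument already invoked after equation \eqref{eqn:eigen_decom_Q}, so no new obstacle arises.
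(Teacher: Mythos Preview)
Your proof is correct, but it follows a genuinely different route from the paper's. The paper works directly: it conjugates by $D_{\sqrt{\pi}}$ to pass to the symmetric matrix $\tilde{Q} = \tilde{U} D \tilde{U}^\top$, sets $\tilde{A} = D_{\sqrt{\pi}} A D_{\sqrt{\pi}}^{-1}$, and uses the two hypotheses to compute $\tilde{U}^\top \tilde{A} = D^{\#}\tilde{U}^\top$ row by row, whence $\tilde{A} = \tilde{U} D^{\#}\tilde{U}^\top$ and $A = Q^{\#}$. In other words, the paper \emph{derives} $A$ from the equations rather than guessing the answer and checking.

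Your existence-plus-uniqueness argument is arguably cleaner: the verification that $Q^{\#}$ satisfies the two conditions is a two-line computation with the eigendecomposition $Q = U D U^{-1}$, and the uniqueness step via $\ker Q = \operatorname{span}(\onev)$ is elementary linear algebra that avoids the symmetrization entirely. The trade-off is that your argument needs the candidate $Q^{\#}$ in hand from the start, whereas the paper's computation would in principle discover it. In the context of this paper either is fine, since $Q^{\#}$ is defined explicitly in \eqref{eqn:Q_group_inverse_def} and the lemma is only used to \emph{identify} a concrete matrix as $Q^{\#}$ in the proof of Proposition \ref{prop:Q_inverse}.
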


\begin{proof}
We will omit the dependence on $p$ throughout the proof. Recall that in Section~\ref{subsection:notation}, we define $\tilde{Q} = D_{\sqrt{\pi}} Q D_{\sqrt{\pi}}^{-1}$, which is a symmetric matrix with eigen-decomposition $\tilde{Q}  = \tilde{U} D \tilde{U}^\top$. Here $\tilde{U}$ is an orthogonal matrix. 
Let $\tilde{A} = D_{\sqrt{\pi}} A D_{\sqrt{\pi}}^{-1}$. We have that
\begin{equation}
\tilde{Q} \tilde{A} = D_{\sqrt{\pi}} Q D_{\sqrt{\pi}}^{-1} D_{\sqrt{\pi}} A D_{\sqrt{\pi}}^{-1}
= D_{\sqrt{\pi}} Q A D_{\sqrt{\pi}}^{-1}
= D_{\sqrt{\pi}}(I - \onev\piv^\top)D_{\sqrt{\pi}}^{-1}
= I - \tilde{u}_1 \tilde{u}_1^\top,
\end{equation}
where $\tilde{u}_1 = (\sqrt{p_0}, \dots, \sqrt{p_K})^\top$. We also note that $\tilde{u}_1$ corresponds to the first eigenvector of $\tilde{Q}$. Now we write $\tilde{U} = (\tilde{u}_1, \tilde{U}_2)$. 
Since $\tilde{Q}  = \tilde{U} D \tilde{U}^\top$, we have that
$\tilde{U} D \tilde{U}^\top \tilde{A} = I - \tilde{u}_1 \tilde{u}_1^\top$, which then implies that
\begin{equation}
D \tilde{U}^\top \tilde{A} = \tilde{U}^\top - \tilde{U}^\top\tilde{u}_1 \tilde{u}_1^\top
= \tilde{U}^\top - (1, 0, \dots, 0)^\top \tilde{u}_1 = 
\begin{bmatrix}
		\tilde{u}_1^\top \\
\tilde{U}_2^\top 
	\end{bmatrix}
-
\begin{bmatrix}
		\tilde{u}_1^\top \\
0
	\end{bmatrix}
= \begin{bmatrix}
		0 \\
\tilde{U}_2^\top 
	\end{bmatrix}. 
\end{equation}
Then since $D_{0,0} = 0$, we further have that $\tilde{U}_2^\top \tilde{A} = D_2^{-1} \tilde{U}_2$, where $D_2 = D_{1:K, 1:K}$. 

Now, Condition 2 implies that $\tilde{u}_1^{\top} \tilde{A} =  \onev^{\top} D_{\sqrt{\pi}} D_{\sqrt{\pi}} A D_{\sqrt{\pi}}^{-1} = \pi^\top A D_{\sqrt{\pi}}^{-1} = \zerov^\top$. Thus,
\begin{equation}
\tilde{U}^\top \tilde{A} = 
\begin{bmatrix}
		\tilde{u}_1^\top \\
\tilde{U}_2^\top 
	\end{bmatrix}
A
= \begin{bmatrix}
\zerov^\top\\
D_2^{-1} \tilde{U}^\top_2
\end{bmatrix}
= D^\# \tilde{U}^\top. 
\end{equation}
This further implies that
\begin{equation}
\tilde{A} = \tilde{U} D^\# \tilde{U}^\top,
\end{equation}
and hence
\begin{equation}
A = D_{\sqrt{\pi}}^{-1} A D_{\sqrt{\pi}} = D_{\sqrt{\pi}}^{-1} \tilde{U} D^\# \tilde{U}^\top D_{\sqrt{\pi}} = U D^\# U^{-1} = Q^\#. 
\end{equation}

\end{proof}

\begin{lemm}
\label{lemm:ULLN}
Consider the continuous-time Markov chain with transition rate matrix defined in \eqref{eqn:Q_matrix}. Let $N_k(t)$ be the number of jumps from state $k$ to state $k+1$ in the time interval $[0,t]$. Let $T_{k}(t)$ be the amount of time in state $k$ in the time interval $[0,t]$. Then,
\begin{equation}
\sup_{t \in [0,T]} \abs{N_{k}(t) - t \lambda_k \pi_k } = o_p(T). 
\end{equation}
\begin{equation}
\sup_{t \in [0,T]} \abs{T_{k}(t) -t \pi_k } = o_p(T). 
\end{equation}
\end{lemm}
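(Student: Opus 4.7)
The plan is to prove the two uniform bounds in sequence: first for the occupation time $T_k$, and then for the jump count $N_k$ via a martingale decomposition that reduces it to the statement for $T_k$ plus a small noise term. Throughout, the key structural facts are that (i) the chain is irreducible on a finite state space (by Assumption \ref{assu:queue_length}), so the ordinary ergodic theorem gives pointwise convergence $T_k(t)/t \to \pi_k$ and $N_k(t)/t \to \lambda_k \pi_k$, and (ii) both $t \mapsto T_k(t)$ and $t \mapsto N_k(t)$ are nondecreasing in $t$.

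For the first statement, the idea is a standard monotonicity-plus-grid argument. Rescale by writing $f_T(s) = T_k(sT)/T$ for $s \in [0,1]$, which is nondecreasing in $s$ with continuous limit $s \mapsto s\pi_k$. Fix $\varepsilon > 0$ and a partition $0 = s_0 < s_1 < \cdots < s_M = 1$ with mesh below $\varepsilon$. The pointwise ergodic theorem (applied to the indicator of state $k$) combined with the fact that $\pi_k > 0$ yields $f_T(s_j) \stackrel{p}{\to} s_j \pi_k$ for each $j$, and hence $\max_j |f_T(s_j) - s_j \pi_k| \leq \varepsilon$ with probability tending to $1$. For arbitrary $s \in [s_j, s_{j+1}]$, monotonicity of $f_T$ sandwiches $f_T(s)$ between $f_T(s_j)$ and $f_T(s_{j+1})$, so $|f_T(s) - s\pi_k| \leq \varepsilon(1 + \pi_k)$ uniformly in $s$. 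Letting $\varepsilon \downarrow 0$ gives $\sup_{t \leq T}|T_k(t) - t\pi_k|/T \stackrel{p}{\to} 0$, which is the desired statement.

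For the second statement, the plan is to exploit the compensated counting process. Because $N_k$ counts jumps from state $k$ to state $k+1$ at rate $\lambda_k$ while the chain sits in state $k$, the process
\begin{equation}
M_k(t) = N_k(t) - \lambda_k T_k(t)
\end{equation}
is a square-integrable martingale with predictable quadratic variation $\langle M_k\rangle_t = \lambda_k T_k(t) \leq \lambda_k t$. Doob's maximal inequality then gives $\mathbb{E}[\sup_{t \leq T} M_k(t)^2] \leq 4 \lambda_k T$, so $\sup_{t \leq T}|M_k(t)| = O_p(\sqrt{T}) = o_p(T)$. Writing
\begin{equation}
N_k(t) - t\lambda_k \pi_k = M_k(t) + \lambda_k\bigl(T_k(t) - t\pi_k\bigr),
\end{equation}
taking a supremum over $t \in [0,T]$, and combining with the first statement yields the desired bound.

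The conceptually delicate step is the passage from pointwise to uniform convergence for $T_k$; once that is in place, the $N_k$ statement follows from the martingale bookkeeping. The potential pitfall to watch for is that the grid argument only delivers convergence in probability (not almost surely) unless one strengthens the pointwise ergodic theorem, but the lemma only asks for $o_p(T)$, so in-probability convergence at each grid point combined with a union bound over the finite grid suffices.
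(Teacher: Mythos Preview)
Your proof is correct, but it takes a genuinely different route from the paper's. The paper's argument is a one-liner: it observes that the successive hitting times $\tau_1 < \tau_2 < \cdots$ of the transition $k \to k+1$ split the time axis into i.i.d.\ regeneration cycles, and then applies Kolmogorov's maximal inequality to the i.i.d.\ increments of $N_k$ and $T_k$ over these cycles to control the running supremum. Your approach instead separates the two statements: for $T_k$ you use the monotonicity-plus-grid sandwich (a P\'olya/Dini-type argument) to upgrade the pointwise ergodic theorem to a uniform one, and for $N_k$ you exploit the fact that $N_k(t) - \lambda_k T_k(t)$ is a square-integrable martingale with bounded predictable bracket, so Doob's $L^2$ inequality controls its supremum at rate $O_p(\sqrt{T})$, reducing everything to the $T_k$ statement. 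Both routes are standard; the regenerative argument is more symmetric between the two claims, whereas your martingale decomposition is slightly sharper (it gives $O_p(\sqrt{T})$ for the compensated part) and avoids ever writing down the cycle structure.
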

\begin{proof}
Let $\tau_i$ be the time of the $i$-th jump from state $k$ to state $k+1$. We can then divide the time interval into subintervals of $[0,\tau_1]$, $[\tau_1, \tau_2], \dots $. The Markov chain behaves independently within each subinterval. Therefore, we can apply Kolmogorov's inequality and get the desired result. 
\end{proof}

\begin{lemm}
\label{lemm:covariance_N_J}
Consider the continuous-time Markov chain with transition rate matrix defined in \eqref{eqn:Q_matrix}. For two sets $\mathcal{J}_1$ and $\mathcal{J}_2$, we have that
\begin{equation}
\EE{\tilde{N}_{\mathcal{J}_1}(m)} = m \piv^{\top} \tilde{P}_{\mathcal{J}_1} \onev + \oo(1),
\end{equation}
\begin{equation}
\EE{\tilde{T}_{\mathcal{J}_1}(m)} = \EE{\tilde{N}_{\mathcal{J}_1}(m)}/\alpha,
\end{equation}
\begin{equation}
\label{eqn:covariance_N_J}
\begin{split}
&\Cov{\tilde{N}_{\mathcal{J}_1}(m), \tilde{N}_{\mathcal{J}_2}(m)} =  m \bigg(\piv ^{\top} \tilde{P}_{\mathcal{J}_1\cap\mathcal{J}_2}\onev - (\piv ^{\top} \tilde{P}_{\mathcal{J}_1}\onev) (\piv ^{\top} \tilde{P}_{\mathcal{J}_2}\onev)\\
    &\qquad\qquad \qquad \qquad \qquad \qquad \qquad
    - \alpha \piv^{\top} \tilde{P}_{\mathcal{J}_1} Q^{\#} \tilde{P}_{\mathcal{J}_2} \onev
    - \alpha \piv^{\top} \tilde{P}_{\mathcal{J}_2} Q^{\#} \tilde{P}_{\mathcal{J}_1} \onev
    \bigg) + \oo(1),
    \end{split}
\end{equation}
\begin{equation}
    \Cov{\tilde{T}_{\mathcal{J}_1}(m), \tilde{T}_{\mathcal{J}_2}(m)} = \Cov{\tilde{N}_{\mathcal{J}_1}(m), \tilde{N}_{\mathcal{J}_2}(m)}/\alpha^2 + \EE{\tilde{N}_{\mJ_1\cap \mJ_2}(m)}/\alpha^2,
\end{equation}
\begin{equation}
    \Cov{\tilde{T}_{\mathcal{J}_1}(m), \tilde{N}_{\mathcal{J}_2}(m)} = \Cov{\tilde{N}_{\mathcal{J}_1}(m), \tilde{N}_{\mathcal{J}_2}(m)}/\alpha. 
\end{equation}
Here, $\tilde{N}_{\mathcal{J}}(m)$, $\tilde{T}_{\mathcal{J}}(m)$, and $\tilde{P}_{\mathcal{J}}$ are defined in Definition \ref{defi:N_J}. 
\end{lemm}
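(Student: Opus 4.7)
The plan is to adopt the uniformization construction from the proof of Lemma~\ref{lemma:num_eligible_jumps}: view the continuous-time chain as a discrete-time chain with transition matrix $P = I + Q/\alpha$ whose jumps occur at the arrivals of an independent Poisson process with rate $\alpha$. Writing $X_i^{(\mathcal{J})}$ for the indicator that the $i$-th discrete-time transition lies in $\mathcal{J}$ and $\tau_1,\tau_2,\ldots \stackrel{\mathrm{iid}}{\sim} \mathrm{Exp}(\alpha)$ for the inter-jump times (jointly independent of the chain trajectory), the natural interpretation is $\tilde{N}_{\mathcal{J}}(m) = \sum_{i=1}^m X_i^{(\mathcal{J})}$ and $\tilde{T}_{\mathcal{J}}(m) = \sum_{i=1}^m \tau_i X_i^{(\mathcal{J})}$. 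The first-moment claim $\EE{\tilde{N}_{\mathcal{J}_1}(m)} = m\,\piv^\top \tilde{P}_{\mathcal{J}_1}\onev + \oo(1)$ then follows at once by writing $\EE{\tilde{N}_{\mathcal{J}_1}(m)} = \nuv^\top\bigl(\sum_{i=0}^{m-1}P^i\bigr)\tilde{P}_{\mathcal{J}_1}\onev$ and invoking Lemma~\ref{lemma:sum_of_power}, and the identity $\EE{\tilde{T}_{\mathcal{J}_1}(m)} = \EE{\tilde{N}_{\mathcal{J}_1}(m)}/\alpha$ is immediate from $\EE{\tau_i}=1/\alpha$ and the independence of the $\tau_i$ from the $X_i^{(\mathcal{J})}$.

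For the covariance of the jump counts, I would expand $\EE{\tilde N_{\mathcal J_1}(m)\tilde N_{\mathcal J_2}(m)}$ into the diagonal sum $\sum_i \EE{X_i^{(\mathcal{J}_1)}X_i^{(\mathcal{J}_2)}} = \EE{\tilde N_{\mathcal J_1 \cap \mathcal J_2}(m)}$ and the two off-diagonal sums $\sum_{i<j}\EE{X_i^{(\mathcal J_a)}X_j^{(\mathcal J_b)}}$ which, after reindexing via $k = j-i$, each take the form $\nuv^\top\sum_{i,k\ge 1,\,i+k\le m} P^{i-1}\tilde{P}_{\mathcal{J}_a}P^{k-1}\tilde{P}_{\mathcal{J}_b}\onev$. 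Decomposing $P^j = \onev\piv^\top + R_j$ with $\Norm{R_j}_{op}\le C^j$ (Lemma~\ref{lemma:operator_norm}), in the spirit of the $A_2$ computation in the proof of Lemma~\ref{lemma:uniform_intergral}, the purely rank-one piece contributes $\binom{m}{2}(\piv^\top\tilde{P}_{\mathcal{J}_a}\onev)(\piv^\top\tilde{P}_{\mathcal{J}_b}\onev)$; each of the two mixed pieces yields a term linear in $m$ via the identity $\sum_{i=1}^{m-1}(m-i)R_{i-1} = -m\alpha Q^\# + \oo(1)$, which itself follows from Lemma~\ref{lemma:sum_of_power} together with the absolute convergence of $\sum_j j\Norm{R_j}_{op}$; and the double-$R$ piece is uniformly $\oo(1)$. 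Subtracting the analogous expansion of $\EE{\tilde N_{\mathcal J_1}(m)}\EE{\tilde N_{\mathcal J_2}(m)}$ cancels all $m^2$ contributions as well as the initial-distribution terms of the form $\nuv^\top Q^\#\tilde{P}_{\mathcal{J}_a}\onev$, leaving exactly the expression in~\eqref{eqn:covariance_N_J}.

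The three $\tilde{T}$ identities follow as short corollaries of the count-based formulas, using the joint independence of $(\tau_i)$ and $(X_i^{(\mathcal{J})})$. Expanding $\Cov{\tilde T_{\mathcal{J}_1}(m),\tilde T_{\mathcal{J}_2}(m)}$ as a double sum over pairs $(i,j)$ and separating diagonal from off-diagonal, every off-diagonal pair factors as $\EE{\tau_i}\EE{\tau_j}\Cov{X_i^{(\mathcal J_1)},X_j^{(\mathcal J_2)}} = \alpha^{-2}\Cov{X_i^{(\mathcal J_1)},X_j^{(\mathcal J_2)}}$, while the diagonal ($i=j$) contribution equals $\EE{\tau_i^2}\EE{X_i^{(\mathcal J_1\cap\mathcal J_2)}} - \EE{\tau_i}^2\EE{X_i^{(\mathcal J_1)}}\EE{X_i^{(\mathcal J_2)}} = \alpha^{-2}\Cov{X_i^{(\mathcal J_1)},X_i^{(\mathcal J_2)}} + \alpha^{-2}\EE{X_i^{(\mathcal J_1\cap\mathcal J_2)}}$, where the extra term arises from $\EE{\tau_i^2}=2/\alpha^2 \neq 1/\alpha^2=\EE{\tau_i}^2$; summing reproduces $\Cov{\tilde N_{\mathcal{J}_1},\tilde N_{\mathcal{J}_2}}/\alpha^2 + \EE{\tilde N_{\mathcal J_1\cap\mathcal J_2}}/\alpha^2$. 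The same double-sum expansion for $\Cov{\tilde T_{\mathcal{J}_1}(m),\tilde N_{\mathcal{J}_2}(m)}$ carries only a single $\tau$-factor in every term, so each pair contributes $\alpha^{-1}\Cov{X_i^{(\mathcal J_1)},X_j^{(\mathcal J_2)}}$ regardless of whether $i=j$, yielding the clean factor $1/\alpha$ without a diagonal correction.

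The only delicate step is the bookkeeping in the covariance of the jump counts: verifying that the $m^2$ products and the $\nuv^\top Q^\#\tilde{P}$ boundary terms cancel exactly between $\EE{\tilde N_{\mathcal J_1}(m)\tilde N_{\mathcal J_2}(m)}$ and the product of first moments requires tracking those prefactors through two parallel expansions. Once that cancellation is secured, the three $\tilde{T}$ identities fall out mechanically from the independence of the holding times and the jump types.
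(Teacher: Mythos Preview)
Your proposal is correct and follows essentially the same route as the paper: the jump-count covariance is obtained via the $P^{j}=\onev\piv^\top+R_j$ decomposition together with Lemmas~\ref{lemma:operator_norm} and~\ref{lemma:sum_of_power}, exactly as the paper does, and the $\tilde T$ identities come from the independence of the holding times and the jump indicators. The only cosmetic difference is that the paper derives the $\tilde T$ covariances by writing $\tilde T_{\mathcal J}=\sum_{i=1}^{\tilde N_{\mathcal J}}H_i$ with $H_i\stackrel{\mathrm{iid}}{\sim}\mathrm{Exp}(\alpha)$ and applying the law of total covariance, whereas you expand $\tilde T_{\mathcal J}(m)=\sum_{i=1}^m \tau_i X_i^{(\mathcal J)}$ directly as a double sum; both arguments reach the same conclusion.
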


\begin{proof}
We focus on $\tilde{N}_{\mJ}(m)$ first and will move on to study $\tilde{T}_{\mJ}(m)$ later. For simplicity, we drop the dependency ``$(m)$". 

We write $\tilde{N}_{\mathcal{J}_1} = \sum_{i = 1}^{m} X_i$, where $X_i = 1$ if the $i$-th jump in the discrete-time Markov chain is an eligible jump in $\mJ_1$. Similarly, we write $\tilde{N}_{\mathcal{J}_2} = \sum_{i = 1}^{m} Y_i$, where $Y_i = 1$ if the $i$-th jump in the discrete-time Markov chain is an eligible jump in $\mJ_2$.

We are interested in the first and second moments of $\tilde{N}_{\mathcal{J}}$. Expanding the parentheses, we get that 
	\begin{equation}
		\begin{split}
			\tilde{N}_{\mathcal{J}_1} &= \sum_{i = 1}^{m} X_i,\\
			\tilde{N}_{\mathcal{J}_1}^2 &= \sum_{i = 1}^{m} X_i + 2 \sum_{1\leq i < j\leq m} X_i X_j,\\
			\tilde{N}_{\mathcal{J}_1}\tilde{N}_{\mathcal{J}_2} &= \sum_{i = 1}^{m} X_i Y_i +  \sum_{1\leq i < j\leq m} X_i Y_j +\sum_{1\leq i < j\leq m} Y_i X_j  .
		\end{split}
	\end{equation}
	This then implies that
	\begin{equation}
		\label{eqn:Nj_conditional_moments_m}
		\begin{split}
			\EE{\tilde{N}_{\mathcal{J}_1}} & = A_1,\\
			\EE{\tilde{N}_{\mathcal{J}_1}^2 }&= A_1 + 2A_2,\\
			\EE{\tilde{N}_{\mathcal{J}_1} N_{\mathcal{J}_2} } &= B_1 + B_2 + B_2',
		\end{split}
	\end{equation}
	where 
	\begin{equation}
		\begin{split}
			A_1 &= \sum_{i = 1}^{m} \EE{X_i}, \qquad A_2 = \sum_{1\leq i < j\leq m} \EE{X_i X_j}, \\
			B_1 &= \sum_{i = 1}^{m} \EE{X_i Y_i} , \qquad B_2 = \sum_{1\leq i < j\leq m} \EE{X_i Y_j} , \qquad B_2' = \sum_{1\leq i < j\leq m} \EE{Y_i X_j}.
		\end{split}
	\end{equation}
Furthermore, we have that
\begin{equation}
\label{eqn:Nj_conditional_covariance_m}
\Cov{\tilde{N}_{\mJ_1}, \tilde{N}_{\mJ_2}} = B_1 + B_2 + B_2' - A_1A_1',
\end{equation}
where $A_1' = \sum_{i = 1}^{m} \EE{Y_i}$. 
	
	We will then study $A_1, A_2, B_1, B_2, B_2'$. The analysis of $A_1'$ is essentially the same as that of $A_1$. We have showed in the proof of Lemma \ref{lemma:num_eligible_jumps} that $A_1 
	= m \piv^\top \tilde{P} \onev - \alpha \nuv^\top Q^{\#} \tilde{P}_{\mJ_1} \onev + \errr$, where $\abs{\errr} \leq C_0 C^m$ for some constants $C_0 > 0$ and $C \in \p{0,1}$. 
	For $A_2$, note that for $i < j$, $X_i X_j$ is the indicator that the $i$-th jump is eligible and the $j$-th jump is eligible. Therefore, $\EE{X_i X_j} = \nuv^\top P^{i-1}\tilde{P}_{\mJ_1} P^{j-i-1}\tilde{P}_{\mJ_1} \onev$. Hence, we can write
	\begin{equation}
		\label{eqn:A_2_formula_m}
		A_2 =  \sum_{1\leq i < j\leq m} \nuv^\top P^{i-1}\tilde{P}_{\mJ_1} P^{j-i-1}\tilde{P}_{\mJ_1} \onev
		= \nuv^\top \sum_{i+j \leq m}  P^{i-1}\tilde{P}_{\mJ_1} P^{j-1}\tilde{P}_{\mJ_1} \onev,
	\end{equation}
	where the last equality follows from a change of variable. Similarly, we have that $B_1 
	= m \piv^\top \tilde{P} \onev - \alpha \nuv^\top Q^{\#} \tilde{P}_{\mJ_1 \cap \mJ_2} \onev + \errr_1$, where 	$\abs{\errr_1} \leq C_0 C^m$ for some constants $C_0 > 0$ and $C \in \p{0,1}$. Furthermore,
	\begin{equation}
	\label{eqn:B_2_formula_m}
		\begin{split}
B_2 &=  \sum_{1\leq i < j\leq m} \nuv^\top P^{i-1}\tilde{P}_{\mJ_1} P^{j-i-1}\tilde{P}_{\mJ_2} \onev
= \nuv^\top \sum_{i+j \leq m}  P^{i-1}\tilde{P}_{\mJ_1} P^{j-1}\tilde{P}_{\mJ_2} \onev,\\
B_2' &=  \sum_{1\leq i < j\leq m} \nuv^\top P^{i-1}\tilde{P}_{\mJ_2} P^{j-i-1}\tilde{P}_{\mJ_1} \onev
= \nuv^\top \sum_{i+j \leq m}  P^{i-1}\tilde{P}_{\mJ_2} P^{j-1}\tilde{P}_{\mJ_1} \onev. 
		\end{split}
	\end{equation}

Now we focus on $B_2'$. The results for $A_2$ and $B_2'$ are very similar. The key term in $A_2$ is $\sum_{i+j \leq m} P^{i-1}\tilde{P}_{\mJ_1} P^{j-1}\tilde{P}_{\mJ_2}$, which is a sum of the product of terms like $P^{i-1}\tilde{P}_{\mJ}$. Let $R_i = P^i - \onev \piv^\top$. Then we can decompose each $P^{i-1}$ into two terms $R_{i-1}$ and $\onev \piv^\top$, and thus
	\begin{equation}
		\label{eqn:B2_decomposition}
		\begin{split}
			&P^{i-1}\tilde{P}_{\mJ_1} P^{j-1}\tilde{P}_{\mJ_2} \\
			&\quad = \p{R_{i-1} + \onev \piv^\top}\tilde{P}_{\mJ_1} \p{R_{j-1} + \onev \piv^\top} \tilde{P}_{\mJ_2}\\
			&\quad = \p{\onev \piv^\top \tilde{P}_{\mJ_1}}\p{\onev \piv^\top \tilde{P}_{\mJ_2}} + R_{i-1}\tilde{P}_{\mJ_1} \p{\onev \piv^\top \tilde{P}_{\mJ_2}}\\
 &\qquad\qquad\qquad\qquad + \p{\onev \piv^\top \tilde{P}_{\mJ_1}} R_{j-1}\tilde{P}_{\mJ_2} + R_{i-1}\tilde{P}_{\mJ_1}R_{j-1}\tilde{P}_{\mJ_2}. 
		\end{split}
	\end{equation}
	By Lemma \ref{lemma:operator_norm}, we have that $\Norm{R_{i-1}}_{op} \leq C^{i-1}$ for some constant $C \in (0,1)$. Therefore, 
	\begin{equation}
		\begin{split}
			\Norm{ \sum_{i+j \leq m} R_{i-1}\tilde{P}_{\mJ_1}R_{j-1}\tilde{P}_{\mJ_2}}
 &\leq  \sum_{i+j \leq m} \Norm{R_{i-1}}_{op}\Norm{\tilde{P}_{\mJ_1}}_{op}\Norm{R_{j-1}}_{op}\Norm{\tilde{P}_{\mJ_2}}_{op}\\
			&\leq \sum_{i+j \leq m} C^{i+j-2} \leq C_1,
		\end{split}
	\end{equation}
	for some constant $C_1 > 0$. 

Then, we look at terms that involve only one $R_{i-1}$. Note that
	\begin{equation}
		\label{eqn:one_R_result_m}
		\begin{split}
			\sum_{i+j \leq m} R_{i-1}
			= \sum_{i = 1}^{m-1} (m-i) R_{i-1}
			= m\sum_{i=1}^{m-1} R_{i-1} - \sum_{i = 1}^{m-1}i R_{i-1}. 
		\end{split}
	\end{equation}
	For the first term, by Lemma \ref{lemma:sum_of_power}, we have that $\sum_{i=1}^{m-1} R_{i-1} = -\alpha Q^\# + E$, where $\Norm{E}_{op} \leq C_1 C^m$ for some constant $C_1 > 0$, $C \in (0,1)$. For the second term, we note that by Lemma~\ref{lemma:operator_norm}, $\Norm{R_{i-1}}_{op} \leq C^{i-1}$ for some constant $C \in (0,1)$. Therefore,
	\begin{equation}
		\Norm{\sum_{i = 1}^{m-1} i R_{i-1}}_{op}
		\leq \sum_{i = 1}^{m-1} i C^{i-1}
		\leq C_1,
	\end{equation}
	for some constant $C_1 > 0$. Combining the results and plugging back into \eqref{eqn:one_R_result_m}, we have that 
	\begin{equation}
		\sum_{i+j \leq m} R_{i-1} = -m\alpha Q^\# + E_1,
	\end{equation}
	where $\Norm{E_1}_{op} \leq C_1$ for some constant $C_1 > 0$. 
	
	We have analyzed all terms in \eqref{eqn:B2_decomposition}. Combining the results, we have 
	\begin{equation}
		\begin{split}
			&\sum_{i+j \leq m} P^{i-1}\tilde{P} P^{j-1}\tilde{P} \\
			&\qquad= \sum_{i+j \leq m} \p{\onev \piv^\top \tilde{P}_{\mJ_1}}\p{\onev \piv^\top \tilde{P}_{\mJ_2}} 
			 - m\alpha \p{ Q^\#\tilde{P}_{\mJ_1} \onev \piv^\top \tilde{P}_{\mJ_2} +   \onev\piv^\top \tilde{P}_{\mJ_1} Q^\# \tilde{P}_{\mJ_2} } + E_2\\
			&\qquad= \binom{m}{2} \p{\onev \piv^\top \tilde{P}_{\mJ_1}}\p{\onev \piv^\top \tilde{P}_{\mJ_2}}   - m\alpha \p{ Q^\#\tilde{P}_{\mJ_1} \onev \piv^\top \tilde{P}_{\mJ_2} +   \onev\piv^\top \tilde{P}_{\mJ_1} Q^\# \tilde{P}_{\mJ_2} } + E_2
		\end{split}
	\end{equation}
	where $\Norm{E_2}_{op} \leq C_4$ for some constant $C_4 > 0$. 
	
	Therefore, by \eqref{eqn:A_2_formula_m}-\eqref{eqn:B_2_formula_m}, we have the following formulae for $A_1, A_2, B_1, B_2, B_2'$. 
	\begin{equation}
		\begin{split}
			A_1 & = m \p{\piv^\top \tilde{P}_{\mJ_1} \onev} - \alpha \nuv^\top Q^\# \tilde{P}_{\mJ_1} \onev  + \errr_1,\\
			A_2 & = \frac{m^2}{2} \p{\piv^\top \tilde{P}_{\mJ_1}  \onev}^2 - m \Bigg[\alpha \p{\piv^\top \tilde{P}_{\mJ_1}  Q^\# \tilde{P}_{\mJ_1}  \onev}\\
			& \qquad \qquad  \qquad \qquad \qquad + \alpha \p{\nuv^\top Q^\# \tilde{P}_{\mJ_1}  \onev}\p{\piv^\top \tilde{P}_{\mJ_1}  \onev} + \frac{1}{2} \p{ \piv^\top \tilde{P}_{\mJ_1} }^2 \Bigg] + \errr_2,\\
			B_1 & = m \p{\piv^\top \tilde{P}_{\mJ_1 \cap \mJ_2} \onev} - \alpha \nuv^\top Q^\# \tilde{P}_{\mJ_1\cap \mJ_2} \onev  + \errr_3,\\
			B_2 & = \frac{m^2}{2} \p{ \piv^\top \tilde{P}_{\mJ_1} \onev}\p{\piv^\top \tilde{P}_{\mJ_2}\onev} - m \Bigg[\alpha \p{\piv^\top \tilde{P}_{\mJ_1}  Q^\# \tilde{P}_{\mJ_2}  \onev}\\
			& \qquad \qquad \quad + \alpha \p{\nuv^\top Q^\# \tilde{P}_{\mJ_1}  \onev}\p{\piv^\top \tilde{P}_{\mJ_2}  \onev} + \frac{1}{2}  \p{ \piv^\top \tilde{P}_{\mJ_1} \onev}\p{\piv^\top \tilde{P}_{\mJ_2}\onev} \Bigg] + \errr_4,\\
			B_2' & = \frac{m^2}{2} \p{ \piv^\top \tilde{P}_{\mJ_1} \onev}\p{\piv^\top \tilde{P}_{\mJ_2}\onev} - m \Bigg[\alpha \p{\piv^\top \tilde{P}_{\mJ_1}  Q^\# \tilde{P}_{\mJ_2}  \onev}\\
			& \qquad \qquad \quad + \alpha \p{\nuv^\top Q^\# \tilde{P}_{\mJ_2}  \onev}\p{\piv^\top \tilde{P}_{\mJ_1}  \onev} + \frac{1}{2}  \p{ \piv^\top \tilde{P}_{\mJ_2} \onev}\p{\piv^\top \tilde{P}_{\mJ_1}\onev} \Bigg] + \errr_5.
		\end{split}
	\end{equation}
	where $\abs{\errr_1} \leq C_1 C^m$, $\abs{\errr_2} \leq C_2$, $\abs{\errr_3} \leq C_3 C^m$, $\abs{\errr_4} \leq C_4$, and $\abs{\errr_5} \leq C_5$, for some constants $C \in (0,1), C_1,C_2, C_3, C_4, C_5 > 0$. 
	
Plugging things back into \eqref{eqn:Nj_conditional_moments_m} and \eqref{eqn:Nj_conditional_covariance_m}, we have that
\begin{equation}
\EE{\tilde{N}_{\mathcal{J}_1}} = m \piv^{\top} \tilde{P}_{\mathcal{J}_1} \onev + \oo(1),
\end{equation}
\begin{equation}
\begin{split}
&\Cov{\tilde{N}_{\mathcal{J}_1}, \tilde{N}_{\mathcal{J}_2}} = 
     m \bigg(\piv ^{\top} \tilde{P}_{\mathcal{J}_1\cap\mathcal{J}_2}\onev - (\piv ^{\top} \tilde{P}_{\mathcal{J}_1}\onev) (\piv ^{\top} \tilde{P}_{\mathcal{J}_2}\onev) \\
   &\qquad \qquad \qquad \qquad \qquad \qquad \qquad  - \alpha \piv^{\top} \tilde{P}_{\mathcal{J}_1} Q^{\#} \tilde{P}_{\mathcal{J}_2} \onev - \alpha \piv^{\top} \tilde{P}_{\mathcal{J}_2} Q^{\#} \tilde{P}_{\mathcal{J}_1} \onev
    \bigg) + \oo(1).
    \end{split}
\end{equation}

Finally, for $\tilde{T}_{\mJ}$, we note that we can write $\tilde{T}_{\mJ} = \sum_{i = 1}^{\tilde{N}_{\mJ}} H_i$, where $H_i \sim \operatorname{Exp}(\alpha)$ independently and $H_i$'s are independent of $\tilde{N}_{\mJ}$. Therefore,
\begin{equation}
\EE{\tilde{T}_{\mJ}} = \EE{\tilde{N}_{\mJ}}/\alpha,
\end{equation}
\begin{equation}
\begin{split}
&\Cov{\tilde{T}_{\mJ_1}, \tilde{T}_{\mJ_2}}\\
 &\qquad = \Cov{\EE{\tilde{T}_{\mJ_1}\mid \tilde{N}_{\mJ_1}}, \EE{\tilde{T}_{\mJ_2}\mid \tilde{N}_{\mJ_2}}} + \EE{\Cov{\tilde{T}_{\mJ_1},\tilde{T}_{\mJ_2}\mid \tilde{N}_{\mJ_1}, \tilde{N}_{\mJ_2}}}\\
& \qquad = \Cov{ \tilde{N}_{\mJ_1}, \tilde{N}_{\mJ_2}}/\alpha^2 +  \EE{\tilde{N}_{\mJ_1 \cap \mJ_2}}/\alpha^2,
\end{split}
\end{equation}
and 
\begin{equation}
    \Cov{\tilde{T}_{\mathcal{J}_1}, \tilde{N}_{\mathcal{J}_2}} = \Cov{\tilde{N}_{\mathcal{J}_1}, \tilde{N}_{\mathcal{J}_2}}/\alpha.
    \end{equation}

\end{proof}

\begin{lemm}[Uniform Integrability Cont.]
\label{lemm:uniform_intergral_disc}
Consider the continuous-time Markov chain with transition rate matrix $Q$ defined in \eqref{eqn:Q_matrix}. There exists a constant $C$ such that
\begin{equation}
	\EE{\p{\tilde{N}_{\mJ}(m) - \piv^{\top} \tilde{P}_{\mJ}\onev m }^4} \leq  C m^2,
\end{equation}
\begin{equation}
	\EE{\p{\tilde{T}_{\mJ}(m) - \piv^{\top} \tilde{P}_{\mJ}\onev m/\alpha}^4} \leq  C m^2,
\end{equation}
where $\tilde{N}_{\mJ}(m)$, $\tilde{T}_{\mJ}(m)$ and $\tilde{P}_{\mJ}$ are defined in Definition~\ref{defi:N_J}. 
\end{lemm}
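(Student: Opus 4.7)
The plan is to reduce both bounds to the conditional fourth-moment computation already carried out in the proof of Lemma \ref{lemma:uniform_intergral}. In that proof, working in the uniformized discrete-time chain with transition matrix $P = I + Q/\alpha$, the conditional moments $\EE{N_{\mJ}(T)^k \mid M(T) = m}$ for $k = 1, 2, 3, 4$ were expressed as polynomials in $m$ with coefficients involving $\gamma = \piv^\top \tilde{P}_{\mJ}\onev$, $\theta = \piv^\top \tilde{P}_{\mJ}Q^{\#}\tilde{P}_{\mJ}\onev$, and $\delta = \nuv^\top Q^{\#}\tilde{P}_{\mJ}\onev$, with explicit remainders $\errr_k$ of sizes $O(C^m)$, $O(1)$, $O(m)$, and $O(m^2)$ respectively. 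Since $\tilde{N}_{\mJ}(m)$ here is exactly the number of $\mJ$-jumps in the first $m$ steps of that same discrete-time chain, those formulas give $\EE{\tilde{N}_{\mJ}(m)^k}$ directly.

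Substituting into the standard expansion
\begin{equation}
\EE{(\tilde{N}_{\mJ}(m) - \EE{\tilde{N}_{\mJ}(m)})^4} = \EE{\tilde{N}_{\mJ}(m)^4} - 4\EE{\tilde{N}_{\mJ}(m)^3}\EE{\tilde{N}_{\mJ}(m)} + 6\EE{\tilde{N}_{\mJ}(m)^2}\EE{\tilde{N}_{\mJ}(m)}^2 - 3\EE{\tilde{N}_{\mJ}(m)}^4,
\end{equation}
the leading $m^4\gamma^4$ coefficient cancels by $1 - 4 + 6 - 3 = 0$, and a short calculation shows the $m^3$ coefficient also vanishes after collecting the $\alpha\gamma^2\theta$, $\alpha\gamma^3\delta$, and $\gamma^4$ contributions --- this is the same cancellation pattern invoked at the end of the proof of Lemma \ref{lemma:uniform_intergral}, applied conditionally on $M(T) = m$ rather than averaged over $M(T) \sim \mathrm{Poisson}(\alpha T)$. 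What remains is dominated by the $\errr_4 = O(m^2)$ term inherited from $A_4$, so $\EE{(\tilde{N}_{\mJ}(m) - \piv^\top \tilde{P}_{\mJ}\onev m)^4} \leq Cm^2$ once the $O(1)$ gap between $\EE{\tilde{N}_{\mJ}(m)}$ and $\piv^\top \tilde{P}_{\mJ}\onev m$ is absorbed via $(a+b)^4 \leq 8(a^4 + b^4)$.

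For the second inequality, exploit the uniformization structure: the holding time before each discrete-time jump is an independent $H_i \sim \mathrm{Exp}(\alpha)$, independent of the jump-type indicators $X_i$, so $\tilde{T}_{\mJ}(m) = \sum_{i=1}^m X_i H_i$. Decompose
\begin{equation}
\tilde{T}_{\mJ}(m) - \frac{\piv^\top \tilde{P}_{\mJ}\onev m}{\alpha} = \sum_{i=1}^m X_i\bigl(H_i - 1/\alpha\bigr) + \frac{1}{\alpha}\bigl(\tilde{N}_{\mJ}(m) - \piv^\top \tilde{P}_{\mJ}\onev m\bigr),
\end{equation}
and apply $(a+b)^4 \leq 8(a^4 + b^4)$. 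The second summand contributes $O(m^2)$ by the first part of the lemma. For the first, condition on $\{X_j\}_{j \leq m}$: the summands $X_i(H_i - 1/\alpha)$ are then conditionally independent and mean zero, with second and fourth conditional moments equal to $X_i/\alpha^2$ and $C_0 X_i/\alpha^4$ respectively (where $C_0$ is the fourth central moment of a standard exponential). The elementary identity $\EE{(\sum Y_i)^4} = \sum \EE{Y_i^4} + 3\sum_{i \neq j}\EE{Y_i^2}\EE{Y_j^2}$ for independent mean-zero summands then yields a conditional fourth moment bounded by a multiple of $(\sum_i X_i)^2/\alpha^4 \leq m^2/\alpha^4$, and this deterministic bound passes through the outer expectation.

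The main obstacle is the algebraic $m^3$ cancellation for $\tilde{N}_{\mJ}(m)$: one must carefully substitute the expressions for $A_1, \dots, A_4$ in terms of $\gamma, \theta, \delta$ and the error $r_1 = -\alpha\delta$ in $A_1$, then verify that the coefficients of $\alpha\gamma^2\theta$, $\alpha\gamma^3\delta$, and $\gamma^4$ each sum to zero. This is bookkeeping rather than a conceptual hurdle, and no new matrix identities beyond Lemmas \ref{lemma:operator_norm} and \ref{lemma:sum_of_power} are required.
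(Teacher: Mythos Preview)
Your proposal is correct. For the first bound, your strategy coincides with the paper's: both simply point back to the conditional moment formulas derived in the proof of Lemma~\ref{lemma:uniform_intergral}. The paper states this in one line (``the results for $\tilde{N}_{\mJ}(m)$ are included in the proof of Lemma~\ref{lemma:uniform_intergral}''), while you make explicit the one extra check that is actually needed --- that the $m^3$ coefficients in $\gamma,\theta,\delta$ cancel \emph{before} Poisson averaging rather than only after --- and this is the right thing to verify.

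For the second bound, your route differs from and improves on the paper's. The paper writes $\tilde{T}_{\mJ}(m)=\sum_{i=1}^{\tilde{N}_{\mJ}(m)} H_i$ with $H_i$ i.i.d.\ $\mathrm{Exp}(\alpha)$ independent of $\tilde{N}_{\mJ}(m)$, then asserts
\[
\EE{\bigl(\tilde{T}_{\mJ}(m)-\piv^\top\tilde{P}_{\mJ}\onev\, m/\alpha\bigr)^4}
=\EE{\bigl(\tilde{N}_{\mJ}(m)-\piv^\top\tilde{P}_{\mJ}\onev\, m\bigr)^4}/\alpha^4,
\]
which is not literally true because the conditional fluctuations of $\sum H_i$ given $\tilde{N}_{\mJ}(m)$ are being dropped. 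Your decomposition $\tilde{T}_{\mJ}(m)-\gamma m/\alpha=\sum_{i=1}^m X_i(H_i-1/\alpha)+\alpha^{-1}(\tilde{N}_{\mJ}(m)-\gamma m)$, followed by the conditional fourth-moment bound on the first term, is a cleaner and rigorous way to reach the same $O(m^2)$ conclusion.
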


\begin{proof}
The results for $\tilde{N}_{\mJ}(m)$ are included in the proof of Lemma \ref{lemma:uniform_intergral}. For $\tilde{T}_{\mJ}(m) $, we note that we can write 
\begin{equation}
\tilde{T}_{\mJ}(m) = \sum_{i = 1}^{\tilde{N}_{\mJ}(m)} H_i
\end{equation}
where $H_i \sim \operatorname{Exp}(\alpha)$ independently and $H_i$'s are independent of $\tilde{N}_{\mJ}(m)$. Therefore,
\begin{equation}
\begin{split}
\EE{\p{\tilde{T}_{\mJ}(m) - \piv^{\top} \tilde{P}_{\mJ}\onev m/\alpha}^4}
&= \EE{\EE{\p{ \tilde{T}_{\mJ}(m) - \piv^{\top} \tilde{P}_{\mJ}\onev m/\alpha}^4 \mid \tilde{N}_{\mJ}(m)}}\\
&= 	\EE{\p{\tilde{N}_{\mJ}(m) - \piv^{\top} \tilde{P}_{\mJ}\onev m }^4} /\alpha^4. 
\end{split}
\end{equation}
\end{proof}

\begin{lemm}
\label{lemm:binom_clt}
Let $X_n \sim \operatorname{Binom}(n, p_n)$ be a sequence of Binomial random variables. Assume that $p_n \to c$ for some constant $c \in (0,1)$. Then
\begin{equation}
\frac{\sqrt{n}}{\sqrt{p_n(1-p_n)}} \p{\frac{X_n}{n} - p_n} \Rightarrow \mathcal{N}(0,1). 
\end{equation}
\end{lemm}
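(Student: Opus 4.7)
The plan is to view $X_n$ as the sum of $n$ independent Bernoulli$(p_n)$ random variables, which places us in a triangular-array setting (the success probability depends on $n$), and to apply the Lindeberg-Feller central limit theorem. Concretely, write $X_n = \sum_{i=1}^n Y_{n,i}$ with $Y_{n,i} \stackrel{\text{iid}}{\sim} \operatorname{Bernoulli}(p_n)$ and define the normalized summands
\begin{equation}
Z_{n,i} = \frac{Y_{n,i} - p_n}{\sqrt{n\, p_n(1-p_n)}},
\end{equation}
so that the quantity of interest is exactly $\sum_{i=1}^n Z_{n,i}$, with $\EE{Z_{n,i}} = 0$ and $\sum_{i=1}^n \Var{Z_{n,i}} = 1$ by construction.

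Next I would verify the Lindeberg condition: for every $\epsilon > 0$,
\begin{equation}
\sum_{i=1}^n \EE{Z_{n,i}^2 \, ; \, |Z_{n,i}| > \epsilon} \to 0.
\end{equation}
This step is essentially free because the summands are uniformly bounded. Indeed $|Y_{n,i} - p_n| \leq 1$ implies $|Z_{n,i}| \leq 1/\sqrt{n\, p_n(1-p_n)}$, and since $p_n \to c \in (0,1)$ this upper bound tends to $0$. Hence for all sufficiently large $n$ the event $\{|Z_{n,i}| > \epsilon\}$ is empty for every $i$, so the Lindeberg sum is identically zero past some $n$.

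Applying Lindeberg-Feller then yields $\sum_{i=1}^n Z_{n,i} \Rightarrow \mathcal{N}(0,1)$, which is exactly the claim. The only mild obstacle is bookkeeping the triangular-array structure (the distribution of $Y_{n,i}$ changes with $n$), but the uniform boundedness of the centered Bernoulli variables combined with $p_n(1-p_n) \to c(1-c) > 0$ makes the Lindeberg check immediate, so no delicate estimate is needed. As an alternative one could invoke the Berry-Esseen theorem for independent non-identically distributed summands, but Lindeberg-Feller is the cleanest route.
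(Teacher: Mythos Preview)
Your proof is correct and follows essentially the same route as the paper: represent $X_n$ as a sum of i.i.d.\ Bernoulli$(p_n)$ variables, form the normalized triangular array $Z_{n,i}$, and apply Lindeberg--Feller. The only cosmetic difference is that the paper verifies the Lindeberg condition via a fourth-moment Markov bound, whereas you use the cleaner observation that the $Z_{n,i}$ are uniformly bounded by a quantity tending to zero so the truncated expectation is eventually identically zero.
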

\begin{proof}
We prove the lemma using the Lindeberg-Feller theorem \citep{durrett2019probability}. We start by rewriting $X_n = Y_{n,1} + Y_{n,2} + \dots Y_{n,n}$, where $Y_{n,i}$'s are i.i.d. $ \operatorname{Bern}(p_n)$ random variables. Let $Z_{n,i} = (Y_{n,i} - p_n)/\sqrt{n p_n(1-p_n)}$. Then, $\EE{Z_{n,i}} = 1$ and $\Var{Z_{n,i}} = 1/n$. For any $\epsilon > 0$, 
\begin{equation}
\EE{\left|Z_{n, i}\right|^{2} ;\left|Z_{n, i} \right| > \epsilon} 
\leq \PP{\abs{Z_{n, i}} > \epsilon}
\leq \frac{\EE{\abs{Z_{n, i}^4}}}{\epsilon^4}
\leq \frac{1}{n^2p_n^2(1-p_n)^2 \epsilon^4}. 
\end{equation}
Thus, 
\begin{equation}
\sum_{i=1}^{n} \EE{\left|X_{n, i}\right|^{2} ;\left|X_{n, i}\right|>\epsilon}
\leq  \frac{n}{n^2p_n^2(1-p_n)^2 \epsilon^4}, 
\end{equation}
which goes to 0 as $n \to \infty$. By Lindeberg-Feller theorem, 
\begin{equation}
\frac{\sqrt{n}}{\sqrt{p_n(1-p_n)}} \p{\frac{X_n}{n} - p_n} = \sum_{i = 1}^n Z_{n,i}  \Rightarrow \mathcal{N}(0,1). 
\end{equation}

\end{proof}

\begin{lemm}
\label{lemma:N_k_T_converge}
Under the conditions of Theorem \ref{theo:LE_CLT}, if we define $N_{k}(T,\zeta_T) = N_{k,+}(T,\zeta_T) + N_{k,-}(T,\zeta_T)$, then  
\begin{equation}
N_{k}(T,\zeta_T)/T \stackrel{p}{\to} \pi_k(p)\lambda_k(p).  
\end{equation}

\end{lemm}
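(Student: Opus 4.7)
The plan is to apply Lemma~\ref{lemma:num_eligible_jumps} to the queue length process under user-level randomization. Under this scheme, each arriving customer is independently offered price $p + \zeta_T$ or $p - \zeta_T$ with equal probability, so the queue length is a continuous-time birth-death Markov chain on $\{0, 1, \dots, K\}$ whose birth rate in state $k$ is the averaged perturbed rate $\tfrac{1}{2}\bigl(\lambda_k(p+\zeta_T) + \lambda_k(p-\zeta_T)\bigr)$ and whose death rate is $\mu$. Let $\pi_k(p, \zeta_T)$ denote its stationary distribution, exactly as in the proof of Theorem~\ref{theo:LE_CLT}. The quantity $N_{\LE,k}(T, \zeta_T) = N_{\LE,k,+}(T, \zeta_T) + N_{\LE,k,-}(T, \zeta_T)$ is then just the number of $k \to k+1$ transitions of this chain over the time horizon $[0, T]$.

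First, I would instantiate Lemma~\ref{lemma:num_eligible_jumps} with the singleton set of eligible jumps $\mathcal{J} = \{(k, k+1)\}$, so that $N_{\mathcal{J}}$ coincides with $N_{\LE, k}(T, \zeta_T)$ and the lemma's expression $\piv^\top \tilde{Q} \onev$ reduces to $\pi_k(p,\zeta_T) \cdot \tfrac{1}{2}\bigl(\lambda_k(p+\zeta_T) + \lambda_k(p-\zeta_T)\bigr)$. Since Assumption~\ref{assu:zeta} forces $\zeta_T \to 0$, for all sufficiently large $T$ the perturbation $\zeta_T$ lies in a fixed neighborhood of zero, and so the constants in Lemma~\ref{lemma:num_eligible_jumps} can be chosen uniformly in $T$. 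The lemma then yields
\begin{align*}
	\abs{\EE{N_{\LE, k}(T, \zeta_T)} - T\, \pi_k(p, \zeta_T) \cdot \tfrac{1}{2}\bigl(\lambda_k(p+\zeta_T) + \lambda_k(p-\zeta_T)\bigr)} &\leq C,\\
	\Var{N_{\LE, k}(T, \zeta_T)} &\leq C\, T,
\end{align*}
for some constant $C$ that does not depend on $T$. Chebyshev's inequality then gives
\[
	\frac{N_{\LE, k}(T, \zeta_T)}{T} - \pi_k(p, \zeta_T) \cdot \tfrac{1}{2}\bigl(\lambda_k(p+\zeta_T) + \lambda_k(p-\zeta_T)\bigr) \stackrel{p}{\to} 0.
\]

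To finish, I would pass $\zeta_T \to 0$ in the deterministic centering. Assumption~\ref{assu:bound_smooth} together with a second-order Taylor expansion of $\lambda_k$ about $p$ gives $\tfrac{1}{2}(\lambda_k(p+\zeta_T) + \lambda_k(p-\zeta_T)) = \lambda_k(p) + O(\zeta_T^2) \to \lambda_k(p)$, and Lemma~\ref{lemma:pi_converge} (already invoked in the proof of Theorem~\ref{theo:LE_CLT}) gives $\pi_k(p, \zeta_T) \to \pi_k(p)$. Combining these two limits with the previous display yields $N_{\LE, k}(T, \zeta_T)/T \stackrel{p}{\to} \pi_k(p)\,\lambda_k(p)$, as desired. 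No real obstacle arises beyond bookkeeping: the argument is essentially a weak law of large numbers for an ergodic birth-death chain, with the triangular-array structure in $\zeta_T$ absorbed by continuity of $\pi_k(\cdot)$ and $\lambda_k(\cdot)$ at $p$ and by the uniformity of the constants supplied by Lemma~\ref{lemma:num_eligible_jumps}.
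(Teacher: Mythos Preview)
Your proposal is correct and follows essentially the same approach as the paper: both observe that under user-level randomization the queue length process is a birth--death chain with averaged birth rates $\tfrac{1}{2}(\lambda_k(p+\zeta_T)+\lambda_k(p-\zeta_T))$, apply Lemma~\ref{lemma:num_eligible_jumps} to the jump set $\{(k,k+1)\}$ to obtain $N_{\LE,k}(T,\zeta_T)/T = \pi_k(p,\zeta_T)\cdot\tfrac{1}{2}(\lambda_k(p+\zeta_T)+\lambda_k(p-\zeta_T)) + o_p(1)$, and then let $\zeta_T\to 0$ via continuity. Your write-up is slightly more explicit than the paper's in spelling out the Chebyshev step and in citing Lemma~\ref{lemma:pi_converge} for the convergence of $\pi_k(p,\zeta_T)$, but the substance is identical.
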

\begin{proof}
We start with noting that the arrival rate when the length of the queue is $k$ is  $\tilde{\lambda}_k(\zeta) = (\lambda_k(p + \zeta) + \lambda_k(p - \zeta))/2$. In other words, the arrival rate can be written as a function of $\zeta$. If we treat $\zeta$ as the new price and $\tilde{\lambda}$ as the new $\lambda$ function, we can still apply results in Section \ref{subsection:notation} and Lemma \ref{lemma:num_eligible_jumps}. By Lemma \ref{lemma:num_eligible_jumps}, if we define the eligible jumps to be jumps from state $k$ to state $k+1$, then we have that
\begin{equation}
    N_{k}(T,\zeta_T)/T = \tilde{\pi}_k(\zeta_T) \tilde{\lambda}_k(\zeta_T) + o_p(1),
\end{equation}
where $\tilde{\pi}$ is the steady-state probabilities corresponding to the arrival rate $\tilde{\lambda}$. Since $\tilde{\pi}_k(\zeta_T) \to \tilde{\pi}_k(0) = \pi_k(p)$ and 
$\tilde{\lambda}_k(\zeta_T) \to \tilde{\lambda}_k(0) = \lambda_k(p)$, we have that
\begin{equation}
    N_{k}(T,\zeta_T)/T = \pi_k(p)\lambda_k(p) + o_p(1). 
\end{equation}
\end{proof}

\begin{lemm}
\label{lemma:pi_converge}
Under the conditions of Theorem \ref{theo:LE_CLT}, if we define $\pi_i(p, \zeta_T)$ to be the steady-state probability of state $i$ when the state dependent arrival rate is set to be $\tilde{\lambda}(p, \zeta_T) = (\lambda_i(p-\zeta_T) + \lambda_i(p+\zeta_T))/2$, then for any $k$, 
\begin{equation}
    \abs{\pi_k(p, \zeta_T) - \pi_k(p, 0) } \leq C \zeta_T^2
\end{equation}
for some constant $C$ depending on $K$, $B_0$, $B_1$, $B_2$. Note that $\pi_k(p, 0)$ corresponds to our previous notation of $\pi_k(p)$. 
\end{lemm}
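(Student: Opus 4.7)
\textbf{Proof proposal for Lemma \ref{lemma:pi_converge}.} The plan is to first control how $\tilde\lambda_k(p,\zeta_T)$ deviates from $\lambda_k(p)$, and then transfer that bound through the explicit steady-state formula for a birth-death chain.

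For the first step, I would use Taylor's theorem with integral remainder on $\lambda_k$ around $p$. Because Assumption \ref{assu:bound_smooth} gives $|\lambda_k''(\cdot)| \leq B_2$ uniformly, I get
\begin{equation}
\left|\lambda_k(p+\zeta_T) - \lambda_k(p) - \zeta_T\lambda_k'(p)\right| \leq \tfrac{1}{2}B_2\zeta_T^2,
\end{equation}
and symmetrically for $\lambda_k(p-\zeta_T)$. The odd first-order terms cancel when I average, yielding
\begin{equation}
\left|\tilde\lambda_k(p,\zeta_T) - \lambda_k(p)\right| \leq \tfrac{1}{2}B_2\zeta_T^2.
\end{equation}
This is where the symmetric $\pm\zeta_T$ structure of the user-level perturbation pays off: I only lose $O(\zeta_T^2)$, not $O(\zeta_T)$.

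For the second step, I would use the fact that for a birth-death chain, the detailed balance equations give the explicit closed form
\begin{equation}
\pi_k = \pi_0 \prod_{j=0}^{k-1} \frac{\lambda_j}{\mu}, \qquad \pi_0 = \left(1 + \sum_{k=1}^{K} \prod_{j=0}^{k-1}\frac{\lambda_j}{\mu}\right)^{-1}.
\end{equation}
So $\pi_k$ is a rational function of the tuple $(\lambda_0,\ldots,\lambda_{K-1})$. By Assumption \ref{assu:bound_smooth} and the $K<\infty$ constraint of Assumption \ref{assu:queue_length}, each coordinate lives in the compact interval $[B_0,B_1]$; the denominator in the formula is bounded below by $1$, so the map is $C^1$ on this compact region and hence Lipschitz. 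Applying the bound from step one to each coordinate gives
\begin{equation}
\left|\pi_k(p,\zeta_T) - \pi_k(p,0)\right| \leq L\cdot \max_{0\leq j<K}\left|\tilde\lambda_j(p,\zeta_T) - \lambda_j(p)\right| \leq \tfrac{1}{2}LB_2\,\zeta_T^2,
\end{equation}
where $L$ depends only on $K$, $B_0$, $B_1$, $\mu$. Setting $C = LB_2/2$ finishes the proof.

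I do not expect any real obstacle here; the lemma is essentially a quantitative continuity statement and the symmetry of the perturbation plus boundedness of $\lambda_k''$ does all the work. The only small care needed is to verify the Lipschitz constant $L$ truly depends only on the listed quantities---this follows because the denominator $1 + \sum_{k=1}^{K} \prod_{j<k}(\lambda_j/\mu)$ stays in $[1, 1+\sum_{k=1}^K (B_1/\mu)^k]$, so the partial derivatives of the rational expression admit a uniform bound over the compact box $[B_0,B_1]^K$.
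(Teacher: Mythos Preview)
Your proposal is correct and follows essentially the same route as the paper: both arguments use the explicit birth-death formula for $\pi_k$ together with the bound $|\lambda_k''|\leq B_2$ to extract the $O(\zeta_T^2)$ rate from the symmetric perturbation. The only cosmetic difference is that the paper differentiates $\pi_k(p,\zeta)$ in $\zeta$, bounds the derivative by $C\zeta$ via $|\tfrac{d}{d\zeta}\tilde\lambda_k| = \tfrac12|\lambda_k'(p+\zeta)-\lambda_k'(p-\zeta)| \leq B_2\zeta$, and integrates, whereas you first bound $|\tilde\lambda_k(p,\zeta_T)-\lambda_k(p)|$ by Taylor and then invoke Lipschitz continuity of the rational map on $[B_0,B_1]^K$; these are the same mean-value argument packaged in two equivalent ways.
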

\begin{proof}
We start with noting that
\begin{equation}
    \pi_0(p, \zeta) = \frac{1}{1 + \sum_{k=1}^{K} \prod_{i = 1}^k \tilde{\lambda}_{i-1}(p, \zeta)/\mu}. 
\end{equation}
Therefore, it is straightforward to see that 
\begin{equation}
    \abs{\frac{d}{d\zeta}\pi_0(p, \zeta)} \leq C \max_k \abs{\frac{d}{d\zeta}\tilde{\lambda}_k(p, \zeta)}, 
\end{equation}
for some constant $C$ depending on $K$, $B_0$, $B_1$, $B_2$. We also note that 
\begin{equation}
\pi_k(p, \zeta)=\pi_0(p, \zeta) \prod_{i=1}^k \frac{\tilde{\lambda}_{i-1}(p, \zeta)}{\mu}.
\end{equation}
Therefore, for any $k \geq 0$, 
\begin{equation}
    \abs{\frac{d}{d\zeta}\pi_k(p, \zeta)} \leq C \max_k \abs{\frac{d}{d\zeta}\tilde{\lambda}_k(p, \zeta)}, 
\end{equation}
for some constant $C$ depending on $K$, $B_0$, $B_1$, $B_2$.
Then we will turn our attention to $\frac{d}{d\zeta}\tilde{\lambda}_k(p, \zeta)$. Note that since $\abs{\lambda_k''(p)} \leq B_2$, 
\begin{equation}
    \abs{\frac{d}{d\zeta}\tilde{\lambda}_k(p, \zeta)}
    = \frac{1}{2}\abs{\lambda_k'(p+\zeta) - \lambda_k'(p-\zeta)}
    \leq \zeta B_2.
\end{equation}
This then implies that 
\begin{equation}
    \abs{\frac{d}{d\zeta}\pi_k(p, \zeta)} \leq C \max_k \abs{\frac{d}{d\zeta}\tilde{\lambda}_k(p, \zeta)} \leq C_1 \zeta
\end{equation}
for some constant $C_1$. Finally, note that the above result holds for any $\zeta$, and thus 
\begin{equation}
    \abs{\pi_k(p, \zeta_T) - \pi_k(p, 0) } \leq
    (C_1 \zeta_T) \zeta_T \leq C_1 \zeta_T^2.
\end{equation}

\end{proof}

\subsection{Proofs of propositions}

\subsubsection{Proof of Proposition \ref{prop:first_esti_triangular}}
\label{proof:prop:first_esti_triangular}
\paragraph{Regenerative switchback experiment}
We start with analyzing the regenerative switchback experiment. Let $M(T, \zeta_T)$ be the number of times the queue hits the state $k_r$. 
Let $W_i(\zeta_T) = 1$ if the price changes to $p+\zeta_T$ at the $i$-th time of hitting $k_r$ and let $W_i(\zeta_T) = 0$ otherwise. Let $A_{i,+}(\zeta_T)$ ($A_{i,-}(\zeta_T)$) be the number of arrivals between the $i$-th and the $(i+1)$-th time of hitting $k_r$ if the price is $p+\zeta_T$ ($p-\zeta_T$). In reality, we can only observe one of them; and thus, without loss of generality, we assume $A_{i,+}(\zeta_T) \indp A_{i,-}(\zeta_T)$. 
Let $B_{i,+}(\zeta_T)$ ($B_{i,-}(\zeta_T)$) be the amount of time between the $i$-th and the $(i+1)$-th hitting of $k_r$ if the price is $p+\zeta_T$ ($p-\zeta_T$). Again, without loss of generality, we assume $B_{i,+}(\zeta_T) \indp B_{i,-}(\zeta_T)$.\footnote{For notation simplicity, here we set 0 to be the 0-th time of hitting $k_r$, even if we do not start from $k_r$.} 
Let $A_{i}(\zeta_T) = W_{i}(\zeta_T) A_{i,+}(\zeta_T) + (1-W_{i}(\zeta_T)) A_{i,-}(\zeta_T)$ and $B_{i}(\zeta_T) = W_{i}(\zeta_T) B_{i,+}(\zeta_T) + (1-W_{i}(\zeta_T)) B_{i,-}(\zeta_T)$ be the realized number of arrivals and hitting time. 

Note that $A_{1,+}(\zeta_T), A_{2,+}(\zeta_T), \dots$ are i.i.d. random variables with bounded third moment (uniformly across $\zeta_T$). The same hold for $A_{i,-}(\zeta_T)$, $B_{i,+}(\zeta_T)$ and $B_{i,-}(\zeta_T)$. 
By definition,
\begin{equation}
\sum_{i = 0}^{M(T, \zeta_T)-1} B_i(\zeta_T) \leq T
\leq \sum_{i = 0}^{M(T, \zeta_T)} B_i(\zeta_T). 
\end{equation}
Thus by law of large numbers, $M(T, \zeta_T) / T \stackrel{p}{\to} 1/\EE{B_i(0)}$. 

We can write similar inequalities for $N_{ +}(T, \zeta_T)$, $T_{ +}(T, \zeta_T)$, $N_{ -}(T, \zeta_T)$ and $T_{ -}(T, \zeta_T)$: 
\begin{equation}
\begin{split}
\sum_{i = 0}^{M(T, \zeta_T)-1} A_{i,+}(\zeta_T)W_{i}(\zeta_T) 
&\leq N_{ +}(T, \zeta_T)
\leq \sum_{i = 0}^{M(T, \zeta_T)} A_{i,+}(\zeta_T)W_{i}(\zeta_T),\\
\sum_{i = 0}^{M(T, \zeta_T)-1} B_{i,+}(\zeta_T)W_{i}(\zeta_T) 
&\leq T_{ +}(T, \zeta_T)
\leq \sum_{i = 0}^{M(T, \zeta_T)} B_{i,+}(\zeta_T)W_{i}(\zeta_T),\\
\sum_{i = 0}^{M(T, \zeta_T)-1} A_{i,-}(\zeta_T)(1-W_{i}(\zeta_T))
&\leq N_{ -}(T, \zeta_T)
\leq \sum_{i = 0}^{M(T, \zeta_T)} A_{i,-}(\zeta_T)(1-W_{i}(\zeta_T)),\\
\sum_{i = 0}^{M(T, \zeta_T)-1} B_{i,-}(\zeta_T)(1-W_{i}(\zeta_T))
&\leq T_{ -}(T, \zeta_T)
\leq \sum_{i = 0}^{M(T, \zeta_T)} B_{i,-}(\zeta_T)(1-W_{i}(\zeta_T)).
\end{split}
\end{equation}
By Lemma \ref{lemma:Lyap_anscombe}, we can show that 
\begin{equation}
\sqrt{M(T, \zeta_T)}
\begin{pmatrix}
\frac{N_{ +}(T, \zeta_T)}{M(T, \zeta_T)} - \EE{A_{1,+}(\zeta_T)/2}\\
\frac{T_{ +}(T, \zeta_T)}{M(T, \zeta_T)} - \EE{B_{1,+}(\zeta_T)/2}\\
\frac{N_{ -}(T, \zeta_T)}{M(T, \zeta_T)} - \EE{A_{1,-}(\zeta_T)/2}\\
\frac{T_{ -}(T, \zeta_T)}{M(T, \zeta_T)} - \EE{B_{1,-}(\zeta_T)/2}
\end{pmatrix}
\Rightarrow \mathcal{N}(\zerov, \Sigma),
\end{equation}
as $T \to \infty$, where 
$
\Sigma = \begin{pmatrix}
	\Sigma_0 & 0\\
	0 & \Sigma_0
\end{pmatrix},
$
and
\begin{equation}
\label{ref:eqn_sigma0}
\begin{split}
& \Sigma_0 =\\
& \frac{1}{2}\begin{pmatrix}
	\Var{A_1(0)} + \frac{1}{2}\EE{A_1(0)}^2   & \Cov{A_1(0), B_1(0)} + \frac{1}{2}\EE{A_1(0)}\EE{B_1(0)}\\
\Cov{A_1(0), B_1(0)} + \frac{1}{2}\EE{A_1(0)}\EE{B_1(0)} &\Var{B_1(0)} + \frac{1}{2}\EE{B_1(0)}^2
\end{pmatrix}.
\end{split}
\end{equation}
Then by delta method \citep{lehmann2006theory}, the above suggests that 
\begin{equation}
\sqrt{M(T, \zeta_T)}
\begin{pmatrix}
\frac{N_{ +}(T, \zeta_T)}{T_{ +}(T, \zeta_T)} - \frac{\EE{A_{1,+}(\zeta_T)}}{ \EE{B_{1,+}(\zeta_T)}}\\
\frac{N_{ -}(T, \zeta_T)}{T_{ -}(T, \zeta_T)} - \frac{\EE{A_{1,-}(\zeta_T)}}{ \EE{B_{1,-}(\zeta_T)}}\\
\end{pmatrix}
\Rightarrow \mathcal{N}(\zerov, \Sigma_1),
\end{equation}
as $T \to \infty$, where 
$
\Sigma_1 = \begin{pmatrix}
	\sigma_1^2 & 0\\
	0 & \sigma_1^2
\end{pmatrix}
$
and
\begin{equation}
	\label{eqn:variance_formula}
	\sigma_1^2 = 2\frac{\EE{A_1(0)}^2}{\EE{B_1(0)}^2}\p{\frac{\Var{A_1(0)}}{\EE{A_1(0)}^2} - \frac{2\Cov{A_1(0), B_1(0)}}{\EE{A_1(0)}\EE{B_1(0)}}  + \frac{\Var{B_1(0)}}{\EE{B_1(0)}^2} }  . 
\end{equation}
This further implies that
\begin{equation}
\sqrt{T}
\begin{pmatrix}
\frac{N_{ +}(T, \zeta_T)}{T_{ +}(T, \zeta_T)} - \frac{\EE{A_{1,+}(\zeta_T)}}{ \EE{B_{1,+}(\zeta_T)}}\\
\frac{N_{ -}(T, \zeta_T)}{T_{ -}(T, \zeta_T)} - \frac{\EE{A_{1,-}(\zeta_T)}}{ \EE{B_{1,-}(\zeta_T)}}\\
\end{pmatrix}
\Rightarrow \mathcal{N}(\zerov,  \EE{B_1(0)}^2\Sigma_1),
\end{equation}
as $T \to \infty$. 

\paragraph{Interval switchback experiment}
For the interval switchback experiment, the analysis is very similar. 

Let $\tau_1, \tau_2, \dots, \tau_{n_t}$ represent the times at which the queue hits state $k_r$. Let $I_1 = [\tau_1, \tau_2)$, $I_2 = [\tau_2, \tau_3)$, $\dots$, $I_{n_t-1} = [\tau_{n_t-1}, \tau_{n_t})$ be the time intervals between the hitting times. We call an interval a positive interval if the price stays at $p + \zeta_T$ throughout the time interval, a negative interval if the price stays at $p - \zeta_T$ throughout the time interval, and a mixed interval if the price changes during the interval. Let $M_{+}(T, \zeta_T)$ be the total number of positive intervals. Let $A_{i,+}(\zeta_T)$ be the number of arrivals in the $i$-th positive interval. Let $B_{i,+}(\zeta_T)$ be the amount of time in the $i$-th positive interval. We define $M_{-}(T, \zeta_T)$, $A_{i,-}(\zeta_T)$ and $B_{i,-}(\zeta_T)$ similarly for the negative intervals. 

By Assumption \ref{assu:two_intervals}, the total number of switches is upper bounded by $T / l_T = o(\sqrt{T})$, and thus the total number of mixed intervals is $o_p(\sqrt{T})$. 
Let $T_+$ be the amount of time the price is at $p + \zeta$. 
Therefore, we have that 
\begin{equation}
    \sum_{i = 1}^{M_{+}(T, \zeta_T)}  B_{i,+}(\zeta_T) = T_+ + o_p(\sqrt{T}).
\end{equation}
Furthermore, because $T_+ = T/2 + o_p(1)$, we have that 
\begin{equation}
    \sum_{i = 1}^{M_{+}(T, \zeta_T)}  B_{i,+}(\zeta_T) = T/2 + o_p(T).
\end{equation}
Note that $B_{1,+}(\zeta_T), B_{2,+}(\zeta_T), \dots$ are i.i.d. random variables with bounded third moment (uniformly across $\zeta_T$), and thus by law of large numbers, $M_{+}(T, \zeta_T) / T \stackrel{p}{\to} 1/(2\EE{B_i(0)})$. 

Similarly, we have that 
\begin{equation}
\begin{split}
    \sum_{i = 1}^{M_{+}(T, \zeta_T)}  A_{i,+}(\zeta_T) &= N_+(T,\zeta_T) + o_p(\sqrt{T}),\\
    \sum_{i = 1}^{M_{-}(T, \zeta_T)}  B_{i,-}(\zeta_T) &= T_- + o_p(\sqrt{T}),\\
    \sum_{i = 1}^{M_{-}(T, \zeta_T)}  A_{i,-}(\zeta_T) &= N_-(T,\zeta_T) + o_p(\sqrt{T}).
\end{split}
\end{equation}
By Lemma \ref{lemma:Lyap_anscombe}, we can show that 
\begin{equation}
\sqrt{M_{+}(T, \zeta_T)}
\begin{pmatrix}
\frac{N_{ +}(T, \zeta_T)}{M_{+}(T, \zeta_T)} - \EE{A_{1,+}(\zeta_T)/2}\\
\frac{T_+}{M_{+}(T, \zeta_T)} - \EE{B_{1,+}(\zeta_T)/2}
\end{pmatrix}
\Rightarrow \mathcal{N}(\zerov, \Sigma_0/2),
\end{equation}
as $T \to \infty$, where $\Sigma_0$ is defined in \eqref{ref:eqn_sigma0}. 
Then by delta method  \citep{lehmann2006theory}, the above suggests that 
\begin{equation}
\sqrt{M_{+}(T, \zeta_T)}
\p{\frac{N_{ +}(T, \zeta_T)}{T_+} - \frac{\EE{A_{1,+}(\zeta_T)}}{ \EE{B_{1,+}(\zeta_T)}}}
\Rightarrow \mathcal{N}(0, \sigma_1^2/2),
\end{equation}
as $T \to \infty$, where $\sigma_1^2$ is defined in  \eqref{eqn:variance_formula}.
This further implies that
\begin{equation}
\sqrt{T}
\p{\frac{N_{ +}(T, \zeta_T)}{T_+} - \frac{\EE{A_{1,+}(\zeta_T)}}{ \EE{B_{1,+}(\zeta_T)}}}
\Rightarrow \mathcal{N}(0,  \EE{B_1(0)}^2\sigma_1^2).
\end{equation}
Note that we can conduct the exact same analysis for the price $p-\zeta_T$, and get
\begin{equation}
\sqrt{T}
\p{\frac{N_{ -}(T, \zeta_T)}{T_-} - \frac{\EE{A_{1,-}(\zeta_T)}}{ \EE{B_{1,-}(\zeta_T)}}}
\Rightarrow \mathcal{N}(0,  \EE{B_1(0)}^2\sigma_1^2).
\end{equation}

Finally, note that in the proof, we indeed have $A_{i,+}(\zeta_T)$ and $B_{i,+}(\zeta_T)$ independent of $A_{i,-}(\zeta_T)$ and $B_{i,-}(\zeta_T)$, and thus a joint analysis of them gives
\begin{equation}
\sqrt{T}
\begin{pmatrix}
\frac{N_{ +}(T, \zeta_T)}{T_+} - \frac{\EE{A_{1,+}(\zeta_T)}}{ \EE{B_{1,+}(\zeta_T)}}\\
\frac{N_{ -}(T, \zeta_T)}{T_-} - \frac{\EE{A_{1,-}(\zeta_T)}}{ \EE{B_{1,-}(\zeta_T)}}\\
\end{pmatrix}
\Rightarrow \mathcal{N}(\zerov,  \EE{B_1(0)}^2\Sigma_1),
\end{equation}
as $T \to \infty$, where 
$
\Sigma_1 = \begin{pmatrix}
	\sigma_1^2 & 0\\
	0 & \sigma_1^2
\end{pmatrix}
$, and $\sigma_1$ is defined in \eqref{eqn:variance_formula}. 

\paragraph{Notation} To make the notation consistent with the one in the statement of the proposition, we take 
$\tilde{\mu}(p+\zeta_T) = \frac{\EE{A_{1,+}(\zeta_T)}}{ \EE{B_{1,+}(\zeta_T)}}$,
$\tilde{\mu}(p-\zeta_T) = \frac{\EE{A_{1,-}(\zeta_T)}}{ \EE{B_{1,-}(\zeta_T)}}$,
and $\tilde{\sigma}^2 = \EE{B_1(0)}^2 \sigma_1^2$. Here, $\sigma_1$ is defined in \eqref{eqn:variance_formula}. 

\subsubsection{Proof of Proposition \ref{prop:first_esti_zero}}
\label{proof:prop:first_esti_zero}
Let $N_{\arr}(T)$ be the number of arrivals to the system, and let $M_{k_r}(T)$ be the number of times the queue hits the state $k_r$. Let $A_i$ be the number of arrivals between the $i$-th and the $(i+1)$-th time of hitting $k_r$. 
Let $B_i$ be the amount of time between the $i$-th and the $(i+1)$-th hitting of $k_r$.
	
Note that $A_{1}, A_{2}, \dots$ are i.i.d. random variables with bounded third moment. The same hold for $B_{i}$. 
By definition,
\begin{equation}
	\sum_{i = 0}^{M_{k_r}(T)-1} B_i \leq T
	\leq \sum_{i = 0}^{M_{k_r}(T)} B_i. 
\end{equation}
Thus by law of large numbers, $M_{k_r}(T) / T \stackrel{p}{\to} 1/\EE{B_1}$. 

We can write similar inequalities for $N_{\arr}(T)$: 
\begin{equation}
		\sum_{i = 0}^{M_{k_r}(T)-1} A_i
		\leq N_{\arr}(T)
		\leq \sum_{i = 0}^{M_{k_r}(T)} A_i,
\end{equation}
By Anscombe's theorem \citep{anscombe1952large}, we can show that, we can show that
	\begin{equation}
		\sqrt{M_{k_r}(T)}
		\begin{pmatrix}
			\frac{N_{\arr}(T)}{M_{k_r}(T)} - \EE{A_i}\\
			\frac{T}{M_{k_r}(T)} - \EE{B_1}\\
		\end{pmatrix}
		\Rightarrow \mathcal{N}(\zerov, \Sigma_2),
	\end{equation}
	where
	\begin{equation}
		\Sigma_2 = \begin{pmatrix}
			\Var{A_1}    & \Cov{A_1,B_1}\\
			\Cov{A_1,B_1}  & \Var{B_1} 
		\end{pmatrix}.
	\end{equation}
	Then by delta method  \citep{lehmann2006theory}, the above suggests that 
	\begin{equation}
		\sqrt{M_{k_r}(T)}\p{
			\frac{N_{\arr}(T)}{T} - \frac{\EE{A_1}}{ \EE{B_1}}
		}
		\Rightarrow \mathcal{N}(0, \sigma_2^2/2),
	\end{equation}
	as $T \to \infty$, where 
	\begin{equation}
		\sigma_2^2 = 2\frac{\EE{A_1}^2}{\EE{B_{1}}^2}\p{\frac{\Var{A_1}}{\EE{A_1}^2} - \frac{2\Cov{A_1, B_{1}}}{\EE{A_1}\EE{B_{1}}}  + \frac{\Var{B_{1}}}{\EE{B_{1}}^2} }  . 
	\end{equation}
Furthermore, since $N_{\arr}(T)/T \stackrel{p}{\to} 1/\EE{B_{1}}$,
	\begin{equation}
		\label{eqn:CLT_one_chain}
		\sqrt{T}\p{
			\frac{N_{\arr}(T)}{T} - \frac{\EE{A_1}}{ \EE{B_1}}
		}
		\Rightarrow \mathcal{N}(0, \EE{B_1}^2\sigma_2^2/2),
	\end{equation}
	as $T \to \infty$.

Note that $A_1$ has the same distribution as $A_{1,+}(0)$ and $A_{1,-}(0)$ defined in proof of Proposition \ref{prop:first_esti_triangular} (Section \ref{proof:prop:first_esti_triangular}). The same holds for $B_1$, $B_{1,+}(0)$ and $B_{1,-}(0)$. Therefore, $\sigma_2^2 = \sigma_1^2$, where $\sigma_1$ is defined in \eqref{eqn:variance_formula}. Therefore, 
as $T \to \infty$,
\begin{equation}
\sqrt{T}
\p{\frac{N_{\arr}(T, p)}{T} - \tilde{\mu}(p)}
\Rightarrow \mathcal{N}(0,  \tilde{\sigma}^2(p)/2).
\end{equation}

\subsubsection{Proof of Proposition \ref{prop:coro:num_arrival}}
\label{proof:prop:coro:num_arrival}
Note that the queue length is bounded above by $K$ and thus the difference between the number of arrivals and the number of departures is always bounded above by $K$. 
It then follows directly from Lemma \ref{lemma:num_arrival} and Proposition \ref{prop:Q_inverse} that
\begin{equation}
\begin{split}
\sigma_{\arr}^2(p) 
&= (1-\pi_0(p))\mu + 2\mu^2 \sum_{i = 0}^{K-1}\frac{\pi_{i+1}(p)\pi_0(p)}{\mu}\p{\sum_{ j = 1}^i \frac{S_j(p)}{\pi_j(p)} - \sum_{ j =1}^K \frac{S_j^2(p)}{\pi_j(p)}}\\
& = (1-\pi_0(p))\mu + 2\mu \pi_0(p)\p{\sum_{i = 0}^{K-1}\pi_{i+1}(p)\sum_{ j = 1}^i \frac{S_j(p)}{\pi_j(p)} - \sum_{i = 0}^{K-1}\pi_{i+1}(p)\sum_{ j =1}^K \frac{S_j^2(p)}{\pi_j(p)}}\\
& = (1-\pi_0(p))\mu + 2\mu \pi_0(p)\p{\sum_{j = 1}^{K-1}\frac{S_j(p)}{\pi_j(p)} \sum_{i = j}^{K-1}\pi_{i+1}(p) - (1-\pi_0(p))\sum_{ j =1}^K \frac{S_j^2(p)}{\pi_j(p)}}\\
& = (1-\pi_0(p))\mu + 2\mu \pi_0(p)\p{\sum_{j = 1}^{K-1}\frac{S_j(p)S_{j+1}(p)}{\pi_j(p)}  - (1-\pi_0(p))\sum_{ j =1}^K \frac{S_j^2(p)}{\pi_j(p)}}. 
\end{split}
\end{equation}

\subsubsection{Proof of Proposition \ref{prop:second_esti_triangular}}

The proof is very similar to that of Proposition \ref{prop:first_esti_triangular}. We only need to change the definition of $A_{i,+}$ and $A_{i,-}$. Here, we define $A_{i,+}(\zeta_T)$ ($A_{i,-}(\zeta_T)$) to be the amount of time in state $0$ between the $i$-th and the $(i+1)$-th time of hitting $k_r$ if the price is $p+\zeta_T$ ($p-\zeta_T$). For simplicity's sake, we omit the details. 

\subsubsection{Proof of Proposition \ref{prop:second_esti_zero}}

The proof is very similar to that of Proposition \ref{prop:first_esti_zero}. We only need to change the definition of $A_i$. Here, we define $A_{i}$ be the amount of time in state $0$ between the $i$-th and the $(i+1)$-th time of hitting $k_r$.

\subsubsection{Proof of Proposition \ref{prop:Q_inverse}}

We will omit the dependence on $p$ in this proof. We will make use of Lemma \ref{lemma:suff_conds_Q_inv}. Let $A$ be a matrix with entries 
\begin{equation}
A_{k, i} = \frac{\pi_i}{\mu} \p{- \sum_{j = 1}^{\min(i,k)} \frac{1}{\pi_j} + \sum_{j = 1}^k \frac{S_j}{\pi_j} + \sum_{j = 1}^i \frac{S_j}{\pi_j} - \sum_{j = 1}^{K} \frac{S_j^2}{\pi_j}}.
\end{equation}
Write $A = (a_0, \dots, a_K)$.
We will then show that $QA = I - \onev\piv^\top$
and $\piv^\top A = \zerov$.  We start with the second condition. For any $i \in \cb{0, \dots, K}$, note that since $\sum_{k = 0}^K \pi_k = 1$, 
\begin{equation}
\begin{split}
\piv^\top a_i
&= \frac{\pi_i}{\mu}\p{-\sum_{k = 1}^K \sum_{j = 1}^{\min(i,k)} \frac{\pi_k}{\pi_j} +\sum_{k=1}^K \sum_{j = 1}^k \frac{\pi_k S_j}{\pi_j} + \sum_{j = 1}^i \frac{S_j}{\pi_j} - \sum_{j = 1}^{K} \frac{S_j^2}{\pi_j}  }\\
& = \frac{\pi_i}{\mu}\p{-\sum_{j = 1}^{i}\sum_{k = j}^K  \frac{\pi_k}{\pi_j} +  \sum_{j = 1}^K \sum_{k=j}^K \frac{\pi_kS_j}{\pi_j} +   \sum_{j = 1}^i \frac{S_j}{\pi_j} - \sum_{j = 1}^{K} \frac{S_j^2}{\pi_j}  }\\
& = \frac{\pi_i}{\mu}\p{-\sum_{j = 1}^{i}  \frac{S_j}{\pi_j} +  \sum_{j = 1}^K  \frac{S_j^2}{\pi_j} +   \sum_{j = 1}^i \frac{S_j}{\pi_j} - \sum_{j = 1}^{K} \frac{S_j^2}{\pi_j} } = 0. 
\end{split}
\end{equation}
Therefore, $\piv^\top A = \zerov$. 

For the first condition, we need to verify that $QA = I - \onev\piv^\top$. Write $Q = (q_0, q_1, \dots, q_K)^\top$. In particular, we will show that $q_k^\top a_i = -\pi_i$ if $k > i$ or if $k < i$, and that $q_k^\top a_i = 1 - \pi_i$ if $k = i$. Note that $q_k$ has a very special structure: it has at most three entries non-zero and $q_k^\top \onev = 0$. Therefore, 
\begin{equation}
\begin{split}
q_k^\top a_i &= Q_{k,k-1} A_{k-1, i} + Q_{k,k} A_{k, i} + Q_{k,k+1} A_{k+1, i} \\
&= -Q_{k,k-1} (A_{k, i} - A_{k-1, i}) + Q_{k,k+1} (A_{k+1, i} - A_{k, i}). 
\end{split}
\end{equation}
Here for indices out of scope, we treat the corresponding entries of the matrix zero. Note that for $0 \leq k \leq K - 1$, 
\begin{equation}
A_{k+1, i} - A_{k, i} = \frac{\pi_i}{\mu}\p{-\frac{\mathbbm{1}\cb{k+1 \leq i}}{\pi_{k+1}} + \frac{S_{k+1}}{\pi_{k+1}}}. 
\end{equation}
Thus, for $1 \leq k \leq K - 1$, 
\begin{equation}
\begin{split}
q_k^\top a_i &= -Q_{k,k-1} (A_{k, i} - A_{k-1, i}) + Q_{k,k+1} (A_{k+1, i} - A_{k, i})\\
& = -\mu \frac{\pi_i}{\mu}\p{-\frac{\mathbbm{1}\cb{k \leq i}}{\pi_{k}} + \frac{S_{k}}{\pi_{k}}} + \lambda_{k}\frac{\pi_i}{\mu}\p{-\frac{\mathbbm{1}\cb{k+1 \leq i}}{\pi_{k+1}} + \frac{S_{k+1}}{\pi_{k+1}}}\\
& = \frac{\pi_i}{\pi_k}\p{\mathbbm{1}\cb{k \leq i} - S_k -\mathbbm{1}\cb{k \leq i - 1} + S_{k+1}}\\
& = \frac{\pi_i}{\pi_k}\p{\mathbbm{1}\cb{k = i} - \pi_k} = \mathbbm{1}\cb{k = i} - \pi_i. 
\end{split}
\end{equation}
Here we make use of the steady-state equation of $\pi_k \lambda_k = \pi_{k+1} \mu$. For the boundary cases, the results are similar. For $k = 0$,
\begin{equation}
\begin{split}
q_0^\top a_i &= Q_{0,1} (A_{1, i} - A_{0, i})= \lambda_0\frac{\pi_i}{\mu}\p{-\frac{\mathbbm{1}\cb{i \geq 1}}{\pi_{1}} + \frac{S_{1}}{\pi_{1}}} \\
&= \frac{\pi_i}{\pi_0}\p{-1+\mathbbm{1}\cb{i = 0} + (1-\pi_0)}
= \mathbbm{1}\cb{i = 0} - \pi_i. 
\end{split}
\end{equation}
For $k = K$,
\begin{equation}
\begin{split}
q_0^\top a_i &= -Q_{K,K-1} (A_{K, i} - A_{K-1, i})
= -\mu \frac{\pi_i}{\mu}\p{-\frac{\mathbbm{1}\cb{K \leq i}}{\pi_{k}} + \frac{S_K}{\pi_K}}\\
&= \frac{\pi_i}{\pi_K}\p{-\mathbbm{1}\cb{i = K} + \pi_K}
= \mathbbm{1}\cb{i = K} - \pi_i. 
\end{split}
\end{equation}
Combing the above three results, we get for any $k \in \cb{0, \dots, K}$, 
\begin{equation}
\begin{split}
q_k^\top a_i = \mathbbm{1}\cb{k = i} - \pi_i. 
\end{split}
\end{equation}
Hence, $QA = I - \onev\piv^\top$. 

Finally, by Lemma \ref{lemma:suff_conds_Q_inv}, we have that $Q^\# = A$.

\subsubsection{Proof of Propositions \ref{prop:CLT_varying_para} and \ref{prop:CLT_fixed_para_tilde}}
Again, the proof closely resembles those of Propositions \ref{prop:first_esti_triangular} and \ref{prop:first_esti_zero}. Rather than focusing only on the total number of arrivals and the total amount of time when the price is either $p+\zeta_T$ or $p - \zeta_T$, we consider multiple quantities at the same time.

For Proposition \ref{prop:CLT_varying_para}, similar to the proof of Proposition \ref{prop:first_esti_triangular}, we define $A_{i,k,+}$ as the count of jumps from $k$ to $k+1$ in the $i$-th positive interval. Similarly, we define $A_{i,k,-}$. We let $B_{i,k,+}$ represent the amount of time the queue length is $k$ in the $i$-th positive interval. We define $B_{i,k,-}$ similarly. The subsequent analysis mirrors that of Proposition \ref{prop:first_esti_triangular}.

For Proposition \ref{prop:CLT_fixed_para_tilde}, similar to the proof of Proposition \ref{prop:first_esti_zero}, we define $A_{i,k}$ as the number of jumps from $k$ to $k+1$ between the $i$-th and the $i+1$-th time of hitting $k_r$. We define $B_{i,k}$ as the amount of time in state $k$ between the $i$-th and the $i+1$-th time of hitting $k_r$. The subsequent analysis mirrors that of Proposition \ref{prop:first_esti_zero}.

For the sake of brevity, we omit further details.

\subsubsection{Proof of Proposition \ref{prop:CLT_fixed_para}}
We are essentially dealing with a continuous-time Markov chain with transition rate matrix $Q$ defined in \eqref{eqn:Q_matrix}. Take any $\alpha \geq \max_{i,j} \abs{Q_{i,j}}$. We conduct a step of uniformization similar to the one in the proof of Lemma \ref{lemma:num_eligible_jumps}: The continuous-time Markov chain we considered
can be described by a discrete-time Markov chain with transition matrix $P = I + Q/\alpha$
where jumps occur according to a Poisson process with intensity $\alpha$. Let $M(T)$ be the total number of jumps of the discrete-time Markov chain. Note that among the jumps, there are some self jumps that go from state $k$ to state $k$. From now on, when we refer to jumps, we are referring to jumps of the discrete-time Markov chain, and thus we include such self jumps.

One nice property of the uniformization step is that the times between the jumps are independent of the behavior of the discrete time Markov chain. However, because $M(T)$ is defined as the total number of jumps by time $T$, $M(T)$ itself is correlated with the  times between the jumps, which complicates our analysis. To deal with this, we consider a process indexed by the total number of jumps instead:
Define $\tilde{N}_k(m)$ to be the number of jumps from state $k$ to state $k+1$ within the first $m$ jumps and $\tilde{T}_k(m)$ to be the amount of time in state $k$ within the first $m$ jumps. Let $\tilde{T}(m)$ be the total amount of time spent till the $m$-th jump. 

We will then aim to establish relationships between $\tilde{N}_k$, $\tilde{T}_k$ and $N_k$, $T_k$. Without loss of generality, we assume that $\alpha T$ is an integer. In particular, we note that $\tilde{N}_k(\alpha T) = N_k(\tilde{T}(\alpha T)) + \oo_p(1)$. Therefore, by Lemma~\ref{lemm:ULLN}, we have that
\begin{equation}
\label{eqn:N_k_and_tilde_relationship}
\begin{split}
   N_k(T) &= \tilde{N}_k(\alpha T) + (N_k(T) - N_k(\tilde{T}(\alpha T)) ) + \oo_p(1) \\
&= \tilde{N}_k(\alpha T) - \lambda_k \pi_k (\tilde{T}(\alpha T) - T) + o_p(|\tilde{T}(\alpha T) - T|)\\
&= \tilde{N}_k(\alpha T) - \lambda_k \pi_k (\tilde{T}(\alpha T) - T) + o_p(\sqrt{T}),
\end{split}
\end{equation}
and 
\begin{equation}
\label{eqn:T_k_and_tilde_relationship}
\begin{split}
   T_k(T) &= \tilde{T}_k(\alpha T) + (T_k(T) - T_k(\tilde{T}(\alpha T)) ) + \oo_p(1) \\
&= \tilde{T}_k(\alpha T) - \pi_k (\tilde{T}(\alpha T) - T) + o_p(|\tilde{T}(\alpha T) - T|)\\
&= \tilde{T}_k(\alpha T) - \pi_k (\tilde{T}(\alpha T) - T) + o_p(\sqrt{T}),
\end{split}
\end{equation}

We then note that by Theorem 4.11 in \citep{asmussen2003applied},
$T_k(T) = \pi_k T + o_p(1)$, and thus
\begin{equation}
\frac{N_k(T)}{T_k(T)} - \lambda_k = \frac{N_k(T) - \lambda_k T_k(T)}{T_k(T)} = \frac{N_k(T) - \lambda_k T_k(T)}{\pi_k T}(1+o_p(1)). 
\end{equation}
By \eqref{eqn:N_k_and_tilde_relationship} and \eqref{eqn:T_k_and_tilde_relationship}, we have that
\begin{equation}
\label{eqn:ratio_to_diff}
\frac{N_k(T)}{T_k(T)} = \lambda_k + \frac{1}{\pi_k T}\p{\tilde{N}_k(\alpha T ) - \lambda_k \tilde{T}_k (\alpha T) }(1+o_p(1)) + o_p(1/\sqrt{T}). 
\end{equation}

Then we turn our attention to the behavior of the Markov chain within a fixed $m = \alpha T$ total number of jumps. Firstly, we can easily establish a multivariate central limit theorem for $\tilde{N}_0(m), \dots, \tilde{N}_{K-1}(m), \tilde{T}_0(m), \dots, \tilde{T}_{K-1}(m)$: this is because the Markov chain we consider is a regenerative process. Secondly, we note that we have established a uniform integrability result in Lemma \ref{lemm:uniform_intergral_disc}. Therefore, the limit of the mean, variance and covariance of $\tilde{N}_k(m) - \lambda_k \tilde{T}_k(m)$ will corresponds to the limit in the central limit theorem. 

It thus suffices to study the limit of the mean, variance and covariance of $\tilde{N}_k(m) - \lambda_k \tilde{T}_k(m)$. Here, we study more general forms of ``number of jumps" and ``amount of time". 
\begin{defi}
\label{defi:N_J}
Consider the continuous-time Markov chain with transition rate matrix $Q$ defined in \eqref{eqn:Q_matrix}. With a step of uniformization, the continuous-time Markov chain can be described by a discrete-time Markov chain with transition matrix $P = I + Q/\alpha$ where jumps occur according to a Poisson process with intensity $\alpha$. 

Let $\mathcal{J}$ be a set of eligible jumps, i.e., $\mathcal{J}$ is a set of $(i,j)$ pairs such that $i,j \in \cb{0, \dots, K}$. Here the pair $(i,j)$ stands for a jump from state $i$ to state $j$. Let $\tilde{N}_{\mathcal{J}}(m)$ be the number of eligible jumps within the first $m$ jumps. Let $\tilde{T}_{\mathcal{J}}(m)$ be the cumulative amount of time between each eligible jump and its immediately preceding jump within the first $m$ jumps. Let $\tilde{Q}_{\mathcal{J}}$ be a matrix such that $\tilde{Q}_{i,j} = Q_{i,j}$ if $(i,j) \in \mathcal{J}$ and otherwise $\tilde{Q}_{i,j} = 0$. Let $\tilde{P}_{\mathcal{J}}$ be a matrix such that $\tilde{P}_{i,j} = P_{i,j}$ if $(i,j) \in \mathcal{J}$ and otherwise $\tilde{P}_{i,j} = 0$. 
\end{defi}

Lemma \ref{lemm:covariance_N_J} gives the mean, variance and covariance of the quantities $N_{\mathcal{J}}(m)$ and $T_{\mathcal{J}}(m)$. 
Using our notation of eligible jumps, $N_k = N_{\cb{(k,k+1)}}$ and $T_k = T_{k\operatorname{out}}$, where $k\operatorname{out} = \cb{(k,k-1), (k,k), (k,k+1)} \cap \cb{0, \dots, K}\times \cb{0, \dots, K}$. 
In particular, we note that for the mean,
\begin{equation}
\EE{N_k(m) - \lambda_k T_k(m)} = \oo(1). 
\end{equation}
For the variance and covariance, Lemma \ref{lemm:covariance_N_J} implies that for disjoint sets ${\mJ_i}$ and ${\tilde{\mJ}_i}$, 
if there exists real numbers $w_i$ and $\tilde{w_i}$, such that $(\sum w_i\tilde{P}_{\mJ_i} )\onev = \zerov$ and $(\sum \tilde{w}_i\tilde{P}_{\tilde{\mJ}_i} )\onev =  \zerov$, then
\begin{equation}
\label{eqn:cases_variance_simple}
\Var{\sum_i w_i \tilde{N}_{\mJ_i}(m)} = m \sum_i w_i^2 \piv^{\top} \tilde{P}_{\mathcal{J}_i} \onev + \oo(1),
\end{equation}
\begin{equation}
\label{eqn:cases_covariance_zero}
\Cov{\sum_i w_i \tilde{N}_{\mJ_i}(m), \sum_i \tilde{w}_i \tilde{N}_{\tilde{\mJ}_i}(m)} = \oo(1).
\end{equation}

With this result, we can move on to study $\Cov{\tilde{N}_k(k) - \lambda_k \tilde{T}_k(m), \tilde{N}_l(m) - \lambda_l \tilde{T}_l(m)}$. Consider the case where $k \neq l$. By Lemma \ref{lemm:covariance_N_J},
\begin{equation}
\begin{split}
&\Cov{\tilde{N}_k(m) - \lambda_k \tilde{T}_k(m), \tilde{N}_l(m) - \lambda_l \tilde{T}_l(m)}\\
&= \frac{1}{\alpha^2} \Cov{ \alpha \tilde{N}_{\cb{(k,k+1)}}(m) - \lambda_k \tilde{N}_{k \operatorname{out}},  \alpha \tilde{N}_{\cb{(l,l)}}(m) - \lambda_l \tilde{N}_{l \operatorname{out}}}.
\end{split}
\end{equation}
Clearly, $\tilde{P}_{k \operatorname{out}} \onev = \ev_k$ and   $\tilde{P}_{\cb{(k,k+1)}} \onev = (\lambda_k/\alpha)\ev_k$. Therefore, $\p{\alpha \tilde{P}_{\cb{(k,k+1)}} - \lambda_k \tilde{P}_{k \operatorname{out}}} \onev = \zerov$. By \eqref{eqn:cases_covariance_zero}, we have that 
\begin{equation}
\Cov{\tilde{N}_k(m) - \lambda_k \tilde{T}_k(m), \tilde{N}_l(m) - \lambda_l \tilde{T}_l(m)} = \oo(1), 
\end{equation}
and thus
\begin{equation}
\lim_{m \to \infty}\frac{1}{m} \Cov{\tilde{N}_k(m) - \lambda_k \tilde{T}_k(m), \tilde{N}_l(m) - \lambda_l \tilde{T}_l(m)} = 0. 
\end{equation}

For the variance, again by Lemma \ref{lemm:covariance_N_J},
\begin{equation}
\begin{split}
& \Var{\tilde{N}_k(m) - \lambda_k \tilde{T}_k(m)}\\
&\qquad = \frac{1}{\alpha^2} \p{\Var{\alpha \tilde{N}_k(m) - \lambda_k \tilde{N}_{k \operatorname{out}}(m)} + \lambda_k^2\EE{\tilde{N}_k(m)}}\\
&\qquad = \frac{1}{\alpha^2} \p{\Var{\alpha \tilde{N}_{\cb{(k,k+1)}}(m) - \lambda_k \tilde{N}_{k \operatorname{out}}(m)} + \lambda_k^2\EE{\tilde{N}_k(m)}}. 
\end{split}
\end{equation}
For $1 \leq k \leq K-1$, by \eqref{eqn:cases_variance_simple},
\begin{equation}
\begin{split}
&\Var{\alpha \tilde{N}_{\cb{(k,k+1)}}(m) - \lambda_k \tilde{N}_{k \operatorname{out}}(m)} + \lambda_k^2\EE{\tilde{N}_{k \operatorname{out}}(m)}\\
& \qquad= m\p{\lambda_k^2 \piv^{\top} \tilde{P}_{\cb{(k,k-1)}} \onev + \lambda_k^2 \piv^{\top} \tilde{P}_{\cb{(k,k)}} \onev + (\alpha-\lambda_k)^2 \piv^{\top} \tilde{P}_{\cb{(k,k+1)}} \onev + \lambda_k^2 \piv^{\top} \tilde{P}_{k \operatorname{out}} \onev} + \oo_p(1)\\
&\qquad =   m \pi_k \p{\lambda_k^2 \mu/\alpha + \lambda_k^2(1-(\mu+\lambda_k)/\alpha) + (\alpha-\lambda_k)^2\lambda_k/\alpha + \lambda_k^2}+ \oo_p(1)\\
& \qquad =  m\pi_k \lambda_k \alpha + \oo_p(1). 
\end{split}
\end{equation}
For $k = 1$, by \eqref{eqn:cases_variance_simple},
\begin{equation}
\begin{split}
&\Var{\alpha \tilde{N}_{\cb{(k,k+1)}}(m) - \lambda_k \tilde{N}_{k \operatorname{out}}(m)} + \lambda_k^2\EE{\tilde{N}_{k \operatorname{out}}(m)}\\
& \qquad=   m\p{\lambda_k^2 \piv^{\top} \tilde{P}_{\cb{(k,k)}} \onev + (\alpha-\lambda_k)^2 \piv^{\top} \tilde{P}_{\cb{(k,k+1)}} \onev + \lambda_k^2 \piv^{\top} \tilde{P}_{k \operatorname{out}} \onev} + \oo_p(1)\\
&\qquad =  m\pi_k\p{\lambda_k^2(1-\lambda_k/\alpha) +  (\alpha-\lambda_k)^2\lambda_k/\alpha + \lambda_k^2} + \oo_p(1)\\
& \qquad = m\pi_k \lambda_k \alpha + \oo_p(1). 
\end{split}
\end{equation}
Therefore, 
\begin{equation}
\Var{\tilde{N}_k(m) - \lambda_k \tilde{T}_k(m)} = \frac{1}{\alpha} m \pi_k \lambda_k \alpha + \oo(1), 
\end{equation}
and hence
\begin{equation}
    \Var{\tilde{N}_k(\alpha T) - \lambda_k \tilde{T}_k(\alpha T)} = T \pi_k\lambda_k + \oo(1). 
\end{equation}

Combining with our discussion before Definition \ref{defi:N_J}, we have that
\begin{equation}
\sqrt{T} (\operatorname{diff}_0, \operatorname{diff}_1, \dots, \operatorname{diff}_{K-1}) \Rightarrow  \mathcal{N}\p{\vec{0}, \Sigma_{\operatorname{diff}}}, 
\end{equation}
where 
$\operatorname{diff}_k =\tilde{N}_k(\alpha T) - \lambda_k \tilde{T}_k(\alpha T)$, 
$\Sigma_{{\operatorname{diff}},k,k} = \pi_k\lambda_k$ and $\Sigma_{\operatorname{diff}, k, l} = 0$ for $k \neq l$. 

Finally, together with \eqref{eqn:ratio_to_diff}, we have that 
\begin{equation}
\sqrt{T \zeta_T^2} (\operatorname{err}_0, \operatorname{err}_1, \dots, \operatorname{err}_{K-1}) \Rightarrow  \mathcal{N}\p{\vec{0}, \Sigma_{\ratio}(p)}, 
\end{equation}
where 
$\operatorname{err}_k = \frac{N_k(T,p)}{T_k(T,p)} -  \lambda_{k}(p)$, 
$\Sigma_{{\ratio},k,k}(p) = \lambda_{k}(p)/\pi_k(p)$ and $\Sigma_{\ratio, k, l} = 0$ for $k \neq l$.

\subsubsection{Proof of Proposition \ref{prop:CLT_ratio}}

Let $N_{k}(T,\zeta_T) = N_{k,+}(T,\zeta_T) + N_{k,-}(T,\zeta_T) $. For notation simplicity, we sometimes ignore the $(T,\zeta_T)$ and the subscript $\LE$ in the proof.  

We start with noting that by Lemma \ref{lemma:N_k_T_converge}, $N_k/T \stackrel{p}{\to} \pi_k(p)\lambda_k(p)$.
Then note that conditioning on $N_{k}$, 
\begin{equation}
N_{k, +} \sim \operatorname{Binomial}\p{N_k, \frac{\lambda_{k}(p + \zeta_T) }{ \lambda_{k}(p + \zeta_T) + \lambda_{k}(p + \zeta_T)}}. 
\end{equation}
By Lemma \ref{lemm:binom_clt} and Anscombe's theorem \citep{anscombe1952large}, 
\begin{equation}
\sqrt{N_k} \frac{\lambda_{k}(p + \zeta_T) + \lambda_{k}(p - \zeta_T)}{\sqrt{\lambda_{k}(p + \zeta_T) \lambda_{k}(p - \zeta_T)}}\p{ \frac{N_{k, +} }{N_k} - \frac{\lambda_{k}(p + \zeta_T) }{\lambda_{k}(p + \zeta_T) + \lambda_{k}(p - \zeta_T)}} \Rightarrow  \mathcal{N}\p{0, 1},
\end{equation}
as $T \to \infty$. Thus, 
\begin{equation}
\sqrt{N_k} \frac{\lambda_{k}(p + \zeta_T) + \lambda_{k}(p - \zeta_T)}{2 \sqrt{\lambda_{k}(p + \zeta_T) \lambda_{k}(p - \zeta_T)}}\p{ \frac{N_{k, +} - N_{k, -}}{N_k} - \frac{\lambda_{k}(p + \zeta_T) - \lambda_{k}(p - \zeta_T)}{\lambda_{k}(p + \zeta_T) + \lambda_{k}(p - \zeta_T)}} \Rightarrow  \mathcal{N}\p{0, 1},
\end{equation}
as $T \to \infty$. Since $\lambda_{k}(p + \zeta_T) \to \lambda_{k}(p)$, $\lambda_{k}(p - \zeta_T) \to \lambda_{k}(p)$, and $N_k/T \stackrel{p}{\to} \pi_k(p)\lambda_{k}(p)$, by Slutsky's theorem, 
 \begin{equation}
\sqrt{T} \p{ \frac{N_{k, +} - N_{k, -}}{N_k} - \frac{\lambda_{k}(p + \zeta_T) - \lambda_{k}(p - \zeta_T)}{\lambda_{k}(p + \zeta_T) + \lambda_{k}(p - \zeta_T)}} \Rightarrow  \mathcal{N}\p{0, 
\frac{1}{\pi_k(p)\lambda_{k}(p)}}.
\end{equation}
Since the second derivative of $\lambda_{k}$ is bounded above by a constant, we have that $\lambda_{k}(p + \zeta_T) - \lambda_{k}(p - \zeta_T) = 2 \zeta_T \lambda_{k}'(p) + \oo\p{\zeta_T^2}$, and $\lambda_{k}(p + \zeta_T) + \lambda_{k}(p - \zeta_T) = 2\lambda_{k}(p) + \oo\p{\zeta_T^2}$. Hence,
\begin{equation}
\frac{\lambda_{k}(p + \zeta_T) - \lambda_{k}(p - \zeta_T)}{\lambda_{k}(p + \zeta_T) + \lambda_{k}(p - \zeta_T)}
= \zeta_T \frac{\lambda_{k}'(p)}{\lambda_{k}(p)} + \oo\p{\zeta_T^2}. 
\end{equation}
Since $\sqrt{T} \zeta_T^2 \to 0$, we have that
 \begin{equation}
\sqrt{T} \p{ \frac{N_{k, +} - N_{k, -}}{N_k} - \zeta_T \frac{\lambda_{k}'(p)}{\lambda_{k}(p)}} \Rightarrow  \mathcal{N}\p{0, 
\frac{1}{\pi_k(p)\lambda_{k}(p)}},
\end{equation}
i.e., 
 \begin{equation}
\sqrt{T \zeta_T^2} \p{\hat{\Delta}_k -  \frac{\lambda_{k}'(p)}{\lambda_{k}(p)}} \Rightarrow  \mathcal{N}\p{0, 
\frac{1}{\pi_k(p)\lambda_{k}(p)}}. 
\end{equation}

For the joint distribution, the result is a consequence of the following fact: Conditioning on $(N_0, N_1, \dots, N_{K-1})$, the random variables $N_{0,+}, \dots, N_{K-1,+}$ are independent.

\section{Additional Simulation Results and Details}
\label{appendix:simulation_details_section}

\subsection{Variance estimators}
\label{appendix:variance_estimator_simulation}
\begin{figure}
	\centering
	
	\begin{subfigure}{0.9\textwidth}
		\centering
		\includegraphics[width=\textwidth]{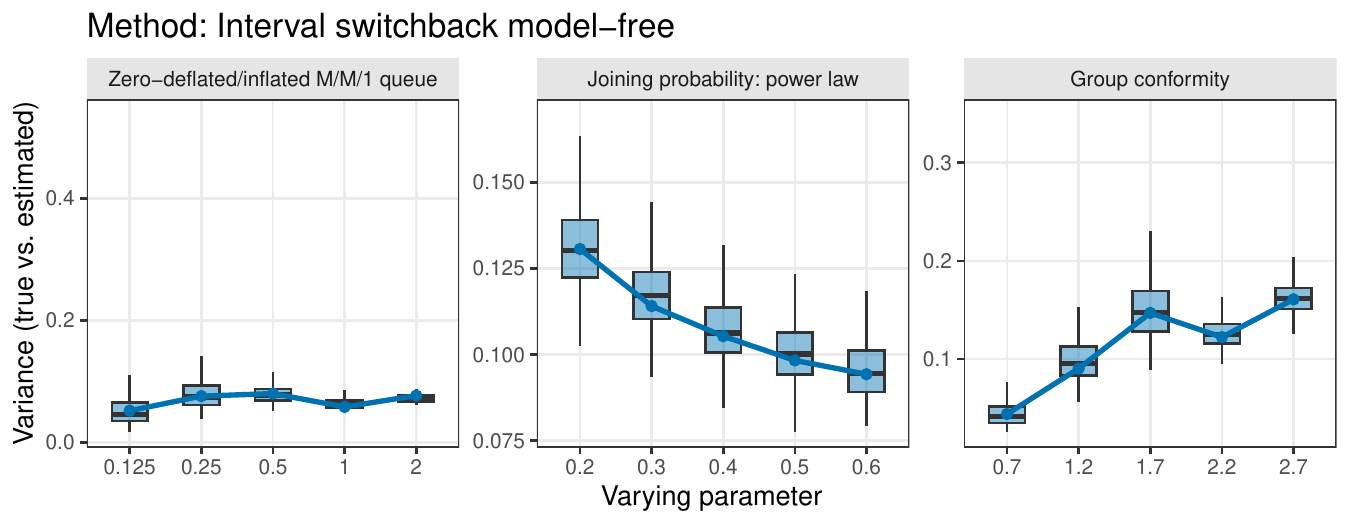}
		%\caption{Interval switchback model-free}
	\end{subfigure}
	
%	\vspace{0.4cm}
	
	\begin{subfigure}{0.9\textwidth}
		\centering
		\includegraphics[width=\textwidth]{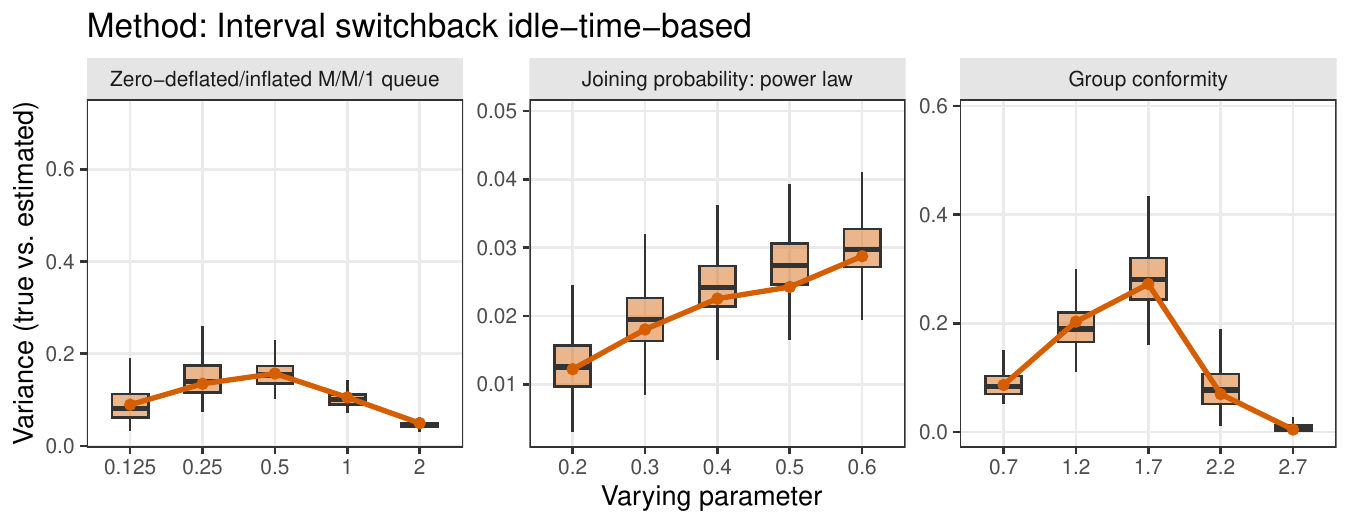}
		%\caption{Interval switchback idle-time-based}
	\end{subfigure}
	
%	\vspace{0.4cm}
	
	\begin{subfigure}{0.9\textwidth}
		\centering
		\includegraphics[width=\textwidth]{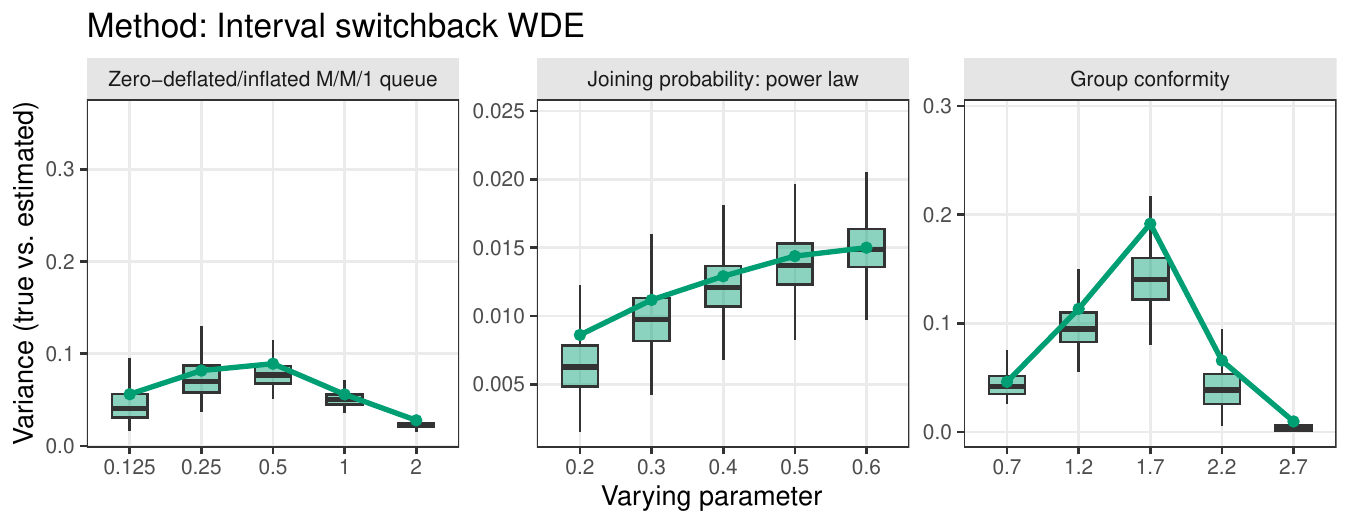}
		%\caption{Interval switchback WDE}
	\end{subfigure}
	
%	\vspace{0.4cm}
	
	\begin{subfigure}{0.9\textwidth}
		\centering
		\includegraphics[width=\textwidth]{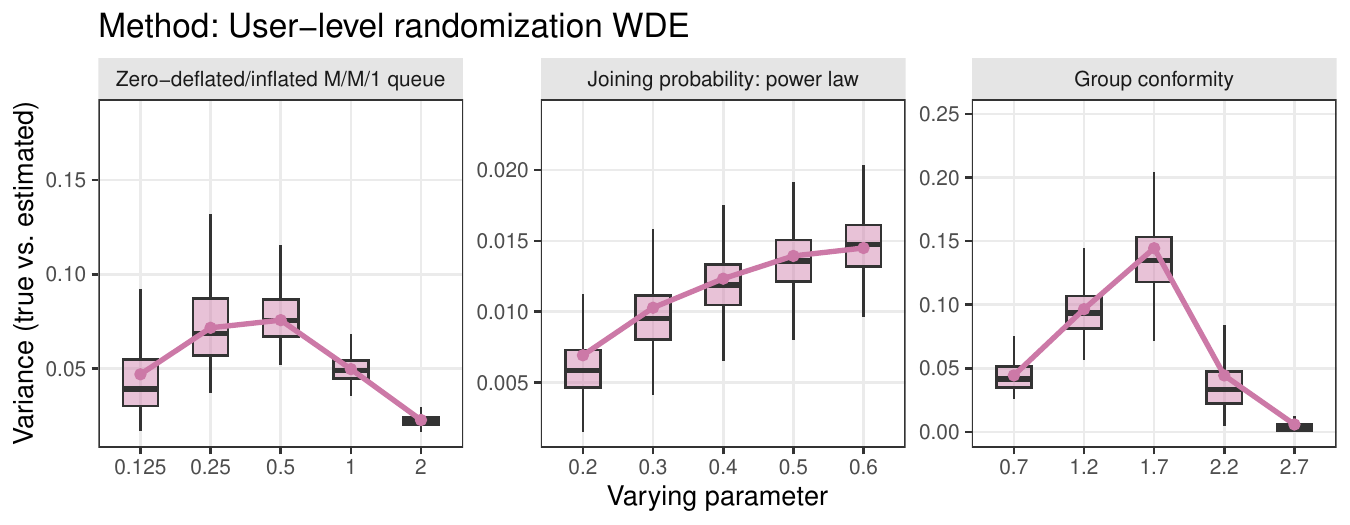}
		%\caption{User-level randomization WDE}
	\end{subfigure}
	
	\caption{Comparison between the true variance (solid lines) and the estimated variance (boxplots) across the four estimators and experimental settings.}
	\label{fig:variance_estimator_all}
\end{figure}

In Figure~\ref{fig:variance_estimator_all}, we examine the accuracy of the proposed variance estimators discussed in Section~\ref{subsection:variance_estimator}. The variance estimators are generally quite accurate. For the model-free estimator, the idle-time-based estimator, and the WDE estimator in user-level randomization, the estimated variances closely track the true variances. For the WDE estimator in interval switchback experiments, the variance estimator can occasionally underestimate the true variance.

The simulation settings follow those in Section~\ref{section:illus_asymp_var}. We take $\zeta_T = 0.05$, hold the target price at $p = 1$, and assume that $\lambda_k(p) = \lambda_k(1)(2 - p)$.

\subsection{Comparing the WDE estimator with the DQ estimator}

Another estimator that addresses interference arising from Markovian systems using data from user-level randomization is the DQ estimator from \citet{farias2022Markovian}. To illustrate the performance of the WDE estimator in user-level randomization, we present a simulation-based comparison between the proposed WDE estimator and the DQ estimator.

As shown in Figure~\ref{fig:comparision_DQ}, the proposed WDE estimator achieves lower RMSE than both the DQ estimator and the difference-in-means estimator in most settings. We believe the efficiency gain arises from directly using the structural properties of the queueing system.

\begin{figure}
	\centering
	\includegraphics[width = \textwidth]{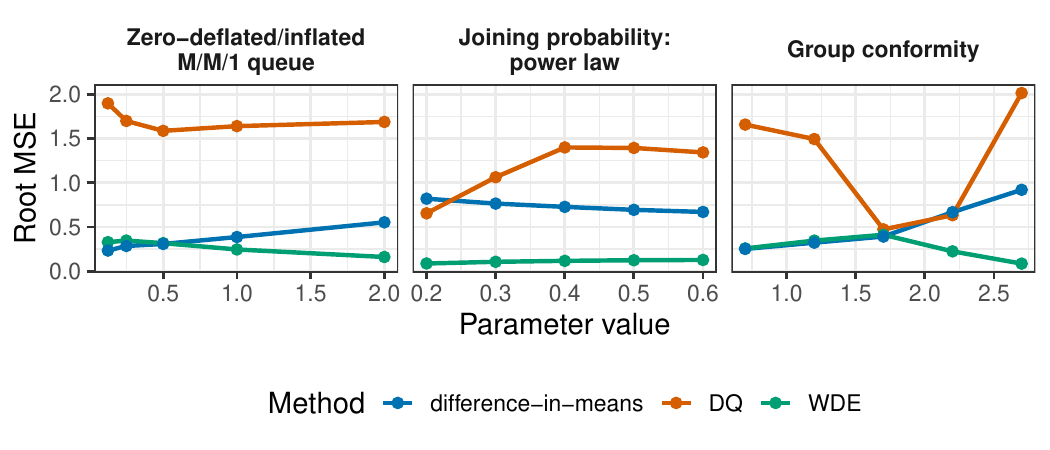}
	\caption{Root mean squared error (RMSE) comparison of the estimators user-level randomization. We compare the proposed WDE estimator, the DQ estimator, and the difference-in-means estimator.}
	\label{fig:comparision_DQ}
\end{figure}

The simulation settings follow those in Section~\ref{section:illus_asymp_var}. We take $\zeta_T = 0.05$, hold the target price at $p = 1$, and assume that $\lambda_k(p) = \lambda_k(1)(2 - p)$. We also note that our analysis is conducted in a continuous-time Markov chain setting, while the DQ estimator was developed for discrete-time systems. To enable comparison, we discretize the trajectories when applying the DQ estimator, using a discretization interval of $0.1$.

The DQ estimator was implemented with LSTD, with a single regularization hyperparameter $\alpha$ as described in Section G.1.2 (Implementation details, point 3) of \citet{farias2022Markovian}. We estimate the state--action value function $Q$ and add an $L_2$ regularization term. In short, we solve for a fixed point of the regularized least-squares problem
\[
Q = \arg \min_{Q'} \left\| Q' - r - P Q + \lambda \right\|_2^2 + \alpha \left\| Q' \right\|_2^2 ,
\]
where $Q \in \mathbb{R}^{2(N+1)}$ stores the estimated $Q(s,a)$ values and $P \in \mathbb{R}^{2(N+1)\times 2(N+1)}$ is the state--action transition matrix. We construct plug-in estimates of $r$, $P$, and $\lambda$ using sample means in each state. We set the hyperparameter to $\alpha = 0.1$.

\subsection{Non-stationary simulator in Section \ref{section:nonstationarity}}
\label{appendix:nonstationary_simulator}
The departure rate $\mu = 2$. 
In week $w$, on day $d$, at time $t$, the state-dependent arrival rate is given by
\begin{equation}
\lambda_{w,d,t,k}(p) = \frac{4(2-p)}{1+k} a_w b_{d,t},
\end{equation}
where $b_{d,t}$ is the arrival rate depicted in Figure~\ref{fig:ED_data} and $a_1 = 0.9, a_2 = 1, a_3 = 1.1, a_4 = 1.2.$

\subsection{Figure \ref{fig:MSE_comp_third}}
\label{appendix:simulation_details_third}
We study the settings in Section \ref{section:illus_asymp_var}. In Figure \ref{fig:MSE_comp_third}, we plot the mean squared error (scaled by $T\zeta_T^2$) of the estimators. We take $\zeta_T = 0.05$. 
%For the interval switchback experiments, we take the number of intervals to be 2. 
We hold the target price at $p = 1$. In the first row of Figure \ref{fig:MSE_comp_third}, we assume that $\lambda_k(p) = \lambda_k(1) (2-p)$; in the second row of Figure \ref{fig:MSE_comp_third}, we assume that $\lambda_k(p) = \lambda_k(1)((2-p) + (1-p)^2)$. 

% \subsection{Simulations in Section \ref{section:simulation}}
% \label{appendix:simulation_details}
% We take $\zeta_T = 0.05$. For the interval switchback experiments, we take the number of intervals to be 2, i.e., we take the interval length to be $1000$. We hold the target price (parameter) at $p = 1$. In the first row of Figures \ref{fig:MSE_comp} and \ref{fig:MSE_comp_rege}, we assume that $\lambda_k(p) = \lambda_k(1) (2-p)$; in the second row of Figures \ref{fig:MSE_comp} and \ref{fig:MSE_comp_rege}, we assume that $\lambda_k(p) = \lambda_k(1)((2-p) + (1-p)^2)$. 

\end{document}